\title{Twin-width I: tractable FO model checking}
\titlerunning{Twin-width I: tractable FO model checking}
\author{\'{E}douard Bonnet}{Univ Lyon, CNRS, ENS de Lyon, Université Claude Bernard Lyon 1, LIP UMR5668, France}{edouard.bonnet@ens-lyon.fr}{https://orcid.org/0000-0002-1653-5822}{}
\author{Eun Jung Kim}{Universit\'{e} Paris-Dauphine, PSL University, CNRS UMR7243, LAMSADE, Paris, France}{eun-jung.kim@dauphine.fr}{https://orcid.org/0000-0002-6824-0516}{}
\author{St\'{e}phan Thomass\'{e}}{Univ Lyon, CNRS, ENS de Lyon, Universit\'{e} Claude Bernard Lyon 1, LIP UMR5668, France}{stephan.thomasse@ens-lyon.fr}{}{}
\author{R\'{e}mi Watrigant}{Univ Lyon, CNRS, ENS de Lyon, Universit\'{e} Claude Bernard Lyon 1, LIP UMR5668, France}{remi.watrigant@univ-lyon1.fr}{https://orcid.org/0000-0002-6243-5910}{}
\authorrunning{\'E. Bonnet, E. J. Kim, S. Thomassé, R. Watrigant}
\keywords{Twin-width, FO model checking, fixed-parameter tractability}
\renewcommand{\geq}{\geqslant}
\renewcommand{\leq}{\leqslant}
\renewcommand{\preceq}{\preccurlyeq}
\newcommand{\defparproblem}[4]{
 \vspace{1mm}
\noindent\fbox{
 \begin{minipage}{0.96\textwidth}
 \begin{tabular*}{\textwidth}{@{\extracolsep{\fill}}lr} #1 & {\bf{Parameter:}} #3 \\ \end{tabular*}
 {\bf{Input:}} #2 \\
 {\bf{Question:}} #4
 \end{minipage}
 }
 \vspace{1mm}
}
\DeclareMathOperator*{\bigland}{\bigwedge}
\newtheorem{conjecture}[theorem]{Conjecture}
\theoremstyle{definition}
\begin{document}

\maketitle

\begin{abstract}
  Inspired by a \emph{width} invariant defined on permutations by Guillemot and Marx [SODA '14], we introduce the notion of twin-width on graphs and on matrices.
  Proper minor-closed classes, bounded rank-width graphs, map graphs, $K_t$-free unit $d$-dimensional ball graphs, posets with antichains of bounded size, and proper subclasses of dimension-2 posets all have bounded twin-width.
  On all these classes (except map graphs without geometric embedding) we show how to compute in polynomial time a \emph{sequence of $d$-contractions}, witness that the twin-width is at most $d$.
  We show that FO model checking, that is deciding if a given first-order formula $\phi$ evaluates to true for a given binary structure $G$ on a domain $D$, is FPT in $|\phi|$ on classes of bounded twin-width, provided the witness is given.
  More precisely, being given a $d$-contraction sequence for $G$, our algorithm runs in time $f(d,|\phi|) \cdot |D|$ where $f$ is a computable but non-elementary function.
  We also prove that bounded twin-width is preserved under FO interpretations and transductions (allowing operations such as squaring or complementing a graph).
  This unifies and significantly extends the knowledge on fixed-parameter tractability of FO model checking on non-monotone classes, such as the FPT algorithm on bounded-width posets by Gajarsk\'y et al.~[FOCS '15].
\end{abstract}



\section{Introduction}\label{sec:intro}

Measuring how complex a class of structures is often depends on the context.
Complexity can be related to algorithms (are computations easier on the class?), counting (how many structures exist per slice of the class?), size (can structures be encoded in a compact way?), decomposition (can structures be built with easy operations?), and so on.
The most successful and central complexity invariants like treewidth and VC-dimension tick many of these boxes and, as such, stand as cornerstone notions in both discrete mathematics and computer science.

In 2014, Guillemot and Marx \cite{Guillemot14} solved a long-standing question by showing that detecting a fixed pattern in some input permutation can be done in linear time.
This result came as a surprise: Many researchers thought the problem was W[1]-hard since all known techniques had failed so far.
In their paper, Guillemot and Marx observed that their proof introduces a parameter and a dynamic programming scheme of a new kind and wondered whether a graph-theoretic generalization of their permutation parameter could exist.

The starting point of our paper is to answer that question positively, by generalizing their width parameter to graphs and even matrices.
This new notion, dubbed \emph{twin-width}, proves remarkably well connected to other areas of computer science, logic, and combinatorics.
We will show that graphs of bounded twin-width define a very natural class with respect to computational complexity (FO model checking is linear), to model theory (they are stable under first-order interpretations), to enumerative combinatorics (they form small classes \cite{twin-width2}), and to decomposition methods (as a generalization of both proper minor-closed and bounded rank-width/clique-width classes).

\subsection{A dynamic generalization of cographs}

When it comes to graph decompositions, arguably one of the simplest graph classes is the class of \emph{cographs}.
Starting from a single vertex, cographs can be built by iterating disjoint unions and complete sums.
Another way to decompose cographs is to observe that they always contain \emph{twins}, that is two vertices $u$ and $v$ with the same neighborhood outside $\{u,v\}$ (hence contracting $u,v$ is equivalent to deleting $u$).
Cographs are then exactly graphs which can be contracted to a single vertex by iterating contractions of twins.
Generalizing the decomposition by allowing more complex bipartitions provides the celebrated notions of clique-width and rank-width, which extends treewidth to dense graphs.
However, bounded rank-width do not capture simple graphs such as unit interval graphs which have a simple linear structure, and allow polynomial-time algorithms for various problems.
Also, bounded rank-width does not capture large 2-dimensional grids, on which we know how to design FPT algorithms.

The goal of this paper is to propose a width parameter which is not only bounded on $d$-dimensional grids, proper minor-closed classes and bounded rank-width graphs, but also provides a very versatile and simple scheme which can be applied to many structures, for instance, patterns of permutations, hypergraphs, and posets.
The idea is very simple: a graph has bounded twin-width if it can be iteratively contracted to a singleton, where each contracted pair consists of near-twins (two vertices whose neighborhoods differ only on a bounded number of elements).
The crucial ingredient to add to this simplified picture is to keep track of the errors with another type of edges, that we call \emph{red edges}, and to require that the degree in red edges remains bounded by a threshold, say $d$.

In a nutshell (a more formal definition will be given in Section~\ref{sec:def}), we consider a sequence of graphs $G_n, G_{n-1}, \ldots, G_2, G_1$, where $G_n$ is the original graph $G$, $G_1$ is the one-vertex graph, $G_i$ has $i$ vertices, and $G_{i-1}$ is obtained from $G_i$ by performing a single contraction of two (non-necessarily adjacent) vertices.
For every vertex $u \in V(G_i)$, let us denote by $u(G)$ the vertices of $G$ which have been contracted to $u$ along the sequence $G_n, \ldots, G_i$.
A pair of disjoint sets of vertices is \emph{homogeneous} if, between these sets, there are either all possible edges or no edge at all.
The red edges mentioned previously consist of all pairs $uv$ of vertices of $G_i$ such that $u(G)$ and $v(G)$ are not homogeneous in $G$.
If the red degree of every $G_i$ is at most $d$, then $G_n, G_{n-1}, \ldots, G_2, G_1$ is called a \emph{sequence of $d$-contractions}, or \emph{$d$-sequence}.
The twin-width of $G$ is the minimum $d$ for which there exists a sequence of $d$-contractions.
Hence, graphs of twin-width $0$ are exactly the cographs (since a red edge never appears along the sequence when contracting twins).
See \cref{fig:twin-contraction} for an illustration of a 2-sequence.
\begin{figure}
  \centering
  \begin{tikzpicture}[
      vertex/.style={circle, draw, minimum size=0.85cm}
    ]
    \def\s{1.2}
    \foreach \i/\j/\l in {0/0/a,0/1/b,0/2/c,1/0/d,1/1/e,1/2/f,2/1/g}{
      \node[vertex] (\l) at (\i * \s,\j * \s) {$\l$} ;
    }
    \foreach \i/\j in {a/b,a/d,a/f,b/c,b/d,b/e,b/f,c/e,c/f,d/e,d/g,e/g,f/g}{
      \draw (\i) -- (\j) ;
    }

    \begin{scope}[xshift=3 * \s cm]
    \foreach \i/\j/\l in {0/0/a,0/1/b,0/2/c,1/0/d,2/1/g}{
      \node[vertex] (\l) at (\i * \s,\j * \s) {$\l$} ;
    }
    \foreach \i/\j/\l in {1/1/e,1/2/f}{
      \node[vertex,opacity=0.2] (\l) at (\i * \s,\j * \s) {$\l$} ;
    }
    \node[draw,rounded corners,inner sep=0.01cm,fit=(e) (f)] (ef) {ef} ;
    \foreach \i/\j in {a/b,a/d,b/c,b/d,b/ef,c/ef,c/ef,d/g,ef/g,ef/g}{
      \draw (\i) -- (\j) ;
    }
    \foreach \i/\j in {a/ef,d/ef}{
      \draw[red, very thick] (\i) -- (\j) ;
    }
    \end{scope}

    \begin{scope}[xshift=6 * \s cm]
    \foreach \i/\j/\l in {0/1/b,0/2/c,2/1/g,1/1/ef}{
      \node[vertex] (\l) at (\i * \s,\j * \s) {$\l$} ;
    }
    \foreach \i/\j/\l in {0/0/a,1/0/d}{
      \node[vertex,opacity=0.2] (\l) at (\i * \s,\j * \s) {$\l$} ;
    }
    \draw[opacity=0.2] (a) -- (d) ;
    \node[draw,rounded corners,inner sep=0.01cm,fit=(a) (d)] (ad) {ad} ;
    \foreach \i/\j in {ad/b,b/c,b/ad,b/ef,c/ef,c/ef,ef/g,ef/g}{
      \draw (\i) -- (\j) ;
    }
    \foreach \i/\j in {ad/ef,ad/g}{
      \draw[red, very thick] (\i) -- (\j) ;
    }
    \end{scope}

    \begin{scope}[yshift=-3 * \s cm]
    \foreach \i/\j/\l in {0/2/c,2/1/g,0.5/0/ad}{
      \node[vertex] (\l) at (\i * \s,\j * \s) {$\l$} ;
    }
    \foreach \i/\j/\l in {0/1/b,1/1/ef}{
      \node[vertex,opacity=0.2] (\l) at (\i * \s,\j * \s) {$\l$} ;
    }
    \draw[opacity=0.2] (b) -- (ef) ;
    \node[draw,rounded corners,inner sep=0.01cm,fit=(b) (ef)] (bef) {bef} ;
    \foreach \i/\j in {ad/bef,bef/c,bef/ad,c/bef,c/bef,bef/g}{
      \draw (\i) -- (\j) ;
    }
    \foreach \i/\j in {ad/bef,ad/g,bef/g}{
      \draw[red, very thick] (\i) -- (\j) ;
    }
    \end{scope}

    \begin{scope}[yshift=-3 * \s cm, xshift=3 * \s cm]
    \foreach \i/\j/\l in {0/2/c}{
      \node[vertex] (\l) at (\i * \s,\j * \s) {$\l$} ;
    }
     \foreach \i/\j/\l in {0.5/0/adg,0.5/1.1/bef}{
      \node[vertex] (\l) at (\i * \s,\j * \s) {\footnotesize{\l}} ;
    }
    \foreach \i/\j in {c/bef}{
      \draw (\i) -- (\j) ;
    }
    \foreach \i/\j in {adg/bef}{
      \draw[red, very thick] (\i) -- (\j) ;
    }
    \end{scope}

    \begin{scope}[yshift=-3 * \s cm, xshift=5 * \s cm]
    \foreach \i/\j/\l in {0.5/0/adg,0.5/1.1/bcef}{
      \node[vertex] (\l) at (\i * \s,\j * \s) {\footnotesize{\l}} ;
    }
    \foreach \i/\j in {adg/bcef}{
      \draw[red, very thick] (\i) -- (\j) ;
    }
    \end{scope}

    \begin{scope}[yshift=-3 * \s cm, xshift=6.5 * \s cm]
    \foreach \i/\j/\l in {1/0.75/abcdefg}{
      \node[vertex] (\l) at (\i * \s,\j * \s) {\tiny{\l}} ;
    }
    \end{scope}
    
    \end{tikzpicture}
  \caption{A 2-sequence of contractions to a single vertex shows that the original graph has twin-width at most~2.}
  \label{fig:twin-contraction}
\end{figure}

\subsection{How to compute the contraction sequences?}

Given an arbitrary graph or binary structure, it seems tremendously hard to compute a good --let alone, optimum-- contraction sequence.
Fortunately on classes with bounded twin-width, for which this endeavor is algorithmically useful (in light of \cref{thm:main}), we can often exploit structural properties of the class to achieve our goal.
In \cref{sec:first-ex} we present a simple polynomial-time algorithm outputting a $(2^{k+1}-1)$-contraction sequence on graphs of boolean-width at most~$k$ (see~\cref{thm:boolean-width}) and a linear-time algorithm for a $3d$-contraction sequence of (subgraphs of) the $d$-dimensional grid of side-length $n$ (see~\cref{thm:grids}).
The bottleneck for the former algorithm would lie in finding the boolean-width decomposition in the first place.
The latter result enables to find in polynomial time $(3 \lceil \sqrt{d} \rceil)^d k$-contraction sequences for unit $d$-dimensional ball graphs with clique number $k$, provided the geometric representation is given. 

For other classes, such as planar graphs, directly finding the sequence proves challenging.
Therefore we design in \cref{sec:grid-theorem} a framework that reduces this task to finding an ordering $\sigma$ --later called \emph{mixed-free order}-- of the $n$ vertices such that the adjacency matrix $A$ written compliantly to $\sigma$ is simple.
Here by ``simple'' we mean that $A$ cannot be divided into a large number of blocks of consecutive rows and columns, such that no cell of the division is vertical (repetition of the same row subvector) or horizontal (repetition of the same column subvector).
An important local object to handle this type of division is the notion of \emph{corner}, namely a consecutive 2-by-2 submatrix which is neither horizontal nor vertical.
The principal ingredient to show that simple matrices have bounded twin-width is the use of a theorem by Marcus and Tardos~\cite{MarcusT04} which states that $n \times n$ 0,1-matrices with at least $cn$ 1 entries (for a large enough constant~$c$) admit large divisions with at least one 1 entry in each cell.
This result is at the core of Guillemot and Marx's algorithm \cite{Guillemot14} to solve \textsc{Permutation Pattern} in linear FPT time.
As we now apply the Marcus-Tardos theorem to the corners (and not the 1 entries), we bring this engine to the dense setting.
Indeed the matrix can be packed with 1 entries, and yet we learn something non-trivial from the number of corners.

By the Marcus-Tardos theorem the number of corners cannot be too large, otherwise the matrix would not be simple.
From this fact, we are eventually able to find two rows or two columns with sufficiently small Hamming distance.
Therefore they can be contracted.
Admittedly some technicalities are involved to preserve the simplicity of the matrix throughout the contraction process.
So we adopt a two-step algorithm: In the first step, we build a sequence of partition coarsenings over the matrix, and in the second step, we extract the actual sequence of contractions.
The overall algorithm taking $A$ (or $\sigma$) as input, and outputting the contraction sequence, takes polynomial time in~$n$.
It can be implemented in quadratic time, or even faster if instead of the raw matrix, we get a list of pointers to corners of $A$.

We shall now find mixed-free orders.
\cref{sec:bounded-twinwidth} is devoted to this task for three different classes.
Dealing with permutations avoiding a fixed pattern (equivalently, a proper subclass of posets of dimension~2), the order is easy to find: it is imposed.
For posets of bounded width (that is, maximum size of an antichain or minimum size of a chain partition), a mixed-free order is attained by putting the chains in increasing order, one after the other.
Finally for $K_t$-minor free graphs, a Hamiltonian path would provide a good order.
As we cannot always expect to find a Hamiltonian path, we simulate it by a specific Lex-DFS.
The top part of \cref{fig:workflow} provides a visual summary of this section.

\begin{figure}
  \centering
  \begin{tikzpicture}
    \tikzstyle{shade}=[preaction={fill=black!10},draw,rounded corners]
    \def\d{0.45}
    \def\z{5.14}
    \node (g1) at (0,0) {binary structure $G$} ;
    \node (g2) at (0,-\d) {of bounded twin-width} ;
    \node[shade, inner sep=-0.03cm,fit=(g1) (g2)] (g) {} ;
    \node at (0,0) {binary structure $G$} ;
    \node at (0,-\d) {of bounded twin-width} ;

    \node[shade] (o) at (\z,-\d / 2) {$t$-mixed-free order} ;

    \node (s1) at (2*\z,0) {$d$-contraction sequence} ;
    \node (s2) at (2*\z,-\d) {$G=G_n, \ldots, G_1=K_1$} ;
    \node[shade, inner sep=-0.03cm,fit=(s1) (s2)] (s) {} ;
    \node at (2*\z,0) {$d$-contraction sequence} ;
    \node at (2*\z,-\d) {$G=G_n, \ldots, G_1=K_1$} ;
    
    \draw[-stealth] (g) -- node[align=center,text width=1.8cm,midway,above] {\cref{sec:bounded-twinwidth}} node[align=center,text width=1.8cm,midway,below] {$n^{O(1)}$} (o) ;
    \draw[-stealth] (o) -- node[align=center,text width=1.8cm,midway,above] {\cref{thm:gridtheorem}} node[align=center,text width=1.8cm,midway,below] {$n^{O(1)}$} (s) ;

    \draw[-stealth] (g) to [bend left=15] node[align=center,text width=1.8cm,midway,above] {\cref{sec:first-ex}} node[align=center,text width=1.8cm,midway,below] {$n^{O(1)}$} (s) ;

    \node (mt1) at (0.4 * \z,-2.5) {reduced morphism-tree} ;
    \node (mt2) at (0.4 * \z,-\d-2.5) {$MT'_\ell(G)$ of size $h(\ell)$} ;
    \node[shade, inner sep=-0.03cm,fit=(mt1) (mt2)] (mt) {} ;
    \node at (0.4 * \z,-2.5) {reduced morphism-tree} ;
    \node at (0.4 * \z,-\d-2.5) {$MT'_\ell(G)$ of size $h(\ell)$} ;

    \node (q1) at (1.6 * \z,-2.5) {Query $G \models \phi$} ;
    \node (q2) at (1.6 * \z,-\d-2.5) {for any prenex $\phi$ of depth $\ell$} ;
    \node[shade, inner sep=-0.03cm,fit=(q1) (q2)] (q) {} ;
    \node at (1.6 * \z,-2.5) {Query $G \models \phi$} ;
    \node at (1.6 * \z,-\d-2.5) {for any prenex $\phi$ of depth $\ell$} ;

    \node at (2.4,-1.2) {\cref{thm:FOmodelchecking2}} ;
    \node at (2.4,-1.2-\d) {$O_{\ell,d}(n)$} ;
    \coordinate (x) at (2.1,-1.2-\d-0.3) ;

    \draw (g) -- (x) ;
    \draw (s) -- (x) ;
    \draw[-stealth] (x) -- (mt) ;

     \draw[-stealth] (mt) -- node[align=center,text width=1.8cm,midway,above] {\cref{lem:obs-reduct}} node[align=center,text width=1.8cm,midway,below] {$O_\ell(1)$} (q) ;
  \end{tikzpicture}
  \caption{The overall workflow. Two paths are possible to get a $d$-contraction sequence from a bounded twin-width structure $G$. Either a direct polytime algorithm as for bounded boolean-width, or via a domain-ordering yielding a $t$-mixed free matrix followed by \cref{thm:gridtheorem} which converts it into a $d$-contraction sequence. From there, a tree of constant size (function of $\ell$ only) can be computed in linear FPT time. This tree captures the evaluation of all prenex sentences $\phi$ on $\ell$ variables for $G$. Queries ``$G \models \phi$'' can then be answered in constant time.}
  \label{fig:workflow}
\end{figure}

\subsection{How general are classes of bounded twin-width?}

As announced in the previous section, we will show that proper minor-closed classes have bounded twin-width.
As far as we know, all classes of polynomial expansion may also have bounded twin-width.
However on the one hand, as we will show in an upcoming paper~\cite{twin-width2}, cubic graphs have unbounded twin-width, whereas on the other hand, cliques have twin-width~0.
Thus bounded twin-width is incomparable with bounded degree, bounded expansion, and nowhere denseness.
Examples of graphs for which it is easy to show \emph{unbounded twin-width} include line graphs of bipartite complete graphs (also known as \emph{rook graphs}), high-degree graphs with girth at least~5, and Erd\H{o}s-R\'enyi random graphs drawn from $\mathcal G(n,1/2)$.
Indeed in all three cases, the first contraction would already create a vertex with large red degree, since no pair of \emph{near-twins} exists.

Nowhere dense classes are \emph{stable}, that is, no arbitrarily-long total order can be first-order interpreted from graphs of this class.
In particular, unit interval graphs are not FO interpretations (even FO \emph{transductions}, where in addition copying the structure and \emph{coloring} it with a constant number of unary relations is allowed) of nowhere dense graphs.
Thus even any class of first-order transductions of nowhere dense graphs, called \emph{structurally nowhere dense}, is incomparable with bounded twin-width graphs.
There have been recent efforts aiming to eventually show that FO model checking is fixed-parameter tractable on any structurally nowhere dense class.
Gajarsk\'y et al.~\cite{Gajarsky16} introduce near-uniform classes based on a so-called near-$k$-twin relation, and the equivalent near-covered classes.
They show that FO model checking admits an FPT algorithm on near-covered classes, and that these classes correspond to FO interpretations (even transductions) of bounded-degree graph classes.
Let us observe that the near-$k$-twin relation, as well as the related neighborhood diversity~\cite{Lampis12}, can be thought as a static version of our twin-width.
Gajarsk\'y et al.~\cite{Gajarsky18} gave the first step towards an FPT algorithm on classes with structurally bounded expansion by characterizing them via low shrub-depth decompositions.
A second step was realized by Gajarsk\'y and Kreutzer who presented a direct FPT algorithm computing shrub-depth decompositions~\cite{Gajarsky20}.

Despite cubic graphs having unbounded twin-width, some particular classes with bounded degree, such as subgraphs of $d$-dimensional grids, have bounded twin-width.
More surprisingly, some classes of expanders, will be shown to have bounded twin-width~\cite{twin-width2}.  
This showcases the ubiquity of bounded twin-width, and the wide scope of \cref{thm:main}.
As we will generalize twin-width to matrices, in order to handle permutations, posets, and digraphs, we can potentially define a twin-width notion on hypergraphs, groups, and lattices.
Furthermore we will see next that FO transductions preserve bounded twin-width.

As we saw, \emph{bounded twin-width} proves to be quite rich.
The main algorithmic application presented in this paper is the design of a linear-time FPT algorithm for FO model checking on binary structures with bounded twin-width, provided a sequence of $d$-contractions is given.

\subsection{FO model checking}

A natural algorithmic question given a graph class $\mathcal C$ (i.e., a set of graphs closed under isomorphism) is whether or not deciding first-order formulas $\varphi$ on graphs $G \in \mathcal C$ can be done in time whose superpolynomial blow-up is a function of $|\varphi|$ and $\mathcal C$ only.
A line of works spanning two decades settled this question for monotone (that is, closed under taking subgraphs) graph classes.
It was shown that one can decide first-order (FO) formulas in fixed-parameter time (FPT) in the formula size on bounded-degree graphs~\cite{Seese96}, planar graphs, and more generally, graphs with locally bounded treewidth~\cite{Frick01}, $H$-minor free graphs~\cite{Flum01}, locally $H$-minor free graphs~\cite{Dawar07}, classes with (locally) bounded expansion~\cite{Dvorak13}, and finally nowhere dense classes \cite{Grohe17}.
The latter result generalizes all previous ones, since nowhere dense graphs contain all the aforementioned classes.
Let us observe that the dependency on $|V(G)|$ of the FPT model checking algorithm on classes with bounded expansion is linear \cite{Dvorak13}, while it is almost linear (i.e., $|V(G)|^{1+\varepsilon}$ for every $\varepsilon > 0$) for nowhere dense classes \cite{Grohe17}.
In sharp contrast, if a monotone class $\mathcal C$ is not nowhere dense then FO model checking on $\mathcal C$ is AW[$*$]-complete~\cite{Kreutzer09}, hence highly unlikely to be FPT.
Thus the result of Grohe et al.~\cite{Grohe17} gives a final answer in the case of monotone classes.
We refer the reader interested in structural and algorithmic properties of nowhere dense classes to Nestril and Ossona de Mendez's book~\cite{sparsity}.

Since then, the focus has shifted to the complexity of model checking on (dense) non-monotone graph classes.
Our main result is that FO model checking is FPT on classes with bounded twin-width.
More precisely, we show that:
\begin{theorem}\label{thm:main}
  Given an $n$-vertex (di)graph $G$, a sequence of $d$-contractions $G=G_n, G_{n-1},$ $\ldots, G_1=K_1$, and a first-order sentence $\varphi$, we can decide $G \models \varphi$ in time $f(|\varphi|,d) \cdot n$ for some computable, yet non-elementary, function $f$.
\end{theorem}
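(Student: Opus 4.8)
The plan is to run a dynamic program along the given $d$-contraction sequence $G = G_n, G_{n-1}, \dots, G_1 = K_1$ that maintains, for every part of the current partition, only a bounded amount of logical information, and then decide $G \models \varphi$ by a single table lookup. Write $q$ for the quantifier rank of $\varphi$. By the classical composition method (Feferman--Vaught / Ehrenfeucht--Fraïssé) it suffices to be able to compute rank-$q$ \emph{types}, i.e.\ sets of satisfied rank-$\le q$ sentences, of suitable finite expansions of induced substructures of $G$: indeed $\varphi$ is then decided from the rank-$q$ type of $G$ itself. It is convenient to first add to the signature a linear order on $V(G)$ compatible with the sequence (so that each part spans an interval) together with a bounded reservoir of unary predicates used below.

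The conceptual heart, and the step I expect to be the main obstacle, is a locality reduction, in the spirit of Gaifman's theorem but tailored to the sequence rather than to $G$ itself. Although $G$ may be dense, for every $i$ the red graph on the partition $\mathcal P_i$ has maximum degree $\le d$ and is therefore uniformly sparse, while between distinct parts of $\mathcal P_i$ the adjacency of $G$ is homogeneous and hence ``trivially described''. I would prove that, after expanding each part by a bounded set of unary predicates recording its homogeneous-neighbourhood pattern, every rank-$q$ sentence becomes equivalent to a Boolean combination of sentences that are \emph{local in the red graphs}: they refer only to vertices whose parts lie within bounded red-distance in some $\mathcal P_i$. The difficulty is that the long-range homogeneous edges of $G$ do interact with the logic and must be absorbed into a colouring that is stable along the \emph{entire} sequence; making the number of colours depend only on $q$ and $d$, and not on $n$, is precisely where the structure of $d$-sequences is used.

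Granting this, the dynamic program processes $G_n, G_{n-1}, \dots, G_1$ and attaches to each current part $P$ a bounded-size label capturing, up to rank-$q$ equivalence, the expanded substructure of $G$ induced by $P$ together with its $\le d$ red neighbours (in particular this records the red bipartite patterns incident to $P$). When a contraction merges $P$ and $P'$ into $P'' = P \cup P'$, a composition lemma, proved by playing Ehrenfeucht--Fraïssé games, recomputes the labels of $P''$ and of the $O(d)$ parts whose homogeneity status towards $P''$ changes, from the labels of $P$, $P'$ and those $O(d)$ affected neighbours. Since the red-degree bound ensures that each contraction touches only $O(d)$ parts, each of the $n-1$ steps costs $f(q,d)$ time; at the end the label of $G_1 = K_1$ yields the rank-$q$ type of $G$, from which $G \models \varphi$ is read off.

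What remains is routine bookkeeping: the relevant type sets are finite and computable, the composition functions are computable, and summing the per-step cost yields total running time $f(|\varphi|, d) \cdot n$. The function $f$ is non-elementary because the locality reduction, applied to a rank-$q$ formula, is itself obtained by iterating a quantifier-by-quantifier localisation in which one replaces each part by a bounded gadget summarising its rank-$(q-1)$ behaviour before recursing; this nesting of depth $\Theta(q)$ produces a tower of exponentials in $q$ and $d$, and anyway a non-elementary dependence on the formula is unavoidable already on very restricted subclasses such as labelled paths.
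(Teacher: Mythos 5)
Your high-level plan---a dynamic program along the contraction sequence that keeps, for each current part, a bounded summary of its logical behaviour, exploiting that each contraction touches only $O(d)$ parts---is indeed the shape of the paper's algorithm in \cref{sec:fo}. But the step you yourself identify as ``the main obstacle'' is not an obstacle to be expected: it is the actual content of the theorem, and your proposal does not supply it. You assert that every rank-$q$ sentence becomes, after a bounded recolouring, a Boolean combination of sentences that are ``local in the red graphs''. As stated this is not even well-defined: there is no single red graph, but a sequence $G_{\mathcal P_n},\ldots,G_{\mathcal P_1}$ whose edge sets and whose homogeneous (long-range) adjacencies change at every step, and a colouring recording a part's homogeneous-neighbourhood pattern would naively need a number of colours growing with the number of parts, i.e.\ with $n$. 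You say that making it bounded ``is precisely where the structure of $d$-sequences is used'', but you give no construction and no proof, so the proposal is circular at its core. The paper deliberately does \emph{not} prove any Gaifman-style rewriting of formulas; instead it maintains, for each part $X$ of $\mathcal P_i$, a reduct of the morphism-tree of connected tuples rooted at $X$, where connectivity is measured in $G_{\mathcal P_i}$ with the geometrically decreasing thresholds $3^{\ell-k}$, and the influence of far-away homogeneous parts is killed combinatorially by the pruned-shuffle commutation lemma (\cref{lem:prunedshufflereduction}), not by recolouring $G$.

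There is a second concrete point where your scheme, as written, would fail. Your label for a part $P$ records only the rank-$q$ type of $P$ together with its at most $d$ red neighbours, i.e.\ a radius-$1$ ball in the red graph. This is insufficient: when $\ell$ variables are placed one after another, their interactions propagate along red paths, so correctness requires information up to red distance roughly $3^{\ell}$ (this is exactly why \cref{thm:dynamicprog} must update every part within distance $3^{\ell}$ of the merged part and shuffles the stored data of all parts within distance $2\cdot 3^{\ell}$). Worse, once the stored balls have radius larger than zero they overlap, and your composition step ``recompute the label of $P''$ from the labels of $P$, $P'$ and the $O(d)$ affected neighbours'' cannot be carried out by Feferman--Vaught/Ehrenfeucht--Fra\"iss\'e composition, which requires the pieces to be disjoint (or joined homogeneously); some mechanism is needed to avoid counting the shared vertices twice. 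Handling this overlap is precisely what the paper's sequence graphs, local roots and pruning are for. A types-based proof in your spirit can be completed (and such proofs were found later in the literature), but only after supplying exactly these two missing ingredients: local types of radius depending on the quantifier rank relative to the evolving red graph, and a composition lemma that copes with overlap and with long-range homogeneity; without them the argument does not go through.
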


This unifies and extends known FPT algorithms for
\begin{compactitem}
\item $H$-minor free graphs~\cite{Flum01},
\item posets of bounded width (i.e., size of the largest antichain)~\cite{Gajarsky15},
\item permutations avoiding a fixed pattern~\cite{Guillemot14}\footnote{Guillemot and Marx show that \textsc{Permutation Pattern} (not FO model checking in general) is FPT when the host permutation avoids a pattern, then a win-win argument proper to \textsc{Permutation Pattern} allows them to achieve an FPT algorithm for the class of \emph{all} permutations.} and hereditary (that is, closed under taking induced subgraphs) proper subclasses of permutation graphs,
\item graphs of bounded rank-width or bounded clique-width~\cite{Courcelle00},\footnote{for this class, even deciding MSO$_1$ is FPT, which is something that we do not capture.}
\end{compactitem}
since we will establish that these classes have bounded twin-width, and that, on them, a sequence of $d$-contractions can be found efficiently.
By transitivity, this also generalizes the FPT algorithm for $L$-interval graphs~\cite{Ganian15}, and may shed a new unified light on geometric graph classes for which FO model checking is FPT~\cite{Hlineny19}.
In that direction we show that a large class of geometric intersection graphs with bounded clique number, including $K_t$-free unit $d$-dimensional ball graphs, admits such an algorithm.
We also show that map graphs have bounded twin-width but we only provide a $d$-contraction sequence when the input comes with a planar embedding of the map. 
FO model checking was proven FPT on map graphs even when no geometric embedding is provided~\cite{Eickmeyer17}.
See \cref{fig:hasse} for the Hasse diagram of classes with a fixed-parameter tractable FO model checking.

\begin{figure}
  \centering
  \begin{tikzpicture}[
      nd/.style={draw,rectangle,rounded corners}]
    \node[nd] (pl) at (0,0) {planar} ;
    \node[nd] (mc) at (0,1) {proper minor-closed} ;
    \node[nd] (pe) at (0,2) {polynomial expansion} ;
    \node[nd] (be) at (0,3) {bounded expansion} ;
    \node[nd] (nd) at (0,4) {nowhere dense} ;

    \node (bd1) at (3,2) {bounded} ;
    \node (bd2) at (3,1.5) {degree} ;
    \node[draw, rounded corners, inner sep=-0.05cm, fit=(bd1) (bd2)] (bd) {} ;

    \draw (pl) -- (mc) -- (pe) -- (be) -- (nd) ;
    \draw (bd) -- (be.east) ;

    \node[draw, fill opacity=0.15, fill=green, rounded corners, fit=(pl) (pe) (nd) (bd)] (sp) {} ;
    \node at (3.3,0.3) {sparse} ;
    \node at (3.3,-0.1) {classes} ;
    
    \node[nd, thick] (bt) at (-4.5,4) {\textbf{bounded twin-width}} ;
    \node (br1) at (-7.5,2.15) {bounded} ;
    \node (br2) at (-7.5,1.65) {rank-width} ;
    \node[draw, rounded corners, inner sep=-0.05cm, fit=(br1) (br2)] (br) {} ;
    \node[nd] (cg) at (-7.5,0.8) {cographs} ;

    \draw (bt) -- (mc.west) ;
    \draw (bt) -- (br) -- (cg) ;
    \draw[dashdotted] (bt) -- (pe.west) ;

    \node (pbw1) at (-5.5,2.5) {posets of} ;
    \node (pbw2) at (-5.5,2.1) {bounded} ;
    \node (pbw3) at (-5.5,1.7) {width} ;
    \node[draw, rounded corners, inner sep=-0.05cm, fit=(pbw1) (pbw3)] (pbw) {} ;
    \node[nd] (li) at (-5.5,0.9) {$L$-interval} ;
    \node[nd] (ui) at (-5.5,0) {unit interval} ;

    \draw (bt) -- (pbw) -- (li) -- (ui) ;

    \node (pa1) at (-3.3,1.5) {pattern} ;
    \node (pa2) at (-3.3,1.1) {avoiding} ;
    \node (pa3) at (-3.3,0.7) {permuta-} ;
    \node (pa4) at (-3.3,0.3) {tions} ;
    \node[draw, rounded corners, inner sep=-0.05cm, fit=(pa1) (pa2) (pa3) (pa4)] (pa) {} ;

    \draw (bt) -- (pa) ;

    \node (mg1) at (-8,3.2) {map} ;
    \node (mg2) at (-8,2.8) {graphs} ;
    \node[draw, rounded corners, inner sep=-0.05cm, fit=(mg1) (mg2)] (mg) {} ;

    \draw (bt) -- (mg) ;

    \node[draw, fill opacity=0.15, fill=red, rounded corners, fit=(bt) (pa) (ui) (br) (cg) (mg)] (de) {} ;
    
    \node at (-8,0.3) {dense} ;
    \node at (-8,-0.1) {classes} ;
  \end{tikzpicture}
  \caption{Hasse diagram of classes on which FO model checking is FPT, with the newcomer twin-width. The dash-dotted edge means that polynomial expansion may well be included in bounded twin-width. Bounded twin-width and nowhere dense classes roughly subsume all the current knowledge on the fixed-parameter tractability of FO model checking. Do they admit a natural common superclass still admitting an FPT algorithm for FO model checking?}
  \label{fig:hasse}
\end{figure}

Permutation patterns can be represented as posets of dimension 2.
Any proper hereditary subclass of posets of dimension~2 contains all permutations avoiding a fixed pattern. 
In turn, posets can be encoded by directed graphs (or digraphs), with an arc from $u$ to $v$ if \emph{$u$ is smaller than $v$}.
Thus we formulated \cref{thm:main} with graphs and digraphs, to cover all the classes of bounded twin-width listed after the theorem (in particular, permutations excluding a fixed pattern).
Twin-width and the applicability of \cref{thm:main} is actually broader: one may replace ``an $n$-vertex (di)graph $G$'' by ``a~binary structure $G$ on a domain of size $n$'' in the statement of the theorem, where a binary structure is a finite set of binary relations.

\paragraph*{Roadmap for the proof of \cref{thm:main}.}
Instead of deciding ``$G \models \varphi$'' for a specific sentence $\varphi$, we build in FPT time a tree which contains enough information to answer all the queries of the form ``\emph{is $\phi$ true on $G$?},'' for every prenex sentence $\phi$ on $\ell$ variables.
A prenex sentence $\phi$ starts with a quantification (existential and universal) over the $\ell$ variables, followed, in the case of graphs, by a Boolean combination $\phi'(x_1, \ldots, x_\ell)$ of atoms of the form $x=y$ (interpreted as: vertex~$x$ is vertex~$y$) and $E(x,y)$ (interpreted as: there is an edge between $x$ and $y$).
A simple but important insight is that once Existential and Universal players have chosen the assignment $v_1, \ldots, v_\ell$, the truth of $\phi'(v_1, \ldots, v_\ell)$ only depends on the induced subgraph $G[\{v_1,\ldots,v_\ell\}]$ and the pattern of equality classes of the tuple $(v_1,\ldots,v_\ell)$.
Indeed the latter pair carries the truth value of each possible atom.

Imagine now the complete tree of all the possible ``moves'' assigning vertex $v_i$ to variable $x_i$.
Let us call it the \emph{game tree} for now (later it will be called~\emph{morphism-tree}).
This tree has arity $|V(G)|$ and depth $\ell$.
Thus it is too large to explicitly compute.
However, up to labeling its different levels with $\exists$ and $\forall$, it contains what is needed to evaluate any $\ell$-variable prenex formula on $G$.
It actually contains way too much information.
Assume, for instance, that two of its leaves $v_\ell, v'_\ell$ with the same parent node define the same induced subgraph $G[\{v_1,\ldots,v_{\ell-1},v_\ell\}] \cong G[\{v_1,\ldots,v_{\ell-1},v'_\ell\}]$ and the same pattern of equality classes.
Then it is safe to delete the ``move $v'_\ell$'' from the possibilities of whichever player shall play at level~$\ell$.
Indeed ``move $v_\ell$'' is perfectly equivalent: As it sets to true the same list of atoms, it will satisfy exactly the same formulas $\phi'$, irrelevant of the nature of the quantifier preceding~$x_\ell$.
This notion of equivalent sibling nodes can be generalized to any level of the game tree.
If one iteratively deletes equivalent moves (and their subtrees) while possible, it can be observed that the resulting tree is of size bounded by $\ell$ only.
We call \emph{reduct} such a tree.

Now the contraction sequence comes in.
Actually, more convenient here than the successive trigraphs $G=G_n, G_{n-1}, \ldots, G_1$, we consider the corresponding partition sequence: $\mathcal P_n,  \mathcal P_{n-1}, \ldots, \mathcal P_1$, where $\mathcal P_i$ is the partition $\{u(G)~|~u \in V(G_i)\}$ of $V(G)$.
Recall that $u(G)$ denotes the set of vertices of $G$ contracted into the single vertex $u \in V(G_i)$.
Recall also that two parts of $\mathcal P_i$ are \emph{homogeneous} if they are fully adjacent or fully non-adjacent in $G$.
Let $G_{\mathcal P_i}$ be the graph whose vertices are the parts of $\mathcal P_i$, and edges link every pair of non-homogeneous parts.
This graph is actually made of the red edges of trigraph $G_i$.
We extend game trees and their reducts to partitioned graphs $(G,\mathcal P_i)$, where equivalent moves have to further respect the partition.
More specifically we are interested in reducts of \emph{local game trees}, i.e., game trees where all the moves are played in the close neighborhood of a fixed vertex of $G_{\mathcal P_i}$, or equivalently a fixed part of $\mathcal P_i$.

By dynamic programming, we will maintain for $i$ going from $n$ down to 1, every game tree local to part $P \in \mathcal P_i$.
$\mathcal P_n$ is a partition into singletons $\{v\}$ (for each $v \in V(G)$), so the local game tree is easy to determine, and is naturally a reduct.
Indeed all the variables can only be instantiated to $v$, hence a simple tree of out-degree~1.
$\mathcal P_1$ is the trivial partition $\{V(G)\}$.
So the reduct of its local game tree coincides with the reduct of the (global) game tree, which is exactly what we are looking for.

Say that, to go from $\mathcal P_{i+1}$ to $\mathcal P_i$, we fuse two sets $X'_i, X''_i$ into $X_i$.
We shall now update the reducts of the local game trees in $(G,\mathcal P_i)$.
For the parts that are far enough from $X_i$, the local game trees (and their reducts) are unchanged.
Thus no update is needed.
This is because these parts are too far to ``interact'' with $X_i$ via non-homogeneous pairs of parts.

We therefore focus on the parts $P$ that are close to $X_i$ in $G_{\mathcal P_i}$.
We first combine, by a \emph{shuffle} operation, a bounded (by a function of the depth $\ell$ and the twin-width $d$) number of reducts of game trees that are local to parts $P'$ sufficiently close to $P$.
We then strategically prune redundant nodes, and delete further equivalent nodes.
The aggregation of the two former steps is dubbed \emph{pruned shuffle} and is the central operation of our algorithm.
To finally obtain the desired updated reduct, we \emph{project} the pruned shuffle on the nodes that are inherently \emph{rooted at $P$}.
To be formalized the latter requires to introduce an auxiliary graph, called \emph{tuple graph}, and a notion of \emph{local root}.
These objects are instrumental in handling overlap or redundant information.

A crucial aspect of the algorithm relies on the following fact, reminiscent of the Feferman-Vaught theorem~\cite{Feferman67}.
If two connected subsets, say, $X$ and $Y$ of $G_{\mathcal P_{i+1}}$ are united in $G_{\mathcal P_i}$, the reducts of games trees local to a part of $X \cup Y$ are simply obtained by interleaving (actually \emph{shuffling}) the reducts of game trees local to parts of $X$ with reducts of game trees local to parts of $Y$.
Indeed pairs of parts in $(X,Y)$ are by construction homogeneous to each other, so the precise choices of vertices within these parts is immaterial.
We finally observe that at each step $i$, we are updating a bounded number of reducts of bounded size.
Therefore the overall algorithm takes linear FPT time (see bottom part of \cref{fig:workflow}).

We take a very combinatorial stance towards FO model checking.
Formulas are quickly converted into trees whose nodes are naturally mapped to subgraphs induced by tuples.
Our use of the bounded-degree graphs $G_{\mathcal P_i}$ (red graphs) should remind of Gaifman's locality theorem~\cite{gaifman82}. 
And indeed, it is an exact transcription of it in combinatorial terms. 
Apart from the fact that every sentence can be put in prenex normal form, our algorithm and its presentation in \cref{sec:fo} are self-contained, thereby not assuming from the reader any knowledge in finite model theory.
As a by-product of the algorithm, we will show that bounded twin-width is preserved under FO interpretations and transductions, which makes it a robust class as far as first-order model checking is concerned.

\subsection{Organization of the paper}

\cref{sec:prelim} gives the necessary graph-theoretic and logic background. 
In \cref{sec:def} we formally introduce contraction sequences and the twin-width of a graph.
In \cref{sec:first-ex} we get familiar with these new notions.
In particular we show with direct arguments that bounded rank-width graphs, $d$-dimensional grids, and unit $d$-dimensional ball graphs with bounded clique number, have bounded twin-width.
In \cref{sec:grid-theorem} we extend twin-width to matrices and show a grid-minor-like theorem, which informally states that a graph has large twin-width if and only if all its vertex orderings yield an adjacency matrix with a complex large submatrix.  
This turns out to be a useful characterization for the next section.
In \cref{sec:bounded-twinwidth} we show how, thanks to this characterization, we can compute a witness of bounded twin-width, for permutations avoiding a fixed pattern, comparability graphs with bounded independence number (equivalently, bounded-width posets), and $K_t$-minor free graphs.
In \cref{sec:fo} we present a linear-time FPT algorithm for FO model checking on graphs given with a witness of bounded twin-width.
In \cref{sec:fo-inter} we show that FO interpretations (even transductions) of classes of bounded twin-width still have bounded twin-width.
Finally in \cref{sec:conclusion} we list a handful of promising questions left for future work.


\section{Preliminaries}\label{sec:prelim}

We denote by $[i,j]$ the set of integers $\{i,i+1,\ldots, j-1, j\}$, and by $[i]$ the set of integers $[1,i]$.
If $\mathcal X$ is a set of sets, we denote by $\cup \mathcal X$ the union of them.

\subsection{Graph definitions and notations}

All our graphs are undirected and simple (no multiple edge nor self-loop).
We denote by $V(G)$, respectively $E(G)$, the set of vertices, respectively of edges, of the graph $G$. 
For $S \subseteq V(G)$, we denote the \emph{open neighborhood} (or simply \emph{neighborhood}) of $S$ by $N_G(S)$, i.e., the set of neighbors of $S$ deprived of $S$, and the \emph{closed neighborhood} of $S$ by $N_G[S]$, i.e., the set $N_G(S) \cup S$.
For singletons, we simplify $N_G(\{v\})$ into $N_G(v)$, and $N_G[\{v\}]$ into $N_G[v]$.
We denote by $G[S]$ the subgraph of $G$ induced by $S$, and $G - S := G[V(G) \setminus S]$.
For $A, B \subseteq V(G)$, $E(A,B)$ denotes the set of edges in $E(G)$ with one endpoint in $A$ and the other one in $B$.
Two distinct vertices $u, v$ such that $N(u) = N(v)$ are called \emph{false twins}, and \emph{true twins} if $N[u] = N[v]$.
In particular, true twins are adjacent.
Two vertices are \emph{twins} if they are false twins or true twins.
If $G$ is an $n$-vertex graph and $\sigma$ is a total ordering of $V(G)$, say, $v_1, \ldots, v_n$, then $A_\sigma(G)$ denotes the adjacency matrix of $G$ in the order $\sigma$.
Thus the entry in the $i$-th row and $j$-th column is a 1 if $v_iv_j \in E(G)$ and a 0 otherwise.   

The length of a path in an unweighted graph is simply the number of edges of the path.
For two vertices $u, v \in V(G)$, we denote by $d_G(u,v)$, the distance between $u$ and $v$ in $G$, that is the length of the shortest path between $u$ and $v$.
The diameter of a graph is the longest distance between a pair of its vertices.
In all the above notations with a subscript, we omit it whenever the graph is implicit from the context.

An \emph{edge contraction} of two adjacent vertices $u, v$ consists of merging $u$ and $v$ into a single vertex adjacent to $N(\{u,v\})$ (and deleting $u$ and $v$).
A graph $H$ is a \emph{minor} of a graph $G$ if $H$ can be obtained from $G$ by a sequence of vertex and edge deletions, and edge contractions.
A graph $G$ is said \emph{$H$-minor free} if $H$ is not a minor of $G$.
Importantly we will overload the term ``contraction''.
In this paper, we call \emph{contraction} the same as an edge contraction without the requirement that the two vertices $u$ and $v$ are adjacent.
This is sometimes called an \emph{identification}, but we stick to the shorter \emph{contraction} since we will use that word often.
In the very rare cases in which we actually mean the classical (edge) contraction, the context will lift the ambiguity. 
We will also somewhat overload the term ``minor''.
Indeed, in \cref{sec:grid-theorem} we introduce the notions of ``$d$-grid minor'' and ``$d$-mixed minor'' on matrices.
They are only loosely related to (classical) graph minors, and it will always be clear which notion is meant.

\subsection{First-order logic, model checking, FO interpretations/transductions}

For our purposes, we define first-order logic without function symbols.
A finite \emph{relational signature} is a set $\tau$ of relation (or \emph{predicate}) symbols given with their arity $\{R^1_{a_1}, \ldots, R^h_{a_h}\}$; that is, relation $R^i_{a_i}$ has arity $a_i$.  
A first-order formula $\phi \in \text{FO}(\tau)$ over $\tau$ is any string generated from letter $\psi$ by the grammar: $$\psi \rightarrow \exists x \psi,~\forall x \psi,~\psi \lor \psi,~\psi \land \psi,~\neg \psi,~(\psi),~R^1_{a_1}(x, \ldots, x),~\ldots,~R^h_{a_h}(x, \ldots, x),~x=x,~\text{and}$$ $$x \rightarrow x_1,x_2, \ldots \text{~an infinite set of fresh variable labels.}$$

For the sake of simplicity, we will further impose that the same label cannot be reused for two different variables.
A variable $x_i$ is then said \emph{quantified} if it appears next to a quantifier ($\forall x_i$ or $\exists x_i$), and \emph{free} otherwise.
We usually denote by $\phi(x_{f_1}, \ldots, x_{f_h})$ a formula whose free variables are precisely $x_{f_1}, \ldots, x_{f_h}$.
A formula without quantified variables is said \emph{quantifier-free}.
A~\emph{sentence} is a formula without free variables.
With our simplification that the same label is not used for two distinct variables, when a formula $\phi$ contains a subformula $Qx_i \phi'$ (with $Q \in \{\exists, \forall\}$), all the occurrences of $x_i$ in $\phi$ lie in $\phi'$.

\paragraph*{Model checking.}
A first-order (FO) formula is purely syntactical.
An \emph{interpretation}, \emph{model}, or \emph{structure} $\mathcal M$ of the FO language $\text{FO}(\tau)$ specifies a \emph{domain of discourse} $D$ for the variables, and a relation $\mathcal M(R^i_{a_i}) = R^i \subseteq D^{a_i}$ for each symbol $R^i_{a_i}$.
$\mathcal M$ is sometimes called a \emph{$\tau$-structure}.
$\mathcal M$ is a \emph{binary structure} if $\tau$ has only relation symbols of arity~2.
It is said \emph{finite} if the domain $D$ is finite.
A sentence $\phi$ interpreted by $\mathcal M$ is \emph{true}, denoted by $\mathcal M \models \phi$, if it evaluates to true with the usual semantics for quantified Boolean logic, the equality, and $R^i_{a_i}(d_1, \ldots, d_{a_i})$ is true if and only if $(d_1, \ldots, d_{a_i}) \in \mathcal M(R^i_{a_i})$.
For a fixed interpretation, a formula $\phi$ with free variables $x_{f_1}, \ldots, x_{f_h}$ is \emph{satisfiable} if $\exists x_{f_1} \cdots \exists x_{f_h} \phi$ is true.

In the FO model checking problem, given a first-order sentence $\phi \in \text{FO}(\tau)$ and a finite model $\mathcal M$ of $\text{FO}(\tau)$, one has to decide whether $\mathcal M \models \phi$ holds.
The input size is $|\phi|+|\mathcal M|$, the number of bits necessary to encode the sentence $\phi$ and the model $\mathcal M$.
The brute-force algorithm decides $\mathcal M \models \phi$ in time $|\mathcal M|^{|\phi|}$, by building the tree of all possible assignments.
We will consider $\phi$ to be fixed or rather small compared to $|\mathcal M|$.
Therefore we wish to find an FPT algorithm for FO model checking parameterized by $|\phi|$, that is, running in time $f(|\phi|) |\mathcal M|^{O(1)}$, or even better $f(|\phi|) |D|$.

\defparproblem{\textsc{FO($\tau$) Model Checking}}{A $\tau$-structure $\mathcal M$ and a sentence $\phi$ of $\text{FO}(\tau)$.}{$|\phi|$}{Does $\mathcal M \models \phi$ hold?}

We restrict ourselves to FO model checking on finite binary structures, for which twin-width will be eventually defined. 
For the most part, we will consider FO model checking on graphs (and we may omit the signature $\tau$).
Let us give a simple example.
Let $\tau=\{E_2\}$ be a signature with a single binary relation.
Finite models of the language $\text{FO}(\tau)$ correspond to finite directed graphs with possible self-loops.
Let $\phi$ be the sentence $\exists x_1 \exists x_2 \cdots \exists x_k \bigland_{i < j} \neg(x_i = x_j)\land \bigland_{i \neq j} \neg E(x_i,x_j)$.
Let $G$ be a $\tau$-structure or graph.
$G \models \phi$ holds if and if $G$ has an independent set of size $k$.
This particular problem parameterized by $|\phi|$ (or equivalently $k$) is W[1]-hard on general graphs.
However it may admit an FPT algorithm when $G$ belongs to a specific class of graphs, as in the case, for instance, of planar graphs or bounded-degree graphs.  

\paragraph*{FO interpretations and transductions.}
An FO interpretation of a $\tau$-structure $\mathcal M$ is a $\tau$-structure $\mathcal M'$ such that for every relation $R$ of $\mathcal M'$, $R(a_1, \ldots, a_h)$ is true if and only if $\mathcal M \models \phi_R(a_1, \ldots, a_h)$ for a fixed formula $\phi_R(x_1, \ldots, x_h) \in \text{FO}(\tau)$.
Informally every relation of $\mathcal M'$ can be characterized by a formula evaluated on $\mathcal M$.

Again we shall give some example on graphs since it is our main focus.
Let $G$ be a simple undirected graph (in particular, $E(x,y)$ holds whenever $E(y,x)$ holds).
Then the \emph{FO ($\phi$-)interpretation $I_\phi(G)$} is a graph $H$ with vertex set $V(G)$ and $uv \in E(H)$ if and only if $G \models \phi(x,y) \land \phi(y,x)$.
If for instance $\phi(x,y)$ is the formula $\neg E(x,y)$, then $I_\phi(G)$ is the complement of $G$.
If instead $\phi(x,y)$ is $E(x,y) \lor \exists z \left( E(x,z) \land E(z,y)\right)$, then $I_\phi(G)$ is the square of $G$.
The FO ($\phi$-)interpretation of a class $\mathcal C$ of graphs is the set of all graphs that are $\phi$-interpretations of graphs in $\mathcal C$, namely $I_\phi(\mathcal C) := \{ H $ $|$ $H = I_\phi(G),~G \in \mathcal C\}$.
It is not very satisfactory that $I_\phi(\mathcal C)$ is not hereditary.
We will therefore either close $I_\phi(\mathcal C)$ by taking induced subgraphs, or use the more general notion of FO transductions (see for instance \cite{Blumensath10}).

An FO transduction is an enhanced FO interpretation.
We give a simplified definition for undirected graphs, but the same definition generalizes to general (binary) structures.
First a \emph{basic FO transduction} is slightly more general than an FO interpretation.
It is a triple $(\delta,\nu,\eta)$, with 0, 1, and, 2 free variables respectively, which maps every graph $G$ such that $G \models \delta$ to the graph $(\{v $ $|$ $G \models \nu(v)\},\{uv$ $|$ $G \models \nu(u) \land \nu(v) \land \eta(u,v)\})$.
Before we apply the basic FO transduction, we allow two operations: an expansion and a copy operation.
An \emph{$h$-expansion} maps a graph $G$ to the set of all the structures obtained by augmenting $G$ with $h$ unary relations $U^1, \ldots, U^h$.
A \emph{$\gamma$-copy operation} maps a graph $G$ to the disjoint union of $\gamma$~copies of $G$, say, $G^1, \ldots, G^\gamma$, where $V(G^j) = \{(v,j) $ $|$ $v \in V(G)\}$.
Moreover, it adds $\gamma$~unary relations $C_1, \ldots, C_\gamma$, and a binary relation $\sim$, where $C_i(v)$ holds whenever $v \in V(G^i)$ and $(u,i) \sim (v,j)$ holds when $u=v$.
Informally the unary relations indicate in which copy a vertex is, while the binary relation $\sim$ links the copies of a same vertex.

Now, the \emph{$(\phi,\gamma,h)$-transduction $\mathcal T_{\phi,\gamma,h}(G)$ of a graph $G$} is the set $\tau \circ \gamma_{\text{op}} \circ h_{\text{op}}(G)$ where $h_{\text{op}}$ is the $h$-expansion, $\gamma_{\text{op}}$ is the $\gamma$-copy operation, and $\tau=(\delta,\nu,\eta)$ is a basic FO transduction.
Note that the formulas $\nu$ and $\eta$ may depend on the edge relation of $G$ as well as all the added unary relations and the binary relation $\sim$.
Similarly to FO interpretations of classes, we define $\mathcal T_{\phi,\gamma,h}(\mathcal C) := \{ H $ $|$ $H \in\mathcal T_{\phi,\gamma,h}(G),~G \in \mathcal C\}$. 

As we will see in \cref{sec:fo-inter}, a worthwhile property of twin-width is that every FO interpretation/transduction of a bounded twin-width class has bounded twin-width itself. 

\section{Sequence of contractions and twin-width}\label{sec:def}

We say that two vertices $u$ and $v$ are \emph{twins} if they have the same neighborhood outside $\{u,v\}$.
A natural operation is to contract (or identify) them and try to iterate the process.
If this algorithm leads to a single vertex, the graph was initially a \emph{cograph}.
Many intractable problems become easy on cographs.
It is thus tempting to try and extend this tractability to larger classes.
One such example is the class of graphs with bounded clique-width (or equivalently bounded rank-width) for which any problem expressible in MSO$_1$ logic can be solved in polynomial-time~\cite{Courcelle00}.
A perhaps more direct generalization (than defining clique-width) would be to allow contractions of near twins, but the cumulative effect of the errors\footnote{By \emph{error} we informally refer to the elements in the (non-empty) symmetric difference in the neighborhoods of the contracted vertices.} stands as a barrier to algorithm design.

An illuminating example is provided by a bipartite graph $G$, with bipartition $(A, B)$, such that for every subset $X$ of $A$ there is a vertex $b \in B$ with neighborhood $X$ in $A$.
Surely $G$ is complex enough so that we should not entertain any hope of solving a problem like, say, \textsc{$k$-Dominating Set} significantly faster on any class containing $G$ than on general graphs.
For one thing, graphs like $G$ contain all the bipartite graphs as induced subgraphs.
Nonetheless $G$ can be contracted to a single vertex by iterating contractions of vertices whose neighborhoods differ on only one vertex.
Indeed, consider $a \in A$ and contract all pairs of vertices of $B$ differing exactly at $a$.
Applying this process for every $a \in A$, we end up by contracting the whole set $B$, and we can eventually contract $A$.

Thus the admissibility of a contraction sequence should not solely be based on the current neighborhoods.
The key idea is to keep track of the past errors in the contraction history and always require all the vertices to be involved in only a limited number of mistakes.
Say the errors are carried by the edges, and an erroneous edge is recorded as \emph{red}.
Note that in the previous contraction sequence of $G$, after contracting all pairs of vertices of $B$ differing at $a$, all the edges incident to $a$ are red, and vertex $a$ witnesses the non-admissibility of the sequence.
Let us now get more formal.

It appears, from the previous paragraphs, that the appropriate structure to define twin-width is a graph in which some edges are colored red.
A \emph{trigraph} is a triple $G=(V,E,R)$ where $E$ and $R$ are two disjoint sets of edges on $V$: the (usual) edges and the \emph{red edges}.
An informal interpretation of a red edge $uv \in R$ is that some errors have been made while handling $G$ and the existence of an edge between $u$ and $v$, or lack thereof, is uncertain.
A trigraph $(V,E,R)$ such that $(V,R)$ has maximum degree at most $d$ is a \emph{$d$-trigraph}.
We observe that any graph $(V,E)$ may be interpreted as the trigraph $(V,E,\emptyset)$.

Given a trigraph $G=(V,E,R)$ and two vertices $u,v$ in $V$, we define the trigraph $G/u,v=(V',E',R')$ obtained by \emph{contracting}\footnote{Or \emph{identifying}. Let us insist that $u$ and $v$ do not have to be adjacent.} $u,v$ into a new vertex $w$ as the trigraph on vertex set $V' = (V \setminus \{u,v\}) \cup \{w\}$ such that $G - \{u,v\}= (G/u,v) - \{w\}$ and with the following edges incident to $w$:
\begin{compactitem}
\item $wx\in E'$ if and only if $ux\in E$ and $vx\in E$,
\item $wx\notin E'\cup R'$ if and only if $ux\notin E\cup R$ and $vx\notin E\cup R$, and
\item $wx\in R'$ otherwise.
\end{compactitem}

In other words, when contracting two vertices $u,v$, red edges stay red, and red edges are created for every vertex $x$ which is not joined to $u$ and $v$ at the same time.
We say that $G/u,v$ is a \emph{contraction of $G$}.
If both $G$ and $G/u,v$ are $d$-trigraphs, $G/u,v$ is a \emph{$d$-contraction}.
We may denote by $V(G)$ the vertex set, $E(G)$ the set of \emph{black} edges, and $R(G)$ the set of \emph{red} edges, of the trigraph $G$. 

A (tri)graph $G$ on $n$ vertices is \emph{$d$-collapsible} if there exists a sequence of $d$-contractions which contracts $G$ to a single vertex.
More precisely, there is a \emph{$d$-sequence} of $d$-trigraphs $G=G_n, G_{n-1}, \ldots, G_2, G_1$ such that $G_{i-1}$ is a contraction of $G_i$ (hence $G_1$ is the singleton graph).
See \cref{fig:twin-contraction} for an example of a sequence of 2-contractions of a 7-vertex graph. 
The minimum $d$ for which $G$ is $d$-collapsible is the \emph{twin-width} of $G$, denoted by $\text{tww}(G)$.

If $v$ is a vertex of $G_i$ and $j \geqslant i$, then $v(G_j)$ denotes the subset of vertices of $G_j$ eventually contracted into $v$ in $G_i$.
Two disjoint vertex subsets $A, B$ of a trigraph are said \emph{homogeneous} if there is no red edge between $A$ and $B$, and there are not both an edge and a non-edge between $A$ and $B$.
In other words, $A$ and $B$ are fully linked by black edges or there is no (black or red) edge between them.
Observe that in any contraction sequence $G=G_n, \ldots, G_i, \ldots, G_1$, there is a red edge between $u$ and $v$ in $G_i$ if and only if $u(G)$ and $v(G)$ are not homogeneous.
We may sometimes (abusively) identify a vertex $v \in G_i$ with the subset of vertices of $G$ contracted to form $v$.

One can check that cographs have twin-width 0 (the class of graphs with twin-width~0 actually coincides with cographs), paths of length at least three have twin-width~1, red paths have twin-width at most~2, and trees have twin-width~2.
Indeed, they are not 1-collapsible, as exemplified by the 1-subdivision of $K_{1,3}$, and they admit the following 2-sequence.
Choose an arbitrary root and contract two leaves with the same neighbor, or, if not applicable, contract the highest leaf with its neighbor.
We observe that in this 2-sequence, every $G_i$ only contains red edges which are adjacent to leaves.
In particular, red edges are either isolated or are contained in a path of length two.


The definition of twin-width readily generalizes to directed graphs, where we create a red edge whenever the contracted vertices $u, v$ are not linked to $x$ in the same way.
This way we may speak of the twin-width of a directed graph or of a partial order.
One could also wish to define twin-width on graphs ``colored'' by a constant number of unary relations.
To have a unifying framework, we will later work with matrices (Section~\ref{sec:grid-theorem}).
Before that, we present in the next section some basic results about twin-width of graphs.

\section{First properties and examples of classes with bounded twin-width}\label{sec:first-ex}

Let us get familiar with contraction sequences and twin-width through simple operations: complementing the graph, taking induced subgraphs, and adding apices.
\subsection{Complementation, induced subgraphs, and adding apices}

The \emph{complement of a trigraph $G$} is the trigraph $\overline G$ obtained by keeping all its red edges while making edges its non-edges, and non-edges its edges.
Thus if $G=(V,E,R)$, then $\overline G=(V,{V \choose 2} \setminus (E \cup R),R)$, and it holds that $\overline{\overline G}=G$.
Twin-width is invariant under complementation.
One can observe that any sequence of $d$-contractions for $G$ is also a sequence of $d$-contractions for $\overline G$.
Indeed there is a red edge between two vertices $u, v$ in a trigraph obtained along the sequence if and only if $u(G)$ and $v(G)$ are homogeneous if and only if $u(\overline G)$ and $v(\overline G)$ are homogeneous.

We can extend the notion of induced subgraphs to trigraphs in a natural way.
A trigraph $H$ is an \emph{induced subgraph} of a trigraph $G$ if $V(H) \subseteq V(G)$, $E(H) = E(G) \cap {H \choose 2}$, and $R(H) = R(G) \cap {H \choose 2}$.
The twin-width of an induced subgraph $H$ of a trigraph $G$ is at most the twin-width of $G$.
Indeed the sequence of contractions for $G$ can be projected to $H$ by just ignoring contractions involving vertices outside $V(H)$.
Then the red degree of trigraphs in the contraction sequence of $H$ is at most the red degree of the corresponding trigraphs in the contraction sequence of $G$.

We now show that adding a vertex linked by black edges to an arbitrary subset of the vertices essentially at most doubles the twin-width.
\begin{theorem}\label{thm:apex}
  Let $G'$ be a trigraph obtained from a trigraph $G$ by adding one vertex $v$ and linking it with black edges to an arbitrary subset $X \subseteq V(G)$.
  Then $\text{tww}(G') \leqslant 2(\text{tww}(G)+1)$.
\end{theorem}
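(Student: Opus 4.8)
The plan is to reuse a $d$-contraction sequence of $G$, with $d=\text{tww}(G)$, after ``splitting'' each contracted set into its intersection with $X$ and its complement. Viewing the $d$-sequence of $G$ as a partition sequence $\mathcal P_n,\dots,\mathcal P_1$ of $V(G)$ — $\mathcal P_n$ the partition into singletons, $\mathcal P_1=\{V(G)\}$, and $\mathcal P_i$ obtained from $\mathcal P_{i+1}$ by merging two parts $A,B$ into $C:=A\cup B$ — recall that two parts of $\mathcal P_i$ form a red edge in $G_i$ precisely when they are not homogeneous in $G$. For each $i$ let $\mathcal P_i^X$ be the refinement of $\mathcal P_i$ replacing every part $P$ by the nonempty sets among $P\cap X$ and $P\setminus X$, and let $H_i$ be the trigraph of $G'$ on the partition $\mathcal P_i^X\cup\{\{v\}\}$ of $V(G')$. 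If $X\in\{\emptyset,V(G)\}$ then $v$ is isolated or universal, so $\text{tww}(G')=\text{tww}(G)$ and we are done; hence assume $\emptyset\neq X\neq V(G)$. Then $H_n=G'$ (as $\mathcal P_n^X$ is again the singleton partition of $V(G)$), and $H_1$ has only the three vertices $X$, $V(G)\setminus X$, $\{v\}$. The desired sequence for $G'$ runs $H_n\to H_{n-1}\to\cdots\to H_1$ and then collapses the last three vertices; these final two contractions keep red degree at most $2$.

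First I would check that the step $\mathcal P_{i+1}^X\to\mathcal P_i^X$ is realized by at most two contractions: the merge of $A,B$ into $C$ becomes the merge of $A\cap X$ with $B\cap X$ (giving $C\cap X$) followed by that of $A\setminus X$ with $B\setminus X$ (giving $C\setminus X$), either being skipped when one operand is empty. The heart of the argument is then that refining a partition by the fixed set $X$ at most doubles the red degree: if two parts $P'\subseteq P$ and $Q'\subseteq Q$ of $\mathcal P_i^X$, with $P\neq Q$ in $\mathcal P_i$, are not homogeneous in $G$ (equivalently in $G'$, since $v$ plays no role), then $P$ and $Q$ are not homogeneous in $G$, i.e.\ red-adjacent in $G_i$. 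As each part of $\mathcal P_i$ splits into at most two parts of $\mathcal P_i^X$ and $G_i$ has red degree at most $d$, every part of $\mathcal P_i^X$ has at most $2d$ red $H_i$-neighbours contained in $\mathcal P_i$-parts distinct from its own, plus at most one red edge to the other half of its own $\mathcal P_i$-part. Crucially $v$ has red degree $0$ in every $H_i$, since each part of $\mathcal P_i^X$ lies entirely in $X$ or entirely in $V(G)\setminus X$ and is therefore homogeneous to $\{v\}$ (all black edges, resp.\ all non-edges). Hence every $H_i$ is a $(2d+1)$-trigraph, comfortably within the target $2(d+1)$.

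What remains is to control the single intermediate trigraph $H'$ obtained after the first of the two merges. There the fresh half $C\cap X=(A\cap X)\cup(B\cap X)$ may be simultaneously red-adjacent to $A\setminus X$ and to $B\setminus X$ (still unmerged), costing one extra unit over the $H_i$-estimate; repeating the count above shows every vertex of $H'$ — including $A\setminus X$ and $B\setminus X$ themselves — has red degree at most $2d+2=2(\text{tww}(G)+1)$. Since every other trigraph in the sequence has red degree at most $2d+1$ (or at most $2$ at the end), the sequence witnesses $\text{tww}(G')\leqslant 2(\text{tww}(G)+1)$. I expect this last piece — verifying that the transient state between the two component merges stays below $2(d+1)$ — to be the only delicate point; it is exactly what forces the bound $2(\text{tww}(G)+1)$ rather than $2\,\text{tww}(G)+1$. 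Everything else is a routine translation between the contraction/trigraph language and the partition/homogeneity language set up in \cref{sec:def}.
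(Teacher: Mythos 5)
Your proposal is correct and follows essentially the same route as the paper's proof: simulate the $d$-sequence of $G$ on $G'$ by splitting every contracted set into its intersection with $X$ and its complement, perform the two half-contractions one after the other while never touching $v$ (which stays red-isolated since every part lies inside $X$ or inside $V(G)\setminus X$), bound the red degree by $2d+1$ in the stable states and $2d+2$ in the transient state between the two half-contractions, and finish off the three-vertex trigraph. The only differences are cosmetic (partition language instead of explicit contractions, and a slightly looser $2d+2$ count for the non-merged vertices in the transient state, which still yields the stated bound $2(\text{tww}(G)+1)$).
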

\begin{proof}
  Let $d = \text{tww}(G)$ and let $G=G_n, \ldots, G_1$ be a sequence of $d$-contractions.
  We want to build a good sequence of contractions for $G'$.
  The rules are that, while there are more than three vertices in the trigraph, we never contract two vertices $u$ and $u'$ such that $u(G) \subseteq X$ and $u'(G) \subseteq V(G) \setminus X$, neither do we contract $v$ with another vertex.
  In words, until the very end, we do not touch $v$, and we do only contractions internal to $X$ or to $V(G) \setminus X$.

  We start with $G'$.
  For $i$ ranging from $n$ down to 2, let us denote by $u_i,u'_i$ the $d$-contraction performed from $G_i$ to $G_{i-1}$.
  With our imposed rules, instead of having one set $u_i(G)$ of contracted vertices, we have two: $U_{i,X} := u_i(G) \cap X$ and $U_{i,\overline X} := u_i(G) \setminus X$.
  Similarly we can define the (potentially empty) $U'_{i,X}$ and $U'_{i,\overline X}$ based on $u'_i(G)$.
  Any of these sets, if non-empty, corresponds to a currently contracted vertex, that we denote with the same label.
  In the current trigraph obtained from $G'$, we contract $U_{i,X}$ and $U'_{i,X}$ if they both exist.
  Next we contract $U_{i,\overline X}$ and $U'_{i,\overline X}$ (again if they both exist).
  This preserves our announced invariant, and terminates with a 3-vertex trigraph made of $v$, all the vertices of $X$ contracted in a single vertex, all the vertices of $V(G) \setminus X$ contracted in a single vertex.
  Observe that a 3-vertex trigraph is 2-collapsible and $2 \leqslant 2(\text{tww}(G)+1)$.
  
  We shall finally justify that in the sequence of contractions built for $G'$, all the trigraphs have red degree at most $2(\text{tww}(G)+1)$.
  Before we simulate the contraction $u_i,u'_i$, each contracted vertex $u(G) \cap X$ (resp.~$u(G) \setminus X$) of $G'$ has red degree at most $2d+1$.
  Indeed $u(G) \cap X$ (resp.~$u(G) \setminus X$) can only have red edges to vertices $w(G) \cap X$ and $w(G) \setminus X$ such that $w$ is a red neighbor of $u$, and to $u(G) \setminus X$ (resp.~$u(G) \cap X$).
  After we contract (if they exist) $U_{i,X}$ and $U'_{i,X}$, the newly created vertex, say $U$, has red degree at most $2d+2$.
  The $+2$ accounts for $U_{i,\overline X}$ and $U'_{i,\overline X}$.
  The red degree of $U_{i,\overline X}$ and $U'_{i,\overline X}$ is at most $2d+1$, where the $+1$ accounts for $U$.
  All the other vertices have their red degree bounded by $2d+1$.
  After we also contract (if they exist) $U_{i,\overline X}$ and $U'_{i,\overline X}$, all the vertices have degree at most~$2d+1$.
  Overall the red degree never exceeds $2d+2=2(\text{tww}(G)+1)$.
\end{proof}

The previous result implies that bounded twin-width is preserved by adding a constant number of apices.
In \cref{sec:bounded-twinwidth} we will show a far-reaching generalization of this fact: $H$-minor free graphs have bounded twin-width.
We will not have to resort to the graph structure theorem.
Now if we have a second look at the proof of \cref{thm:apex}, we showed that twin-width does not arbitrarily increase when we add one or a constant number of unary relations (in \cref{sec:grid-theorem} we will formally define twin-width for graphs colored by unary relations, and even for arbitrary matrices on a constant-size alphabet). 
Again we will see in \cref{sec:fo-inter} a considerable generalization of that fact and of the conservation of twin-width by complementation: bounded twin-width classes are closed by first-order transductions.

As cliques have bounded twin-width (more precisely twin-width~0), \emph{bounded twin-width} is not preserved under (non-induced) subgraphs and minors.
Indeed the class of all graphs has unbounded twin-width.
To see that, consider $L$ the line graph of the biclique $K_{n,n}$ (with the edge set of $K_{n,n}$ as vertex set, and edges between every pair of incident edges in $K_{n,n}$).
Every pair of vertices in $L$ has at least $2(n-1)$ private neighbors (at least $n-1$ private neighbors for each vertex), hence its twin-width is at least $2(n-1)$.

\subsection{Bounded rank-width/clique-width, and $d$-dimensional grids} 

We now show that bounded rank-width graphs and $d$-dimensional grids (with or without diagonals) have bounded twin-width.
We transfer the twin-width boundedness of $d$-dimensional grids with diagonals to unit $d$-dimensional ball graphs with bounded clique number.

A natural inquiry is to compare twin-width with the width measures designed for dense graphs: rank-width $\text{rw}$, clique-width $\text{cw}$, module-width $\text{modw}$, and boolean-width $\text{boolw}$.
It is known that, for any graph $G$, $\text{boolw}(G) \leqslant \text{modw}(G) \leqslant \text{cw}(G) \leqslant 2^{\text{rw}(G)+1}-1$ (see for instance Chapter 4 of Vatshelle's PhD thesis~\cite{vatshelleThesis}).
It is thus sufficient to show that graphs with bounded boolean-width have bounded twin-width, to establish that bounded twin-width classes capture all these parameters.

Crucially twin-width does not capture bounded mim-width graphs (the actual definition of mim-width is not important here, and thus omitted).
This is but a fortunate fact, since the main result of the paper is an FPT algorithm for FO model checking on any bounded twin-width classes. 
Indeed, interval graphs have mim-width 1~\cite{Belmonte13} and do not admit an FPT algorithm for FO model checking (see for instance \cite{Ganian15}).

We briefly recall the definition of boolean-width.
The \emph{boolean-width} of a partition $(A,B)$ of the vertex set of a graph is the base-2 logarithm of the number of different neighborhoods in $B$ of subsets of vertices of $A$ (or equivalently, of different neighborhoods in $A$ of subset of vertices of $B$).
A \emph{decomposition tree of a graph $G$} is a binary tree\footnote{All internal nodes have degree 3, except the root which has degree 2. Equivalently all internal nodes have exactly two children.} $T$ whose leaves are in one-to-one correspondence with $V(G)$.
Each edge $e$ of $T$ naturally maps to a partition $P_e=(A_e,B_e)$ of $V(G)$, where the two connected components of $T-e$ contain the leaves labeled by $A_e$ and $B_e$, respectively.  
The \emph{boolean-width} of a decomposition tree $T$ is the maximum boolean-width of $P_e$ taken among every edge $e$ of $T$.
Finally, the \emph{boolean-width} of a graph $G$, denoted by $\text{boolw}(G)$, is the minimum boolean-width of $T$ taken among every decomposition tree $T$.

\begin{theorem}\label{thm:boolean-width}
Every graph with boolean-width $k$ has twin-width at most~$2^{k+1}-1$.
\end{theorem}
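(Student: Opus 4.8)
The plan is to use a boolean-decomposition tree $T$ of $G$ of boolean-width $k$ as a guide for building the contraction sequence, processing $T$ from the leaves towards the root. At each moment during the contraction, the partition of $V(G)$ induced by the already-contracted vertices should refine the partition of $V(G)$ given by a ``prefix'' of $T$, i.e.\ a set of disjoint subtrees hanging below a horizontal cut. Concretely, I would maintain the invariant that for every node $x$ of $T$ whose subtree $T_x$ has been ``fully processed,'' the leaves in $T_x$ have been contracted down to at most $2^k$ vertices of the current trigraph, one for each distinct neighborhood that a subset of the leaf-set $A_x$ of $T_x$ can realize in $V(G)\setminus A_x$. The point of the boolean-width bound is precisely that there are at most $2^k$ such neighborhood types.

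The processing of an internal node $x$ with children $y,z$ goes as follows. By induction, the leaves of $T_y$ are already contracted to a set $S_y$ of at most $2^k$ vertices, each tagged by its neighborhood type towards $V(G)\setminus A_y$, and similarly $T_z$ gives a set $S_z$ of at most $2^k$ vertices. Now I want to contract $S_y\cup S_z$ down to the $\le 2^k$ neighborhood types towards $V(G)\setminus A_x$ (where $A_x=A_y\cup A_z$). Two vertices of $S_y$ whose leaf-sets have the same neighborhood type towards $V(G)\setminus A_x$ can be merged, and likewise inside $S_z$; after these merges one is left with at most $2^k$ vertices coming from $T_y$ and at most $2^k$ from $T_z$, and then at most $2^k$ further cross-merges (pairing a $T_y$-type with the $T_z$-type having the same $V(G)\setminus A_x$-neighborhood) finish the job. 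Throughout these contractions the current trigraph restricted to $S_y\cup S_z$ has at most $2^{k+1}$ vertices, so any red edges created stay within this small set, contributing at most $2^{k+1}-1$ to the red degree; and crucially, every vertex of $S_y\cup S_z$ is \emph{homogeneous} to the rest of the graph in a way consistent with its type, so that no red edge leaves the block $A_x$ during this phase. The same reasoning applied at the root shows the whole graph collapses, giving $\mathrm{tww}(G)\le 2^{k+1}-1$.

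The main obstacle is bookkeeping the red edges precisely: I must argue that at every intermediate trigraph, a contracted vertex $v$ has red neighbors \emph{only} among the other contracted pieces of the unique subtree currently being processed together with it, and that this set has size at most $2^{k+1}$. This requires a clean statement of the invariant: after a node $x$ is fully processed, each of the $\le 2^k$ surviving vertices of $S_x$ is \emph{homogeneous} (black-complete or black-empty) towards every vertex of $V(G)\setminus A_x$, hence has \emph{no} red edge leaving $A_x$, and inside $A_x$ everything below $x$ has been contracted into $S_x$ so there are no red edges at all incident to $S_x$ after $x$ is done. The delicate point is the transient phase while merging $S_y$ with $S_z$: here red edges can appear, but only between vertices both of which lie in $A_x = A_y \cup A_z$ and in fact in the shrinking set $S_y\cup S_z$ of size $\le 2^{k+1}$; one must also check that merging a $T_y$-piece with a $T_z$-piece does not spoil homogeneity towards $V(G)\setminus A_x$, which is exactly why we pair pieces of equal $V(G)\setminus A_x$-type. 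Getting this invariant cleanly stated and then verified through one node-processing step is the crux; the rest is a straightforward induction up the decomposition tree.
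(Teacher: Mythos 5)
Your approach is correct in substance but organized quite differently from the paper's. The paper's proof is greedy and top-down: as long as more than $2^k$ vertices remain, it descends from the root of the decomposition tree to find a subtree with between $2^k+1$ and $2^{k+1}$ leaves, pigeonholes two of those leaves (current trigraph vertices) that have the same neighborhood across the corresponding cut $(A_e,B_e)$, contracts them --- so the new red edges stay inside $A_e$ --- deletes one leaf from the tree, and repeats; the red-degree bound then follows from the claim that contractions never create a red component of size more than $2^{k+1}$. You instead process the tree bottom-up and explicitly maintain, at each node $x$, the set $S_x$ of at most $2^k$ contracted pieces indexed by their neighborhood type towards $V(G)\setminus A_x$, merging the children's type sets $S_y\cup S_z$ (of size at most $2^{k+1}$) at each internal node. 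Both arguments rest on the same two facts --- at most $2^k$ cross-neighborhood types per cut of the decomposition, and red edges confined to a working set of size at most $2^{k+1}$ --- but yours makes the type structure explicit and gives a local invariant to verify node by node, at the cost of more bookkeeping, whereas the paper's is shorter but leaves the ``red components stay below $2^{k+1}$'' invariant largely to the reader.

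One point in your invariant must be weakened: it is not true that after $x$ is processed there are ``no red edges at all incident to $S_x$.'' Two distinct pieces of $S_x$ have different types towards $V(G)\setminus A_x$ but are under no constraint towards each other, so red edges created while processing $x$ (or inherited from the children) can survive between pieces of $S_x$. What is true, and all your argument needs, is that every red edge incident to a piece of $S_x$ has both endpoints in $S_x$: your homogeneity argument (pieces merged only when they agree on all of $V(G)\setminus A_x$, and pieces of disjoint processed subtrees are pairwise homogeneous) shows exactly this. Since $|S_x|\le 2^k$, and during the processing of the parent all red edges live inside $S_x\cup S_{x'}$ with $x'$ the sibling of $x$, a set of size at most $2^{k+1}$, the red degree never exceeds $2^{k+1}-1$. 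With that correction your induction goes through.
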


\begin{proof}
  Let $G$ be graph and let $T$ be a decomposition tree of $G$ with boolean-width $k := \text{boolw}(G)$.
  We assume that $G$ has at least $2^k+1$ vertices, otherwise the twin-width is immediately bounded by $2^k$.
  Starting from the root $r$ of $T$, we find a rooted subtree of $T$ with at least $2^k+1$ and at most $2^{k+1}$ leaves.
  If the current subtree has more than $2^{k+1}$ leaves, we move to the child node with the larger subtree.
  That way we guarantee that the new subtree has at least $2^k+1$ leaves.
  We stop when we reach a subtree $T'$ with at most $2^{k+1}$ leaves, and let $e$ be the last edge that we followed in the process of finding $T'$ (the one whose removal disconnects $T'$ from the rest of $T$).
  
  By definition, the boolean-width of the partition $P_e=(A_e,B_e)$ is at most $k$, which upperbounds the number of different neighborhoods of $A_e$ in $B_e$ by $2^k$. 
  In particular, among the $2^k+1$ leaves of $T'$, corresponding to, say, $A_e$, two vertices $u, v$ have the same neighborhood in $B_e$.
  We contract $u$ and $v$ in $G$ (and obtain the graph $G/u,v$).
  The only red edges in $G/u,v$ are within $A_e$, so the red degree is bounded by~$2^{k+1}-1$. 
  We update $T$ by removing the leaf labeled by $v$, and smoothing its parent node which became a degree-2 vertex (to keep a binary tree).
  We denote by $T/u,v$ the obtained binary decomposition tree of $G/u,v$.
  
  What we described so far yielded the first contraction.
  We start over with trigraph $G/u,v$ and decomposition tree $T/u,v$ to find the second contraction.
  We iterate this process until the current trigraph is a singleton.
  We claim that the built sequence of contractions only contains trigraphs with red degree at most~$2^{k+1}-1$.
  The crucial invariant is that our contractions never create a red component of size more than $2^{k+1}$.
  Hence the red degree remains bounded by~$2^{k+1}-1$.
\end{proof}

The \emph{$d$-dimensional $n$-grid} is the graph with vertex set $[n]^d$ with an edge between two vertices $(x_1, \ldots, x_d)$ and $(y_1, \ldots, y_d)$ if and only if $\sum_{i=1}^d |x_i - y_i| = 1$.
Equivalently the $d$-dimensional $n$-grid is the Cartesian product of $d$ paths on $n$ vertices, hence we write it $P_n^d$.
Thus the 1-dimensional $n$-grid is the path on $n$ vertices $P_n$, while the $2$-dimensional $n$-grid is the usual (planar) $n \times n$-grid.
While all the width parameters presented so far (including mim-width) are unbounded on the $n \times n$-grid, twin-width remains constant even on the $d$-dimensional $n$-grid, for any fixed $d$. 

\begin{theorem}\label{thm:grids}
  For every positive integers $d$ and $n$, the $d$-dimensional $n$-grid has twin-width at most~$3d$.
\end{theorem}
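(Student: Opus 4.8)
The plan is to find an explicit contraction sequence that exploits the product structure of $P_n^d$. I will contract along one "direction" at a time: fix a coordinate, say the last one, and repeatedly contract the hyperplane $x_d = n$ into the hyperplane $x_d = n-1$, then $x_d = n-1$ into $x_d = n-2$, and so on, until the whole grid collapses to a copy of $P_n^{d-1}$; then recurse on the remaining coordinates. Concretely, to merge layer $L_j = \{x : x_d = j\}$ into $L_{j-1}$, I process the vertices of $L_j$ in some fixed order and contract each $v = (x_1,\dots,x_{d-1},j)$ with its neighbor $(x_1,\dots,x_{d-1},j-1)$ directly below it. After all of $L_j$ has been absorbed, the trigraph is isomorphic (as a black graph, ignoring red edges) to the grid on $[n]^{d-1} \times [j-1]$, and I claim the red edges that survive form a controlled, local pattern.

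The key step is the red-degree bookkeeping. When I contract $v=(x_1,\dots,x_{d-1},j)$ into $w=(x_1,\dots,x_{d-1},j-1)$, a red edge is created between $w$ and each vertex adjacent to exactly one of $v,w$. The black neighbors of $v$ other than $w$ are: its $\le 2(d-1)$ neighbors inside $L_j$ (which may still be un-contracted layer-$j$ vertices or already-merged ones), and its up-neighbor $(x_1,\dots,x_{d-1},j+1)$ which, since $L_{j+1}$ has already been merged into $L_j$, is actually $v$ itself — so no new red edge from above. The black neighbors of $w$ other than $v$ are its $\le 2(d-1)$ neighbors inside $L_{j-1}$ and its down-neighbor in $L_{j-2}$. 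So each contraction introduces at most $2(d-1)$ new red edges at the merged vertex, and the red edges incident to any vertex at any time are either (i) "horizontal" red edges within the layer currently being processed, of which there are at most $2(d-1)$ because they mirror the at-most-$2(d-1)$ grid-neighbors within a layer, plus (ii) at most one red edge to the partially-merged interface below/above. A careful accounting shows the red degree never exceeds $2(d-1)+1 \le 2d-1 \le 3d$ at intermediate stages, and the base case $d=1$ (a path) needs red degree at most $1$. I would formalize this by an induction on $d$ with the invariant: "at every step, the red graph is a subgraph of a graph in which every vertex has degree at most $3(d-1)$ from the recursion plus a bounded contribution from the current layer-merging, totalling at most $3d$."

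Actually, a cleaner way to organize the induction: prove by induction on $d$ that $P_n^d$ admits a contraction sequence in which every trigraph has red degree at most $3d$ \emph{and} in which, moreover, at every intermediate trigraph isomorphic to some $P_m^{d-1}$-like residue the red edges are confined near the "boundary layer" being processed. For the inductive step, I peel off coordinate $d$ as above; while peeling, the red edges inherited from the not-yet-finished inner recursion on $P_n^{d-1}$ contribute $\le 3(d-1)$, the horizontal red edges created inside the active layer contribute a bounded amount (at most $2$, since merging a layer into the one below is exactly contracting $P_n^{d-1}$ into a parallel copy, and the only red edges this creates within the layer are between vertices at grid-distance $\le 2$, giving degree $\le$ a small constant), and the one interface red edge contributes $1$; summing gives $\le 3d$ if the constants are chosen to fit — I expect the honest count to land on something like $2(d-1)+2+1 = 2d+1 \le 3d$, which already suffices, so the $3d$ bound is comfortably loose.

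The main obstacle is making the bookkeeping airtight: precisely tracking which neighbors of a vertex are black versus red at the moment of each contraction, especially for vertices sitting on the moving "interface" between an already-merged sublayer and a not-yet-merged one, and verifying the claimed structural invariant ("red edges stay local to the active layer") is preserved through an entire layer-merge and then through the switch to the next coordinate in the recursion. I expect no conceptual difficulty beyond this careful inductive case analysis, and the slack between the true constant ($\sim 2d$) and the stated bound $3d$ gives room for imprecise constants in the recursion.
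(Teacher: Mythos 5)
Your overall plan (collapse the last coordinate layer by layer, then recurse on $P_n^{d-1}$) is different from the paper's — the paper instead runs the inductive $3(d-1)$-sequence for $P_n^{d-1}$ \emph{in parallel} inside each of the $n$ layers (first contraction in $V_1$, then $V_2$, \dots, $V_n$, then the second contraction in each, and so on), ending with a red path — but the decisive issue is that your red-degree accounting is wrong, and as stated your scheme does not give the bound $3d$. When you contract $v=(x_1,\dots,x_{d-1},j)$ with $w=(x_1,\dots,x_{d-1},j-1)$, these two vertices have \emph{no common neighbour} in the grid, so \emph{every} vertex adjacent to exactly one of them becomes a red neighbour of the merged vertex: up to $2(d-1)$ neighbours of $v$ inside layer $j$, up to $2(d-1)$ neighbours of $w$ inside layer $j-1$, and the down-neighbour in layer $j-2$, i.e.\ up to $4d-3$ red edges — not the ``at most $2(d-1)$'' you conclude from your own enumeration, nor the $2(d-1)+1$ or $2d+1$ of your summaries. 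In particular, for the \emph{first} column processed in a layer no neighbouring column has been merged yet, so its merged vertex has red degree $4d-3$ regardless of the processing order, and $4d-3>3d$ as soon as $d\ge 4$; so the invariant you claim fails and the argument only yields twin-width $O(d)$ (namely $\le 4d-3$), which is weaker than the statement.

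The gap is repairable within your approach, but it requires an ingredient you did not identify: the order in which columns of a layer are absorbed matters. The saving comes from the fact that a \emph{already merged} neighbouring column contributes only one red edge to the new vertex (the two would-be red edges collapse onto the single merged blob), while an unmerged one contributes two. If you process the columns in, say, lexicographic order, then when column $x$ is treated its $\le d-1$ ``predecessor'' neighbours $x-e_i$ are already merged and only its $\le d-1$ ``successor'' neighbours $x+e_i$ are not, giving red degree at most $(d-1)\cdot 1+(d-1)\cdot 2+1=3d-2\le 3d$ at the merged vertex (and one checks all other vertices stay at most $2d-1$); the same count applies to every subsequent layer and, after the grid has collapsed to an all-red $P_n^{d-1}$, recursively to the remaining coordinates. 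With that fix your route works and even gives $3d-2$; without it, the central counting step is simply false, so as written the proposal does not prove the theorem.
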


\begin{proof}
  Let $R_n^d$ the trigraph with vertex set $V(P_n^d)$, red edges $E(P_n^d)$, and no black edge.
  We will prove, by induction on $d$, that $R_n^d$ has twin-width at most $3d$.
  The base case ($d=1$) holds since, as observed in \cref{sec:def}, the twin-width of a red path is at most~$2$.
  As all the edges will be red (no black edge can appear), we allow ourselves the following abuse of language. 
  For this proof only, by \emph{edge} (resp.~\emph{degree}) we mean \emph{red edge} (resp. \emph{red degree}).
  We now assume that $d > 1$.
  
  We see $R_n^d$ as the Cartesian product of $R_n^{d-1}$ and $R_n^1=R_n$.
  In other words, $V(R_n^d)$ can be partitioned into $n$ sets $V_1, \ldots, V_n$, where each $V_i = \{v_1^i, \dots v_{n^{d-1}}^i\}$ induces a trigraph isomorphic to $R_n^{d-1}$, and there is an edge between $v_j^i$ and $v_j^{i+1}$ for all $i \in [n-1]$, $j \in [n^{d-1}]$.
  By induction hypothesis, there is a sequence of $3(d-1)$-contractions of $P_n^{d-1}$.
  The idea is to follow this sequence in each $V_i$ ``in parallel'', i.e., performing the first contraction in $V_1$, then in $V_2$, up to $V_n$, then the second contraction in $V_1$, then in $V_2$, up to $V_n$, and so on.
  By doing so, the following invariants are maintained:
 \begin{itemize}
 	\item when performing a contraction in $V_1$, the newly created vertex has degree at most $3d-3$ in $V_1$, and~$2$ in $V_2$ (and $0$ elsewhere), so $3d-1$ in total.
 	\item when performing a contraction in $V_i$, $i \in \{2, \dots, n-1\}$, the created vertex has degree at most $3d-3$ in $V_i$, $1$ in $V_{i-1}$ (since the same pair has been contracted in $V_{i-1}$ at the previous step) and~$2$ in $V_{i+1}$ (and $0$ elsewhere), so $3d$ in total.
 	\item when performing a contraction in $V_n$, the created vertex has degree at most $3d-3$ in $V_n$, and at most one in $V_{n-1}$ (and $0$ elsewhere), so $3d-2$ in total.
 \end{itemize}

 Furthermore every vertex not involved in the current contraction has degree at most~$3d-2$: Its degree within its own $V_i$ is $3d-3$ (by induction hypothesis) and it has exactly one neighbor in $V_{i-1}$ (if this set exists) and exactly one neighbor in $V_{i+1}$ (if this set exists).
 When this process terminates, each $V_i$ has been contracted into a single vertex.
 Hence the current trigraph is the red path $R_n$, which admits a sequence of 2-contractions.  
\end{proof}
As we even showed that the twin-width of the red graph $R_n^d$ is at most $3d$, it implies that the twin-width of any subgraph of the $d$-dimensional $n$-grid is bounded by $3d$.

The \emph{$d$-dimensional $n$-grid with diagonals} is the graph on $[n]^d$ with an edge between two distinct vertices $(x_1, \ldots, x_d)$ and $(y_1, \ldots, y_d)$ if and only if $\max_{i=1}^d |x_i - y_i| \leqslant 1$.
We denote this graph by $\mathcal K_{n,d}$ and by, $\mathcal K^r_{n,d}$ the trigraph $([n]^d,\emptyset,E(\mathcal K_{n,d}))$ with only red edges.
By the arguments of \cref{thm:grids}, one can see that every subgraph of $\mathcal K_{n,d}$ (even of $\mathcal K^r_{n,d}$) has twin-width bounded by a function of~$d$ (observe that $K^r_{n,d}$ has red degree at most $3^d$).
\begin{lemma}\label{lem:kings}
  Every subgraph of $\mathcal K^r_{n,d}$ has twin-width at most $2(3^d - 1)$.
\end{lemma}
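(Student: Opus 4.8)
The plan is to carry out the same inductive scheme as in the proof of \cref{thm:grids}, the only difference being that the $n$ slices now carry a $(d-1)$-dimensional king trigraph and are glued together by a closed-neighbourhood relation rather than by a perfect matching. This ``thickening'' of the inter-slice edges is precisely what replaces the additive ``$+3$ per dimension'' of \cref{thm:grids} by a multiplicative factor $3$, and this is what produces the bound $2(3^d-1)$.

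Concretely, I would prove by induction on $d$ that $\text{tww}(\mathcal K^r_{n,d}) \le 2(3^d-1)$ (the statement for subgraphs will follow afterwards). The base case $d=1$ is immediate since $\mathcal K^r_{n,1}$ is a red path, of twin-width at most $2 \le 2(3-1)$. For $d>1$, write $V(\mathcal K^r_{n,d}) = [n]^{d-1}\times[n]$ and let $V_i = [n]^{d-1}\times\{i\}$: each $V_i$ induces a copy of $\mathcal K^r_{n,d-1}$, there is no edge between $V_i$ and $V_j$ whenever $|i-j|\ge 2$, and $(x,i)(y,i{+}1)$ is a red edge exactly when $\max_k |x_k-y_k|\le 1$, that is, when $y$ lies in the closed neighbourhood of $x$ in $\mathcal K^r_{n,d-1}$. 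Fix, by induction, a $(2(3^{d-1}-1))$-sequence $\mathcal S$ collapsing $\mathcal K^r_{n,d-1}$, and run $\mathcal S$ ``in parallel'' over the slices in the round-robin order of \cref{thm:grids}: perform contraction $1$ of $\mathcal S$ in $V_1$, then in $V_2$, \ldots, then in $V_n$, then contraction $2$ of $\mathcal S$ in $V_1$, and so on. Once $\mathcal S$ is exhausted, every $V_i$ has been contracted to a single vertex, and the remaining trigraph is the red path $R_n$, of twin-width at most $2$.

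The technical core is the red-degree bound. As in \cref{thm:grids}, the invariant is that two consecutive slices are always at the same step of $\mathcal S$ or differ by exactly one step, so their current partitions of $[n]^{d-1}$ are nested. The main lemma to prove is a ``footprint'' statement: if $P$ and $Q$ are parts living in two consecutive slices (each being a set of original vertices of $\mathcal K^r_{n,d-1}$), then the corresponding contracted vertices are joined by a red edge if and only if either $P$ and $Q$ are distinct and non-homogeneous inside $\mathcal K^r_{n,d-1}$, or $P=Q$ (the copy of a part in the neighbouring slice is always a red neighbour of it, via the ``diagonal'' edges). The first equivalence holds because $P$ and $Q$ are disjoint, so ``there is a close pair and a far pair between $P$ and $Q$'' means the same thing within one slice and across two slices; and since every trigraph encountered has no black edge, the contraction rule reduces to ``a red neighbour of the merge is exactly a red neighbour of one of the two contracted vertices''. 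From this one reads off that each contracted vertex has at most $2(3^{d-1}-1)$ red edges inside its own slice (induction hypothesis), at most $2(3^{d-1}-1)+1$ red edges to a neighbouring slice at the same step or one step behind, and at most $2(3^{d-1}-1)+2$ red edges to a neighbouring slice one step ahead --- and in that last case the ``$+2$'' is incurred only by the freshly created vertex, seeing separately the two parts that the pending contraction of the behind-slice is about to merge. Since, in the round-robin schedule, a slice is strictly ahead of at most one of its two neighbours, the worst-case red degree is
\[
2(3^{d-1}-1) \;+\; \bigl(2(3^{d-1}-1)+1\bigr) \;+\; \bigl(2(3^{d-1}-1)+2\bigr) \;=\; 2\cdot 3^d - 3 \;\le\; 2(3^d-1),
\]
attained only transiently by a just-created vertex, while all other vertices stay below $2(3^d-1)$.

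Finally, the statement for an arbitrary subgraph $G$ of $\mathcal K^r_{n,d}$ costs nothing extra: since $\mathcal K^r_{n,d}$ has no black edge, contracting any all-red trigraph only ever creates red edges, so running the same sequence on $G$ (skipping the contractions of deleted vertices) keeps $G$'s red graph a subgraph of the red graph of the corresponding trigraph in the sequence for $\mathcal K^r_{n,d}$, and hence its red degree stays at most $2(3^d-1)$. I expect the main obstacle to be the red-degree bookkeeping in the inductive step --- precisely, establishing the footprint lemma and verifying that the transient ``$+2$'' cannot co-occur with another ``$+2$'' --- but this is an elementary case analysis once the round-robin schedule of \cref{thm:grids} is in place.
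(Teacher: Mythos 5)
Your proposal is correct and is essentially the paper's own argument: the paper gives no standalone proof of this lemma beyond the remark that it follows ``by the arguments of \cref{thm:grids}'', and your round-robin parallel execution of the inductive $2(3^{d-1}-1)$-sequence over the slices, the red-degree accounting $2(3^{d-1}-1)+\bigl(2(3^{d-1}-1)+1\bigr)+\bigl(2(3^{d-1}-1)+2\bigr)=2\cdot 3^d-3\le 2(3^d-1)$, and the observation that in all-red trigraphs the same sequence restricts to any subgraph are exactly that adaptation. (One cosmetic nitpick: since consecutive slices may be one step apart, a part of the ahead slice can properly contain a part of the behind slice, so the ``footprint'' criterion is best stated as $N[P]\cap Q\neq\emptyset$ rather than ``$P=Q$ or distinct and non-homogeneous''; your ``$+2$'' bookkeeping already handles precisely that nested case.)
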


This fact permits to bound the twin-width of unit $d$-dimensional ball graphs with bounded clique number; actually even their subgraphs.
\begin{theorem}\label{thm:unitBall}
  Every subgraph $H$ of a unit $d$-dimensional ball graph $G$ with clique number~$k$ has twin-width at most~$d' := (3 \lceil \sqrt{d} \rceil)^d k$.
  Furthermore if $G$ comes with a geometric representation (i.e., coordinates for each vertex of $G$ in a possible model), then a $d'$-contraction sequence of $H$ can be found in polynomial time.
\end{theorem}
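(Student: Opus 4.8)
The plan is to reduce the problem to \cref{lem:kings} by overlaying a suitable grid on the geometric representation. Since $G$ is a unit ball graph, $uv\in E(G)$ iff the centres satisfy $\|c_u-c_v\|\leqslant 2$, and recall $E(H)\subseteq E(G)$. Set $m:=\lceil\sqrt d\rceil$ and tile $\mathbb R^d$ by axis-parallel half-open cubes of side $2/m$, then group these into \emph{cells}, each cell being a block of $m^d$ consecutive cubes (hence a half-open cube of side $2$); index the cells by a finite grid $[N]^d$ with $N=O(n)$. Two properties follow from this calibration. First, two centres in a common cube are at distance at most $(2/m)\sqrt d\leqslant 2$, hence adjacent in $G$; as $G$ has clique number $k$, each cube contains at most $k$ centres of $V(H)$, so each cell contains at most $t:=m^dk=\lceil\sqrt d\rceil^dk$ of them. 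Second, if two cells are at $\ell_\infty$-distance at least $2$ in $[N]^d$, then (using that the cells are half-open of side $2$) any centre of the first and any centre of the second differ by more than $2$ in some coordinate, hence are non-adjacent in $G$, and a fortiori in $H$. So $V(H)$ is partitioned into parts $(P_A)_{A\in[N]^d}$ with $|P_A|\leqslant t$, where $P_A$ and $P_B$ are completely non-adjacent — in particular homogeneous — as soon as $\|A-B\|_\infty\geqslant 2$.

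I then build a $d'$-contraction sequence of $H$ by simulating a good contraction sequence of the cell-grid, where $d':=(3m)^dk=3^dt$. The trigraph $\mathcal K^r_{N,d}$ restricted to the non-empty cells is a subgraph of $\mathcal K^r_{N,d}$, so by \cref{lem:kings} it admits a $2(3^d-1)$-contraction sequence $\pi$. I run $\pi$ on $H$, lazily ``blowing up'' each cell: the invariant is that every current super-vertex of $\pi$, representing a set $S$ of cells, corresponds in the sequence for $H$ to a single vertex standing for $\bigcup_{A\in S}P_A$ — except that a cell $A$ is collapsed (its at most $t$ vertices contracted to one representative, one pair at a time) only the first time $\pi$ touches it. There are thus two kinds of contraction. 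Collapsing a cell $A$: a red edge can be created only between the growing representative and a vertex lying in a cell at $\ell_\infty$-distance at most $1$ from $A$ (distant cells being completely non-adjacent, no red edge to them ever arises), so every vertex has red degree at most $(t-1)+(3^d-1)t=3^dt-1<d'$. Executing a step of $\pi$ (a contraction of the two singleton representatives of two super-vertices): the merged vertex inherits the red neighbourhood of the corresponding super-vertex in $\pi$, i.e.\ at most $2(3^d-1)$ red neighbours, each already a single representative by the invariant, so its red degree is at most $2(3^d-1)$, which is at most $d'$ once $k\geqslant 2$ (and $k=1$ makes $G$, hence $H$, edgeless, so $\text{tww}(H)=0$). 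Therefore this is a $d'$-sequence and $\text{tww}(H)\leqslant d'=(3\lceil\sqrt d\rceil)^dk$.

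For the algorithmic part, given a geometric representation of $G$ we compute in time $O(nd)$ the cube and cell of every vertex, hence the partition $(P_A)_A$ and the set of non-empty cells. The sequence $\pi$ is produced in polynomial time by the procedure behind \cref{lem:kings}, which, like the proof of \cref{thm:grids} on which it rests, is fully constructive; and the blow-up of $\pi$ into the sequence for $H$ is read off in linear time from $\pi$ and $(P_A)_A$. The whole sequence is output in polynomial (indeed quadratic, or better) time in $n$.

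The step I expect to be the crux is the red-degree bookkeeping while collapsing a cell: its boundedness does not follow from $H$ having small degree (subgraphs of bounded-degree graphs may have unbounded twin-width), but genuinely exploits the geometric king-grid structure packaged by \cref{lem:kings} — namely that cells far apart in $\ell_\infty$ carry no edge and hence never interact — which confines every red neighbourhood to a bounded cluster of cells and is exactly what pins the bound to $3^d m^d k=(3\lceil\sqrt d\rceil)^dk$. The metric calibration of the cube and cell sizes (cubes being cliques while $\ell_\infty$-distant cells are edgeless) is the only other point requiring care, and is handled by the half-open tiling above with no loss in the constant.
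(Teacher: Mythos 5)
Your setup (the $2/\lceil\sqrt d\rceil$-cube tiling giving at most $k$ centres per cube, the side-$2$ cells with at most $t=\lceil\sqrt d\rceil^d k$ centres each, homogeneity of $\ell_\infty$-far cells, and the reduction to \cref{lem:kings}) is exactly the paper's route, and the algorithmic part is fine. The gap is in the red-degree bookkeeping of your \emph{lazy} interleaving of the cell collapses with the king-grid sequence $\pi$. When you execute a step of $\pi$, you claim the merged vertex has at most $2(3^d-1)$ red neighbours ``each already a single representative by the invariant''; but your invariant explicitly exempts cells not yet touched by $\pi$, and those are precisely the problematic red neighbours: an untouched cell king-adjacent to the merged super-vertex is still represented by up to $t$ individual vertices, every one of which can be a red neighbour of the merged vertex. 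So right after a $\pi$-step (and persistently, as long as some red-neighbouring cells of an already-formed super-vertex remain untouched) the red degree is only bounded by (number of red-neighbour super-vertices in $\pi$)$\,\times t$, i.e.\ up to $2(3^d-1)t$, which exceeds $d'=3^d t$ for $d\geqslant 2$. This is not a hypothetical worst case of an adversarial $\pi$: already the first contraction of the sequence behind \cref{lem:kings} merges two adjacent cells all of whose $4\cdot 3^{d-1}-2$ king-neighbours are untouched, giving red degree up to $(4\cdot 3^{d-1}-2)t>3^dt$ for $d\geqslant 2$.

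The fix is to abandon the laziness and do what the paper does: first contract \emph{every} cell down to a single vertex (in any order of cells and of vertices within a cell). During this phase each current vertex has red neighbours only inside its own cell and the $3^d-1$ adjacent cells, each contributing at most $t$, so the red degree stays below $3^dt-1<d'$ — this is the same count as in your ``collapsing a cell'' case, and it is valid here because no $\pi$-steps have been interleaved yet. Only once all cells are singletons is the current trigraph a subgraph of $\mathcal K^r_{N,d}$, and then \cref{lem:kings} applies verbatim: every subsequent red neighbour really is a single vertex, so the red degree is at most $2(3^d-1)\leqslant d'$ (handling $k=1$ separately, as you do). With this reordering the rest of your argument, including the polynomial-time computation of the sequence, goes through.
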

\begin{proof}
  The result is immediate for $k=1$, so we assume that $k \geqslant 2$.
  We even show the result when all the edges of $H$ are in fact red edges, by exhibiting a sequence of contractions which keeps the (red) degree below $d'$.
  We draw a geometric regular $d$-dimensional fine grid on top of the geometric representation of $G$.
  The spacing of the grid is $2/\sqrt{d}$ so that a largest diagonal of each hypercubic cell has length exactly~2.
  Hence the unit balls centered within a given cell form a clique.
  In particular, each cell contains at most $k$ centers.
  We also consider the coarser tesselation where a \emph{supercell} is a hypercube made of $\lceil \sqrt{d} \rceil^d$ (smaller) cells.
  Hence a supercell contains at most $\lceil \sqrt{d} \rceil^d k$ centers.
  
  We contract the vertices of each supercell into a single vertex.
  This can be done in any order of the supercells, and in any order of the vertices within each supercell.
  Observe that, throughout this process, the (red) degree does not exceed $(3 \lceil \sqrt{d} \rceil)^d k$.

  After these $d'$-contractions, the graph that we obtain is a subgraph of $\mathcal K^r_{n,d}$.
  Hence it admits a $2(3^d - 1)$-sequence by \cref{lem:kings}.
  We conclude since $2(3^d - 1) \leqslant (3 \lceil \sqrt{d} \rceil)^d k$. 
\end{proof}
Of course the constructive result of \cref{thm:unitBall} can be proved in greater generality.
It would work with any collection of objects where the ratio between the smallest (taken over the objects) radius of a largest enclosed ball and the largest radius of a smallest enclosing ball is bounded, as well as the clique number.
In \cite{twin-width2} we will see that unit disk graphs (with no restriction on the clique number), as well as interval graphs and $K_t$-free unit segment graphs, have unbounded twin-width. 

%

%

\section{The grid theorem for twin-width}\label{sec:grid-theorem}

In this section, we will deal with matrices instead of graphs.
Our matrices have their entries on a finite alphabet with a special additional value $r$ (for red) representing errors made along the computations.
This is the analog of the red edges of the previous section.

\subsection{Twin-width of matrices, digraphs, and binary structures}\label{sec:dig-encoding}

The \emph{red number} of a matrix is the maximum number of red entries taken over all rows and all columns.
Given an ${n \times m}$ matrix $M$ and two columns $C_i$ and $C_j$, the \emph{contraction} of $C_i$ and $C_j$ is obtained by deleting $C_j$ and replacing every entry $m_{k,i}$ of $C_i$ by $r$ whenever $m_{k,i} \neq m_{k,j}$.
The same contraction operation is defined for rows.
A matrix $M$ has \emph{twin-width} at most~$k$ if one can perform a sequence of contractions starting from $M$ and ending in some ${1 \times 1}$ matrix in such a way that all matrices occurring in the process have red number at most~$k$.
Note that when $M$ has twin-width at most~$k$, one can reorder its rows and columns in such a way that every contraction will identify consecutive rows or columns.
The reordered matrix is then called \emph{$k$-twin-ordered}.
The \emph{symmetric twin-width} of an $n \times n$ matrix $M$ is defined similarly, except that the contraction of rows $i$ and $j$ (resp.~columns $i$ and $j$) is immediately followed by the contraction of columns $i$ and $j$ (resp.~rows $i$ and $j$), and the new red number is only computed after the two contractions are performed.

We can now extend the twin-width to digraphs, which in particular capture posets.
Unsurprisingly the twin-width of a digraph is defined as the symmetric twin-width of its adjacency matrix; only we write the adjacency matrix in a specific way.
Say, the vertices are labeled $v_1, \ldots, v_n$.
If there is an arc $v_iv_j$ (but no arc $v_jv_i$), we place a 1 entry in the $i$-th row $j$-column of the matrix and a -1 entry in the $j$-th row $i$-th column.
If there are two arcs $v_iv_j$ and $v_jv_i$, we place a 2 entry in both the $i$-th row $j$-column and $j$-th row $i$-th column.
If there is no arc $v_iv_j$ nor $v_jv_i$, we place a 0 entry in both the $i$-th row $j$-column and $j$-th row $i$-th column.
We then further extend twin-width to a binary structure $S$ with binary relations $E^1, \ldots, E^h$.
When building the adjacency matrix, the entry at $v_i,v_j$ is now $(e_1, \ldots, e_h)$ where $e_p \in \{-1,0,1,2\}$ is chosen accordingly to the encoding of the ``digraph $E^p$''.
Again the twin-width of a binary structure is the symmetric twin-width of the so-built adjacency matrix.

We call \emph{augmented binary structure} a binary structure augmented by a constant number of unary relations.
The twin-width is extended to \emph{augmented binary structures} by seeing unary relations as hard constraints.
More concretely, contractions between two vertices $u$ and $v$ are only allowed if they are in the exact same unary relations.
Formally, in a binary structure $G$ augmented by unary relations $U_1, \ldots, U_h$, the contraction of $u$ and $v$ is only possible when for every $j \in [h]$, $G \models U_j(u) \Leftrightarrow G \models U_j(v)$.
When this happens, the contracted vertex $z$ inherits the unary relations containing $u$ (or equivalently $v$).

Contrary to the contraction sequence of a binary structure (without unary relations), we cannot expect the contraction sequence to end on a single vertex.
Instead a sequence now ends when no pair of vertices are included in the same unary relations.
When this eventually happens, the number of vertices is nevertheless bounded by the constant $2^h$.
We could continue the contraction sequence arbitrarily, but, anticipating our use of augmented binary structures in \cref{sec:fo-inter}, it is preferable to stop the sequence there.

By a straightforward generalization of the proof of~\cref{thm:apex}, one can see that adding $h$~unary relations can at most multiply the twin-width by $2^h$.
\begin{lemma}\label{lem:unary}
  The twin-width of a binary structure $G$ augmented by $h$ unary relations is at most $2^h \cdot \text{tww}(G)$.
\end{lemma}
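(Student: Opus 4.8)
The plan is to imitate the proof of \cref{thm:apex} almost line for line, replacing the single apex vertex (equivalently, the bipartition $X,\, V(G)\setminus X$ it induces, across which contractions must never be made) by the partition of $V(G)$ into the at most $2^h$ \emph{unary types}, where two vertices have the same type when they belong to exactly the same relations among $U_1,\dots,U_h$. The key structural point that makes this replacement work is that, in any contraction sequence of the augmented structure, every vertex of every intermediate trigraph is \emph{monochromatic}: it corresponds to a set of original vertices all of one type. This is forced by the rule that a contraction is allowed only between vertices of the same type, and that the contracted vertex inherits that type.

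Concretely, let $d=\text{tww}(G)$ and fix a $d$-sequence $G=G_n,\dots,G_1$ that ignores the unary relations. I would run this sequence ``in parallel across the types'': whenever a vertex $u$ of $G_i$ stands for the set $u(G)\subseteq V(G)$, I keep instead the (at most $2^h$) nonempty traces $u(G)\cap C_t$, each a legitimate monochromatic vertex of the augmented structure under construction. A single contraction $u_i,u'_i\mapsto w$ of the original sequence is then simulated by at most $2^h$ contractions --- for each type $C_t$, contract $u_i(G)\cap C_t$ with $u'_i(G)\cap C_t$ whenever both are nonempty --- performed one after another. All of these stay inside a type, hence are legal. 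When the original sequence reaches $G_1$, each type has been contracted to a single vertex, so the process ends on the $\le 2^h$ type-vertices, pairwise of distinct types; this is precisely where a contraction sequence of an augmented structure is required to stop.

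It remains to bound the red number throughout, and here the bookkeeping is the same as the one carried out at the end of the proof of \cref{thm:apex}. The crucial observation (``homogeneity transfer'') is: if $w,w'$ are \emph{not} red-adjacent in $G_i$, i.e.\ $w(G)$ and $w'(G)$ are homogeneous in $G$, then any trace $w(G)\cap C_t$ and any trace $w'(G)\cap C_{t'}$ are subsets of homogeneous sets, hence homogeneous, hence not red-adjacent. Consequently the red neighbours of a trace $u(G)\cap C_t$ can only be found among: the at most $2^h$ traces of each of the at most $d$ red-neighbours of $u$ in $G_i$; the other traces of $u(G)$; and, transiently while a single original step is being simulated, the traces of the partner vertex. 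Counting these exactly as in the proof of \cref{thm:apex} yields the bound $2^h\cdot\text{tww}(G)$ claimed in the statement.

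The step I expect to be the only genuine obstacle is the transient analysis: a single original contraction is realized by up to $2^h$ micro-contractions done sequentially, so one has to fix an order for them and check that none of the intermediate trigraphs exceeds the target red number --- the analogue of the ``$+1$/$+2$'' accounting at the end of the proof of \cref{thm:apex}, now spread over $2^h$ micro-steps rather than two. Everything else --- defining the simulated sequence, verifying it terminates on the type-vertices, and the homogeneity-transfer observation --- is routine.
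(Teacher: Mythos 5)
Your plan coincides with the paper's intended argument: the paper gives no detailed proof of \cref{lem:unary}, stating only that it is a ``straightforward generalization of the proof of \cref{thm:apex}'' (``without the apex''), and your scheme --- refine a fixed $d$-sequence of $G$ by the at most $2^h$ unary types, simulate each original contraction by at most $2^h$ within-type micro-contractions, and use homogeneity transfer to restrict where red edges can appear --- is exactly that generalization.

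The one point where your write-up overreaches is the final counting claim, and your instinct that the transient accounting is the delicate step is exactly right. A trace $u(G)\cap C_t$ can be red-adjacent to the \emph{other traces of the same vertex} $u$ (up to $2^h-1$ of them, since homogeneity is only guaranteed between distinct vertices of $G_i$, not inside $u(G)$), and mid-simulation a merged trace additionally sees the not-yet-merged traces of the partner vertex. This is precisely the ``$+1$/$+2$'' of \cref{thm:apex}, whose own conclusion is $2(\text{tww}(G)+1)$ rather than $2\,\text{tww}(G)$; carried out for $2^h$ types the same bookkeeping gives a bound of the shape $2^h\,\text{tww}(G)+O(2^h)$ (for instance $2^h(\text{tww}(G)+2)-2$ with the naive ordering of the micro-steps), not the clean $2^h\cdot\text{tww}(G)$. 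Indeed the literal constant of \cref{lem:unary} cannot be obtained by any argument: take $G=C_4$, a cograph of twin-width $0$, augmented by one unary relation marking the two endpoints of a single edge of the cycle; the only admissible contractions pair up the marked vertices and the unmarked vertices, and either first contraction creates a vertex of red degree $2$, so the augmented twin-width is $2>2^1\cdot 0$. So do not try to squeeze the count down to $2^h d$; the honest outcome of your (and the paper's) argument is a $2^h(\text{tww}(G)+1)$-flavoured bound, which is all that is actually needed where the lemma is invoked (e.g.\ in \cref{lem:copy-exp} and \cref{sec:fo-inter}, only preservation of bounded twin-width matters).
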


Given a total order $\sigma$ on the domain of a binary structure $G$, we denote by $A_\sigma(G)$ the adjacency matrix encoded accordingly to the previous paragraph and following the order~$\sigma$.
Denoting $M := A_\sigma(G) = (m_{ij} = (e^{ij}_1, \ldots, e^{ij}_h))_{i,j}$, the matrix $M$ satisfies the important following property, mixing symmetry and skew-symmetry.
If $e^{ij}_p \in \{0,2\}$ then $e^{ij}_p = e^{ji}_p$, and if $e^{ij}_p \in \{-1,1\}$ then $e^{ij}_p = - e^{ji}_p$.
We call this property \emph{mixed-symmetry} and $M$ is said \emph{mixed-symmetric}.
This will be useful to find \emph{symmetric} sequences of contractions.

\subsection{Partition coarsening, contraction sequence, and error value}

Here we present an equivalent way of seeing the twin-width with a successive coarsening of a partition, instead of explicitly performing the contractions with deletion.

A partition $\mathcal P$ of a set $S$ \emph{refines} a partition ${\mathcal P}'$ of $S$ if every part of ${\mathcal P}$ is contained in a part of ${\mathcal P}'$.
Conversely we say that $\mathcal P'$ is a coarsening of $\mathcal P$, or \emph{contains} $\mathcal P$.
When every part of ${\mathcal P}'$ contains at most $k$ parts of ${\mathcal P}$, we say that $\mathcal P$ \emph{$k$-refines} ${\mathcal P}'$.
Given a partition ${\mathcal P}$ and two distinct parts $P, P'$ of ${\mathcal P}$, the \emph{contraction} of $P$ and $P'$ yields the partition ${\mathcal P} \setminus \{P,P'\} \cup \{P\cup P'\}$.

Given an ${n\times m}$ matrix $M$, a \emph{row-partition} (resp.~\emph{column-partition}) is a partition of the rows (resp.~columns) of~$M$.
A \emph{$(k,\ell)$-partition} (or simply \emph{partition}) of a matrix $M$ is a pair $({\mathcal R}=\{R_1,\dots ,R_k\},$ $ {\mathcal C}=\{C_1,\dots ,C_\ell\})$ where $\mathcal R$ is a row-partition and $\mathcal C$ is a column-partition.
A \emph{contraction} of a partition $({\mathcal R},{\mathcal C})$ of a matrix $M$ is obtained by performing one contraction in ${\mathcal R}$ or in ${\mathcal C}$. 

We distinguish two extreme partitions of an $n \times m$ matrix $M$: the \emph{finest partition} where $({\mathcal R},{\mathcal C})$ have size $n$ and $m$, respectively, and the \emph{coarsest partition} where they both have size one.
The finest partition is sometimes called the \emph{partition of singletons}, since all its parts are singletons, and the coarsest partition is sometimes called the \emph{trivial partition}.
A \emph{contraction sequence} of an $n \times m$ matrix $M$ is a sequence of partitions $({\mathcal R}^1,{\mathcal C}^1),\dots ,({\mathcal R}^{n+m-1},{\mathcal C}^{n+m-1})$  where
\begin{itemize}
\item $({\mathcal R}^1,{\mathcal C}^1)$ is the finest partition,
\item $({\mathcal R}^{n+m-1},{\mathcal C}^{n+m-1})$ is the coarsest partition, and
\item for every $i \in [n+m-2]$, $({\mathcal R}^{i+1},{\mathcal C}^{i+1})$ is a contraction of $({\mathcal R}^i,{\mathcal C}^i)$.
\end{itemize}

Given a subset $R$ of rows and a subset $C$ of columns in a matrix $M$, the \emph{zone $R \cap C$} denotes the submatrix of all entries of $M$ at the intersection between a row of $R$ and a column of $C$.
A \emph{zone} of a partition pair $({\mathcal R},{\mathcal C})=(\{R_1, \ldots, R_k\},\{C_1, \ldots, C_\ell\})$ is any $R_i \cap C_j$ for $i \in [k]$ and $j \in [\ell]$.
A zone is \emph{constant} if all its entries are identical.
The \emph{error value} of $R_i$ is the number of non constant zones among all zones in $\{R_i \cap C_1, \ldots, R_i\cap C_\ell\}$.
We adopt a similar definition for the \emph{error value} of $C_j$.
The \emph{error value} of $({\mathcal R},{\mathcal C})$ is the maximum error value taken over all $R_i$ and $C_j$. 

We can now restate the definition of twin-width of a matrix $M$ as the minimum $t$ for which there exists a contraction sequence of $M$ consisting of partitions with error value at most $t$. The following easy technical lemma will be used later to upper bound twin-width.

\begin{lemma}\label{lem:subsequence}
If $({\mathcal R}^1,{\mathcal C}^1),\dots ,({\mathcal R}^s,{\mathcal C}^s)$ is a sequence of partitions of a matrix $M$ such that:
\begin{compactitem}
\item $({\mathcal R}^1,{\mathcal C}^1)$ is the finest partition,
\item $({\mathcal R}^s,{\mathcal C}^s)$ is the coarsest partition,
\item ${\mathcal R}^{i}$ $r$-refines ${\mathcal R}^{i+1}$ and ${\mathcal C}^{i}$ $r$-refines ${\mathcal C}^{i+1}$, and
\item all $({\mathcal R}^{i},{\mathcal C}^{i})$ have error value at most $t$,
\end{compactitem}
then the twin-width of $M$ is at most $rt$.
\end{lemma}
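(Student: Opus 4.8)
The idea is to insert $r-1$ intermediate partitions between each consecutive pair $(\mathcal R^i,\mathcal C^i)$ and $(\mathcal R^{i+1},\mathcal C^{i+1})$, turning the fast-coarsening sequence into a genuine contraction sequence (one part-merge at a time) while controlling the error value. Since $\mathcal R^i$ $r$-refines $\mathcal R^{i+1}$ and $\mathcal C^i$ $r$-refines $\mathcal C^{i+1}$, each part of $\mathcal R^{i+1}$ is the union of at most $r$ parts of $\mathcal R^i$, and similarly for columns; so passing from $(\mathcal R^i,\mathcal C^i)$ to $(\mathcal R^{i+1},\mathcal C^{i+1})$ can be realized by a bounded number of elementary contractions, performed one at a time. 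Concretely, I would first merge the row-parts group by group: for a target part $Q \in \mathcal R^{i+1}$ that is the union of parts $P_1,\dots,P_q$ of $\mathcal R^i$ (with $q \le r$), contract $P_1$ with $P_2$, then the result with $P_3$, and so on, doing this for every target row-part in turn, and then do the same for columns. This yields an explicit contraction sequence $(\mathcal R^1,\mathcal C^1) = \cdots$ from the finest to the coarsest partition.

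**Bounding the error value of the intermediate partitions.** The key point is that every intermediate partition $(\mathcal R,\mathcal C)$ produced during the refinement of step $i$ into step $i+1$ satisfies: $\mathcal R^i$ refines $\mathcal R$, $\mathcal R$ refines $\mathcal R^{i+1}$, and likewise for columns. I would prove the following claim: if $\mathcal R^i$ refines $\mathcal R$ refines $\mathcal R^{i+1}$ and similarly for columns, then the error value of $(\mathcal R,\mathcal C)$ is at most $rt$. To see this, fix a row-part $R_a$ of $\mathcal R$ and a column-part $C_b$ of $\mathcal C$. The zone $R_a \cap C_b$ is contained in a zone $R'_a \cap C'_b$ of $(\mathcal R^{i+1},\mathcal C^{i+1})$, where $R'_a \supseteq R_a$, $C'_b \supseteq C_b$. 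If $R'_a \cap C'_b$ is constant, then so is $R_a \cap C_b$; hence any non-constant zone in row $R_a$ lies inside a non-constant zone of row $R'_a$ in the coarser partition. Now $R'_a$ has at most $t$ non-constant zones in $(\mathcal R^{i+1},\mathcal C^{i+1})$, and each such coarse zone $R'_a \cap C'_b$ is split into at most $r$ zones of the form $R_a \cap C$ (since $\mathcal C$ refines $\mathcal C^{i+1}$, i.e., $C'_b$ contains at most — wait, here I need $\mathcal C$ to $r$-refine $\mathcal C^{i+1}$, which holds since $\mathcal C^i$ does and $\mathcal C^i$ refines $\mathcal C$... actually more carefully, each part of $\mathcal C^{i+1}$ contains at most $r$ parts of $\mathcal C^i$, hence at most $r$ parts of $\mathcal C$, since $\mathcal C$ is coarser than $\mathcal C^i$). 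Therefore row $R_a$ has at most $rt$ non-constant zones, and symmetrically for columns; so the error value of $(\mathcal R,\mathcal C)$ is at most $rt$.

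**Assembling the argument.** Combining the two parts: the concatenation of all these one-step-at-a-time contractions, for $i$ running from $1$ to $s-1$, is a contraction sequence of $M$ from the finest to the coarsest partition in which every partition that appears is sandwiched between some $(\mathcal R^i,\mathcal C^i)$ and $(\mathcal R^{i+1},\mathcal C^{i+1})$ in the refinement order, hence (by the claim) has error value at most $rt$. Also the original partitions $(\mathcal R^i,\mathcal C^i)$ themselves appear and trivially have error value at most $t \le rt$. By the restated definition of twin-width of a matrix as the minimum error value achievable by a contraction sequence, we conclude $\mathrm{tww}(M) \le rt$.

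**Main obstacle.** The one point requiring care is the bookkeeping in the order of elementary contractions: when merging the parts of $\mathcal R^i$ into those of $\mathcal R^{i+1}$ one merge at a time, the partitions encountered are intermediate between $\mathcal R^i$ and $\mathcal R^{i+1}$ only on the row side, while the column side is still $\mathcal C^i$ (not yet $\mathcal C^{i+1}$); but $\mathcal C^i$ also $r$-refines $\mathcal C^{i+1}$ and is refined by itself, so the sandwiching hypothesis of the claim still applies with the roles of "finer" and "coarser" column partitions taken as $\mathcal C^i$ and $\mathcal C^{i+1}$. Thus the claim, stated with independent row and column refinement chains, covers every partition in the assembled sequence, and no genuine difficulty remains beyond writing this out carefully.
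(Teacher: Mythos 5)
Your proposal is correct and follows essentially the same route as the paper's proof: interpolate the given sequence with one-at-a-time contractions, and bound the error value of any intermediate partition $(\mathcal R,\mathcal C)$ sandwiched between $(\mathcal R^i,\mathcal C^i)$ and $(\mathcal R^{i+1},\mathcal C^{i+1})$ by noting that a part of $\mathcal R$ sits inside a part of $\mathcal R^{i+1}$, which meets at most $t$ non-constant coarse zones, each split into at most $r$ parts of $\mathcal C$. Your "main obstacle" remark about the column side being only partially coarsened is handled identically in the paper by stating the sandwiching hypothesis independently for rows and columns.
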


\begin{proof}
  We extend the sequence $({\mathcal R}^{i},{\mathcal C}^{i})$ into a contraction sequence by performing in any order the contractions to go from every pair $({\mathcal R}^{i},{\mathcal C}^{i})$ to the next pair $({\mathcal R}^{i+1},{\mathcal C}^{i+1})$.
  A worst-case argument gives that the error value cannot exceed $rt$. 
Indeed, assume that a partition $({\mathcal R},{\mathcal C})$ contains $({\mathcal R}^{i},{\mathcal C}^{i})$ and refines $({\mathcal R}^{i+1},{\mathcal C}^{i+1})$ and that $R$ is a part of ${\mathcal R}$. 
Every part of ${\mathcal C}$ is contained in a part of ${\mathcal C}^{i+1}$ and every part of ${\mathcal C}^{i+1}$ contains at most $r$ parts of ${\mathcal C}$. 
Moreover, at most $t$ parts of ${\mathcal C}^{i+1}$ form non-constant zones with $R$.
Therefore, at most $rt$ parts of ${\mathcal C}$ form non-constant zones with $R$.
\end{proof}

\subsection{Matrix division and the Marcus-Tardos theorem}

In a contraction sequence of a matrix $M$, one can always reorder the rows and the columns of $M$ in such a way that all parts of all partitions in the contraction sequence consist of consecutive rows or consecutive columns.
To mark this distinction, a \emph{row-division} is a row-partition where every part consists of consecutive rows; with the analogous definition for \emph{column-division}.
A \emph{$(k,\ell)$-division} (or simply \emph{division}) of a matrix $M$ is a pair $({\mathcal R},{\mathcal C})$ of a row-division and a column-division with respectively $k$ and $\ell$ parts.
A \emph{fusion} of a division is obtained by contraction of two consecutive parts of ${\mathcal R}$ or of ${\mathcal C}$.
Fusions are just contractions preserving divisions.
A \emph{division sequence} is a contraction sequence in which all partitions are divisions.

We now turn to the fundamental tool which is basically only applied once but is the cornerstone of twin-width.
Given a $0,1$-matrix $M=(m_{i,j})$, a \emph{$t$-grid minor} in $M$ is a $(t,t)$-division $({\mathcal R},{\mathcal C})$ of $M$ in which every zone contains a 1 (see left of \cref{fig:grid-mixed-minor}).
We say that a matrix is \emph{$t$-grid free} if it does not have a \emph{$t$-grid minor}. 
A celebrated result by Marcus and Tardos~\cite{MarcusT04} (henceforth \emph{the Marcus-Tardos theorem}) asserts that every $0,1$-matrix with large enough linear density has a $t$-grid minor.
Precisely: 

\begin{theorem}[\cite{MarcusT04}]\label{thm:marcustardos}
For every integer $t$, there is some $c_t$ such that every $n\times m$ $0,1$-matrix $M$ with at least $c_t\max(n,m)$ entries 1 has a $t$-grid minor.
\end{theorem}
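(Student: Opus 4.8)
The plan is to prove the Marcus--Tardos theorem by a double-counting argument on a division into blocks. First I would fix the integer $t$ and set the side-length of the block division to $s := t^2$ (or, more generously, some polynomial in $t$); partition the rows into consecutive groups of $s$ rows and the columns into consecutive groups of $s$ columns, yielding an $\lceil n/s \rceil \times \lceil m/s \rceil$ array of $s \times s$ blocks. Call a block \emph{wide} if it contains 1 entries in at least $t$ distinct columns (among its $s$ columns), and \emph{tall} if it contains 1 entries in at least $t$ distinct rows. The key combinatorial observation is that if a row-group of blocks contains at least $t$ wide blocks, then by a pigeonhole/Dilworth-type argument one can select $t$ of these wide blocks and, within them, $t$ distinct columns and $t$ rows arranged so as to form a $t$-grid minor; symmetrically for column-groups with $t$ tall blocks. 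Hence, if $M$ is $t$-grid free, every row-group of blocks has at most $t-1$ wide blocks and every column-group has at most $t-1$ tall blocks.

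The next step is to bound the number of 1 entries. A block that is neither wide nor tall contains 1 entries in at most $t-1$ columns and at most $t-1$ rows, hence at most $(t-1)^2$ ones; summed over all at most $\lceil n/s\rceil \lceil m/s\rceil$ blocks this contributes $O((t-1)^2 \cdot nm / s^2) = O(nm/t^2)$ ones, which is too many unless we also use that there are few non-constant-type blocks overall --- so instead I would count more carefully. There are at most $(t-1)\lceil m/s\rceil$ wide blocks in total (at most $t-1$ per column-group, wait --- per row-group), i.e. at most $(t-1)\lceil n/s \rceil$ wide blocks, each carrying at most $s^2$ ones, giving $O((t-1) \cdot (n/s) \cdot s^2) = O((t-1) s n)$ ones from wide blocks; symmetrically $O((t-1) s m)$ from tall blocks; and the remaining (neither wide nor tall) blocks carry at most $(t-1)^2$ ones each, but there are at most $\lceil n/s \rceil \cdot \lceil m/s\rceil$ of them contributing $O((t-1)^2 nm/s^2)$. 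Choosing $s$ of order $t^2$ makes this last term $O(nm)$ with a small constant --- which is still $\Theta(nm)$, not $O(\max(n,m))$, so a single-scale count does not suffice.

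Therefore the real engine is a \emph{recursion on scales}: let $f(t)$ denote the best constant $c_t$. Partition into $s \times s$ blocks with $s$ chosen appropriately; the blocks containing at least one 1 form a new $0,1$-matrix $M'$ of dimensions $\lceil n/s\rceil \times \lceil m/s\rceil$, and if $M'$ has an $f(t')$-density worth of 1-blocks for a suitable smaller (or rather, the same) $t$ then $M'$ itself would contain a $t$-grid minor, which lifts to a $t$-grid minor of $M$. So $M'$ has at most $c_t \max(n/s, m/s)$ nonzero blocks. Each block has at most $s^2$ ones, but using the wide/tall dichotomy each nonzero-but-not-$t$-grid-forming block has only $O(t s)$ ones, yielding a recurrence of the shape $c_t \le O(ts) + s \cdot c_t / s = \ldots$, which one solves to get $c_t = 2^{O(t)}$ (the bound Marcus and Tardos actually obtain is roughly $c_t = 2^{O(t)}$, later improved; for our purposes any finite $c_t$ suffices). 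The main obstacle is getting the recursion to close: one must choose $s$ as a function of $t$ so that the per-block contribution plus the recursively-bounded block-matrix contribution stays below $c_t \max(n,m)$, and verify that a $t$-grid minor in the coarsened block-matrix, combined with the local structure forced inside a wide or tall block, genuinely assembles into a $t$-grid minor of the original matrix --- this ``lifting'' step is where the precise definitions of wide/tall and the choice $s \ge t^2$ must be matched. Since this is a verbatim restatement of \cref{thm:marcustardos} from \cite{MarcusT04}, in the paper itself I would simply cite it rather than reprove it; the sketch above is the route one would follow to reconstruct the proof if needed.
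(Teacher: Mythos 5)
The paper does not prove this statement at all: it is imported verbatim from Marcus and Tardos and used as a black box, so your bottom-line decision to cite \cite{MarcusT04} rather than reprove it is exactly what the paper does, and for the purposes of this paper that is sufficient.

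Your reconstruction sketch, however, has a genuine gap in its key combinatorial step, so it should not be relied on as written. The claim that ``$t$ wide blocks in a strip already yield a $t$-grid minor'' is false: take a block-column of width $s=t^2$ in which one wide block has its $1$s only in the first $t$ columns of the strip and another wide block has its $1$s only in the last $t$ columns; no division of the strip into $t$ consecutive column intervals can place a $1$ of both blocks in every interval, so no $t$-grid minor arises. The actual Marcus--Tardos argument needs the stronger threshold $t\binom{t^2}{t}$: if a single block-\emph{column} contains that many wide blocks, then by pigeonhole $t$ of them have $1$s in the \emph{same} set of $t$ columns, and since these $t$ blocks lie in $t$ distinct block-rows, cutting between the shared columns and along the block-rows produces the $t$-grid minor (symmetrically, tall blocks are counted per block-\emph{row} --- your sketch wavers on this orientation). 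With this corrected threshold the recursion does close, because a block that is neither wide nor tall carries at most $(t-1)^2$ ones and the nonempty blocks form a $t$-grid-free matrix of dimensions $\lceil n/t^2\rceil\times\lceil m/t^2\rceil$, giving $c_t \leq (t-1)^2 c_t/t^2 + O\bigl(t^3\binom{t^2}{t}\bigr)$ and hence $c_t = O\bigl(t^4\binom{t^2}{t}\bigr) = 2^{O(t\log t)}$; note this is \emph{not} $2^{O(t)}$ as you assert --- the single-exponential bounds $3t2^{8t}$ and $\frac{8}{3}(t+1)^2 2^{4t}$ are the later improvements of Fox~\cite{Fox13} and Cibulka--Kyn\v{c}l~\cite{Cibulka16} that the paper quotes right after \cref{thm:marcustardos}.
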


Marcus and Tardos established this theorem with $c_t = 2t^4{t^2 \choose t}$.
Fox~\cite{Fox13} subsequently improved the bound to $3t2^{8t}$.
He also showed that $c_t$ has to be superpolynomial in $t$ (at least $2^{\Omega(t^{1/4})}$).
Then Cibulka and Kyn\v{c}l~\cite{Cibulka16} decreased $c_t$ further down to $8/3(t+1)^22^{4t}$.

Matrices with enough 1 entries are complex in the sense that they contain large $t$-grids minors.
However here the role of 1 is special compared to 0, and this result is only interesting for sparse matrices.
We would like to extend this notion of complexity to the dense case, that is to say for all matrices.
In the Marcus-Tardos theorem zones are \emph{not simple} if they contain a 1, that is, if they have rank at least 1.
A natural definition would consist of substituting ``rank at least 1'' by ``rank at least 2'' in the definition of a $t$-grid minor.
Since we mostly deal with $0,1$-matrices, and exclusively with discrete objects, we adopt a more combinatorial approach.

\subsection{Mixed minor and the grid theorem for twin-width}

A matrix $M=(m_{i,j})$ is \emph{vertical} (resp. \emph{horizontal}) if $m_{i,j}=m_{i+1,j}$ (resp. $m_{i,j}=m_{i,j+1}$) for all $i,j$.
Observe that a matrix which is both vertical and horizontal is constant.
We say that $M$ is \emph{mixed} if it is neither vertical nor horizontal.
A \emph{$t$-mixed minor}  in $M$ is a division $({\mathcal R},{\mathcal C})=(\{R_1,\dots ,R_t\},\{C_1,\dots ,C_t\})$ such that every zone $R_i\cap C_j$ is mixed (see right of \cref{fig:grid-mixed-minor}).
A matrix without $t$-mixed minor is \emph{$t$-mixed free}.
For instance, the $n \times n$ matrix with all entries equal to 1 is $1$-mixed free but admits an $n$-grid minor.

\begin{figure}
  \centering
  \begin{tikzpicture}
    \def\s{0.5}
    \def\hb{\s/2}
    \def\vb{\s/2}
    \def\he{8.5 * \s}
    \def\ve{7.5 * \s}
    \foreach \i/\j/\v in {1/1/1,1/2/0,1/3/1,1/4/0,1/5/0,1/6/0,1/7/1, 2/1/0,2/2/1,2/3/0,2/4/1,2/5/0,2/6/1,2/7/1, 3/1/1,3/2/1,3/3/0,3/4/0,3/5/0,3/6/1,3/7/1, 4/1/1,4/2/1,4/3/1,4/4/0,4/5/0,4/6/0,4/7/1, 5/1/1,5/2/1,5/3/1,5/4/1,5/5/0,5/6/0,5/7/1, 6/1/0,6/2/1,6/3/0,6/4/0,6/5/0,6/6/1,6/7/1, 7/1/0,7/2/0,7/3/1,7/4/1,7/5/0,7/6/0,7/7/1, 8/1/1,8/2/0,8/3/0,8/4/0,8/5/1,8/6/1,8/7/0}{
      \node (e\i\j) at (\s * \i,\s * \j) {$\v$} ;
    }
    \draw (\hb-0.05,\vb) -- (\hb-0.05,\ve) --++(0.1,0) ;
    \draw (\hb-0.05,\vb) --++(0.1,0) ;
    \draw (\he+0.05,\vb) -- (\he+0.05,\ve) --++(-0.1,0) ;
    \draw (\he+0.05,\vb) --++(-0.1,0) ;
    \foreach \i in {2,4,6}{
      \draw[very thick] (\i * \s + \s/2,\vb) -- (\i * \s + \s/2,\ve) ;
    }
    \foreach \j in {2,4,6}{
      \draw[very thick] (\hb,\j * \s + \s/2) -- (\he,\j * \s + \s/2) ;
    }

    \begin{scope}[xshift=12 * \s cm]
    \foreach \i/\j/\v in {1/1/1,1/2/0,1/3/1,1/4/0,1/5/0,1/6/0,1/7/1, 2/1/0,2/2/1,2/3/0,2/4/1,2/5/0,2/6/1,2/7/1, 3/1/1,3/2/1,3/3/0,3/4/0,3/5/0,3/6/1,3/7/1, 4/1/1,4/2/1,4/3/1,4/4/0,4/5/0,4/6/0,4/7/1, 5/1/1,5/2/1,5/3/1,5/4/1,5/5/0,5/6/0,5/7/1, 6/1/0,6/2/1,6/3/0,6/4/0,6/5/0,6/6/1,6/7/1, 7/1/0,7/2/0,7/3/1,7/4/1,7/5/0,7/6/0,7/7/1, 8/1/1,8/2/0,8/3/0,8/4/0,8/5/1,8/6/1,8/7/0}{
      \node (e\i\j) at (\s * \i,\s * \j) {$\v$} ;
    }
    \draw (\hb-0.05,\vb) -- (\hb-0.05,\ve) --++(0.1,0) ;
    \draw (\hb-0.05,\vb) --++(0.1,0) ;
    \draw (\he+0.05,\vb) -- (\he+0.05,\ve) --++(-0.1,0) ;
    \draw (\he+0.05,\vb) --++(-0.1,0) ;
    \foreach \i in {2,5}{
      \draw[very thick] (\i * \s + \s/2,\vb) -- (\i * \s + \s/2,\ve) ;
    }
    \foreach \j in {3,5}{
      \draw[very thick] (\hb,\j * \s + \s/2) -- (\he,\j * \s + \s/2) ;
    }
    \end{scope}
  \end{tikzpicture}
  \caption{To the left a $4$-grid minor: every zone contains at least one 1.
    To the right a $3$-mixed minor on the same matrix: no zone is horizontal or vertical.}
  \label{fig:grid-mixed-minor}
\end{figure}

The main result of this section is that $t$-mixed free matrices are exactly matrices with bounded twin-width, modulo reordering the rows and columns.
More precisely:
\begin{theorem}[grid minor theorem for twin-width]\label{thm:gridtheorem}
 Let $\alpha$ be the alphabet size for the matrix entries, and $c_t := 8/3(t+1)^22^{4t}$.
\begin{compactitem}
\item Every $t$-twin-ordered matrix is $2t+2$-mixed free.
\item Every $t$-mixed free matrix has twin-width at most $4c_t \alpha^{4c_t+2}=2^{2^{O(t)}}$.
\end{compactitem}
\end{theorem}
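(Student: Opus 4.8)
The plan is to prove the two bullets separately. The first is a short tracking argument along a contraction sequence, and the second — where the real work lies — combines the Marcus--Tardos theorem with a coarsening procedure whose output is fed into \cref{lem:subsequence}.

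For the first bullet, suppose $M$ is $t$-twin-ordered, so it admits a division sequence $(\mathcal R^1,\mathcal C^1),\dots,(\mathcal R^s,\mathcal C^s)$ of error value at most $t$, and assume for contradiction that $M$ has a $(2t+2)$-mixed minor $(\mathcal R,\mathcal C)$. Every zone of a mixed minor is mixed, hence neither vertical nor horizontal, so every block of $\mathcal R$ has at least two rows and every block of $\mathcal C$ at least two columns; thus in the finest division each such block is \emph{split} (has an interior cut line), while in the coarsest division none is. A single fusion is either a row fusion or a column fusion, so it can make at most one block unsplit; hence the $2(2t+2)$ events ``block $B$ becomes unsplit for the first time'' occur at pairwise distinct steps. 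Consider the earliest of them; by symmetry some row block $R_a$ becomes unsplit at step $\ell$, so all of $R_a$ lies in a single part $\widehat R$ of $\mathcal R^\ell$ and, since this step is a row fusion, all $2t+2$ column blocks are still split by $\mathcal C^\ell=\mathcal C^{\ell-1}$. Fix a column block $C_b$. If every part of $\mathcal C^\ell$ contained in $C_b$ formed a constant zone with $\widehat R$, then every column of $C_b$ would be constant on $\widehat R$, hence on $R_a$, making the zone $R_a\cap C_b$ vertical — contradicting that it is mixed. So each $C_b$ yields at least one part of $\mathcal C^\ell$ forming a non-constant zone with $\widehat R$, whence $\widehat R$ has error value at least $2t+2>t$ in $(\mathcal R^\ell,\mathcal C^\ell)$, a contradiction.

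For the second bullet, fix a $t$-mixed free matrix $M$ over an alphabet of size $\alpha$. By \cref{lem:subsequence} it suffices to exhibit a chain of divisions of $M$, from the finest to the coarsest, in which each term $r$-refines the next and each term has error value at most $t'$, with $r$ and $t'$ bounded in terms of $t$ and $\alpha$; this gives $\text{tww}(M)\le rt'$. Two preliminary facts drive the construction. First, a \emph{corner characterization}: a matrix is mixed if and only if it contains a \emph{corner} (a consecutive $2\times 2$ submatrix that is itself mixed), equivalently a $t$-mixed minor is a $(t,t)$-division every zone of which contains a corner; this follows from a short propagation argument showing that a corner-free matrix has, along any two consecutive rows, either identical rows or rows that are pairwise distinct and constant in every column, which forces verticality or horizontality. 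Second, for any division $\mathcal D$ of $M$, the quotient matrix $M_{\mathcal D}$ — entries being the common value of a constant zone, or a fresh symbol $r$ for a non-constant zone, so the alphabet has size $\alpha+1$ — is again $t$-mixed free: a corner of $M_{\mathcal D}$ ``blows up'' to a corner of $M$ inside the corresponding zone (if a quotient entry is $r$, that zone already contains a corner of $M$; otherwise the two boundary rows and columns of the block exhibit one), so a $t$-mixed minor of $M_{\mathcal D}$ would lift to one of $M$. Consequently the $0,1$ matrix recording the corners of $M_{\mathcal D}$ is $O(t)$-grid free, and \cref{thm:marcustardos} bounds its number of $1$'s — the number of corners of $M_{\mathcal D}$ — by $c\cdot\max(p,q)$ for some $c=c_{O(t)}$, where $p,q$ count the row- and column-parts of $\mathcal D$.

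The heart of the argument — and the step I expect to be the main obstacle — is turning ``few corners'' into a coarsening that keeps the error value bounded. Given $\mathcal D$ whose quotient has few corners, I want a coarser division $\mathcal{D}'$, obtained by grouping consecutive row-parts (and consecutive column-parts) into runs of bounded length, whose quotient still has error value at most $t'$ and strictly fewer parts; iterating then reaches the coarsest division in finitely many rounds. The subtlety is that merging two consecutive row-parts creates a new non-constant zone with every column-part on which the two are individually constant but disagree, and the corner bound controls only the number of \emph{maximal runs of equal disagreement pattern} along the columns — two consecutive disagreeing column-parts with different patterns force a corner at the boundary — not the number of such column-parts. This is exactly why rows and columns must be coarsened in a coordinated way, collapsing each run to a single column-part; and the number of admissible ``types'' of a part of error value at most $\tau$, which is of order $\alpha^{\tau}$, is what ultimately produces the $\alpha^{4c_t+2}$ factor. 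Formalizing this as the two-step procedure announced in the introduction — first build the chain of division coarsenings, keeping the working alphabet at size $\alpha+1$ throughout so the corner bound never degrades, then invoke \cref{lem:subsequence} to read off $\text{tww}(M)\le 4c_t\alpha^{4c_t+2}=2^{2^{O(t)}}$ — and verifying that all constants come out as stated and that the procedure indeed terminates at the coarsest division is the remaining, somewhat technical, bookkeeping.
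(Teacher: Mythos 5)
Your proof of the first bullet has a flawed step. The reduction to the earliest ``unsplit'' event is fine, but the witness extraction is not: the parts of $\mathcal C^\ell$ \emph{contained in} $C_b$ need not cover $C_b$, since the (at most two) parts meeting $C_b$ at its ends may straddle the boundary with neighbouring blocks without containing any block; the hypothesis ``every part contained in $C_b$ is constant against $\widehat R$'' can even be vacuous, so it does not force $R_a\cap C_b$ to be vertical. If instead you take, for each $C_b$, a part of $\mathcal C^\ell$ merely \emph{meeting} $C_b$ on which $\widehat R$ is non-constant (such a part exists because $R_a\cap C_b$ is not vertical), the witnesses need not be distinct: one part can straddle two consecutive blocks and serve both. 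This factor-2 loss is precisely why the statement says $2t+2$ rather than $t+1$; the paper recovers distinctness by noting that a part meeting both $C_j$ and $C_{j+2}$ would contain $C_{j+1}$, which is impossible at that step, and so only extracts $(2t+2)/2=t+1>t$ non-constant zones, which suffices. As written, your count of $2t+2$ is an overclaim resting on the faulty containment argument.

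For the second bullet, which is where the theorem lives, your proposal stops exactly at the decisive step, which you yourself flag as ``the main obstacle'' and then defer to bookkeeping. Two concrete problems. First, your plan commits to a chain of \emph{divisions} of bounded \emph{error value}, but non-constant vertical or horizontal zones are invisible to any corner/Marcus--Tardos count, so what can be controlled along divisions is only the \emph{mixed value} (mixed zones plus mixed cuts, \cref{lem:mixedvalue,lem:fusion}), not the error value; the paper then deliberately leaves the world of divisions, refining each division of mixed value at most $t'=2c_t$ into a partition whose parts inside a row block are the at most $\alpha^{t'+1}$ row ``types'' over the non-mixed zones (these rows need not be consecutive), showing this partition has error value at most $t'\alpha^{t'+1}$ and $2\alpha^{t'+1}$-refines the next one, and only then invoking \cref{lem:subsequence}. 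Your alternative of collapsing each run of columns into a single column-part is not shown to keep the error value bounded (such a collapse creates new non-constant zones against other row parts), no invariant is given that makes the iteration terminate, and it is not established — nor used in the paper — that a $t$-mixed free matrix even admits a bounded-error \emph{division} chain in the given order. Second, a local claim is false: a non-constant zone need not contain a corner (it can be vertical or horizontal), so your justification that a corner of the quotient matrix lifts to a corner of $M$ is wrong as stated; the conclusion can be repaired by applying \cref{lem:corner} to the full $2\times 2$ block of zones (if that block were vertical or horizontal, the corresponding quotient entries would agree, so the quotient could not be mixed). In sum, your skeleton (corners, Marcus--Tardos on corner counts, $\alpha^{t'+1}$ type counting, \cref{lem:subsequence}) matches the paper's, but the bridge from ``few corners'' to a coarsening of bounded error value — the actual content of the proof, achieved in the paper via the mixed-value division sequence of \cref{lem:fusion} followed by the non-division type refinement — is missing.
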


A contraction sequence is a fairly complicated object.
It can be seen as a sequence of coarser and coarser partitions of the vertices, or as a sequence of pairs of vertices.
The second bullet of \cref{thm:gridtheorem} simplifies the task of bounding the twin-width of a graph.
One only needs to find an ordering of the vertex set such that the adjacency matrix written down with that order has no $t$-mixed minor.
A typical use to bound the twin-width of a class $\mathcal C$: \\
(1) find a good vertex-ordering process based on properties of $\mathcal C$, \\
(2) assume that the adjacency matrix in this order has a $t$-mixed minor, \\
(3) use this $t$-mixed minor to derive a contradiction to the membership to $\mathcal C$, and \\
(4) conclude with \cref{thm:gridtheorem}.\\
\cref{sec:bounded-twinwidth} presents more and more elaborate instances of this framework and \cref{tbl:orders} reports the orders and the bounds for different classes.

A sanity check of \cref{thm:gridtheorem} is given by random 0,1-matrices.
They have large grid minors for any reordering of the rows and columns, and indeed, random bipartite graphs have unbounded twin-width.


\subsection{Corners}

The proof of \cref{thm:gridtheorem} will crucially rely on the notion of \emph{corner}.
Given a matrix $M=(m_{i,j})$, a \emph{corner} is any 2-by-2 mixed submatrix of the form $(m_{i,j},m_{i+1,j},m_{i,j+1},$ $m_{i+1,j+1})$.
Corners will play the same role as the 1 entries in the Marcus-Tardos theorem, as they localize the property of being mixed:

\begin{lemma}\label{lem:corner}
A matrix is mixed if and only if it contains a corner.
\end{lemma}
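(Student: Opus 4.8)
The plan is to prove both directions of the equivalence, where the forward direction (``corner implies mixed'') is immediate and the reverse direction (``mixed implies corner'') is the substantive one.

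\textbf{Easy direction.} Suppose $M$ contains a corner, i.e.\ a consecutive $2$-by-$2$ submatrix on rows $i,i+1$ and columns $j,j+1$ that is mixed. If $M$ were vertical, then $m_{i,j}=m_{i+1,j}$ and $m_{i,j+1}=m_{i+1,j+1}$, so the submatrix would be vertical; similarly if $M$ were horizontal the submatrix would be horizontal. Hence $M$ is neither, so $M$ is mixed.

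\textbf{Hard direction.} Assume $M$ is mixed; I want to produce a corner. The idea is a local-to-global argument: if \emph{no} $2$-by-$2$ consecutive submatrix is mixed, then the property of being ``vertical'' (two consecutive rows equal) or ``horizontal'' propagates across the whole matrix, contradicting mixedness. Concretely, say a pair of consecutive rows $(i,i+1)$ is \emph{good} if $m_{i,k}=m_{i+1,k}$ for all columns $k$ (the two rows coincide), and similarly define good pairs of consecutive columns. Since $M$ is not vertical, some consecutive row pair is not good; since $M$ is not horizontal, some consecutive column pair is not good. Now suppose toward a contradiction that $M$ has no corner. I claim this forces: for every consecutive row pair $(i,i+1)$ and every consecutive column pair $(j,j+1)$, the $2$-by-$2$ submatrix is either vertical or horizontal. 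Fix a column pair $(j,j+1)$ that is not good, witnessed by a row index $i_0$ with $m_{i_0,j}\neq m_{i_0,j+1}$. I will show this ``non-goodness'' propagates to all rows: for the pair of rows $(i_0,i_0+1)$, the $2$-by-$2$ block on columns $j,j+1$ is not horizontal (its top row has two distinct entries, so it cannot equal the bottom row columnwise\,... more precisely horizontal means $m_{i,j}=m_{i,j+1}$, which fails at row $i_0$), hence by the no-corner assumption it must be vertical, giving $m_{i_0+1,j}=m_{i_0,j}\neq m_{i_0,j+1}=m_{i_0+1,j+1}$; so row $i_0+1$ also witnesses that columns $j,j+1$ disagree. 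By induction on $|i-i_0|$ in both directions this shows $m_{i,j}\neq m_{i,j+1}$ for \emph{all} rows $i$. Symmetrically, starting from a non-good row pair $(i,i+1)$ witnessed at some column $k_0$, and using that the relevant $2$-by-$2$ block cannot be vertical (its left column has distinct entries), the no-corner assumption forces it to be horizontal, and propagation across all columns gives $m_{i,k}\neq m_{i+1,k}$ for all columns $k$.

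\textbf{Deriving the contradiction.} From the previous paragraph: there exist a column pair $(j,j+1)$ with $m_{i,j}\neq m_{i,j+1}$ for all $i$, and a row pair $(i,i+1)$ with $m_{i,k}\neq m_{i+1,k}$ for all $k$. Look at the $2$-by-$2$ block on rows $i,i+1$ and columns $j,j+1$. Its two columns differ in every coordinate (first property applied at rows $i$ and $i+1$), so it is not vertical; its two rows differ in every coordinate (second property applied at columns $j$ and $j+1$), so it is not horizontal. Thus it is mixed, i.e.\ it is a corner --- contradicting the no-corner assumption. Hence $M$ must contain a corner. The only point requiring a little care is the base of the propagation induction and keeping straight which of ``vertical'' / ``horizontal'' is being excluded at each step (it is always the one compatible with the witnessing disagreement), but this is entirely mechanical; there is no real obstacle.
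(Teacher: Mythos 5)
Your proof is correct and uses essentially the same argument as the paper: under the no-corner assumption, every consecutive $2\times 2$ block containing a disagreement is forced to be vertical or horizontal, and the disagreement propagates, contradicting mixedness. One harmless slip in your final paragraph: the two parenthetical justifications are swapped --- the columns of the block differing in every coordinate shows it is not \emph{horizontal} (each row of the block is non-constant), while the rows differing in every coordinate shows it is not \emph{vertical} --- but since you establish both facts, the block is indeed mixed and the contradiction goes through.
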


\begin{proof}
  A corner is certainly a witness of being mixed.
  Conversely let us assume that a matrix $M$ has no corner.
  Either $M$ is constant and we are done: $M$ is not mixed.
  Or, without loss of generality, there are in $M$ two distinct entries $m_{i,j} \neq m_{i+1,j}$. 
  To avoid a corner, both entries $m_{i,j+1}$ and $m_{i,j-1}$ are equal to $m_{i,j}$.
  Similarly, both entries $m_{i+1,j+1}$ and $m_{i+1,j-1}$ are equal to $m_{i+1,j}$.
  Therefore the whole $i$-th row is constant as well as the $i+1$-st row.
  This forces the rows of index $i-1$ and $i+2$ to be constant, and propagates to the whole matrix which is then horizontal.
  Observe that if the two distinct adjacent entries would initially be $m_{i,j} \neq m_{i,j+1}$, then the same arguments would show that the matrix is vertical.
\end{proof}


\subsection{Mixed zones, cuts, and values}

Let ${\mathcal R}=\{R_1, \dots, R_k\}$ be a row-division of a matrix $M$ and let $C$ be a set of consecutive columns.
We call \emph{mixed zone} of $C$ on ${\mathcal R}$ any zone $R_i \cap C$ which is a mixed matrix.
We call \emph{mixed cut} of $C$ on ${\mathcal R}$ any index $i \in [k-1]$ for which the 2-by-$|C|$ zone defined by the last row of $R_i$, the first row of $R_{i+1}$, and $C$ is a mixed matrix.
Now the \emph{mixed value} of $C$ on ${\mathcal R}$ is the sum of the number of mixed cuts and the number of mixed zones.
See \cref{fig:mixed-value} for an illustration, and for why we use the mixed value instead of the mere number of mixed zones.
Analogously we define the mixed value of a set $R$ of consecutive rows on a column-division~$\mathcal C$.

\begin{lemma}\label{lem:mixedvalue}
The contraction of two consecutive parts of ${\mathcal R}$ does not increase the mixed value of $C$ on ${\mathcal R}$.
\end{lemma}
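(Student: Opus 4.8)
The plan is to fix a set $C$ of consecutive columns and two consecutive parts $R_i, R_{i+1}$ of ${\mathcal R}$ that are contracted into a single part $R^* := R_i \cup R_{i+1}$, giving the coarser row-division ${\mathcal R}'$. I must show that the mixed value of $C$ on ${\mathcal R}'$ is at most the mixed value of $C$ on ${\mathcal R}$. The mixed value is a sum of two counts (mixed cuts $+$ mixed zones), so I would set up a map that accounts for each mixed object of $({\mathcal R}',C)$ by a distinct mixed object of $({\mathcal R},C)$. Every mixed zone $R_j \cap C$ and every mixed cut at an index $j \neq i$ between parts of ${\mathcal R}$ that are \emph{not} $R_i$ or $R_{i+1}$ survives verbatim in ${\mathcal R}'$ (the rows and the cut-lines there are untouched), and each such surviving object maps to itself. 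Likewise the mixed cut between $R_{i-1}$ and $R_i$ (if it exists and is mixed) corresponds exactly to the mixed cut between $R_{i-1}$ and $R^*$, since that cut is determined by the last row of $R_{i-1}$ and the first row of $R^*$, which is the first row of $R_i$; similarly for the cut between $R^*$ and $R_{i+2}$, determined by the last row of $R_{i+1}$. So the only thing to check is the new zone $R^* \cap C$: I must show that if $R^* \cap C$ is mixed, then at least one of the three objects $R_i \cap C$, $R_{i+1} \cap C$, or the cut at index $i$ (the $2\times|C|$ zone formed by the last row of $R_i$ and the first row of $R_{i+1}$) is mixed in $({\mathcal R},C)$ — and moreover none of these three was already used above, which is clear since they all live ``inside'' the contracted block.

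The core of the argument is therefore: \emph{if $R_i\cap C$, $R_{i+1}\cap C$, and the cut $2\times|C|$ submatrix at index $i$ are all non-mixed, then $R^*\cap C$ is non-mixed.} By Lemma~\ref{lem:corner}, $R^*\cap C$ mixed means it contains a corner, i.e.\ a $2\times 2$ mixed submatrix on consecutive rows $k,k+1$ and consecutive columns $\ell,\ell+1$ of $R^*\cap C$. Where can these two consecutive rows lie? Either both rows are in $R_i$ — then the corner lies in $R_i\cap C$, so $R_i\cap C$ is mixed, a contradiction; or both are in $R_{i+1}$ — symmetric contradiction; or row $k$ is the last row of $R_i$ and row $k+1$ is the first row of $R_{i+1}$ — then the corner lies in the cut submatrix at index $i$, so that submatrix is mixed, again a contradiction. (If $|C|=1$ or $R^*$ has a single row, $R^*\cap C$ cannot be mixed at all, so there is nothing to prove.) This exhausts all cases, establishing the contrapositive and hence the zone count does not grow beyond what the three internal objects pay for.

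Putting it together: the mixed value of $C$ on ${\mathcal R}'$ is
\[
\#\{\text{mixed cuts of }{\mathcal R}'\} + \#\{\text{mixed zones of }{\mathcal R}'\},
\]
and the injection just described sends it into the corresponding count for ${\mathcal R}$: mixed cuts of ${\mathcal R}'$ biject with the mixed cuts of ${\mathcal R}$ other than the one at index $i$, mixed zones of ${\mathcal R}'$ other than $R^*\cap C$ biject with the mixed zones of ${\mathcal R}$ other than $R_i\cap C$ and $R_{i+1}\cap C$, and the single potentially-new object $R^*\cap C$, when mixed, is charged to one of $\{R_i\cap C,\ R_{i+1}\cap C,\ \text{cut at }i\}$, all three of which are ``freed up'' in the passage from ${\mathcal R}$ to ${\mathcal R}'$. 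Since the map is injective, the mixed value does not increase. The analogous statement for contracting columns and a fixed row $R$ on a column-division follows by transposing.

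I do not expect a real obstacle here; the only point requiring care is the bookkeeping — making sure the charging map is genuinely injective, i.e.\ that the three internal objects paying for $R^*\cap C$ are not also needed to account for some surviving mixed object, which holds because those three objects simply disappear from ${\mathcal R}'$. The case analysis on the location of the corner's two consecutive rows is the substantive step, and it is short.
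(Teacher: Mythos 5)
Your proposal is correct and follows essentially the same route as the paper: reduce to showing that if $R_i\cap C$, $R_{i+1}\cap C$ and the cut at index $i$ are all non-mixed then $(R_i\cup R_{i+1})\cap C$ is non-mixed, and prove this by localizing a corner (Lemma~\ref{lem:corner}) in one of those three places. The extra bookkeeping about the injection of mixed cuts and zones is exactly what the paper leaves implicit, so there is no substantive difference.
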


\begin{proof}
  Assume that ${\mathcal R}=\{R_1,\dots ,R_k\}$ and ${\mathcal R}'$ is obtained by contraction of $R_i$ and $R_{i+1}$.
  We just have to show that if $R_i\cap C$, $R_{i+1}\cap C$ are not mixed zones and $i$ is not a mixed cut, then $(R_i \cup R_{i+1}) \cap C$ is not a mixed zone.
  Indeed, if $(R_i\cup R_{i+1})\cap C$ is a mixed zone, it contains a corner which must be in $R_i\cap C$, or in $R_{i+1} \cap C$, or otherwise sits in the mixed cut $i$.
\end{proof}

\begin{figure}
  \centering
  \begin{tikzpicture}
    \def\s{0.5}
    \def\hb{\s/2}
    \def\vb{\s/2}
    \def\he{8.5 * \s}
    \def\ve{7.5 * \s}
    \foreach \i/\j/\v in {1/1/1,1/2/0,1/3/1,1/4/0,1/5/1,1/6/1,1/7/1, 2/1/0,2/2/1,2/3/1,2/4/1,2/5/0,2/6/0,2/7/1, 3/1/\textcolor{red}{1},3/2/\textcolor{red}{1},3/3/0,3/4/0,3/5/1,3/6/1,3/7/1, 4/1/\textcolor{red}{0},4/2/\textcolor{red}{1},4/3/0,4/4/0,4/5/0,4/6/0,4/7/0, 5/1/\textcolor{red}{1},5/2/\textcolor{red}{0},5/3/1,5/4/1,5/5/0,5/6/0,5/7/0, 6/1/0,6/2/1,6/3/0,6/4/0,6/5/0,6/6/1,6/7/1, 7/1/0,7/2/0,7/3/1,7/4/1,7/5/0,7/6/0,7/7/1, 8/1/1,8/2/0,8/3/0,8/4/0,8/5/1,8/6/1,8/7/0}{
      \node (e\i\j) at (\s * \i,\s * \j) {$\v$} ;
    }
    \draw (\hb-0.05,\vb) -- (\hb-0.05,\ve) --++(0.1,0) ;
    \draw (\hb-0.05,\vb) --++(0.1,0) ;
    \draw (\he+0.05,\vb) -- (\he+0.05,\ve) --++(-0.1,0) ;
    \draw (\he+0.05,\vb) --++(-0.1,0) ;
    \foreach \i in {2,5,6}{
      \draw[very thick] (\i * \s + \s/2,\vb) -- (\i * \s + \s/2,\ve) ;
    }
    \foreach \j in {2,4,6}{
      \draw[very thick] (\hb,\j * \s + \s/2) -- (\he,\j * \s + \s/2) ;
    }
    \foreach \j in {2,4}{
      \draw[red,very thick] (2.5 * \s,\j * \s + \s/2) -- (5.5 * \s,\j * \s + \s/2) ;
    }
    \foreach \i/\j in {1.5/1,3.5/2,5.5/3,7/4}{
      \node at (-0.1,\i * \s) {$R_\j$} ;
    }
    \foreach \i/\j in {1.5/1,4/2,6/3,7.5/4}{
      \node at (\i * \s,0) {$C_\j$} ;
    }
    \draw[red,dashed] (2.7 * \s, 0.6 * \s) -- (4.3 * \s, 0.6 * \s) -- (4.3 * \s, 2.4 * \s) -- (2.7 * \s, 2.4 * \s) -- cycle ;
    \draw[red,dashed] (3.7 * \s, 1.6 * \s) -- (5.3 * \s, 1.6 * \s) -- (5.3 * \s, 3.4 * \s) -- (3.7 * \s, 3.4 * \s) -- cycle ;
    \draw[red,dashed] (2.7 * \s, 3.6 * \s) -- (4.3 * \s, 3.6 * \s) -- (4.3 * \s, 5.4 * \s) -- (2.7 * \s, 5.4 * \s) -- cycle ;

    \begin{scope}[xshift=6.5cm]
      \foreach \i/\j/\v in {1/1/1,1/2/0,1/3/1,1/4/0,1/5/1,1/6/1,1/7/1, 2/1/0,2/2/1,2/3/1,2/4/1,2/5/0,2/6/0,2/7/1, 3/1/\textcolor{red}{1},3/2/\textcolor{red}{1},3/3/\textcolor{red}{0},3/4/\textcolor{red}{0},3/5/\textcolor{red}{1},3/6/\textcolor{red}{1},3/7/1, 4/1/\textcolor{red}{0},4/2/\textcolor{red}{1},4/3/\textcolor{red}{0},4/4/\textcolor{red}{0},4/5/\textcolor{red}{0},4/6/\textcolor{red}{0},4/7/0, 5/1/\textcolor{red}{1},5/2/\textcolor{red}{0},5/3/\textcolor{red}{1},5/4/\textcolor{red}{1},5/5/\textcolor{red}{0},5/6/\textcolor{red}{0},5/7/0, 6/1/0,6/2/1,6/3/0,6/4/0,6/5/0,6/6/1,6/7/1, 7/1/0,7/2/0,7/3/1,7/4/1,7/5/0,7/6/0,7/7/1, 8/1/1,8/2/0,8/3/0,8/4/0,8/5/1,8/6/1,8/7/0}{
      \node (e\i\j) at (\s * \i,\s * \j) {$\v$} ;
    }
    \draw (\hb-0.05,\vb) -- (\hb-0.05,\ve) --++(0.1,0) ;
    \draw (\hb-0.05,\vb) --++(0.1,0) ;
    \draw (\he+0.05,\vb) -- (\he+0.05,\ve) --++(-0.1,0) ;
    \draw (\he+0.05,\vb) --++(-0.1,0) ;
    \foreach \i in {2,5,6}{
      \draw[very thick] (\i * \s + \s/2,\vb) -- (\i * \s + \s/2,\ve) ;
    }
    \foreach \j in {2,6}{
      \draw[very thick] (\hb,\j * \s + \s/2) -- (\he,\j * \s + \s/2) ;
    }
    \foreach \j in {2}{
      \draw[red,very thick] (2.5 * \s,\j * \s + \s/2) -- (5.5 * \s,\j * \s + \s/2) ;
    }
     \foreach \j in {4}{
       \draw[dotted,thin] (\hb,\j * \s + \s/2) -- (\he,\j * \s + \s/2) ;
    }
    \foreach \i/\j in {1.5/1,7/4}{
      \node at (-0.1,\i * \s) {$R_\j$} ;
    }
    \node at (-0.5,4.5 * \s) {$R_2 \cup R_3$} ;
    \foreach \i/\j in {1.5/1,4/2,6/3,7.5/4}{
      \node at (\i * \s,0) {$C_\j$} ;
    }
    \end{scope}
  \end{tikzpicture}
  \caption{To the left, the mixed value of $C_2$ on $\{R_1,R_2,R_3,R_4\}$ is 3: one mixed zone and two mixed cuts (all three in red, with a corner in each, highlighted by red dashed squares). To the right, the mixed value of $C_2$ on $\{R_1,R_2 \cup R_3,R_4\}$ is still 3. In general, the mixed value of a $C_j \in \mathcal C$ cannot increase after the fusion of $R_i, R_{i+1} \in \mathcal R$ since the only way for a new mixed zone to be created is that a mixed cut disappears, while new mixed cuts cannot be created. On the contrary, the number of mixed zones in $C_2$ can increase as it went from 1 to 2.}
  \label{fig:mixed-value}
\end{figure}

The \emph{mixed value} of a division $({\mathcal R},{\mathcal C})=(\{R_1,\dots ,R_k\},\{C_1, \ldots, C_\ell\})$ is the maximum mixed value of $R_i$ on ${\mathcal C}$, and of $C_j$ on ${\mathcal R}$, taken over all $R_i \in \mathcal R$ and $C_j \in \mathcal C$.
Observe that the finest division has mixed value 0 and the coarsest division has mixed value at most 1. 

\subsection{Finding a division sequence with bounded mixed value}

Leveraging the Marcus-Tardos theorem, we are ready to compute, for any $t$-mixed free matrix, a~division sequence with bounded mixed value.
This division sequence is not necessarily yet a contraction sequence with bounded error value (indeed a non-constant horizontal or vertical zone counts for 0 in the mixed value but for 1 in the error value).
But this division sequence will serve as a crucial frame to find the eventual contraction sequence.

\begin{lemma}\label{lem:fusion}
Every $t$-mixed free matrix $M$ has a division sequence in which all divisions have mixed value at most $2c_t$ (where $c_t$ is the one of \cref{thm:marcustardos}). 
\end{lemma}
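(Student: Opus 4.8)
The idea is to build the division sequence greedily from the finest division, always performing a fusion that does not create a division of mixed value more than $2c_t$, and to argue via Marcus-Tardos theorem that such a fusion always exists until we reach the coarsest division. By \cref{lem:mixedvalue} (applied both to rows on a column-division and, symmetrically, to columns on a row-division), once a division $({\mathcal R},{\mathcal C})$ has mixed value at most $2c_t$, any \emph{fusion of two consecutive rows} cannot increase the mixed value of any column of $\mathcal C$ on $\mathcal R$; the only danger is that the mixed value of the \emph{newly formed row part} $R_i \cup R_{i+1}$ on $\mathcal C$ exceeds $2c_t$ (and symmetrically for column fusions). So the whole proof reduces to the following claim: in any division $({\mathcal R},{\mathcal C})$ with $|{\mathcal R}| + |{\mathcal C}| \geqslant 3$ and mixed value at most $2c_t$, there exist two consecutive row parts whose fusion yields a row part of mixed value at most $2c_t$, or two consecutive column parts with the analogous property.

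First I would set up the counting. Suppose for contradiction that in some division $({\mathcal R},{\mathcal C}) = (\{R_1,\dots,R_k\},\{C_1,\dots,C_\ell\})$ of mixed value at most $2c_t$, \emph{every} candidate fusion is "bad": for every $i \in [k-1]$, the part $R_i \cup R_{i+1}$ has mixed value (on $\mathcal C$) strictly more than $2c_t$, and for every $j \in [\ell-1]$, the part $C_j \cup C_{j+1}$ has mixed value (on $\mathcal R$) strictly more than $2c_t$. Fix an $i$ and look at what makes $R_i \cup R_{i+1}$ bad. Its mixed zones and mixed cuts on $\mathcal C$ come from: mixed zones of $R_i$, mixed zones of $R_{i+1}$, mixed cuts strictly inside $R_i$, mixed cuts strictly inside $R_{i+1}$, the mixed cut $i$ between $R_i$ and $R_{i+1}$ (which becomes a zone or a cut of the fused part — in fact it becomes part of the new zone $(R_i\cup R_{i+1})\cap C_j$ being mixed), plus the two cuts $i-1$ and $i+1$ at the boundaries (if they exist). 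Since each of $R_i$ and $R_{i+1}$ already has mixed value at most $2c_t$ on $\mathcal C$, the contribution of the "old" mixed zones/cuts is at most $2c_t$ each, but these are the \emph{same} objects — the mixed value of the fused part is not the sum. The correct bookkeeping: a column $C_j$ contributes to the mixed value of $R_i\cup R_{i+1}$ iff $(R_i\cup R_{i+1})\cap C_j$ is mixed, i.e.\ iff it contains a corner, i.e.\ iff $R_i\cap C_j$ is mixed, or $R_{i+1}\cap C_j$ is mixed, or the 2-by-$|C_j|$ strip straddling $R_i$ and $R_{i+1}$ is mixed (mixed cut $i$ within $C_j$). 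So the set of "active" columns $C_j$ for the fused part is contained in the union of the active columns for $R_i$, the active columns for $R_{i+1}$, and the columns $C_j$ for which $i$ is a mixed cut restricted to $C_j$. Hence if $R_i\cup R_{i+1}$ is bad, more than $2c_t$ columns are active for it, and since $R_i$ and $R_{i+1}$ each have at most $2c_t$ active columns on their own... this does not immediately give a contradiction, so the honest argument must instead charge corners.

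The cleanest route, and the one I expect the paper to take, is to pass to a $0,1$-matrix of \emph{corners} and apply Marcus-Tardos directly. Define an auxiliary $(k-1)\times(\ell-1)$ (or suitably sized) $0,1$-matrix whose $(i,j)$ entry is $1$ precisely when there is a corner of $M$ "at the cut $(i,j)$" of the division — i.e.\ a corner whose four cells straddle both the boundary between $R_i$ and $R_{i+1}$ and the boundary between $C_j$ and $C_{j+1}$, or more carefully, one records for each pair of consecutive row-boundary/column-boundary whether the $2\times 2$ submatrix there is mixed, and also records mixed zones. The key point is that a division with large mixed value forces many such corner-witnesses arranged in a grid pattern, and a $t$-mixed minor of $M$ is \emph{exactly} a $t$-grid minor of this corner-matrix (this is essentially the content of \cref{lem:corner} lifted to divisions, together with \cref{lem:mixedvalue}). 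Concretely: if the greedy process gets stuck at a division $({\mathcal R},{\mathcal C})$ where every fusion is bad, then both $\mathcal R$ and $\mathcal C$ must be "large" (every consecutive merge blows past $2c_t$), and a double-counting shows the corner-matrix has at least $c_t\max(\#\text{rows},\#\text{columns})$ ones; then \cref{thm:marcustardos} yields a $t$-grid minor of the corner-matrix, which translates into a $t$-mixed minor of $M$ (each zone of the grid minor contains a corner of $M$, hence is mixed), contradicting that $M$ is $t$-mixed free. Therefore the greedy process never gets stuck, runs all the way to the coarsest division, and produces a division sequence in which every division has mixed value at most $2c_t$.

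\textbf{Main obstacle.} The delicate part is the precise accounting that converts "all fusions are bad" into "the corner-matrix is dense enough for Marcus-Tardos," keeping the mixed \emph{cuts} (not just mixed zones) correctly in the ledger — this is exactly why the mixed value was defined to include cuts, as \cref{lem:mixedvalue} and \cref{fig:mixed-value} emphasize: a corner can live "between" two parts, and without counting cuts the quantity would not be fusion-monotone. I would handle this by proving a clean local lemma first: if $R_i\cap C_j$, $R_{i+1}\cap C_j$ are non-mixed and cut $i$ is non-mixed within $C_j$, then $(R_i\cup R_{i+1})\cap C_j$ is non-mixed (this is literally \cref{lem:mixedvalue}'s proof), so that "active column $C_j$ for $R_i\cup R_{i+1}$" implies "$C_j$ is active for $R_i$, or active for $R_{i+1}$, or cut $i$ is a mixed cut within $C_j$." With the right factor ($2c_t$ rather than $c_t$, to absorb the two summands coming from $R_i$ and $R_{i+1}$ plus the cut contributions), the pigeonhole over which of these three reasons holds — combined with Marcus-Tardos applied to the appropriate slice of the corner-matrix — closes the argument.
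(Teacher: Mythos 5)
Your high-level frame matches the paper's: grow the sequence greedily from the finest division, use \cref{lem:mixedvalue} to argue that the only obstruction to a fusion is the mixed value of the \emph{newly created} part, and, if the process ever gets stuck, apply \cref{thm:marcustardos} to an auxiliary $0,1$-matrix recording mixedness to extract a $t$-mixed minor of $M$. But the step you yourself flag as delicate is genuinely missing, and the specific auxiliary matrix you sketch does not work. A corner witnessing a mixed zone $R_i\cap C_j$ lies strictly inside that zone and need not straddle any boundary of the division, so a matrix whose $(i,j)$ entry records a corner at the crossing of the boundary between $R_i,R_{i+1}$ and the boundary between $C_j,C_{j+1}$ can be all zeros even though every candidate fusion is bad (all the mixed value may come from mixed zones, none from cuts at crossings); your parenthetical ``and also records mixed zones'' is exactly the unresolved point, because once you add those the object is no longer a matrix indexed by anything to which Marcus--Tardos, and the pull-back of a grid minor to a genuine division of $M$, visibly apply. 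Your first, direct bookkeeping attempt does not close (as you note), and no replacement is supplied: the sentence ``a double-counting shows the corner-matrix has at least $c_t\max(\#\text{rows},\#\text{columns})$ ones'' is precisely the content of the lemma and is asserted rather than proved.

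The device the paper uses, and which your plan lacks, is this. Assume without loss of generality $k\geq\ell$ and consider only the \emph{disjoint} consecutive pairs $R'_i:=R_{2i-1}\cup R_{2i}$ on the row side; if the greedy process is stuck, each $R'_i$ has mixed value more than $2c_t$ on $\mathcal C$, hence contains more than $2c_t$ corners, one per mixed zone or cut. Then take the two \emph{offset coarsenings} $\mathcal C'=\{C_1\cup C_2,C_3\cup C_4,\dots\}$ and $\mathcal C''=\{C_1,C_2\cup C_3,C_4\cup C_5,\dots\}$: every one of these corners lies inside a zone of $(\mathcal R',\mathcal C')$ or of $(\mathcal R',\mathcal C'')$, because a corner crossing a column cut of $\mathcal C$ is interior to a part of exactly one of the two coarsenings. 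So the zones of these two divisions collect at least $\lfloor k/2\rfloor\cdot 2c_t$ corners, and one of the two divisions has at least $\lfloor k/2\rfloor c_t$ mixed zones. Applying \cref{thm:marcustardos} to the indicator matrix of mixed zones of that division (of dimensions about $\lfloor k/2\rfloor\times\lceil\ell/2\rceil$ with $k\geq\ell$, so it has at least $c_t$ times its larger dimension many $1$s) yields a $t$-grid minor, which is a $t$-mixed minor of $M$ since each selected zone contains a corner --- the desired contradiction. Restricting to disjoint row pairs on the larger side (to get the density relative to the max dimension) and converting mixed cuts into mixed zones of honest divisions via the two coarsenings are the two ideas needed to make your pigeonhole and your Marcus--Tardos application actually go through.
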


\begin{proof}
  We start with the finest division of $M$ and greedily perform fusions as long as we can keep mixed value at most $2c_t$.
  Assume that we have reached a division $({\mathcal R},{\mathcal C})=(\{R_1,\dots ,R_k\},\{C_1, \dots,$ $C_\ell\})$, in which, without loss of generality, $k\geq \ell$.
  Assume also, for the sake of contradiction, that each fusion $R_{2i-1},R_{2i}$ for $i=1, \dots,\lfloor k/2 \rfloor$ leads to a mixed value exceeding $2c_t$.
  By \cref{lem:mixedvalue}, the mixed value of $C_j$ on ${\mathcal R}$ does not increase when performing a row-fusion.
  Thus, if the fusion of $R_{2i-1}$ and $R_{2i}$ is not possible, this is because the mixed value of $R'_i=R_{2i-1}\cup R_{2i}$ on ${\mathcal C}$ is more than $2c_t$.
  Therefore the number of mixed cuts or zones of each $R'_i$ (for $i=1, \dots,\lfloor k/2 \rfloor$) on ${\mathcal C}$ is greater than $2c_t$; hence $R'_i$ contains more than $2c_t$ corners in mixed zones and mixed cuts.
  Now we refine ${\mathcal C}$ in two possible ways: either ${\mathcal C}'=\{C_1\cup C_2, C_3\cup C_4, \dots\}$ or ${\mathcal C}''=\{C_1, C_2\cup C_3, C_4\cup C_5, \dots\}$.
  Observe that each mixed cut of $R'_i$ on $\mathcal C'$ (resp.~$\mathcal C''$) corresponds to a mixed zone of $R'_i$ on $\mathcal C''$ (resp.~$\mathcal C'$).
  Let ${\mathcal R}'=\{R'_1,\dots ,R'_{\lfloor k/2\rfloor}\}$ and consider the two divisions $({\mathcal R'},{\mathcal C'})$ and $({\mathcal R'},{\mathcal C''})$.
  Thus, in total, the zones contained in these two divisions contain at least $\lfloor k/2\rfloor \cdot 2c_t$ corners.
  So one of these subdivisions contains at least $\lfloor k/2 \rfloor c_t$ zones with a corner, hence $\lfloor k/2 \rfloor c_t$ mixed zones.
  By applying the Marcus-Tardos theorem (\cref{thm:marcustardos}) to the smaller auxiliary matrix with a 1 if the zone is mixed and a 0 otherwise, one can find a $t$-mixed minor in~$M$.
\end{proof}

\subsection{Finding a contraction sequence with bounded error value}

We are now equipped to prove the main result of this section, which is the second item of \cref{thm:gridtheorem}.
The division sequence with small mixed value, provided by \cref{lem:fusion}, will guide the construction of a contraction sequence (not necessarily a division sequence) of bounded error value.
This two-layered mechanism is also present in the proof of Guillemot and Marx, albeit in a simpler form since they have it tailored for sparse matrices, and importantly they start from a permutation matrix.

\begin{proof}[Proof of \cref{thm:gridtheorem}]
  We first show that every $t$-twin-ordered matrix $M$ is $2t+2$-mixed free.
  Let $({\mathcal R},{\mathcal C})=(\{R_1, \ldots, R_{2t+2}\},\{C_1, \ldots, C_{2t+2}\})$ be a division of an $n \times m$ matrix $M$ and assume for contradiction that all its zones are mixed.
  Since $M$ is $t$-twin-ordered, there is a division sequence $({\mathcal R}^1,{\mathcal C}^1), \ldots, ({\mathcal R}^{n+m-1},{\mathcal C}^{n+m-1})$ in which all divisions have error value at most $t$.
  Let us consider the first index $s$ such that some $R_i$ is contained in a part of ${\mathcal R}^s$ or some $C_j$ is contained in a part of ${\mathcal C}^s$.
  Assume without loss of generality that $R\in {\mathcal R}^s$ contains $R_i$.
  Since a zone $R_i\cap C_j$ in $M$ is mixed for each $C_j$ in $\mathcal{C}$, 
  it is not vertical, and therefore for each $j\in [2t+2]$ there exists a choice $C'_j$ in ${\mathcal C}^s$ 
  which intersects $C_j$ such that $R\cap C'_j$ is not constant. 
  Observe that we cannot have $C'_j=C'_{j+2}$ since this would mean that $C'_j$ contains $C_{j+1}$, a contradiction to the choice of $s$.
  In particular the error value of $R$ in ${\mathcal C}^s$ is at least $(2t+2)/2 > t$, a contradiction.

  We now show that every $n \times m$ matrix $M$ which does not contain a $t$-mixed minor has twin-width at most $4c_t \alpha^{4c_t+2}$, where $c_t$ is as defined in \cref{thm:marcustardos}, and $\alpha$ is the alphabet size for the entries of $M$.
  By~\cref{lem:fusion}, there exists a division sequence $({\mathcal R}^1,{\mathcal C}^1), \ldots, ({\mathcal R}^{n+m-1},{\mathcal C}^{n+m-1})$ with mixed value at most $t' := 2c_t$. 
  We now refine each division $({\mathcal R}^{s},{\mathcal C}^{s})=(\{R_1, \ldots, R_{a}\},$ $\{C_1, \ldots ,C_{b}\})$, into a partition $({\mathcal R}'^{s},{\mathcal C}'^{s})$ of $M$ (which is not necessarily a division). 
  We consider $R_i \in {\mathcal R}^s$ and we say that a subset $J$ of consecutive indices of $\{1, \dots, b\}$ is \emph{good} if $R_i\cap \cup_{j\in J}C_j$ is not mixed.
Now, observe that if $j \in [b-1]$ is not a mixed cut, and if $R_i \cap C_j$ and $R_i \cap C_{j+1}$ are both non-mixed zones, then $R_i \cap (C_j \cup C_{j+1})$ is a non-mixed zone. Since the mixed value of $R_i$ on $\mathcal C^s$ is at most $t'$, one can find at most $t'+1$ good subsets $J_1, \ldots ,J_r$ covering all the non-mixed zones of $R_i$ (each good subset spans all indices between two mixed zones/cuts). 
  We observe that a zone $Z_c := R_i \cap \cup_{j\in J_c}C_j$ is either vertical or horizontal.
  When $Z_c$ is vertical, all rows of $R_i$ are identical on indices in $J_c$.
  When $Z_c$ is horizontal, there are at most $\alpha$ possible rows of $R_i$ restricted to the  indices in $J_c$ where $\alpha$ is the size of the alphabet.
  In particular, there are at most $\alpha^r \leq \alpha^{t'+1}$ different rows in $R_i$, when we restrict them to $\{1, \ldots, b\} \setminus \{j $ $|$ $R_i \cap C_j~\text{is mixed}\}$.
  We then partition $R_i$ into these different types of rows and proceed in the same way for all parts in ${\mathcal R}^{s}$ and in ${\mathcal C}^{s}$ to obtain a partition $({\mathcal R}'^{s},{\mathcal C}'^{s})$ of $M$.

We show that the error value of $({\mathcal R}'^{s},{\mathcal C}'^{s})$ does not exceed $t' \alpha^{t'+1}$.
Suppose that a zone $R\cap C$ where $R\in {\mathcal R}'^{s}$ and $C\in {\mathcal C}'^{s}$ is not constant.
We denote by $R_i \in {\mathcal R}^{s}$ and $C_j\in {\mathcal C}^{s}$ the parts such that $R \subseteq R_i$ and $C \subseteq C_j$.
Note that the zone $R_i \cap C_j$ must be mixed, since otherwise, it has been divided into constant zones in $({\mathcal R}'^{s},{\mathcal C}'^{s})$.
In particular, the total number of such $C_j$ is at most $t'$.
Since $C_j$ has been partitioned at most $\alpha^{t'+1}$ times, the total number of zones $R \cap C$ is at most $t' \alpha^{t'+1}$.

Let us show that the partition $({\mathcal R}'^{s},{\mathcal C}'^{s})$ refines $({\mathcal R}'^{s+1},{\mathcal C}'^{s+1})$. Take for instance 
$R\in {\mathcal R}'^{s}$ and denote by $R_i\in {\mathcal R}^{s}$ the part such that $R\subseteq R_i$.
Now the rows in $R$ have been selected in $R_i$ as they coincide on all zones $R \cap C$ where $C \in {\mathcal C}'^{s}$ and $R_i \cap C$ is not mixed. 
Since the zones of $({\mathcal R}^{s+1},{\mathcal C}^{s+1})$ contain the zones of $({\mathcal R}^{s},{\mathcal C}^{s})$, the selection at stage $s+1$ is based on potentially less $C_j$ such that $R_i \cup C_j$ is not mixed (in case of a column fusion) or potentially more rows to choose $R$ from (in case of a row fusion with $R_i$). 
In both cases, $R$ has to appear in some part of ${\mathcal R}'^{s+1}$.
We established that $({\mathcal R}'^{s},{\mathcal C}'^{s})$ refines $({\mathcal R}'^{s+1},{\mathcal C}'^{s+1})$.
Moreover, since $({\mathcal R}'^{s},{\mathcal C}'^{s})$ $\alpha^{t'+1}$-refines $({\mathcal R}^{s},{\mathcal C}^{s})$ which in turn 2-refines $({\mathcal R}^{s+1},{\mathcal C}^{s+1})$, we have that $({\mathcal R}'^{s},{\mathcal C}'^{s})$ $2\alpha^{t'+1}$-refines $({\mathcal R}^{s+1},{\mathcal C}^{s+1})$.
As $({\mathcal R}'^{s+1},{\mathcal C}'^{s+1})$ refines $({\mathcal R}^{s+1},{\mathcal C}^{s+1})$, $({\mathcal R}'^{s},{\mathcal C}'^{s})$ $2\alpha^{t'+1}$-refines $({\mathcal R}'^{s+1},{\mathcal C}'^{s+1})$.

Finally, we apply~\cref{lem:subsequence} to the sequence $({\mathcal R}'^{s},{\mathcal C}'^{s})$ and conclude that the twin-width of $M$ is at most $2\alpha^{t'+1} \cdot t' \alpha^{t'+1} = 2t'\alpha^{2(t'+1)} = 4c_t \alpha^{4c_t+2}$.
\end{proof}


The second item of \cref{thm:gridtheorem} has the following consequence, which reduces the task of bounding the twin-width of $G$ and finding a contraction sequence to merely exhibiting a \emph{mixed free order}, that is a domain-ordering $\sigma$ such that the matrix $A_\sigma(G)$ is $t$-mixed free for a bounded $t$.
\begin{theorem}\label{cor:mixed}
  Let $G$ be a (di)graph or even a binary structure.
  If there is an ordering $\sigma: v_1, \ldots, v_n$ of $V(G)$ such that $A_\sigma(G)$ is $k$-mixed free, then $\text{tww}(G)=2^{2^{O(k)}}$.
\end{theorem}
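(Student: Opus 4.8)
The plan is to reduce to \cref{thm:gridtheorem}: its second item bounds the \emph{ordinary} twin-width of a $k$-mixed free matrix, whereas $\text{tww}(G)$ is by definition the \emph{symmetric} twin-width of $M := A_\sigma(G)$, so the whole point is to check that the construction behind \cref{thm:gridtheorem} --- \cref{lem:fusion}, then the type-refinement step, then \cref{lem:subsequence} --- can be carried out \emph{symmetrically}, using that $M$ is mixed-symmetric (by the encoding of \cref{sec:dig-encoding}). The elementary fact I would first record is: since $M$ is mixed-symmetric, the entrywise alphabet map $\phi$ that negates the $\{-1,1\}$-coordinates and fixes the $\{0,2\}$-coordinates is an involutive bijection with $m_{j,i} = \phi(m_{i,j})$ for all $i,j$; applying $\phi$ entrywise preserves each of the properties constant/vertical/horizontal/mixed, while transposing a submatrix swaps vertical with horizontal and preserves mixed. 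Hence, for any index sets $I,J$, the submatrix $M[I,J]$ is mixed iff $M[J,I]$ is (the latter being obtained from the former by transposing and applying $\phi$ entrywise), and two rows $i,i'$ agree in column $j$ iff columns $i,i'$ agree in row $j$. Throughout I identify the $i$-th row with the $i$-th column; a \emph{symmetric} partition is then a pair $(\mathcal R,\mathcal C)$ equal under this identification, and a symmetric contraction sequence one in which every row-contraction is immediately paired with the matching column-contraction, which is precisely the form required by the definition of symmetric twin-width.

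First I would rerun \cref{lem:fusion} keeping divisions symmetric: starting from the finest symmetric division, greedily perform \emph{symmetric fusions} --- merging $R_i,R_{i+1}$ together with the twin columns --- as long as the mixed value stays at most $2c_t$; by \cref{lem:mixedvalue} and its transpose such a fusion never increases the mixed value. If we get stuck, then for every $i$ the symmetric fusion of $P_{2i-1},P_{2i}$ fails, and by the transpose observation the mixed value of $P'_i := P_{2i-1} \cup P_{2i}$ as a row-set on $\mathcal C$ equals its value as a column-set on $\mathcal R$, so the row-version exceeds $2c_t$ for all $i$; then the Marcus--Tardos computation in the proof of \cref{lem:fusion} goes through verbatim to produce a $t$-mixed minor of $M$, a contradiction. (Row- and column-divisions always have equal size here, so the ``$k \ge \ell$'' reduction is vacuous.) This yields a symmetric division sequence of mixed value at most $t' := 2c_t$.

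Next I would rerun the type-refinement of the proof of \cref{thm:gridtheorem}, splitting each part $P_a$ into row-types and into column-types. By the two equivalences above, the ``good'' column-subsets defining the row-types of $P_a$ are exactly the ``good'' row-subsets defining its column-types, and two rows of $P_a$ have the same type over those columns iff the matching columns have the same type over those rows; so each refined partition $(\mathcal R'^s,\mathcal C'^s)$ is again symmetric, with all the error-value and $\alpha^{t'+1}$-refinement estimates of \cref{thm:gridtheorem} unchanged. Finally, a symmetric version of \cref{lem:subsequence} finishes the job: extend the sequence $(\mathcal R'^s,\mathcal C'^s)$ to a symmetric contraction sequence by performing, between consecutive stages, the required row-contractions, each immediately followed by its mirror column-contraction. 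Every matrix occurring in this process has its row- and column-partition refining $(\mathcal R'^{s+1},\mathcal C'^{s+1})$ and refined by $(\mathcal R'^s,\mathcal C'^s)$, so the worst-case count of \cref{lem:subsequence} bounds its red number by $2\alpha^{t'+1} \cdot t'\alpha^{t'+1} = 4c_t\alpha^{4c_t+2}$, where $\alpha \le 4^h = O(1)$ is the alphabet size for a binary structure with $h$ binary relations. Taking $t := k$ gives a symmetric contraction sequence of $A_\sigma(G)$ of error value $4c_k\alpha^{4c_k+2} = 2^{2^{O(k)}}$, i.e. $\text{tww}(G) = 2^{2^{O(k)}}$.

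The main obstacle is not any of the numerical estimates --- they are all inherited from \cref{lem:fusion,thm:gridtheorem,lem:subsequence} --- but the bookkeeping that keeps every intermediate object symmetric. Concretely: verifying that the type-refinement step yields genuinely matching row- and column-partitions, which is exactly where the mixed-symmetry of $A_\sigma(G)$ (as opposed to plain symmetry, which fails for digraphs) is essential, and checking that interleaving row- and column-contractions in the last step never lets the red number exceed the bound, i.e. that the counting argument of \cref{lem:subsequence} still applies to the ``half-contracted'' intermediate partitions.
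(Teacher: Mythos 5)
Your proposal is correct and follows essentially the same route as the paper: symmetrize the proof of \cref{thm:gridtheorem} step by step, using mixed-symmetry of $A_\sigma(G)$ to pair each row fusion/contraction with its mirror column operation in \cref{lem:fusion}, in the type-refinement, and in \cref{lem:subsequence}, with all numerical bounds inherited unchanged. (Only a phrasing nit: a symmetric fusion can of course increase the mixed value of the newly merged part; what \cref{lem:mixedvalue} and its transpose give is that no \emph{other} part's mixed value increases, which is exactly what your stuck-case argument then uses correctly.)
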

\begin{proof}
  We shall just revisit the proof of \cref{thm:gridtheorem} and check that, starting from a mixed-symmetric matrix $M := A_\sigma(G)$, we can design a symmetric contraction sequence.
  As $M = (m_{ij})_{i,j}$ is mixed-symmetric, it holds that $m_{ij} = m_{i'j'} \Leftrightarrow m_{ji} = m_{j'i'}$.
  In particular the symmetric $Z'$ about the diagonal of an off-diagonal zone $Z$ is mixed if and only if $Z'$ is mixed.
  More precisely, $Z'$ is horizontal (resp.~vertical) if and only if $Z$ is vertical (resp.~horizontal).
  
  The division sequence with bounded mixed value, greedily built in \cref{lem:fusion}, can be then made symmetric.
  Say the first fusion merges the $i$-th and $i+1$-st rows, and let us call $R$ this new row-part.
  We perform the symmetric fusion of the $i$-th and $i+1$-st columns, and denote by $C$ the obtained column-part.
  After that operation, no mixed value among the row-parts has increased.
  In particular the mixed value of $R$ has not increased, and this new mixed value equals the mixed value of $C$.
  Therefore the symmetric fusion was indeed possible.
  We iterate this process and follow the rest of the proof of \cref{lem:fusion} to obtain a symmetric division sequence.

  The refinement of the division sequence into a sequence of partitions of bounded error value, in the second step of the proof of \cref{thm:gridtheorem}, is now symmetric since the division is symmetric and $M$ is mixed-symmetric (so two columns are equal on a set of zones if and only if the symmetric rows are equal on the symmetric set of zones).
  Finally the contraction sequence is provided by \cref{lem:subsequence}.
  In this lemma, we observed that the contractions going from the (symmetric) $(\mathcal R^i, \mathcal C^i)$ to the (symmetric) $(\mathcal R^{i+1}, \mathcal C^{i+1})$ can be done in any order.
  Thus we can perform a symmetric sequence of contractions.
  Overall we constructed a symmetric contraction sequence with error value $2^{2^{O(k)}}$.
  Hence the twin-width of $G$ is bounded by that quantity.
  This can be interpreted as a contraction sequence of the vertices of $G$ (or domain elements) with bounded red degree.
\end{proof}
We observe that the proof of~\cref{cor:mixed} is constructive.
It yields an algorithm which, given a $k$-mixed free $n \times n$ matrix $M$, outputs a $2^{2^{O(k)}}$-sequence of $M$ in $O(n^2)$-time.


\section{Classes with bounded twin-width}\label{sec:bounded-twinwidth} 

In this section we show that some classical classes of graphs and matrices have bounded twin-width.
Let us start with the origin of twin-width, which is the method proposed by Guillemot and Marx~\cite{Guillemot14} to understand permutation matrices avoiding a certain pattern.

\subsection{Pattern-avoiding permutations}
We associate to a permutation $\sigma$ over $[n]$ the $n \times n$ matrix $M_{\sigma}=(m_{ij})_{i,j}$ where $m_{i\sigma(i)} = 1$ and all the other entries are set to 0.
A permutation $\sigma$ is a \emph{pattern} of a permutation $\tau$ if $M_{\sigma}$ is a submatrix of $M_{\tau}$.
A central open question was the design of an algorithm deciding if a pattern $\sigma$ appears in a permutation $\tau$ in time $f(|\sigma|) \cdot |\tau|^{O(1)}$.
The brilliant idea of Guillemot and Marx, reminiscent of treewidth and grid minors, is to observe that permutations avoiding a pattern $\sigma$ can be iteratively decomposed (or collapsed), and that the decomposition gives rise to a dynamic-programming scheme.
This lead them to a linear-time $f(|\sigma|) \cdot |\tau|$ algorithm for permutation pattern recognition.
In \cref{sec:def,sec:grid-theorem} we generalized their decomposition to graphs and arbitrary (dense) matrices, and leveraged the Marcus-Tardos theorem, also in the dense setting.
\cref{sec:grid-theorem} would in principle readily apply here: If a permutation matrix $M_{\tau}$ does not contain a fixed pattern of size $k$, then it is certainly $k$-mixed free since otherwise the $k$-mixed minor would contain any pattern of size $k$.
Hence by \cref{thm:gridtheorem}, $M_{\tau}$ has bounded twin-width.

However, to be able to use our framework and derive that FO model checking is FPT in the class of permutations avoiding a given pattern, we need to transform $M_{\tau}$ into a different matrix.
Namely, we consider the directed graph ${D_{\tau}}$ whose vertex set is the union of two total orders, respectively the natural increasing orders on $\{1,\ldots,n\}$ and on $\{1',\ldots,n'\}$, where we add double arcs between $i$ and $\tau(i)'$.
The adjacency matrix $A(D_{\tau})$ of ${D_{\tau}}$ where the vertices are ordered $1,\dots,n,1',\dots,n'$ (recall the encoding mentioned in \cref{sec:dig-encoding}) consists of four blocks.
Two of them are $M_{\tau}$ and its transpose, and the two others (encoding the total orders) both consist of a lower triangle of 1, including the diagonal, completed by an upper triangle of -1.
If $M_{\tau}$ is $k$-mixed free, the matrix $A(D_{\tau})$ is $2k$-mixed free, and thus has bounded twin-width.
Note also that every first-order formula expressible in the permutation $\tau$ (where we can test equality and $\leq$) is expressible in the structure ${D_{\tau}}$.
In \cref{sec:fo} we will show that FO model checking is FPT for ${D_{\tau}}$, as we can efficiently compute a sequence of $d$-partitions.
Therefore FO model checking is also FPT in the class of permutations avoiding some fixed pattern $\sigma$.

As an illustrating example, let us consider the following artificial problem.
Let $\ell$ be a positive integer, and $\sigma, \sigma'$ be two fixed permutations.
Given an input permutation $\tau$, we ask if $\tau$ contains the pattern $\sigma'$ or every pattern of $\tau$ of size $\ell$ is contained in $\sigma$.
There is an $f(\ell, |\sigma|, |\sigma'|) \cdot |\tau|^2$ algorithm to solve this problem (actually the dependency in $|\tau|$ could be made linear in this particular case).
We first compute an upper bound on the twin-width of the matrix $M_{\tau}$ associated to $\tau$ (as defined previously).
Either $M_{\tau}$ has a $|\sigma'|$-mixed minor (and we can answer positively: $\sigma'$ appears in $\tau$), or $D_\tau$ has bounded twin-width.
One of these two outcomes can be reached in time $O(|\tau|^2)$ by the previous section (even $O(|\tau|)$).
We now assume that $D_\tau$ has bounded twin-width.
Then we observe that the property ``every pattern of $\tau$ of size $\ell$ is contained in $\sigma$'' is expressible by a first-order formula of size $g(\ell,|\sigma|)$.
By \cref{sec:fo} that property can be tested in time $f(\ell, |\sigma|, |\sigma'|) \cdot |\tau|$.

Given a permutation $\tau$, we can form the \emph{permutation graph} $G_\tau$ on vertex set $[n]$ where $ij$ is an edge when $i < j$ and $\tau(i) > \tau(j)$.
Note that $G_{\tau}$ can be first-order interpreted from the digraph $D_\tau$ (defined as above) and the partition of $V(D_{\tau})$ into $\{1, \ldots, n\}$ and $\{1', \ldots, n'\}$.
In \cref{sec:fo-inter} we will show that any FO interpretation of a graph $G$ by a formula $\phi(x,y)$ has twin-width bounded by a function of $\phi$ and $\text{tww}(G)$.
This implies the following:
\begin{lemma}\label{lem:permutationgraph}
FO model checking is FPT on every hereditary proper subclass of permutation graphs.
\end{lemma}

\begin{proof}
  By assumption, there is a permutation graph $G_{\sigma}$ which is not an induced subgraph of any graph $G_{\tau}$ in the class.
  We thus obtain that $D_{\tau}$ has bounded twin-width, as $M_{\tau}$ does not contain the pattern $M_{\sigma}$.
  Therefore $G_{\tau}$ itself has bounded twin-width, and a sequence of contractions can be efficiently found (by following the constructive proof of~\cref{sec:grid-theorem}).
  We conclude by invoking \cref{sec:fo}. 
\end{proof}

A similar argument works for partial orders of (Dushnik-Miller) dimension~2, i.e., intersections of two total orders defined on the same set.
We obtain:
\begin{lemma}\label{lem:dm2graph}
 FO model checking is FPT on every proper subclass of partial orders of dimension~2.
\end{lemma}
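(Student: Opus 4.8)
The plan is to transcribe the proof of \cref{lem:permutationgraph} to the signature of a single order relation. For a permutation $\tau$ of $[n]$, write $P_\tau$ for the partial order on ground set $[n]$ with $i <_{P_\tau} j$ iff $i < j$ and $\tau(i) < \tau(j)$. First I would recall the standard fact that a finite poset has (Dushnik--Miller) dimension at most $2$ exactly when it is isomorphic to some $P_\tau$: a realizer $(L_1,L_2)$ with $L_1\cap L_2 = P$ yields $\tau$ upon relabelling so that $L_1$ is $1<\dots<n$ and setting $\tau(i)$ to be the $L_2$-rank of $i$; moreover such a realizer is computable in polynomial time by the classical dimension-$2$ recognition algorithm. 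The class of dimension-$\le 2$ posets is hereditary, so a proper subclass $\mathcal C$ of it (closed under induced sub-posets and not equal to all of it) omits some dimension-$\le 2$ poset $P_0$ as an induced sub-poset; fix one realizer of $P_0$, producing a permutation $\sigma$ with $P_0\cong P_\sigma$.

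Next I would argue that if $P=P_\tau\in\mathcal C$ then $\tau$ avoids the pattern $\sigma$: an occurrence of $\sigma$ in $\tau$ on indices $i_1<\dots<i_{|\sigma|}$ induces, straight from the definition of $P_\tau$, a sub-poset isomorphic to $P_\sigma\cong P_0$, contradicting that $\mathcal C$ is hereditary and omits $P_0$. Hence $M_\tau$ does not contain the submatrix $M_\sigma$, so $M_\tau$ is $|\sigma|$-mixed free, and exactly as in the \emph{Pattern-avoiding permutations} discussion above the digraph $D_\tau$ (the two glued total orders on $\{1,\dots,n\}$ and $\{1',\dots,n'\}$ with the perfect matching $i\leftrightarrow\tau(i)'$) is $2|\sigma|$-mixed free; by \cref{thm:gridtheorem} it has bounded twin-width and the constructive proof of \cref{sec:grid-theorem} outputs a $d$-contraction sequence of $D_\tau$ in polynomial time.

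Finally I would reduce model checking on $P$ to model checking on $D_\tau$. Expanding $D_\tau$ by a single unary relation $U$ marking the unprimed vertices $\{1,\dots,n\}$ keeps the twin-width bounded by \cref{lem:unary}, and the poset relation is an FO interpretation of $(D_\tau,U)$ via $\nu(x)=U(x)$ and an edge formula $\eta(x,y)$ expressing ``$U(x)\wedge U(y)$, there is an arc $x\to y$, and there are matched partners $x'\sim x$, $y'\sim y$ with an arc $x'\to y'$'' --- which is precisely $x<_{P_\tau}y$. So $P$ is an FO transduction of $D_\tau$; by the transduction theorem of \cref{sec:fo-inter} it has bounded twin-width with an efficiently computable contraction sequence (alternatively, one translates the input query $\phi$ into a query over $(D_\tau,U)$ and runs the algorithm there directly). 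Either way, \cref{sec:fo} gives the FPT model-checking algorithm. I expect no real obstacle here --- the lemma is essentially \cref{lem:permutationgraph} for a different output relation; the only book-keeping worth a moment's care is that realizers are not unique, so both $P_0$ and a given $P\in\mathcal C$ may correspond to several permutations, but since $P_\tau\cong P_\sigma$ already forces the pattern-avoidance we use, fixing one realizer of $P_0$ (and computing one realizer of each input $P$) is enough.
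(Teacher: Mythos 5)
Your proposal is correct and takes essentially the same route as the paper, which gives no separate argument for this lemma but simply notes that the proof of \cref{lem:permutationgraph} transfers: a proper hereditary subclass excludes some dimension-2 poset, hence every realizer-permutation $\tau$ avoids the corresponding pattern $\sigma$, so $M_\tau$ (and then $D_\tau$) is mixed free, \cref{thm:gridtheorem} yields a contraction sequence, and the order relation is recovered from $D_\tau$ (with the two sides marked) by an FO interpretation, or equivalently by translating queries --- exactly what you do. The only cosmetic slip is writing the matching as $\sim$ (the copy-operation relation): in $D_\tau$ the matching between $i$ and $\tau(i)'$ is encoded by the double arcs, which is just as FO-definable, so nothing in your argument changes.
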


\subsection{Posets of bounded width}
The versatility of the grid minor theorem for twin-width is also illustrated with posets.
Let $P=(X,\leq)$ be a poset of \emph{width}~$k$, that is, its maximum antichain has size~$k$.
For $x_i, x_j \in X$, \emph{$x_i < x_j$} denotes that $x_i \leq x_j$ and $x_i \neq x_j$.
We claim that the twin-width of $P$ is bounded by a function of~$k$.
By Dilworth's theorem, $P$ can be partitioned into $k$ total orders (or \emph{chains}) $T_1, \ldots, T_k$.
Now one can enumerate the vertices precisely in this order, say $\sigma$, that is, increasingly with respect to~$T_1$, then increasingly with respect to $T_2$, and so on.
We rename the elements of $X$ so that in the order $\sigma$, they read $x_1, x_2, \ldots, x_n$, with $n := |X|$.
Let us write the adjacency matrix $A=(a_{ij}) := A_\sigma(P)$ of~$P$: $a_{ij}=1$ if $x_i \leq x_j$, $a_{ij}=-1$ if $x_j < x_i$, and $a_{ij}=0$ otherwise.
Recall that this is consistent with how we defined the adjacency matrix for the more general digraphs in \cref{sec:grid-theorem}.
We assume for contradiction that $A$ has a $3k$-mixed minor.

By the pigeon-hole principle, there is a submatrix of $A$ indexed by two chains, $T_i$ for the row indices and $T_j$ for the column indices, which has a $3$-mixed minor, realized by the $(3,3)$-division $(R_1,R_2,R_3), (C_1,C_2,C_3)$.
The zone $R_2 \cap C_2$ is mixed, so it contains a -1 or a 1.
If it is a -1, then by transitivity the zone $R_3 \cap C_1$ is entirely -1, a contradiction to its being mixed.
A similar contradiction holds when there is a 1 entry in $R_2 \cap C_2$: zone $R_1 \cap C_3$ is entirely 1.
See \cref{fig:bounded-width-posets} for an illustration.
Hence, by \cref{thm:gridtheorem}, the twin-width of $A$ (and the twin-width of $P$ seen as a directed graph) is bounded by $4c_k \cdot 4^{4c_k+2} = 2^{2^{O(k)}}$.

\begin{figure}
  \centering
  \begin{tikzpicture}[
      extended line/.style={shorten >=-#1,shorten <=-#1},
      extended line/.default=0.2cm]
    \def\h{2}
    \foreach \i/\l in {0/j,1/i}{
      \draw[->] (\i,0) -- (\i,\h) ;
      \node at (\i,-0.35) {$T_\l$} ;
    }
    \def\o{0.1}
    \foreach \a/\b/\c in {0.2/0.6/1,0.7/1/2,1.1/1.55/3}{
      \draw[very thick] (-\o,\a) -- (-\o,\b) ;
      \pgfmathsetmacro{\d}{(\a+\b)/2}
      \node at (-4 * \o,\d) {$C_\c$} ;
    }
    \foreach \a/\b/\c in {0.25/0.7/1,0.8/1.1/2,1.2/1.5/3}{
      \draw[very thick] (1+\o,\a) -- (1+\o,\b) ;
      \pgfmathsetmacro{\d}{(\a+\b)/2}
      \node at (1 + 4 * \o,\d) {$R_\c$} ;
    }
    \draw[->] (0,0.8) -- (1,1) ;
    \draw[blue,very thick,->] (- \o,0.4) -- (1+\o,1.35) ;

    \begin{scope}[xshift=3.25cm,yshift=0.5cm]
      \foreach \i in {0.15,0.65,1.05,1.6}{
        \draw[extended line] (\i,0.25) -- (\i,1.5) ;
      }
      \foreach \i/\j in {0.4/1,0.85/2,1.375/3}{
        \node at (\i,-0.2) {$C_\j$} ;
      }
      \draw[thick] (-0.2,-1) -- (2,-1) ;
      \node at (0.9,-1.25) {$T_j$} ;
      \foreach \i in {0.2,0.75,1.15,1.55}{
        \draw[extended line] (0.2,\i) -- (1.55,\i) ;
      }
      \foreach \i/\j in {0.475/1,0.95/2,1.35/3}{
        \node at (-0.2,\i) {$R_\j$} ;
      }
      \draw[thick] (-0.5,-0.2) -- (-0.5,2) ;
      \node at (-0.75,0.9) {$T_i$} ;

      \node[circle,inner sep=-0.01cm] (oe) at (0.8,1) {\tiny{-1}} ;
      \fill[blue,opacity=0.2] (0.15,1.15) -- (0.15,1.55) -- (0.65,1.55) -- (0.65,1.15) -- cycle ;
      \node[circle,inner sep=-0.23cm] (oz) at (0.4,1.35) {\textcolor{blue}{-1}} ;
      \draw[-implies,double equal sign distance] (oe) -- (oz) ;
    \end{scope}

    \begin{scope}[xshift=8cm]
    \foreach \i/\l in {0/j,1/i}{
      \draw[->] (\i,0) -- (\i,\h) ;
      \node at (\i,-0.35) {$T_\l$} ;
    }
    \def\o{0.1}
    \foreach \a/\b/\c in {0.2/0.6/1,0.7/1/2,1.1/1.55/3}{
      \draw[very thick] (-\o,\a) -- (-\o,\b) ;
      \pgfmathsetmacro{\d}{(\a+\b)/2}
      \node at (-4 * \o,\d) {$C_\c$} ;
    }
    \foreach \a/\b/\c in {0.25/0.7/1,0.8/1.1/2,1.2/1.5/3}{
      \draw[very thick] (1+\o,\a) -- (1+\o,\b) ;
      \pgfmathsetmacro{\d}{(\a+\b)/2}
      \node at (1 + 4 * \o,\d) {$R_\c$} ;
    }
    \draw[->] (1,0.9) -- (0,0.95) ;
    \draw[green!60!black,very thick,->] (1+\o,0.475) -- (-\o,1.325) ;

    \begin{scope}[xshift=3.25cm,yshift=0.5cm]
      \foreach \i in {0.15,0.65,1.05,1.6}{
        \draw[extended line] (\i,0.25) -- (\i,1.5) ;
      }
      \foreach \i/\j in {0.4/1,0.85/2,1.375/3}{
        \node at (\i,-0.2) {$C_\j$} ;
      }
      \draw[thick] (-0.2,-1) -- (2,-1) ;
      \node at (0.9,-1.25) {$T_j$} ;
      \foreach \i in {0.2,0.75,1.15,1.55}{
        \draw[extended line] (0.2,\i) -- (1.55,\i) ;
      }
      \foreach \i/\j in {0.475/1,0.95/2,1.35/3}{
        \node at (-0.2,\i) {$R_\j$} ;
      }
      \draw[thick] (-0.5,-0.2) -- (-0.5,2) ;
      \node at (-0.75,0.9) {$T_i$} ;

      \node[circle,inner sep=-0.01cm] (oe) at (0.95,0.9) {\tiny{1}} ;
      \fill[green!60!black,opacity=0.2] (1.05,0.2) -- (1.6,0.2) -- (1.6,0.75) -- (1.05,0.75) -- cycle ;
      \node[circle,inner sep=-0.18cm] (oz) at (1.35,0.475) {\textcolor{green!60!black}{1}} ;
      \draw[-implies,double equal sign distance] (oe) -- (oz) ;
    \end{scope}
    \end{scope}
  \end{tikzpicture}
  \caption{Left: If there is one arc from $C_2$ to $R_2$, then by transitivity there are all arcs from $C_1$ to $R_3$. On the matrix, this translates as: a -1 entry in $R_2 \cap C_2$ implies that all the entries of $R_3 \cap C_1$ are -1. Right: Similarly, a 1 entry in $R_2 \cap C_2$ implies that all the entries of $R_1 \cap C_3$ are 1. Hence at least one zone among $R_3 \cap C_1$, $R_2 \cap C_2$, $R_1 \cap C_3$ is constant, a contradiction to the $3k$-mixed minor.}
  \label{fig:bounded-width-posets}
\end{figure}

Of course there was a bit of work to establish \cref{thm:gridtheorem} inspired by the Guillemot-Marx framework, and supported by the Marcus-Tardos theorem.
There was even more work to prove that FO model checking is FPT on bounded twin-width (di)graphs.
It is nevertheless noteworthy that once that theory is established, the proof that bounded twin-width captures the posets of bounded width is lightning fast.
Indeed the known FPT algorithm on posets of bounded width~\cite{Gajarsky15} is a strong result, itself generalizing or implying the tractability of FO model checking on several geometric classes~\cite{Ganian15,Hlineny19}, as well as algorithms for \emph{existential} FO model checking on posets of bounded width~\cite{Bova16,Gajarsky14}.
We observe that posets of bounded twin-width constitute a strict superset of posets of bounded width.
Arcless posets are trivial separating examples, which have unbounded maximum antichain and twin-width 0.
A more elaborate example would be posets whose cover digraph is a directed path on $\sqrt n$ vertices in which all vertices are substituted by an independent set of size $\sqrt n$.
These posets have width $\sqrt n$ and twin-width 1.

We observe that while this paper was under review, Balab\'an and Hlinen\'y showed a linear upper bound $O(k)$ in the twin-width of posets of width~$k$~\cite{Balaban21}.
Their proof does not rely on the Marcus-Tardos theorem and gives directly a good contraction sequence.

The next example does not qualify as a ``lightning fast'' membership proof to bounded twin-width.
It shows however that the good vertex-ordering can be far less straightforward.  


\subsection{Proper minor-closed classes}
A more intricate example is given by proper minor-closed classes.
By definition, a proper minor-closed class does not contain some graph $H$ as a minor.
This implies in particular that it does not contain $K_{|V(H)|}$ as a minor.
Thus we only need to show that $K_t$-minor free graphs have bounded twin-width.

If the $K_t$-minor free graph $G$ admits a Hamiltonian path, things become considerably simpler.
We can enumerate the vertices of $G$ according to this path and write the corresponding adjacency matrix $A$.
The crucial observation is that a $k$-mixed minor yields a $K_{k/2,k/2}$-minor, hence a $K_{k/2}$-minor.
So $A$ cannot have a $2t$-mixed minor, and by \cref{thm:gridtheorem}, the twin-width of $G$ bounded (by $4c_{2t}2^{4c_{2t}+2}=2^{t^{O(t)}}$).
Unfortunately, a Hamiltonian path is not always granted in $G$.
A depth-first search (DFS for short) tree may emulate the path, but any DFS will not necessarily work.
Interestingly the main tool of the following theorem is a carefully chosen Lex-DFS.

\begin{table}
\begin{center}
 \begin{tabular}{cccc} 
 \toprule
  & Permutations avoiding $\sigma$ & Posets of width $w$ & $K_t$-minor free graphs  \\ 
 \midrule
 ordering  & imposed  & chains put one after the other & ad-hoc Lex-DFS    \\
 bound & $2^{O(|\sigma|)}$   & $2^{2^{O(w)}}$  &  $2^{2^{2^{O(t)}}}$ \\
 \bottomrule
 \end{tabular}
\end{center}
\caption{Choice of the ordering and bound on the twin-width for the classes tackled in \cref{sec:bounded-twinwidth}.}
\label{tbl:orders}
\end{table}

We note that a much simpler proof of the following theorem is obtained in~\cite{tww-seq} via a directed version of twin-width (where red edges come with an orientation).
However a different result in the same paper requires the proof that we are about to give here.

\begin{theorem}\label{thm:Ktminor}
  We set $g: t \mapsto 2(2^{4t+1}+1)^2$, $c_k := 8/3(k+1)^22^{4k}$, and $f: t \mapsto 4c_{g(t)} 2^{4c_{g(t)}+2}$. 
  Every $K_t$-minor free graph have twin-width at most $f(t)=2^{2^{2^{O(t)}}}$.
\end{theorem}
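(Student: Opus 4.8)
The strategy is to use \cref{thm:gridtheorem} (second item): I will produce a vertex ordering $\sigma$ of a $K_t$-minor free graph $G$ such that $A_\sigma(G)$ is $g(t)$-mixed free for $g(t) = 2(2^{4t+1}+1)^2$, and then the stated bound follows by plugging $g(t)$ into the twin-width bound $4c_k\alpha^{4c_k+2}$ with $\alpha=2$. The heart of the proof is the choice of ordering. As the excerpt hints, a Hamiltonian path would make things easy (a $2t$-mixed minor would give a $K_{t,t}$-minor, hence a $K_t$-minor), but $G$ need not have one. So the plan is to run a specific \emph{Lex-DFS} on $G$: maintain a linear order on the already-visited vertices, and when choosing the next vertex to explore, always pick the unvisited neighbor whose vector of adjacencies to the visited vertices (read in the current order) is lexicographically largest. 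Record vertices in the order they are discovered; this is $\sigma$.

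\textbf{Key steps.} First I would set up the Lex-DFS and establish its structural properties: it is a DFS, so the non-tree edges are ``back edges'' going from a vertex to an ancestor in the DFS tree, and hence in the order $\sigma$ every vertex $v$ sees its neighbors that precede it as lying on the root-to-$v$ path. Second, I would analyze what a large mixed minor in $A_\sigma(G)$ forces. Suppose $A_\sigma(G)$ has a $g(t)$-mixed minor with division $(\mathcal R,\mathcal C) = (\{R_1,\dots,R_{g(t)}\},\{C_1,\dots,C_{g(t)}\})$. Using the DFS structure plus the lexicographic tie-breaking rule, I want to show that within such a mixed minor one can locate a large ``grid-like'' pattern of vertices and connecting paths that realizes a $K_{t,t}$-minor (or directly a $K_t$-minor). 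The point of the lex rule is the standard Lex-DFS/LexBFS-style exchange argument: if $u$ precedes $v$ in $\sigma$ and $v$ has a neighbor $w$ (appearing later) adjacent to $u$ but $u$ does not ``see'' that distinction the way a mixed zone demands, then the lexicographic maximality at the moment $v$ was chosen is contradicted — so mixed zones force genuine differences in neighborhoods among the path-ancestors, which is exactly what lets us route many vertex-disjoint connections. The constant $2(2^{4t+1}+1)^2$ is exactly the shape one expects from applying Marcus--Tardos-type or pigeonhole counting to first extract a $(2^{4t+1}+1)$-by-$(2^{4t+1}+1)$ clean subdivision and then, from $2^{4t+1}+1 > $ (number of adjacency patterns to $4t+1$ fixed vertices), pigeonhole out $4t+1$ rows (and columns) with pairwise distinct adjacency behaviour, from which a $K_{2t}$- or $K_{t,t}$-minor, hence $K_t$-minor, is assembled by contracting the DFS-tree paths.

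\textbf{Main obstacle.} The hard part is the combinatorial core: turning a large mixed minor of the Lex-DFS adjacency matrix into an actual $K_t$-minor. One has to carefully exploit \emph{both} the DFS ancestor structure (to know where the edges of each zone can live) \emph{and} the lexicographic choice rule (to guarantee that mixedness of a zone is witnessed by a robust difference in neighborhoods along a root path, not by some degenerate accident), and then do the bookkeeping to make the branch sets of the minor vertex-disjoint — the branch sets will be sub-paths of DFS-tree paths plus a few extra vertices, and disjointness needs the zones to be chosen far enough apart in the division, which is where the extra factor-of-two and the quadratic blow-up in $g(t)$ come from. Once the mixed-minor-to-$K_t$-minor implication is in hand, contraposition gives that $A_\sigma(G)$ is $g(t)$-mixed free, and \cref{thm:gridtheorem} finishes the bound $f(t) = 4c_{g(t)}2^{4c_{g(t)}+2} = 2^{2^{2^{O(t)}}}$. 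I would also remark that the Lex-DFS runs in polynomial time, so (with the constructive second item of \cref{thm:gridtheorem}) a contraction sequence of the claimed width can be computed in polynomial time, matching the roadmap in \cref{fig:workflow}.
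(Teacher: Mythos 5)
Your overall frame (order the vertices by a Lex-DFS, argue that a large mixed/grid minor of $A_\sigma(G)$ would force a $K_t$-minor, then invoke \cref{thm:gridtheorem}) is the same as the paper's, but the combinatorial core — the step you yourself flag as the main obstacle — is missing, and the sketch you give for it would not go through. First, your tie-breaking rule is not the one that makes the argument work: you break ties by the lexicographic order of the \emph{candidate vertices'} adjacency vectors to the visited set, whereas the paper's rule compares the neighborhoods (in the discovered set) of the \emph{connected components of the undiscovered part} that meet the active vertex, and descends into the component whose word is lex-largest. This component-level rule is exactly what is later exploited, and a vertex-level LexDFS does not obviously yield the same guarantees. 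Second, your plan for extracting the minor (``pigeonhole out $4t+1$ rows with pairwise distinct adjacency behaviour and contract DFS-tree paths'') ignores the real difficulty: the row blocks $B_j$ of the division need not induce connected subgraphs, so you cannot simply contract them, and making branch sets both connected and disjoint is where all the work lies. The paper's proof handles this by (i) showing, using only DFS structure, that representatives of the earlier blocks lie on a single branch, so the blocks $A'_i$ become disjoint subpaths of one root-to-leaf path; (ii) replacing each later block $B_j$ by the minimal DFS subtree $B^*_j$ containing it (connected by construction, and disjoint from all but the last $A'_i$); (iii) observing that the intersection graph of the $B^*_j$ is chordal, hence perfect, and splitting into a large independent set (contract the $B^*_j$ directly, done) versus a large clique; and (iv) in the clique case, using the Helly property to find a common vertex $v$ and then running an iterative halving argument over the components of the subtree at $v$, maintaining sets $\mathcal S$ and $\mathcal L$, where the component-based lex tie-break guarantees the existence of the early-discovered pivot vertices $z_i$; after $2t$ rounds one of $\mathcal S,\mathcal L$ has size $t$ and a $K_{t,t}$-minor is assembled either way. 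None of this is replaced by your ``exchange argument plus pigeonhole'' sketch, and your explanation of where $g(t)=2(2^{4t+1}+1)^2$ comes from (counting adjacency patterns to $4t+1$ fixed vertices) is not the actual source of the constant: the square root comes from the perfect-graph bound $\alpha(H)\omega(H)\geq |V(H)|$, and the $2^{4t}$ comes from needing $2t$ halving steps while retaining $2^{2t}\geq t$ usable components.

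So as written the proposal is a correct statement of the strategy plus an accurate identification of the hard step, but the hard step itself is not solved: you would need to either reconstruct the enhancement/chordality/Helly/halving machinery (with the component-based tie-break), or supply a genuinely different argument that turns a large grid minor of a LexDFS-ordered adjacency matrix into a $K_t$-minor with vertex-disjoint connected branch sets. One minor point in your favour: the final reduction is fine — once $g(t)$-mixed freeness is established, \cref{thm:gridtheorem} with alphabet size $2$ indeed gives $f(t)=4c_{g(t)}2^{4c_{g(t)}+2}=2^{2^{2^{O(t)}}}$, and the polynomial-time computability remark is consistent with the constructive version of that theorem.
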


\begin{proof}
Let $G$ be a $K_t$-minor free graph, and $n := |V(G)|$.
We wish to upperbound the twin-width of $G$. 
We may assume that $G$ is connected since the twin-width of a graph is equal to the maximum twin-width of its connected components.

\textbf{Definition of the appropriate Lex-DFS.}
Let $v_1$ be an arbitrary vertex of $G$.
We perform a specific depth-first search from $v_1$.
A vertex is said \emph{discovered} when it is visited (for the first time) in the DFS.
The \emph{current discovery order} is a total order $v_1, \ldots, v_\ell$ among the discovered vertices, where $v_i$ was discovered before $v_j$ whenever $i < j$.
We may denote that fact by $v_i \prec v_j$, and $v_i \preceq v_j$ if $i$ and $j$ may potentially be equal.
The \emph{current DFS tree} is the tree on the discovered vertices whose edges correspond to the usual parent-to-child exploration.
The \emph{active} vertex is the lastly discovered vertex which still has at least one undiscovered neighbor.
Initially the active vertex is $v_1$, and when all vertices have been discovered, there is no longer an active vertex.
Before that, since $G$ is connected, the active vertex is always well-defined.
The (full) \emph{discovery order} is the same total order when all the vertices have been discovered.
  
We shall now describe how we break ties among the undiscovered neighbors of the active vertex.
Let $v_1, \ldots, v_\ell$ be the discovered vertices (with $\ell < n$), $\mathcal T_\ell$ be the current DFS tree, and $v$ be the active vertex.
  Let $C_1, \ldots, C_s$ be the vertex sets of the connected components of $G - V(\mathcal T_\ell)$ intersecting $N_G(v)$.
  By definition of the active vertex, $s \geqslant 1$.
  For each $i \in [s]$, we interpret $N_G(C_i) \cap V(\mathcal T_\ell)$ as a word $w_\ell(C_i)$ of $\{0,1\}^\ell$ where, for every $j \in [l]$, the $j$-th letter of $w_\ell(C_i)$ is a 1 if and only if $v_j \in N_G(C_i) \cap V(\mathcal T_\ell)$.
  If $w$ and $w'$ are two words on the alphabet $\{0,1\}$, we denote by $w \leq_{\text{lex}} w'$ the fact that $w$ is not greater than $w'$ in the lexicographic order derived from $0 < 1$.
  We can now define the successor of $v_\ell$ in the discovery order.
  The new vertex $v_{\ell+1}$ is chosen as an arbitrary vertex of $C_i \cap N_G(v)$ where $w_\ell(C_j) \leq_{\text{lex}} w_\ell(C_i)$ for every $j \in [s]$.
  Informally we visit first the component having the neighbors appearing first in the current discovery order.
  
  \textbf{The Lex-DFS discovery to order the adjacency matrix $\mathbf{M}$.}
  Let $v_1, \ldots, v_n$ be the eventual discovery order, and let $\mathcal T$ be the complete DFS tree.
  Let $M$ be the $\{0,1\}^{n \times n}$ matrix obtained by ordering the rows and columns of the adjacency matrix of $G$ accordingly to the discovery order.
  We set $g(t) := 2h(t)^2$ and $h(t) := 2^{4t+1}+2$.
  We will show that $M$ is $g(t)$-mixed free, actually even $g(t)$-grid free.
  For the sake of contradiction, let us suppose that $M$ has a $g(t)$-grid minor defined by the consecutive sets of columns $C_1, C_2, \ldots, C_{g(t)}$ and the consecutive sets of rows $R_1, R_2, \ldots, R_{g(t)}$.
  
  Now our goal is to show that we can contract a non-negligible amount of the $C_j$ and $R_i$, thereby exhibiting a $K_t$-minor.
  Actually the $K_t$-minors will arise from $K_{a,b}$-minors with $t \leqslant \min(a,b)$.
  We observe that either $\bigcup_{j \in [1,g(t)/2]} C_j$ and $\bigcup_{i \in [g(t)/2+1,g(t)]} R_i$ are disjoint, or $\bigcup_{j \in [g(t)/2+1,g(t)]} C_j$ and $\bigcup_{i \in [1,g(t)/2]} R_i$ are disjoint.
  Without loss of generality, let us assume that the former condition holds, and we will now try to find a $K_{t,t}$-minor between $C_1, \ldots, C_{g(t)/2}$ and $R_{g(t)/2+1}, \ldots, R_{g(t)}$.
  To emphasize the irrelevance of the first sets being columns and the second sets being rows, we rename $C_1, \ldots, C_{g(t)/2}$ by $A_1, \ldots, A_{g(t)/2}$, and $R_{g(t)/2+1}, \ldots, R_{g(t)}$ by $B_1, \ldots, B_{g(t)/2}$.
  
  Note that all the vertices of $\bigcup_{i \in [g(t)/2]}A_i$ are consecutive in the discovery order and appear before the consecutive vertices $\bigcup_{i \in [g(t)/2]}B_i$.
  Another important fact is that there is at least one edge between every pair $(A_i,B_j)$ (by definition of a mixed minor, or even grid minor).
  Thus let $a_{i,j} \in A_i$ be an arbitrary vertex with at least one neighbor $b_{i,j}$ in $B_j$.
  At this point, if we could contract each $A_i$ and $B_j$, we would be immediately done.
  This is possible if all these sets induce a connected subgraph.
  We will see that this is essentially the case for the sets of $\{A_i\}_{i \in [g(t)/2]}$, but not necessarily for the $\{B_j\}_{j \in [g(t)/2]}$.

 \textbf{The $\mathbf{\{A_i\}_i}$ essentially induce disjoint paths along the same branch.}
  Let $A'_i$ be the vertex set of the minimal subtree of $\mathcal T$ containing $\bigcup_{j \in [g(t)/2]} \{a_{i,j}\}$.
  The following lemma only uses the definition of a DFS, and not our specific tie-breaking rules.

  \begin{lemma}\label{lem:all-in-paths}
    All the vertices $a_{i,j}$, for $i,j \in [g(t)/2]$, lie on a single branch of the DFS tree with, in the discovery order, first $\bigcup_{j \in [g(t)/2]} \{a_{1,j}\}$, then $\bigcup_{j \in [g(t)/2]} \{a_{2,j}\}$, and so on, up to $\bigcup_{j \in [g(t)/2]} \{a_{g(t)/2,j}\}$.
    In particular, the sets $A'_i$ induce pairwise-disjoint paths in $\mathcal T$ along the same branch.
  \end{lemma}
  \begin{proof}
    Assume for the sake of contradiction that $a_{i,j}$ and $a_{i',j'}$, with $a_{i,j} \prec a_{i',j'}$, are not in an ancestor-descendant relationship in $\mathcal T$.
    Let $w$ be the least common ancestor of $a_{i,j}$ and $a_{i',j'}$, and $\mathcal T_w$ the current DFS tree the moment $w$ is discovered.
    Hence $w \prec a_{i,j}$.
    We claim that $b_{i,j}$ would be discovered before $a_{i',j'}$, a contradiction.
    Indeed when $a_{i,j}$ is discovered, it becomes the active vertex (due, for instance, to the mere existence of $b_{i,j}$).
    By design of a DFS, $a_{i,j}$ is not in the same connected component of $G - \mathcal T_w$ as $a_{i',j'}$, but its neighbor $b_{i,j}$ obviously is.
    So this connected component, and in particular $b_{i,j}$, is fully discovered before $a_{i',j'}$.
    This proves that the sets $A'_i$ induce paths in $\mathcal T$ along the same branch.

    We claim that these paths are pairwise disjoint and in the order (from root to bottom) $A'_1, A'_2, \ldots,$ $A'_{g(t)/2}$.
    This is immediate since, for every $i < i'$, $a_{i,j} \prec a_{i',j'}$.
    Thus $a_{i,j}$ can only be an ancestor of $a_{i',j'}$ in $\mathcal T$.
    One can also observe that $A'_i \subseteq A_i$ for every $i \in [g(t)/2]$.
  \end{proof}

  \textbf{Handling the $\mathbf{\{B_j\}_j}$ with the enhancements $\mathbf{\{B^*_j\}_j}$.}
  Let $B^*_j$ be the vertex set of the minimum subtree of $\mathcal T$ containing $B_j$.
  Since $B_j$ consist of consecutive vertices in the discovery order, $B^*_j = B_j \uplus P_j$ where $P_j$ is a path on a single branch of $\mathcal T$.
  One can see $B^*_j$ as an \emph{enhancement} of $B_j$.

  We show that except maybe the last $A'_i$, namely $A'_{g(t)/2}$, every set enhancement $B^*_j$ is disjoint from every $A'_i$.
  \begin{lemma}\label{lem:tree-contraction}
    For every $j \in [g(t)/2]$, for every $i \in [g(t)/2-1]$, $B^*_j \cap A'_i = \emptyset$. 
  \end{lemma}
  \begin{proof}
    There is an edge between $A'_{g(t)/2}$ and each $B_j$.
    Every $B_j$ succeeds $A'_{g(t)/2}$ in the discovery order.
    Therefore all the vertices of $\bigcup_{j \in [g(t)/2]} B_j$ appear in $\mathcal T$ in the subtree of the firstly discovered vertex, say $u$, of $A'_{g(t)/2}$.
    Hence all the trees $B^*_j$ are fully contained in $\mathcal T[u]$ the subtree of $\mathcal T$ rooted at $u$.
    We can then conclude since, by \cref{lem:all-in-paths}, all the vertices of $\bigcup_{j \in [g(t)/2-1]} A'_j$ are ancestors of~$u$.
  \end{proof}
  An enhancement is connected by design.
  Furthermore, by \cref{lem:tree-contraction} contracting (in the usual minor sense) a $B^*_j$ would not affect almost all $A'_i$.
  The remaining obvious issue that we are facing is that a pair of enhancements $B^*_j$ and $B^*_{j'}$ may very well overlap.
  Thus we turn our attention to their intersection graph.

  \textbf{The intersection graph $\mathbf{H}$ of the enhancements.}
  Let $H$ be the intersection graph whose vertices are $B^*_1, \ldots, B^*_{g(t)/2}$ and there is an edge between two vertices whenever the corresponding sets intersect.
  As an intersection graph of subtrees in a tree, $H$ is a chordal graph.
  In particular $H$ is a perfect graph, thus $\alpha(H) \omega(H) \geqslant |V(H)|=g(t)/2$.
  Therefore either $\alpha(H) \geqslant \sqrt{g(t)/2}$ or $\omega(H) \geqslant \sqrt{g(t)/2}$. 
  Moreover in polynomial-time, we can compute an independent or a clique of size $\sqrt{g(t)/2}=h(t)=2^{4t+1}+2 > t$.
  If we get a large independent set $I$ in $H$, we can contract the edges of each $B^*_j$ corresponding to a vertex of $I$.
  By~\cref{lem:tree-contraction} we can also contract any $h(t)$ paths $A'_i$ which are not $A'_{g(t)/2}$, and obtain a $K_{h(t),h(t)}$ (which contains a $K_{h(t)}$-minor, hence a $K_t$-minor).
  We thus assume that we get a large clique~$C$ in~$H$.

  \textbf{$\mathbf{H}$ has a clique $\mathbf{C}$ of size at least $\mathbf{h(t)}$.}
  By the Helly property satisfied by the subtrees of a tree, there is a vertex $v$ of $\mathcal T$ (or of $G$) such that every $B^*_j \in C$ contains $v$.
  If we potentially exclude the $B^*_j$ of $C$ with smallest and largest index, all the other elements of $C$ are fully contained in $\mathcal T[v]$ the subtree of $\mathcal T$ rooted at $v$.
  Let $C_1, \ldots, C_s$ be the connected components of $\mathcal T[v] - \{v\}$, ordered by the Lex-DFS discovery order.
  Thus $v$ has $s$ children in $\mathcal T$.
  
  \textbf{The enhancements of $\mathbf{C}$ essentially intersect only at $\mathbf{v}$.}
  We show that each connected component may intersect only a very limited number of $B^*_j \in C$.
  \begin{lemma}\label{lem:few-in-cc}
    For every $i \in [s]$, the connected component $C_i$ intersects at most two $B^*_j \in C$.
  \end{lemma}
  \begin{proof}
    Assume by contradiction that there is a connected component $C_i$ intersecting $B^*_{j_1}, B^*_{j_2}, B^*_{j_3} \in C$, with $j_1 < j_2 < j_3$.
    Since $B_{j_2}$ appears after $B_{j_1}$ and before $B_{j_3}$ in the discovery order, $B_{j_2}$ is fully contained in $C_i$.
    Hence $B^*_{j_2}$ is also contained in $C_i$ and cannot contain $v$, a contradiction.
  \end{proof}
  Moreover \cref{lem:few-in-cc} shows that only two consecutive $B^*_{j_1}, B^*_{j_2} \in C$ (by consecutive, we mean that there is no $B^*_j \in C$ with $j_1 < j < j_2$) may intersect the same connected component of $\mathcal T[v] - \{v\}$.
  Let us relabel $D_1, \ldots, D_{(h(t)-1)/2}$, every other elements of $C$ except the last one (keeping the same order).
  Now no connected component $C_i$ intersects two distinct sets $D_j, D_{j'}$.
  Each $D_j$ defines an interval $I_j := [\ell(j),r(j)]$ of the indices $i$ such that $D_j$ intersects $C_i$.
  The sets $I_j$ are pairwise-disjoint intervals.

  \textbf{Definitions of the pointers $\mathbf{z, j_b, j_e}$ to iteratively build $\mathbf{\mathcal S}$ and $\mathbf{\mathcal L}$.}
  Let $z_1 \in N_G(C_{r(1)})$ be such that for every $z' \in N_G(C_{r(1)})$, $z_1 \preceq z'$.
  This vertex exists by our DFS tie-breaking rule and the fact that there is an edge between, say, $a_{2,1}$ and $b_{2,1}$ (recall that this edge links $A_2$ and $B_1$).
  We initialize three pointers $z, j_b, j_e$ and two sets $\mathcal S, \mathcal L$ as follows: $z := v_1$ (the starting vertex in the DFS discovery order), $j_b := 1$, $j_e := (h(t)-2)/2=2^{4t}$, $\mathcal S := \emptyset$, and $\mathcal L := \emptyset$.
  Informally the indices $j_b$ (begin) and $j_e$ (end) lowerbound and upperbound, respectively, the indices of the sets $\{D_j\}_j$ we are still working with.
  Every vertex $v \prec z$ is simply disregarded.
  
  The sets $\mathcal S$ and $\mathcal L$ collect vertices (all discovered before $B_1$ in the Lex-DFS order) which can be utilized to form a large biclique minor in two different ways.
  Vertices stored in $\mathcal S$ are not adjacent to too many $\{D_j\}_j$, thus they can be used to ``connect'' the components of some $D_j - \{v\}$ without losing too many other $D_{j'}$.
  Vertices stored in $\mathcal L$ are adjacent to very many $\{D_j\}_j$, so they can directly form a biclique minor with the leftmost connected component of the corresponding $\{D_j\}_j$. 

  Let $j_1 \in [(h(t)-2)/2]$ be the smallest index such that $N_G(C_{\ell(j_1)})$ does not contain $z_1$.
  We distinguish two cases: $j_1 \leqslant (h(t)-2)/4=2^{4t-1}$ and $j_1 > 2^{4t-1}$.
  If $j_1 \leqslant 2^{4t-1}$, we will use $z_1$ to connect all connected components intersecting $D_1$: that is, $C_{\ell(1)}, C_{\ell(1)+1}, \ldots, C_{r(1)}$.
  In that case, we set: $j_b := j_1$ and $\mathcal S := \mathcal S \cup \{z_1\}$.
  
  If instead $j_1 > 2^{4t-1}$, we will use $z_1$ itself as a possible vertex of a biclique minor.
  In that case we set: $j_e := j_1 - 1$ and $\mathcal L := \mathcal L \cup \{z_1\}$.
  Observe that in both cases the length $|j_e - j_b|$ is at most halved.
  Hence we can repeat this process $\log{2^{4t}}/2=2t$ times.
  In both cases we replace the current $z$ by the successor of $z_1$ in the DFS discovery order.

  At the second step, we let $z_2 \in N_G(C_{r(j_b)})$ be such that for every $z' \in N_G(C_{r(j_b)})$ with $z \preceq z'$, then $z_2 \prec z'$.
  In words, $z_2$ is the first vertex (in the discovery order) appearing after $z$ with a neighbor in the last connected component $C_i$ intersecting the current first $D_j$, namely $D_{j_b}$.
  Again this vertex exists by the DFS tie-breaking rule.
  We define $j_2 \in [j_b,j_e]$ as the smallest index such that $N_G(C_{\ell(j_2)})$ does not contain $z_2$.
  We distinguish two cases: $j_2$ below or above the threshold $(j_b+j_e)/2$, and so on.

  \textbf{Building a large minor when $\mathbf{|\mathcal L|}$ is large.}
  After $\log{((h(t)-2)/2)}/2=2t$ steps, $\max(|\mathcal S|,|\mathcal L|)$ $ \geqslant t$.
  Indeed at each step, we increase $|\mathcal S|+|\mathcal L|$ by one unit.
  Also the length $|j_e - j_b|$ after these steps is still not smaller than $2^{4t}/2^{2t} = 2^{2t}$.
  If $|\mathcal L| \geqslant t$, then we exhibit a $K_{t,t}$-minor in $G$ in the following way.
  We contract $C_{\ell(j)}$ to a single vertex, for every $j \in [j_b,j_e]$ (recall that $|j_e - j_b|>2^{2t}$).
  These vertices form with the vertices of $\mathcal L$ a $K_{2^{2t},|\mathcal L|}$, thus a $K_{t,t}$-minor, and a $K_t$-minor.

  \textbf{Building a large minor when $\mathbf{|\mathcal S|}$ is large.}
  If instead $|\mathcal S| \geqslant t$, then we exhibit the following $K_{t,t}$-minor.
  We use each $z_i \in \mathcal S$, to connect the corresponding sets $D_j \setminus \{v\}$.
  We contract $\{z_i\} \cup D_j \setminus \{v\}$ to a single vertex.
  We then contract all the disjoint paths $A'_i$ (recall \cref{lem:all-in-paths}) which are not $A'_{g(t)/2}$ nor contain a vertex in $\mathcal S$.
  This represents at least $g(t)/2-1-2t > t$ vertices.
  This yields a biclique $K_{t,t}$, hence $G$ as a $K_t$-minor.

  \textbf{Concluding on the twin-width of $\mathbf{G}$.}
  The two previous paragraphs reach a contradiction.
  Hence the adjacency matrix $M$ is $g(t)$-mixed free, and even $g(t)$-grid free.
  By \cref{thm:gridtheorem} this implies that the twin-width of $G$ is at most $4c_{g(t)} 2^{4c_{g(t)}+2}$, where $c_k := 8/3(k+1)^22^{4k}$, which was the announced triple-exponential bound.
\end{proof}

Applied to planar graphs, which are $K_5$-minor free, the previous theorem gives us a constant bound on the twin-width, but that constant has billions of digits.
We believe that the correct bound should have only one digit.
It is natural to ask for a more reasonable bound in the case of planar graphs.
An attempt could be to show that for a large enough integer $d$, every planar $d$-trigraph admits a $d$-contraction which preserves planarity.
However \cref{fig:planar-issue} shows that this statement does not hold.

\begin{figure}[h!]
  \centering
  \begin{tikzpicture}[scale=0.9]
    \def\t{9}
    \def\s{0.3}
    \def\se{-0.05}
    \pgfmathtruncatemacro\tm{\t-1}
    \foreach \j in {1,2,3}{
    \begin{scope}[xshift=2 * \j * \tm * \s cm, scale=\s]
    \foreach \i in {1,...,\tm}{
      \node[draw, circle, inner sep=\se cm] (a\i) at (2 * \i,0) {} ;
      \node[draw, circle, inner sep=\se cm] (x\i) at (2 * \i + 1,-0.5) {} ;
      \node[draw, circle, inner sep=\se cm] (y\i) at (2 * \i + 1,0.5) {} ;
      \draw (a\i) -- (x\i) -- (y\i) -- (a\i) ;
    }
    \node[draw, circle, inner sep=\se cm] (a\t) at (2 * \t,0) {} ;
   
    \foreach \i in {1,...,\tm}{
      \pgfmathtruncatemacro\ip{\i+1}
      \draw (y\i) -- (a\ip) -- (x\i) ;
    }
    \node[draw, circle] (xx\j) at (\t+1,-8) {$x_\j$} ;
    \node[draw, circle] (yy\j) at (\t+1,8) {$y_\j$} ; 

    \foreach \i in {1,...,\tm}{
      \draw (y\i) -- (yy\j) -- (a\i) -- (xx\j) -- (x\i) ;
    }
    \draw (yy\j) -- (a\t) -- (xx\j) ;
    \foreach \i in {1,3,5,7}{
      \draw[very thick, red] (xx\j) -- (x\i) ;
    }
    \foreach \i in {2,4,6,8}{
      \draw[very thick, red] (yy\j) -- (y\i) ;
    }
    \end{scope}
    }
    \foreach \la in {xx,yy}{
      \draw (\la1) -- (\la2) -- (\la3) --++ (2.5,0) ;
      \draw (\la1) --++ (-2.5,0) ;
    }
  \end{tikzpicture}
  \caption{For every integer $d$ (here $d=4$), a planar $d$-trigraph without any $d$-contraction to a planar graph.
  The graph should be thought of as wrapped around a cylinder: there are edges $x_1x_3$ and $y_1y_3$, and the leftmost and rightmost vertices are actually the same vertex.}
  \label{fig:planar-issue}
\end{figure}

\section{FO model checking}\label{sec:fo}

In this section, we show that deciding first-order properties in $d$-collapsible graphs is fixed-parameter tractable in $d$ and the size of the formula. 
We let $E$ be a binary relation symbol.
A~graph~$G$ is seen as an $\{E\}$-structure with universe $V(G)$ and binary relation $E(G)$ (matching the arity of $E$).
A \emph{sentence} is a formula without free variables.

A \emph{formula $\phi$ in prenex normal form}, or simply \emph{prenex formula}, is any sentence written as a sequence of non-negated quantifiers followed by a quantifier-free formula:
$$\phi =  Q_1x_1Q_2x_2\dots Q_{\ell}x_{\ell} \phi^*$$
where for each $i \in [\ell]$, the variable $x_i$ ranges over $V(G)$, $Q_i \in \{\forall, \exists\}$, while $\phi^*$ is a Boolean combination in atoms of the form $x_i=x_j$ and $E(x_i,x_j)$.
Here we call \emph{length} of $\phi$ its number of variables $\ell$.
Note that this also corresponds to its quantifier depth.
Every formula with quantifier depth $k$ can be rewritten as a prenex formula of depth $\text{Tower}(k+\log^* k+3)$ (see Theorem 2.2. and inequalities (32) in \cite{Pikhurko11}).


\begin{theorem}\label{thm:FOmodelchecking}
Given as input a prenex formula $\phi$ of length $\ell$, an $n$-vertex graph $G$, and a $d$-sequence of $G$, one can decide $G \models \phi$ in time $f(\ell,d) \cdot n$.
\end{theorem}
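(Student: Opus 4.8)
The plan is to build, in time $f(\ell,d)\cdot n$, a bounded-size tree --- a \emph{reduct of the morphism-tree} $MT'_\ell(G)$ --- that encodes the truth value of \emph{every} prenex sentence of length $\ell$ on $G$, and then read off $G\models\phi$ from it in $O_\ell(1)$ time. First I would set up the combinatorial machinery. A \emph{morphism-tree} of depth $\ell$ over $G$ is a rooted tree whose nodes at depth $i$ carry a vertex label, so that a root-to-node path of length $i$ spells out a partial assignment $(v_1,\dots,v_i)$ to the first $i$ variables. The key observation --- made in the roadmap --- is that the truth of any quantifier-free $\phi^*(v_1,\dots,v_\ell)$ depends only on the induced subgraph $G[\{v_1,\dots,v_\ell\}]$ and the equality pattern of the tuple. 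Hence I would call two sibling nodes (at any level) \emph{equivalent} when their subtrees are isomorphic in a way preserving these ``induced subgraph $+$ equality pattern'' data, call a \emph{reduction} the deletion of one of two equivalent siblings (together with its subtree), and a \emph{reduct} a morphism-tree that admits no reduction. One shows (i) a reduction changes the set of satisfied prenex sentences of length $\ell$ for no choice of $\exists/\forall$ labels of the levels, and (ii) crucially, a reduct has size at most some $h(\ell)$ depending on $\ell$ alone, by bounding the branching level by level in terms of the finitely many realizable types, inductively from the leaves upwards. Given $MT'_\ell(G)$, deciding $G\models\phi$ is a bottom-up evaluation, taking a max over children at an $\exists$-level and a min at a $\forall$-level, in $O_\ell(1)$ time; this is the easy half and yields the corollary referenced as \cref{lem:obs-reduct}.

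The real work is to compute a reduct $MT'_\ell(G)$ from the $d$-sequence. I would pass to the equivalent \emph{partition sequence} $\mathcal P_n,\dots,\mathcal P_1$ of $V(G)$ (the parts of $\mathcal P_i$ being the vertices of $G_i$), with the ``red graph'' $G_{\mathcal P_i}$ on the parts --- edges joining non-homogeneous pairs --- whose maximum degree stays $\le d$. I would then define \emph{partitioned} morphism-trees $MT_\ell(G,\mathcal P)$, where reductions between two nodes are allowed only when their vertex labels lie in the same part of $\mathcal P$; localized versions $MT_\ell(G,\mathcal P,X)$ restricted to parts within red-distance $3^\ell$ of a part $X$; and the corresponding reducts $MT'_\ell(G,\mathcal P,X)$. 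The dynamic program maintains, for $i$ from $n$ down to $1$, the family $\{MT'_\ell(G,\mathcal P_i,X):X\in\mathcal P_i\}$, each member of size $\le h(\ell)$. For $\mathcal P_n$ (singletons) each reduct is a path of length $\ell$ labeled by the single vertex available; for $\mathcal P_1=\{V(G)\}$ we obtain exactly $MT'_\ell(G)$. The point is that passing from $\mathcal P_{i+1}$ to $\mathcal P_i$ changes only $O_{\ell,d}(1)$ of these reducts, each recomputed in $O_{\ell,d}(1)$ time, over $n-1$ steps --- hence the $f(\ell,d)\cdot n$ bound.

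The heart is the update step, where two parts $X',X''$ of $\mathcal P_{i+1}$ merge into $X$. Parts $Y$ at red-distance more than $3^\ell$ from $X$ keep their reduct unchanged: here I would prove a \emph{locality lemma} stating that two parts at red-distance exceeding $2\cdot 3^{\ell-1}$ cannot ``interact'' through non-homogeneous pairs, so that the choice of a precise vertex in one does not constrain the choice in the other --- the constant $3^\ell$ being chosen so that $2\cdot 3^\ell<3^{\ell+1}$. For the at most $d^{3^\ell+1}$ parts $Y$ within red-distance $3^\ell$ of $X$, I would first combine the relevant $MT'_\ell(G,\mathcal P_{i+1},Y)$ by a \emph{shuffle} operation that interleaves the children of trees sitting on mutually homogeneous parts --- sound because for homogeneous $X'\subseteq X$ and $Y'\subseteq Y$ the precise vertex choices in $X'$ and in $Y'$ are independent --- then \emph{prune} nodes that have become redundant and reduce, the combination being the \emph{pruned shuffle}; finally \emph{project} the resulting tree onto the nodes ``inherently rooted at $X$'' to get $MT'_\ell(G,\mathcal P_i,X)$. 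Formalizing ``rooted at $X$'' requires an auxiliary \emph{tuple graph} and a notion of \emph{local root}, which attribute each tuple-node to a unique part and so prevent overlapping information from being double-counted. Each of shuffle, prune and project is proved to preserve the family of satisfied prenex sentences.

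I expect the main obstacle to be proving correctness of the pruned shuffle together with the projection: namely that, after merging finitely many localized reducts, the pruning and projection can be done so that (a) the output really is a reduct of $MT_\ell(G,\mathcal P_i,X)$ and (b) nothing relevant to a distant part is destroyed. This amounts to showing that whatever is discarded in pruning is genuinely redundant --- present in another maintained reduct --- which is precisely what the tuple-graph/local-root bookkeeping must guarantee, and getting those definitions exactly right so the induction over $i$ closes is the delicate point. The non-elementary nature of $f$ is then unavoidable: $h(\ell)$ is already a tower in $\ell$ (each new level can multiply the number of realizable types by an amount exponential in the current count), and the locality radius $3^\ell$ combines with $d$ to give the $d^{3^\ell+1}$ factor, so $f(\ell,d)$ cannot be elementary --- consistent with the Frick--Grohe lower bound for trees.
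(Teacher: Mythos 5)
Your proposal is correct and follows essentially the same route as the paper: morphism-trees, equivalence of siblings and reducts of bounded size $h(\ell)$, minimax evaluation, and a dynamic program over the partition sequence maintaining localized reducts $MT'_\ell(G,\mathcal P_i,X)$ updated via shuffles, pruning by local roots in the sequence/tuple graph, and the $3^\ell$ locality radius. The one step you flag as the main obstacle --- that the pruned shuffle commutes with reductions and that projecting onto tuples rooted at $X$ yields a reduction of $MT_\ell(G,\mathcal P_i,X)$ --- is exactly what the paper establishes in its Lemmas~\ref{lem:prunedshufflereduction}, \ref{lem:inducedreduction} and \ref{lem:MTlbyprunedshuffle}, so your outline matches the published argument.
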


Our proof of \cref{thm:FOmodelchecking} is not specific to a single formula.
Instead we compute a tree of size bounded by a function of $\ell$, which is sufficient to check every prenex formula $\phi$ of length~$\ell$.

\subsection{morphism-trees and shuffles}

All our trees are rooted and the root is denoted by $\varepsilon$.
An \emph{internal node} is a node with at least one child.
Non-internal nodes are called \emph{leaves}.
Given a node $x_i$ in a tree $T$, we call \emph{current path} of $x_i$ the unique path $\varepsilon, x_1, \ldots ,x_i$ from $\varepsilon$ to $x_i$ in $T$.
We will see this current path as the tuple $(x_1, \ldots , x_i)$.
The current path of $\varepsilon$ is the empty tuple, also denoted by~$\varepsilon$.
The \emph{depth} of a node $x$ is the number of edges in the current path of $x$.
A node $x$ is a \emph{descendant} of $y$ if $y$ belongs to the current path of $x$.
Given a tree $T$, we denote the parent of $x$ by $p_T(x)$.
Two nodes with the same parent are \emph{siblings}.
We denote by $T^*$ the set of nodes of $T$ distinct from its root $\varepsilon$, that is $V(T) \setminus \{\varepsilon\}$.

A bijection $f$ between the node sets of two trees $T_1, T_2$ is an \emph{isomorphism} if it commutes with the parent relation, i.e., $p_{T_2}(f(x))=f(p_{T_1}(x))$ for every node $x \in T_1^*$.
One can observe that $f^{-1}: V(T_2) \rightarrow V(T_1)$ is then also an isomorphism.
Two trees are said \emph{isomorphic} if there is an isomorphism between them.
An isomorphism mapping $T$ to itself is called an \emph{automorphism}.
Given a node $x$ in $T$, the \emph{subtree} of $x$, denoted by $B_T(x)$, is the subtree of $T$ rooted at $x$ and containing all descendants of $x$.

An \emph{$i$-tuple} is a tuple on exactly $i$ elements, and a \emph{$\leqslant i$-tuple} is a tuple on at most $i$ elements.
A \emph{subtuple} of a tuple $a$ is any tuple obtained by erasing some entries of $a$.
Given a tuple $a=(a_i)$ and a set $X$, the subtuple of $a$ \emph{induced by $X$}, denoted by $a_{|X}$ is the subtuple consisting of the entries $a_i$ which belongs to $X$.
Given two disjoint sets $A$ and $B$, and two tuples $a \in A^s$ and $b \in B^t$, a \emph{shuffle} $c$ of $a$ and $b$ is any tuple of $(A \cup B)^{s+t}$ such that $c_{|A}=a$ and $c_{|B}=b$.
For instance $(2,0,3,1,0)$ is one of the ten shuffles of $(0,1,0)$ and $(2,3)$.
Given a tuple $x = (x_1, \ldots, x_{k-1}, x_k)$, the \emph{prefix} of $x$ is $(x_1, \ldots, x_{k-1})$ if $k>1$, and $\varepsilon$ if $k=1$.

Given two trees $T_1$ and $T_2$ whose nodes are supposed disjoint, the \emph{shuffle} $s(T_1,T_2)$ of $T_1$ and $T_2$ is the tree whose nodes are shuffles of all pairs of tuples $P_1, P_2$ where $P_1$ is a current path in $T_1$ and $P_2$ is a current path in $T_2$.
The parent relation in $s(T_1,T_2)$ is the prefix relation.
The \emph{$\ell$-shuffle} $s_{\ell}(T_1,T_2)$ of $T_1$ and $T_2$ is the subtree of $s(T_1,T_2)$ obtained by keeping only the nodes with depth at most $\ell$. 

The formal definition of shuffle is somewhat cumbersome since the current path of the node $(x_1,x_2,\ldots,x_i)$ is the tuple $((x_1),(x_1,x_2),\ldots,(x_1,x_2,\ldots,x_i))$.
Given a set $V$, a \emph{morphism-tree in $V$} is a pair $(T,m)$ where $T$ is a tree and $m$ is a mapping from $T^*$ to $V$. Given a set $V$ and an integer $\ell$, we define the (complete) \emph{$\ell$-morphism-tree} $MT_{\ell}(V)=(T_{V,\ell},m_{V,\ell})$ as the morphism-tree in $V$ such that for every positive integer $i\leq \ell$ and every $i$-tuple $(v_1,\ldots,v_i)$ of possibly repeated elements of $V$, there is a unique node $x_i$ of $T_{V,\ell}$ whose current path $(x_1,\ldots,x_i)$ satisfies $m_{V,\ell}(x_j)=v_j$ for all $j=1,\ldots ,i$. Informally, $MT_{\ell}(V)$ represents all the ways of extending the empty set by iteratively adding one (possibly repeated) element of $V$ up to depth $\ell$ in a tree-search fashion. Note that if $V$ has size $n$, the number of nodes of $MT_{\ell}(V)$ is $n^{\ell}+n^{\ell -1}+\ldots  +1$. The formal way of defining $MT_{\ell}(V)$ is to consider that $T_{V,\ell}$ is the set of all tuples $u=(u_1, \ldots, u_i)$ of elements of $V$ with $0\leq i \leq \ell$, the parent relation is the prefix relation, and the image by $m_{V,\ell}$ of a tuple $(u_1, \ldots, u_i)$ is $u_i$.

Again, the formal definition of $MT_{\ell}(V)$ is cumbersome since the current path of the node $(u_1,u_2,\ldots,u_i)$ is the tuple $((u_1),(u_1,u_2),\ldots,(u_1,u_2,\ldots,u_i))$.
Hence, as an abuse of language, we may identify a node $(u_1,u_2,\ldots,u_i)$ to its current path. 
We can extend the notion of shuffle to morphism-trees by defining $(T,m)$ as the \emph{shuffle} of $(T_1,m_1)$ and $(T_2,m_2)$ where $T$ is the shuffle of $T_1$ and $T_2$ (supposed again on disjoint node sets) and for every node $x=(x_1, \ldots, x_k)$ of $T$, we let $m(x)=m_1(x_k)$ if $x_k\in T_1^*$ and $m(x)=m_2(x_k)$ if $x_k\in T_2^*$.
Again, we define the $\ell$-shuffle by pruning the nodes with depth more than $\ell$. 

\begin{lemma}\label{lem:unionshuffle}
  Let $(V_1,V_2)$ be a partition of a set $V$.
  The $\ell$-shuffle of $MT_{\ell}(V_1)$ and $MT_{\ell}(V_2)$ is $MT_{\ell}(V)$.
\end{lemma}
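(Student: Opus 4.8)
The plan is to exhibit an explicit isomorphism of morphism-trees between the $\ell$-shuffle $s_{\ell}(MT_{\ell}(V_1),MT_{\ell}(V_2))$ and $MT_{\ell}(V)$, and then to unwind the formal definitions to check that it preserves all the structure. First I would recall that a node of $MT_{\ell}(V)$ is a tuple $c=(c_1,\dots,c_k)$ of elements of $V$ with $0\le k\le\ell$, that its parent is its prefix $(c_1,\dots,c_{k-1})$, and that $m_{V,\ell}(c)=c_k$. On the shuffle side, a node $x$ of $s(T_{V_1,\ell},T_{V_2,\ell})$ is a shuffle of a current path $P_1$ of $T_{V_1,\ell}$ and a current path $P_2$ of $T_{V_2,\ell}$; writing $x=(y_1,\dots,y_k)$, each $y_i$ is a node of one of the two trees, i.e.\ a tuple over $V_1$ or over $V_2$, and the $T_{V_1,\ell}$-entries of $x$, read in order, form the current path of some node $a=(u_1,\dots,u_s)$ of $T_{V_1,\ell}$, meaning the $j$-th such entry is exactly $(u_1,\dots,u_j)$; symmetrically the $T_{V_2,\ell}$-entries form the current path of some $b=(w_1,\dots,w_t)$ of $T_{V_2,\ell}$, and $k=s+t\le\ell$ since $s_{\ell}$ prunes at depth $\ell$.

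Next I would define $\Phi(x):=(c_1,\dots,c_k)$ where $c_i$ is the last coordinate of the tuple $y_i$ (so $c_i=u_j$ if $y_i$ is the $j$-th $T_{V_1,\ell}$-entry of $x$, and likewise on the $V_2$ side). Because $V_1\cap V_2=\emptyset$, the coordinates coming from the $V_1$-entries recover $a$ and those from the $V_2$-entries recover $b$, so $\Phi(x)$ is a shuffle of $a$ and $b$ and hence a tuple over $V$ of length $k\le\ell$, that is, a node of $MT_{\ell}(V)$. For the inverse, given such a tuple $c$, set $a:=c_{|V_1}$, $b:=c_{|V_2}$ and let $y_i:=(c_1,\dots,c_i)_{|V_1}$ if $c_i\in V_1$ and $y_i:=(c_1,\dots,c_i)_{|V_2}$ if $c_i\in V_2$; one checks that $(y_1,\dots,y_k)$ has its $V_1$-entries equal to the current path of $a$ in $T_{V_1,\ell}$ and its $V_2$-entries equal to the current path of $b$ in $T_{V_2,\ell}$, hence is a node of $s_{\ell}(T_{V_1,\ell},T_{V_2,\ell})$, and that this construction is a two-sided inverse of $\Phi$. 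Thus $\Phi$ is a bijection between the node-sets, sending the root (the empty shuffle) to the empty tuple.

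It then remains to verify that $\Phi$ respects both extra structures. For the tree structure: the parent of $x=(y_1,\dots,y_k)$ in the shuffle is the prefix $(y_1,\dots,y_{k-1})$, whose image under $\Phi$ is $(c_1,\dots,c_{k-1})$, precisely the parent of $\Phi(x)$ in $MT_{\ell}(V)$, so $\Phi$ commutes with the parent relation and is a tree isomorphism. For the labelling: if $y_k$ is the $j$-th $T_{V_1,\ell}$-entry then $m(x)=m_{V_1,\ell}(y_k)=u_j=c_k=m_{V,\ell}(\Phi(x))$, and symmetrically when $y_k$ lies in $T_{V_2,\ell}$, so $\Phi$ is an isomorphism of morphism-trees; identifying morphism-trees up to isomorphism as done elsewhere in the paper, this gives $s_{\ell}(MT_{\ell}(V_1),MT_{\ell}(V_2))=MT_{\ell}(V)$. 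The only real difficulty here is bookkeeping: a node of the shuffle is literally a tuple of tuples, and the crux is to see that ``take the last coordinate of each entry'' collapses it to an ordinary tuple over $V$, the hypothesis $V_1\cap V_2=\emptyset$ being exactly what makes this collapse reversible.
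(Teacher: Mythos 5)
Your proof is correct and takes essentially the same route as the paper's: the paper's one-line argument is exactly the observation that every $\leq\ell$-tuple of $V$ is uniquely the shuffle of a $\leq\ell$-tuple of $V_1$ and a $\leq\ell$-tuple of $V_2$ (via $c\mapsto(c_{|V_1},c_{|V_2})$, using $V_1\cap V_2=\emptyset$), which is precisely the bijection $\Phi$ you make explicit. The extra bookkeeping you carry out (parent relation, labels, depth pruning) is just the routine unwinding that the paper leaves implicit.
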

 
\begin{proof}
This follows from the fact that every $\leq \ell$-tuple of $V$ is uniquely obtained as the shuffle of some $\leq \ell$-tuple of $V_1$ and some $\leq \ell$-tuple of $V_2$.
\end{proof}

One can extend the definition of shuffle to several trees. Given a sequence of (node disjoint) morphism-trees $(T_1,m_1),\ldots ,(T_k,m_k)$, the nodes of the shuffle $(T,m)$ are all tuples which are shuffles $S$ of current paths $P_1,\ldots,P_k$. Precisely, a tuple $S$ is a node of $(T,m)$ if all its entries are non-root nodes of $T_i$'s, and such that each subtuple $S_i$ of $S$ induced by the nodes of $T_i$ is a (possibly empty) current path of $T_i$. As usual the parent relation is the prefix relation. Finally $m(x_1,\ldots ,x_i)$ is equal to $m_j(x_i)$ where $x_i\in T_j$. We speak of $\ell$-shuffle when we prune out the nodes with depth more than $\ell$. Note that $MT_{\ell}(V)$ is the $\ell$-shuffle of $MT_{\ell}(\{v\})$ for all $v\in V$.

\subsection{morphism-trees in graphs and reductions}

We extend our previous definitions to graphs.
The first step is to introduce graphs on tuples.
A \emph{tuple graph} is a pair $(x,G)$ where $x$ is a tuple $(x_1, \ldots,x_t)$ and $G$ is a graph on the vertex set $\{x_1, \ldots,x_t\}$ (where repeated vertices are counted only once).
Thus there is an edge $x_ix_j$ in $(x,G)$ if $x_ix_j$ is an edge of $G$.
The main difference with graphs is that vertices can be repeated within a tuple.
In particular if $x_1=x_3$ and there is an edge $x_1x_2$, then the edge $x_2x_3$ is also present.
Two tuple graphs $(x,G)$ and $(y,H)$ are \emph{isomorphic} if $x=(x_1, \ldots, x_t)$, $y=(y_1, \ldots, y_t)$ and we have both $x_i=x_j \Leftrightarrow y_i=y_j$, and $x_ix_j \in E(G)  \Leftrightarrow y_iy_j \in E(H)$, for every $i,j \in [t]$.

A \emph{morphism-tree in $G$} is a morphism-tree $(T,m)$ in $V(G)$, supporting new notions based on the edge set of $G$.
Given a node $x_i$ of $T$ with current path $(x_1, \ldots, x_i)$, the graph $G$ induces a tuple graph on $(m(x_1), \ldots, m(x_i))$, namely $((m(x_1), \ldots, m(x_i)), G[\{m(x_1), \ldots, m(x_i)\}])$.
We call \emph{current graph of $x_i$} this tuple graph.
Given a node $x_i$ and one of its children $x_{i+1}$, observe that the current graph of $x_{i+1}$ extends the one of $x_i$ by one (possibly repeated) vertex. Informally, a morphism-tree in $G$ can be seen as a way of iteratively extending induced subgraphs of $G$ in a tree-search fashion.

Two morphism-trees $(T,m)$ in $G$ and $(T',m')$ in $G'$ are \emph{isomorphic} if there exists an isomorphism $f$ from $T$ to $T'$ such that for every node $x \in T^*$ and $y$ descendant of $x$:
\begin{itemize}
\item $m(x)=m(y)$ if and only $m'(f(x))=m'(f(y))$.
\item $m(x)m(y)$ is an edge of $G$ if and only $m'(f(x))m'(f(y))$ is an edge of $G'$.
\end{itemize}

In particular, the current graph of a node is isomorphic to the current graph of its image.
Again an isomorphism $f$ from $(T,m)$ into itself is called an \emph{automorphism}.
Two sibling nodes $x, x'$ of a morphism-tree $(T,m)$ are \emph{equivalent} if there exists an automorphism $f$ of $(T,m)$ such that $f(x)=x'$ and $f(x')=x$.
Note that if such an automorphism exists, then there is one which is the identity function outside of $B_T(x) \cup B_T(x')$.
The interpretation of $x, x'$ being equivalent is that the current graph $H$ of their parent can be extended up to depth $\ell$ in $G$ in exactly the same way starting from $x$ or from $x'$.


The (complete) \emph{$\ell$-morphism-tree} $MT_{\ell}(G)$ of a graph $G$ is simply\footnote{Technically, we should denote it by $(MT_{\ell}(V(G)),G)$ but we will stick to this simpler notation.} $MT_{\ell}(V(G))$.
Observe that while $E(G)$ is irrelevant for the syntactic aspect of $MT_{\ell}(G)$, the structure of $G$ is nonetheless important for semantic properties of $MT_{\ell}(G)$.
Indeed equivalent nodes are defined in $MT_{\ell}(G)$ but not in $MT_{\ell}(V(G))$.
Let us give a couple of examples to clarify that point.
When $G$ is a clique, all the sibling nodes are equivalent in $MT_{\ell}(G)$.
When $G$ is a path on the same vertex set, the depth-1 nodes of $MT_{\ell}(G)$ mapped to the first and second vertices of the path are in general \emph{not} equivalent. 

Given two equivalent (sibling) nodes $x,x'$ of a morphism-tree $(T,m)$ in $G$, the \emph{$x,x'$-reduction} of $(T,m)$ is the morphism-tree obtained by deleting all descendants of $x'$ (including itself).
A \emph{reduction} of a morphism-tree is any morphism-tree obtained by iterating a sequence of $x,x'$-reductions. Finally a \emph{reduct} of $(T,m)$ is a reduction in which no further reduction can be performed; that is, none of the pairs of siblings are equivalent.

\begin{lemma}\label{lem:MTtree}
Any reduct of an $\ell$-morphism-tree has size at most $h(\ell)$ for some function $h$.
\end{lemma}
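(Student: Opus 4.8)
The plan is to bound the size of a reduct $(T,m)$ of $MT_\ell(G)$ by induction on $\ell$, and more precisely to bound, level by level from the root, the number of pairwise non-equivalent subtrees that can hang below a common node. The key point is that whether two sibling nodes $x,x'$ (at depth $i$) are equivalent is entirely determined by a \emph{finite} amount of data attached to their subtrees $B_T(x)$, $B_T(x')$: namely the isomorphism type of the ``colored'' morphism-tree $B_T(x)$, where each node $y$ of $B_T(x)$ carries the information of how $m(y)$ relates (equality, and edge/non-edge of $G$) to $m(x_1),\dots,m(x_i)$ and to all ancestors of $y$ within $B_T(x)$. Since $(T,m)$ is a reduct, for a fixed parent at depth $i$ no two of its children can have isomorphic such colored subtrees; hence the number of children of any node at depth $i$ is at most the number of isomorphism types of ``depth-$(\ell-i)$ reduced colored morphism-trees with $i$ external reference points'', which is a finite number depending only on $\ell$ (and on the finite data set $\{=,\text{edge},\text{non-edge}\}$).

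Concretely I would define, for $j\ge 0$ and $p\ge 0$, the quantity $N(j,p)$ = the maximum number of children of a depth-$(\ell-j)$ node in a reduct, given that the current path to that node has length $p = \ell - j$; equivalently set up a recursion on $j$ (the remaining depth). A node with remaining depth $j$ has some set of children, each of which is the root of a reduced colored morphism-tree of remaining depth $j-1$; two children with isomorphic colored subtrees would be equivalent, contradicting reducedness, so the number of children is at most the number $C(j-1,p+1)$ of isomorphism types of such reduced colored subtrees. A reduced colored subtree of remaining depth $j-1$ is determined by: (i) the color of its root among the finitely many ($\le 3^{p+1}$, say) possibilities, and (ii) the multiset — in fact a \emph{set}, by reducedness — of its children's subtrees, each again a reduced colored subtree of remaining depth $j-2$. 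This gives $C(j-1,p+1) \le 3^{p+1}\cdot 2^{C(j-2,p+2)}$, and $C(0,\cdot)=1$ (a leaf, only the root color, but at depth $\ell$ the color set is bounded too). Unfolding this recursion bounds the branching at every level, and then $|T| \le \prod$ of these branching bounds gives the desired $h(\ell)$, a tower of exponentials in $\ell$.

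The main obstacle — the one place where care is genuinely needed — is the claim that equivalence of sibling subtrees is a purely ``local'' combinatorial invariant, i.e.\ that it depends only on the colored-isomorphism type just described, with colors referencing the common ancestors. One has to check that an isomorphism $f$ of $(T,m)$ exchanging $x$ and $x'$ can always be taken to be the identity outside $B_T(x)\cup B_T(x')$ (stated in the excerpt right after the definition of equivalence), and conversely that \emph{any} colored isomorphism $B_T(x)\to B_T(x')$ fixing the ancestor references extends to such a global automorphism. The forward direction is the real content: one restricts $f$ and argues that equality/edge relations between a node of $B_T(x)$ and \emph{any} ancestor (which lies outside $B_T(x')$, on the current path) are preserved, which is exactly what the two bullets in the definition of morphism-tree isomorphism record. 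The converse is an easy gluing argument using that $MT_\ell(G)$ is the complete morphism-tree, so the same continuation appearing below $x$ is syntactically available below $x'$. Once this ``finitely many colored types'' principle is nailed down, the size bound is a routine — if rapidly-growing — counting recursion, and I would not grind through the exact tower; I would just assert the existence of the bounding function $h$.

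I should also remark that the very same argument shows a reduct is \emph{canonical} up to isomorphism (any two reducts of the same morphism-tree are isomorphic), since the reduction process at each level removes exactly the duplicate colored types; this is implicit in speaking of ``a reduct'' and will be convenient later, but is not needed for the stated lemma, so I would mention it only in passing.
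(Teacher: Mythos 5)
Your proposal is correct and follows essentially the same route as the paper: a bottom-up induction (equivalently, a recursion on remaining depth) showing that in a reduct each node has boundedly many children, because siblings whose subtrees realize the same finite ``type'' (equality/adjacency pattern with the ancestors plus the subtree structure) are equivalent, which yields a tower-type bound $h(\ell)$. You merely make explicit the local-type characterization of equivalence (the gluing of a colored isomorphism into an automorphism that is the identity elsewhere) that the paper's proof leaves implicit, so no further comment is needed.
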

 
\begin{proof}
  Assume that $(T,m)$ is a reduct of an $\ell$-morphism-tree in a graph $G$.
  Consider a node $x_{\ell-1}$ of depth $\ell-1$ in $T$.
  The maximum number of pairwise non-equivalent children $x_\ell$ of $x_{\ell-1}$ is at most $2^{\ell-1}+\ell-1$.
  Indeed there are (at most) $2^{\ell-1}$ non isomorphic extensions of the current graph of $x_{\ell-1}$ by adding the new node $m(x_\ell)$, and (at most) $\ell-1$ possible ways for $m(x_\ell)$ to be a repetition of a vertex among $m(x_1), \ldots, m(x_{\ell-1})$.
  In particular $x_{\ell-1}$ has a bounded number of children in the reduct $(T,m)$, and therefore, there exist only a bounded number of non-equivalent $x_{\ell-1}$ which are children of some $x_{\ell -2}$.
  This bottom-up induction bounds the size of $(T,m)$ by a tower function in~$\ell$.
\end{proof}

Since $MT_{\ell}(G)$ represents all possible ways of iterating at most $\ell$ vertex extensions of induced subgraphs of $G$ (starting from the empty set), one can check any prenex formula $\phi$ of depth at most $\ell$ on $MT_{\ell}(G)$.
In the language of games, $MT_{\ell}(G)$ captures all possible games for Player $\exists$ and Player $\forall$ to form a joint assignment of the variables $x_1, \ldots, x_\ell$. 
So far this does not constitute an efficient algorithm since the size of $MT_{\ell}(G)$ is $O(n^{\ell + 1})$.
However reductions --deletions of one of two equivalent alternatives for a player-- do not change the score of the game.
Thus we want to compute reductions, or even reducts, and decide $\phi$ on these smaller trees.

\begin{lemma}\label{lem:obs-reduct}
  Given a reduction of $MT_{\ell}(G)$ of size $s$ and a prenex formula on $\ell$ variables, $G \models \phi$ can be decided in time $O(s)$, and in time $h(\ell)$ if the reduction is a reduct.
\end{lemma}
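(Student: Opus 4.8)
The plan is to decide $\phi = Q_1 x_1 \cdots Q_\ell x_\ell\, \phi^*$ by a single bottom-up pass over the given morphism-tree, and to show that this pass returns the same bit on \emph{every} reduction of $MT_\ell(G)$ as it does on $MT_\ell(G)$ itself, on which it plainly decides $G \models \phi$. First I would fix the evaluation: given a morphism-tree $(T,m)$ in $G$ all of whose leaves sit at depth $\ell$, define $\mathrm{val}(\cdot)$ recursively — for a leaf $x$ with current path $(x_1,\dots,x_\ell)$, $\mathrm{val}(x)$ is the truth value of $\phi^*$ under the assignment $x_i \mapsto m(x_i)$; for an internal node $x$ at depth $i-1$, $\mathrm{val}(x)$ is $\bigvee$ of the values of its children if $Q_i=\exists$, and $\bigwedge$ of them if $Q_i=\forall$. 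On $MT_\ell(G)$ this is exactly the (game) semantics of a prenex sentence, so $\mathrm{val}(\varepsilon)=1$ iff $G\models\phi$. The key elementary observation is that the value at a leaf depends only on the current graph of that leaf, since the truth of each atom $x_i=x_j$ and $E(x_i,x_j)$ of $\phi^*$ is read off the equality pattern and the edges of that tuple graph. The pass visits each node once; maintaining the current graph incrementally costs $O(\ell)$ per step and evaluating $\phi^*$ on a known tuple graph is $O(|\phi|)$, so the running time is $O(s)$ on a tree of size $s$ (absorbing the $\ell,|\phi|$ dependence), and $h(\ell)$ on a reduct by \cref{lem:MTtree}.

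The heart of the argument is that a single $x,x'$-reduction preserves $\mathrm{val}(\varepsilon)$; iterating then gives the lemma, since every reduction of $MT_\ell(G)$ is reached by finitely many such steps. Let $x,x'$ be equivalent siblings and $f$ an automorphism with $f(x)=x'$, $f(x')=x$, which we may take to be the identity outside $B_T(x)\cup B_T(x')$. Its restriction is a morphism-tree isomorphism $B_T(x)\to B_T(x')$, and since a morphism-tree isomorphism sends the current graph of a node to an isomorphic tuple graph, a leaves-to-root induction on $B_T(x)$ yields $\mathrm{val}(y)=\mathrm{val}(f(y))$ for all $y\in B_T(x)$, in particular $\mathrm{val}(x)=\mathrm{val}(x')$. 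Deleting $B_T(x')$ then leaves the common parent $p$ with $x$ still a child carrying the same value, so by idempotence of $\bigvee$ and $\bigwedge$ its value is unchanged; every node off $B_T(x')$ and off the ancestor path of $p$ is untouched, and the ancestors of $p$ recompute the same values, so $\mathrm{val}(\varepsilon)$ does not move. (One also checks that $p$ retains a child after the deletion — namely $x$ — and, more generally, that in any reduction of $MT_\ell(G)$ every node of depth $<\ell$ keeps at least one child, so all leaves remain at depth $\ell$ and the pass is well defined.)

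I expect the single delicate point to be the inductive step just used, namely that a morphism-tree isomorphism preserves the leafwise truth values of $\phi^*$. But this is precisely what the definition of morphism-tree isomorphism is engineered for: between a node and each of its descendants it preserves the equality pattern $m(x)=m(y)$ and the adjacency $m(x)m(y)\in E(G)$, hence it preserves current graphs up to tuple-graph isomorphism, hence the truth of every atom of $\phi^*$, and therefore of $\phi^*$. The rest — idempotence of $\bigwedge$ and $\bigvee$, and the locality of a subtree deletion (it affects only that subtree and the ancestor path above it) — is immediate.
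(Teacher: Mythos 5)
Your proposal is correct and takes essentially the same route as the paper: a single bottom-up min/max (equivalently $\wedge$/$\vee$) evaluation of the tree with leaves labelled by the truth of $\phi^*$, the root value deciding $G\models\phi$, and the observation that this value is invariant under reductions, with the reduct case bounded via \cref{lem:MTtree}. The only difference is that you spell out the invariance step in detail (the swapping automorphism restricts to an isomorphism of the two sibling subtrees, preserves current graphs up to tuple-graph isomorphism and hence leaf values, and idempotence of $\wedge$/$\vee$ finishes), whereas the paper simply asserts that the value does not change under reductions.
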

\begin{proof}
  Let $\phi = Q_1 x_1 Q_2 x_2 \dots Q_\ell x_\ell \phi^*$, where $\phi^*$ is quantifier-free.
  Let $T$ be the tree of the given reduction of $MT_{\ell}(G)$.
  We relabel the nodes of $T$ in the following way.
  At each leaf $(v_1, \ldots, v_\ell)$ of $T$, we put a 1 if $\phi^*(v_1, \ldots, v_\ell)$ is true, and a 0 otherwise.
  For each $i \in [0,\ell-1]$, at each internal node of depth $i$, we place a $\max$ if $Q_{i+1} = \exists$, and a $\min$ if $Q_{i+1} = \forall$.
  The computed value at the root of this minimax tree is 1 if $G \models \phi$, and 0 otherwise.
  Indeed this value does not change while we perform reductions on $MT_{\ell}(G)$.
  The overall running time is~$O(|T|)$.
  By~\cref{lem:MTtree}, if $T$ is a reduct then the overall running time is $h(\ell)$ for some tower function $h$.
\end{proof}

Let us now denote by $MT'_{\ell}(G)$ any reduct of $MT_{\ell}(G)$.
It can be shown by local confluence that $MT'_{\ell}(G)$ is indeed unique up to isomorphism, but we do not need this fact here.
Now our strategy is to compute $MT'_{\ell}(G)$ in linear FPT time using bounded twin-width.


We base our computation on a sequence of partitions of $V(G)$ achieving twin-width~$d$.
Let $\mathcal P = \{X_1, \ldots, X_p\}$ be a partition of $V(G)$.
Two distinct parts $X_i, X_j$ of $\mathcal P$ are \emph{homogeneous} if there are between $X_i$ and $X_j$ either all the edges or no edges.
Let $G_{\mathcal P}$ be the graph on vertex set $\mathcal P$ and edge set all the pairs $X_iX_j$ such that $X_i, X_j$ are distinct and not homogeneous.
If $G_{\mathcal P}$ has maximum degree at most~$d$, we say that $\mathcal P$ is a \emph{$d$-partition} of~$G$.
Note that an $n$-vertex graph $G$ has twin-width at most $d$ if it admits a \emph{sequence of $d$-partitions} ${\mathcal P}_n, {\mathcal P}_{n-1}, \ldots, {\mathcal P}_1$ where $\mathcal P_n$ is the finest partition, and for every $i \in [n-1]$, the partition ${\mathcal P}_i$ is obtained by merging two parts of ${\mathcal P}_{i+1}$.

Our central result is:
\begin{theorem}\label{thm:FOmodelchecking2}
 A reduct $MT'_{\ell}(G)$ can be computed in time $f(\ell,d) \cdot n$, given as input a~sequence of $d$-partitions of $G$. 
\end{theorem}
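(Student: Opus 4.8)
The plan is to process the $d$-partition sequence ${\mathcal P}_n, \ldots, {\mathcal P}_1$ from the finest to the coarsest, maintaining, for every part $X \in {\mathcal P}_i$, a reduct $MT'_{\ell}(G,{\mathcal P}_i,X)$ of the local morphism-tree restricted to the parts within the ball of radius $3^\ell$ around $X$ in the red graph $G_{{\mathcal P}_i}$. At step $i = n$ every part is a singleton $\{v\}$, so the ball of radius $3^\ell$ around $\{v\}$ is just $\{v\}$ itself, and the local morphism-tree is a path of length $\ell$ with every node mapped to $v$; this is already a reduct, so the initialization is immediate. At step $i = 1$ the only part is $V(G)$, its radius-$3^\ell$ ball is all of ${\mathcal P}_1$, and by \cref{lem:unionshuffle} and the definitions the maintained object is exactly a reduct $MT'_{\ell}(G)$, which is what we want. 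The whole algorithm is thus a single pass with $n$ update steps.

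First I would handle the update from ${\mathcal P}_{i+1}$ to ${\mathcal P}_i$, where two parts $X', X''$ merge into $X$. For any part $Y \in {\mathcal P}_i$ whose red-distance to $X$ exceeds $3^\ell$, the radius-$3^\ell$ ball around $Y$ is unchanged by the merge (this is the point of the threshold $3^\ell$: $2 \cdot 3^\ell < 3^{\ell+1}$ guarantees the balls of faraway parts are unaffected by a single merge), so we simply copy the previously stored reduct. For the at most $d^{3^\ell+1}$ parts $Y$ within distance $3^\ell$ of $X$ — here is where the twin-width bound enters, bounding both the number of parts to touch and their individual red degrees — I would rebuild the local reduct. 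The key structural fact, already isolated in the excerpt, is that two parts in different red-components, or more generally two parts $Y', Y''$ that are homogeneous to each other, impose totally independent choices of representatives; by \cref{lem:unionshuffle} the morphism-tree on $Y' \cup Y''$ is the $\ell$-shuffle of the morphism-trees on $Y'$ and on $Y''$, and shuffling commutes with reduction up to reduct. So to rebuild $MT'_{\ell}(G,{\mathcal P}_i,Y)$ I would take the bounded collection of reducts $MT'_{\ell}(G,{\mathcal P}_{i+1},Z)$ for $Z$ ranging over the radius-$3^\ell$ ball of $Y$ in $G_{{\mathcal P}_i}$, $\ell$-shuffle them into one morphism-tree $(T,m)$, then reduce $(T,m)$ and finally \emph{project} it onto the nodes "rooted at $Y$" — a pruning step defined via the tuple graph and a notion of local root that discards the redundant information belonging morally to the other parts' records. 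Each input reduct has size at most $h(\ell)$ by \cref{lem:MTtree}, there are at most $d^{3^\ell+1}$ of them, the $\ell$-shuffle of finitely many bounded-size trees has size bounded by a function of $\ell$ and $d$, and a reduct of it again has size $\le h(\ell)$ by \cref{lem:MTtree}; so every update step costs $f(\ell,d)$ time and touches $f(\ell,d)$ records, giving the overall bound $f(\ell,d)\cdot n$.

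The correctness argument is the heart of the matter and splits into two invariants. The first is a \emph{locality} claim: the reduct of the full morphism-tree $MT_\ell(G,{\mathcal P}_i)$ restricted to any part $X$ depends only on the parts within red-distance $3^\ell$ of $X$, because a prenex sentence of length $\ell$ can never force an interaction between two parts further apart than $3^\ell$ in $G_{{\mathcal P}_i}$ — intuitively a single variable can "reach" at most one red-neighborhood step per existential/universal move, and $3^\ell$ generously bounds the total reach. The second is a \emph{compatibility} claim: the shuffle-then-reduce-then-project operation applied to the stored reducts at level $i+1$ yields a valid reduct at level $i$; this is where \cref{lem:unionshuffle} does the work for homogeneous (in particular, distinct-red-component) parts, and where a more careful bookkeeping with the tuple graph and local roots is needed to correctly merge the overlapping records of parts that are red-adjacent to the newly merged part $X$ and hence share a neighbor. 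I expect the main obstacle to be precisely this projection step and the proof that after shuffling the bounded family of level-$(i+1)$ reducts, pruning to the nodes "inherently rooted at $Y$" loses no information relevant to any length-$\ell$ prenex sentence — equivalently, that two siblings declared equivalent after projection really were equivalent in the true $MT_\ell(G,{\mathcal P}_i,Y)$. Making the notions of tuple graph, local root, and "pruned shuffle" precise, and verifying that reduction, shuffle, and projection commute up to reduct, is the technically delicate part; everything else (the counting that bounds the running time, the copying for faraway parts, the base and final cases) is routine given \cref{lem:MTtree,lem:unionshuffle}.
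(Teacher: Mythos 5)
Your plan reproduces the paper's architecture faithfully: the same invariant (a reduct $MT'_\ell(G,\mathcal P_i,X)$ per part), the same initialization, the same far/near case split at red-distance $3^\ell$, the same shuffle--prune--reduce update, and the same size/time accounting via \cref{lem:MTtree} and the degree bound $d^{3^\ell+1}$. (The fact that you shuffle per affected part $Y$ over its radius-$3^\ell$ ball, while the paper performs one pruned shuffle over the $2\cdot 3^\ell$ ball around the merged part and then restricts, is an immaterial difference.) However, as a proof there is a genuine gap, and it sits exactly where you flag it: the two statements that carry all the mathematical weight are only named, never established. These are (i) that the pruned $\ell$-shuffle of the rooted local trees $MT_\ell(G,\mathcal P,X)$, over all parts $X$, is \emph{exactly} $MT_\ell(G,\mathcal P)$ --- every tuple appears, and appears once, so the projection loses nothing and double-counts nothing (\cref{lem:MTlbyprunedshuffle}); and (ii) that pruned shuffles commute with reductions (\cref{lem:prunedshufflereduction}), so an equivalence certified inside one local reduct extends to a genuine automorphism of the global tree even though red-adjacent parts are \emph{not} homogeneous --- precisely the situation where the naive \cref{lem:shufflecommute} argument breaks, and where one must check that every would-be counterexample node (an entry rooted at the reduced part interacting with an entry of a non-homogeneous part) is exactly a pruned node. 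Proving these requires the actual definitions you defer: the $\ell$-sequence graph with thresholds $3^{\ell-k}$ \emph{decreasing with the position $k$ in the tuple}, the chordality-type \cref{lem:chordal} that this geometric scaling buys, and the resulting prefix-stability of local roots (\cref{lem:localroot}), without which the pruning is not even well-defined on a tree.

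Two smaller inaccuracies in the stated invariants. Your ``locality'' claim (``a prenex sentence of length $\ell$ cannot force an interaction between parts at red-distance more than $3^\ell$'') is not the statement the argument actually uses and would be awkward to prove directly; in the paper locality is built into the objects by restricting to connected tuples rooted at $X$, and the bridge back to the global tree is \cref{lem:inducedreduction} plus (i)--(ii) above, packaged as \cref{thm:dynamicprog}. Likewise, your justification for copying faraway parts ($2\cdot 3^\ell<3^{\ell+1}$ ``keeps their balls unaffected'') misattributes the role of the factor $3$: the correct reason for copying is simply that connected tuples rooted at a part $X$ never involve parts at distance more than $3^\ell$ from $X$, while the geometric scaling of the thresholds is what makes local roots prefix-stable. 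So the plan is the right one --- it is the paper's own --- but the theorem is not proved until the pruned-shuffle machinery and its two key lemmas are actually carried out.
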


The proof will compute $MT'_{\ell}(G)$ iteratively by combining partial morphism-trees obtained alongside the sequence of $d$-partitions.
We start with the finest partition ${\mathcal P}_n$, where each morphism-tree is defined on a single vertex, and we finish with the coarsest partition ${\mathcal P}_1$ which results in the sought $MT'_{\ell}(G)$.
We will thus need to define a morphism-tree for a partitioned graph.
Before coming to these technicalities, let us illustrate how shuffles come into play for computing $MT'_{\ell}(G)$.
The following two lemmas are not needed for the rest of the proof, but they provide a good warm-up for the more technical arguments involving partitions.

The \emph{disjoint union} $G_1 \cup G_2$ of two graphs $G_1, G_2$ with pairwise-disjoint vertex sets is the graph on $V(G_1) \cup V(G_2)$ with no edges between the two graphs $G_1,G_2$.
In this particular case, reductions commute with shuffle.

\begin{lemma}\label{lem:shufflecommute}
  Let $(T_1,m_1)$ and $(T_2,m_2)$ be two morphism-trees in $G_1$ and in $G_2$, respectively (on disjoint vertex sets).
  Let $(T,m)$ be the shuffle of $(T_1,m_1)$ and $(T_2,m_2)$, defined in $G_1 \cup G_2$.
  Let $(T'_1,m'_1)$ be a reduction of $(T_1,m_1)$.
  Then the shuffle $(T',m')$ of $(T'_1,m'_1)$ and $(T_2,m_2)$ is a reduction of $(T,m)$.
\end{lemma}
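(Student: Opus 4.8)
The statement says: if $(T_1',m_1')$ is a reduction of $(T_1,m_1)$, then shuffling it with $(T_2,m_2)$ (in the disjoint-union graph) yields a reduction of the full shuffle $(T,m)$. The natural approach is induction on the number of elementary $x,x'$-reduction steps used to pass from $(T_1,m_1)$ to $(T_1',m_1')$. By transitivity of "is a reduction of", it suffices to treat a single reduction step: $(T_1',m_1')$ is obtained from $(T_1,m_1)$ by an $x,x'$-reduction, where $x,x'$ are equivalent siblings in $(T_1,m_1)$; I then want to exhibit a reduction of $(T,m)$ whose result is the shuffle $(T',m')$.

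First I would set up the combinatorial description of $s((T_1,m_1),(T_2,m_2))$ carefully: its nodes are shuffles of a current path $P_1$ of $T_1$ with a current path $P_2$ of $T_2$, the parent relation is the prefix relation, and the label of a node is inherited from whichever of $T_1,T_2$ its last coordinate came from. The key structural observation is that for a fixed current path $P_2$ of $T_2$, the set of nodes of the shuffle whose "$T_2$-trace" is (a prefix of) $P_2$ carries, in a parent-commuting way, a copy of $T_1$ — more precisely, interleaving a fixed $P_2$ into the tree $T_1$ only inserts extra degree-one passes (the moves of Player playing in $G_2$) between the $T_1$-moves. I would make precise that an automorphism $f_1$ of $(T_1,m_1)$ swapping the equivalent siblings $x,x'$ (and acting as identity outside $B_{T_1}(x)\cup B_{T_1}(x')$, which \cref{lem:obs-reduct}'s preceding remark guarantees we may assume) lifts to an automorphism $f$ of the shuffle $(T,m)$: a shuffle node $c$ is sent to the shuffle node obtained by replacing each $T_1$-coordinate $y$ of $c$ by $f_1(y)$ and leaving $T_2$-coordinates fixed. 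I must check $f$ commutes with the prefix relation (immediate, since $f_1$ does and $T_2$-coordinates are untouched) and that it preserves the $m$-values and the edge/equality relations needed for it to be a morphism-tree automorphism in $G_1\cup G_2$ — here the disjoint-union hypothesis is essential, because there are no edges between $V(G_1)$ and $V(G_2)$, so the only edges or equalities among coordinates of a shuffle node are internal to the $T_1$-part or internal to the $T_2$-part, and $f_1$ handles the former while the latter is fixed.

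Now I identify, inside the shuffle $(T,m)$, the sibling pairs that should be reduced to obtain $(T',m')$. For each shuffle node $u$ and each way of extending $u$ by first inserting some (possibly empty) prefix of $T_2$-moves and then the move $m_1(x)$ versus the move $m_1(x')$ at the same "$T_1$-position" — i.e. the two shuffle-nodes $u^\frown x$ and $u^\frown x'$ that differ only by swapping a single $T_1$-coordinate from $x$ to $x'$ at the next $T_1$-step — the automorphism $f$ above (suitably translated to act as identity outside the relevant subtrees) witnesses that these are equivalent siblings. Deleting the $x'$-branch for all such pairs, in any order, is a legal sequence of reductions of $(T,m)$, and an easy bookkeeping argument shows the resulting morphism-tree has exactly the nodes that are shuffles of a current path of $T_1'$ with a current path of $T_2$ — i.e. it is precisely $(T',m')$. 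I would phrase this last bijection-of-node-sets check as: a shuffle node survives the reductions iff none of its $T_1$-coordinates lies in the deleted subtree $B_{T_1}(x')$, which is exactly the condition that its $T_1$-trace is a current path of $T_1'$.

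The step I expect to be the main obstacle is the bookkeeping in the previous paragraph — verifying that the family of sibling pairs I delete is genuinely a valid *sequence* of $x,x'$-reductions (each one applied to a tree that is a reduction of the previous), rather than a simultaneous deletion, and that the order does not matter. The subtlety is that after deleting one $x'$-branch, the automorphism witnessing equivalence of another such pair must still exist in the shrunken tree; this follows because the automorphism $f$ we built for each pair can be taken to act as the identity outside the two swapped subtrees $B_T(u^\frown x)$ and $B_T(u^\frown x')$, so deletions in disjoint regions do not interfere, and nested ones can be ordered innermost-first. Once that confluence point is handled, the rest is the routine verification that the labels and the prefix/parent structure match, which the disjoint-union hypothesis makes transparent. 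For the multi-step case one simply iterates, using transitivity of "reduction of" and the fact that a reduction of a shuffle can always be re-decomposed as a shuffle with $(T_2,m_2)$ because $(T_2,m_2)$ is never touched.
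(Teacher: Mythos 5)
Your proposal is correct and follows essentially the same route as the paper's proof: reduce to a single $x,x'$-reduction, lift the automorphism of $(T_1,m_1)$ coordinate-wise to the shuffle (fixing $T_2$-coordinates, with disjointness of $G_1$ and $G_2$ guaranteeing labels and edges are respected), observe that the sibling pairs ending in $x$ and $x'$ are equivalent, and reduce them all to obtain exactly the shuffle of $(T'_1,m'_1)$ with $(T_2,m_2)$. The extra bookkeeping you discuss about ordering the reductions is a point the paper passes over silently, and your treatment of it is sound.
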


\begin{proof}
  We just need to show the lemma for single-step reductions.
  Indeed after we prove that shuffling morphism-trees defined on a disjoint union commutes with a single reduction performed in the first morphism-tree, we can iterate this process to establish that it commutes with reductions in general.
  Let $f$ be an automorphism of $(T_1,m_1)$ which swaps the equivalent nodes $x,x'$ and is the identity outside of the subtrees rooted at $x$  and $x'$.
  Let $(T'_1,m'_1)$ be the $x,x'$-reduction of $(T_1,m_1)$.
  Consider the mapping $g$ from $V(T)$ into itself which preserves the root $\varepsilon$ and maps every node $Z=(z_1, \ldots, z_k)$ to $Z'=(\tilde f(z_1), \ldots, \tilde f(z_k))$ where $\tilde f(z_i)=f(z_i)$ if $z_i \in T_1^*$ and $\tilde f(z_i)=z_i$ if $z_i \in T_2^*$.

  We claim that $g$ is an automorphism of $(T,m)$.
  It is bijective since $f$ is bijective.
  It commutes with the parent relation since $p_T(g(Z))=p_T(g(z_1, \ldots, z_{k-1}, z_k))=p_T(\tilde f(z_1), \ldots,$ $\tilde f(z_{k-1}), \tilde f(z_k))=(\tilde f(z_1), \ldots, \tilde f(z_{k-1}))=g(p_T(Z))$.
  Furthermore $g$ behaves well with the morphism~$m$.
  Indeed, for every node $Z_1=(z_1, \ldots, z_i)$ of $T$ and descendant $Z_2=(z_1, \ldots, z_i, z_{i+1}, \ldots, z_k)$, we have:
  \begin{itemize}
  
  \item  If $m(Z_1)=m(Z_2)$, we either have $z_i,z_k\in T_1^*$ and $m_1(z_i)=m_1(z_k)$ and thus $m_1(f(z_i))=m_1(f(z_k))$ which implies $m(g(Z_1))=m_1(f(z_i))=m_1(f(z_k))=m(g(Z_2))$.
    Or we have $z_i,z_k\in T_2$ and $m_2(z_i)=m_2(z_k)$ which implies $m(g(Z_1))=m_2(z_i)=m_2(z_k)=m(g(Z_2))$.
  
  \item If $m(Z_1)m(Z_2)$ is an edge of $G_1 \cup G_2$ we either have $z_i,z_k\in T_1^*$ and $m_1(z_i)m_1(z_k)$ is an edge of $G_1$, or $z_i,z_k\in T_2$ and $m_2(z_i)m_2(z_k)$ is an edge of $G_2$.
    In the first case, $m_1(f(z_i))m_1(f(z_k))$ is an edge of $G_1$ and we conclude since $m_1(f(z_i))m_1(f(z_k))=m(g(Z_1))m(g(Z_2))$.
    In the second case, $m_2(z_i)m_2(z_k)=m(g(Z_1))m(g(Z_2))$ is an edge of $G_2$. Thus $g$ maps edges to edges, and therefore non-edges to non-edges.
   \end{itemize}
   
  Finally, consider any node $Z=(z_1, \ldots, z_k)$ of $(T,m)$ such that $z_k=x$.
  By definition of the shuffle and the fact that $x, x'$ are siblings, there is a node $Z'=(z_1, \ldots z_{k-1},x')$ in $(T,m)$.
  By construction, we have $g(Z)=Z'$ and $g(Z')=Z$ and thus $Z, Z'$ are equivalent in $(T,m)$.
  Therefore we can reduce all such pairs $Z, Z'$ in $(T,m)$ in order to find a reduction in which we have deleted all nodes of $(T,m)$ containing the entry $x'$, and therefore also all its descendants in $T_1$.
  This is exactly the shuffle $(T',m')$ of $(T'_1,m'_1)$ and $(T_2,m_2)$. 
\end{proof}
The previous lemma similarly holds for $\ell$-shuffles.
We can now handle the disjoint union of two graphs.

\begin{lemma}\label{lem:MTunion}
Given as input $MT'_{\ell}(G)$ and $MT'_{\ell}(H)$, two reducts of the graphs $G$ and $H$, one can compute a reduct $MT'_{\ell}(G\cup H)$ in time only depending on $\ell$.
\end{lemma}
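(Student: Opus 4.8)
The plan is to realize $MT_\ell(G\cup H)$ as an $\ell$-shuffle of the complete morphism-trees of $G$ and of $H$, and then to push the two given reductions through the shuffle one at a time, using \cref{lem:shufflecommute}, before finally reducing the (already small) result to a reduct. First I would observe that $(V(G),V(H))$ partitions $V(G\cup H)$ and that $G\cup H$ has no edge with one endpoint in $V(G)$ and the other in $V(H)$; hence, by \cref{lem:unionshuffle} applied to the underlying trees, the $\ell$-shuffle of the morphism-trees $MT_\ell(G)$ and $MT_\ell(H)$ (which by definition is taken ``in'' $G\cup H$) has underlying tree $MT_\ell(V(G)\cup V(H))$, and its morphism — returning the last entry of a tuple — coincides with $m_{V(G\cup H),\ell}$. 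So this $\ell$-shuffle is exactly $MT_\ell(G\cup H)$. I would also record that the shuffle operation is literally symmetric, $s_\ell((T_1,m_1),(T_2,m_2)) = s_\ell((T_2,m_2),(T_1,m_1))$, directly from the definition, since a shuffle of $a$ and $b$ is the same object as a shuffle of $b$ and $a$.

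Next, I would apply \cref{lem:shufflecommute} — in the $\ell$-shuffle version noted right after its proof — twice. Since $MT'_\ell(G)$ is a reduction of $MT_\ell(G)$, the $\ell$-shuffle of $MT'_\ell(G)$ and $MT_\ell(H)$ is a reduction of $MT_\ell(G\cup H)$. Using the symmetry of the shuffle, the same lemma with the roles of $G$ and $H$ exchanged shows that the $\ell$-shuffle $(S,s)$ of $MT'_\ell(G)$ and $MT'_\ell(H)$ is a reduction of the $\ell$-shuffle of $MT'_\ell(G)$ and $MT_\ell(H)$. Since a reduction of a reduction is again a reduction (immediate, as a reduction is by definition an iterated sequence of $x,x'$-reductions), $(S,s)$ is a reduction of $MT_\ell(G\cup H)$.

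It then remains to turn $(S,s)$ into a reduct and to account for the running time. By \cref{lem:MTtree}, each of $MT'_\ell(G)$ and $MT'_\ell(H)$ has at most $h(\ell)$ nodes; every node of $(S,s)$ is a shuffle of a current path of $MT'_\ell(G)$ and one of $MT'_\ell(H)$ of total length at most $\ell$, so $(S,s)$ has at most $h(\ell)^2\cdot 2^{\ell}$ nodes — a bound depending on $\ell$ only — and $(S,s)$ can be built from the two input reducts in time bounded in terms of $\ell$. Finally I would repeatedly search $(S,s)$ for a pair of equivalent siblings (a brute-force search for an automorphism of $(S,s)$ swapping them, feasible because $|S|$ is bounded in terms of $\ell$) and perform the corresponding $x,x'$-reduction; each step strictly decreases the number of nodes, so after at most $|S|$ steps we reach a morphism-tree $MT'_\ell(G\cup H)$ in which no pair of siblings is equivalent. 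Being a reduction of $(S,s)$, hence of $MT_\ell(G\cup H)$, and admitting no further reduction, it is a reduct of $MT_\ell(G\cup H)$, and the whole computation runs in time depending on $\ell$ only.

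The step I expect to be the main obstacle is the two-stage application of \cref{lem:shufflecommute}: one must check that reducing inside the \emph{second} argument of the shuffle is legitimate — handled cleanly here by the exact symmetry of the shuffle operation — and that ``reduction of a reduction'' composes, so that the final irreducible tree is genuinely a reduct of $MT_\ell(G\cup H)$ and not merely of the intermediate shuffle $(S,s)$. Once the size bound of \cref{lem:MTtree} is invoked, the construction of $(S,s)$ and the final reduction are routine and manifestly take time bounded by a function of $\ell$ alone.
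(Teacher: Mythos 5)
Your proposal is correct and follows essentially the same route as the paper's proof: shuffle the two reducts, use \cref{lem:unionshuffle} to identify the $\ell$-shuffle of the complete trees with $MT_\ell(G\cup H)$, push both reduction sequences through via \cref{lem:shufflecommute}, and finish by brute-force reduction of the bounded-size result. The only difference is that you spell out the symmetry of the shuffle to justify reducing in the second argument, a detail the paper leaves implicit under ``repeated use'' of \cref{lem:shufflecommute}.
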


\begin{proof}
  We just have to compute the $\ell$-shuffle $(T,m)$ of $MT'_{\ell}(G)$ and $MT'_{\ell}(H)$, in time depending on $\ell$ only.
  Indeed, by Lemma~\ref{lem:unionshuffle} the $\ell$-shuffle of $MT_{\ell}(G)$ and $MT_{\ell}(H)$ is  $MT_{\ell}(G\cup H)$.
  Therefore, by repeated use of~\cref{lem:shufflecommute} applied to the sequence of reductions from $MT_{\ell}(G)$ to $MT'_{\ell}(G)$ and from $MT_{\ell}(H)$ to $MT'_{\ell}(H)$, the morphism-tree $(T,m)$ is a reduction of $MT_{\ell}(G\cup H)$.
  Note that $(T,m)$ is not necessarily a reduct but its size is bounded, and we can therefore reduce it further by a brute-force algorithm to obtain a reduct $MT'_{\ell}(G\cup H)$.
\end{proof}

We now extend our definitions to partitioned graphs.
Let $G$ be a graph and $\mathcal P$ be a partition of $V(G)$.
A morphism-tree $(T,m)$ in $(G,\mathcal P)$ is again a morphism-tree in $V(G)$.
The difference with a morphism-tree in $G$ lies in the allowed reductions.
Now an automorphism~$f$ of $(T,m)$ in $(G,\mathcal P)$ is an automorphism of $(T,m)$ in $G$ which respects the partition $\mathcal P$.
Formally, for any node $x \in T^*$, the vertices $m(x)$ and $m(f(x))$ belong to the same part of~$\mathcal P$.
Two sibling nodes $x, x'$ in a morphism-tree $(T,m)$ in $(G,\mathcal P)$ are equivalent if there is an automorphism of $(T,m)$ in $(G,\mathcal P)$ which swaps $x$ and $x'$ (and in particular, $m(x)$ and $m(x')$ are in the same part of $\mathcal P$).

As previously, we define $MT_{\ell}(G,\mathcal P)$ for a partitioned graph $(G,\mathcal P)$ as equal to $MT_{\ell}(V(G))$, and we define $MT'_{\ell}(G,\mathcal P)$ as any reduct of $MT_{\ell}(G,\mathcal P)$, where reductions are performed in $(G,\mathcal P)$.
Observe that $MT'_{\ell}(G,\mathcal P)$ can be very different from $MT'_{\ell}(G)$.
For instance if $\mathcal P$ is the partition into singletons, no reduction is possible and thus $MT'_{\ell}(G,{\mathcal P})=MT_{\ell}(G,{\mathcal P})$.
At the other extreme, if $\mathcal P=\{V(G)\}$, then $MT'_{\ell}(G,{\mathcal P})$ is a reduct of $MT_{\ell}(G)$.

Our ultimate goal in order to use twin-width is to dynamically compute $MT'_{\ell}(G,{\mathcal P}_1)$ by deriving $MT'_{\ell}(G,{\mathcal P}_{i})$ from $MT'_{\ell}(G,{\mathcal P}_{i+1})$.
This strategy cannot directly work since the initialization requires $MT'_{\ell}(G,{\mathcal P}_n)$ which is equal to $MT_{\ell}(G,{\mathcal P}_n)$ of size $O(n^{\ell})$.
Instead, we only compute a partial information for each $(G,{\mathcal P}_i)$ consisting of all \emph{partial morphism-trees $MT'_{\ell}(G,{\mathcal P}_{i},X)$} centered around $X$, where $X$ is a part of ${\mathcal P}_i$.
We will make this formal in the next section.
Let us highlight though that for the initialization, the graph $G_{{\mathcal P}_n}$ consists of isolated vertices, therefore its connected components are singletons.
So the initialization step of our dynamic computation only consists of computing $MT'_{\ell}(\{v\})$ for all vertices $v$ in~$G$.
Since all such trees consist of a path of length $\ell$ whose non-root nodes are mapped to $v$, the total size of the initialization step is linear.
However, observe that the $\ell$-shuffle of all these $MT'_{\ell}(\{v\})$ gives $MT'_{\ell}(G,{\mathcal P}_n)$.
The essence of our algorithm can be summarized as: Maintaining a linear amount of information, enough to build\footnote{while not explicitly computing it since it has linear size and would entail a quadratic running time} $MT'_{\ell}(G,{\mathcal P}_{i+1})$, and updating this information at each step in time bounded by a function of $d$ and $\ell$ only. 

To illustrate how we can make an update, let us assume that we are given a partitioned graph $(G,{\mathcal Q}_1\cup {\mathcal Q}_2)$ which can be obtained from the union of two partitioned graphs $(G_1,{\mathcal Q}_1)$ and $(G_2,{\mathcal Q}_2)$ on disjoint sets of vertices by making every pair $X \in \mathcal Q_1, Y\in \mathcal Q_2$ homogeneous.
The proof of the next lemma is similar to the proof of~\cref{lem:shufflecommute}.

\begin{lemma}\label{lem:MTpartitionedunion}
The $\ell$-shuffle of the reducts $MT'_{\ell}(G_1,{\mathcal Q}_1)$ and $MT'_{\ell}(G_2,{\mathcal Q}_2)$ is a reduction of~$MT_{\ell}(G,{\mathcal Q}_1\cup {\mathcal Q}_2)$.
\end{lemma}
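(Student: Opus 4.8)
The plan is to mimic closely the proof of \cref{lem:shufflecommute}, but with two adjustments: the reductions now must respect the partitions $\mathcal Q_1$ and $\mathcal Q_2$, and the shuffled graph $G$ has \emph{homogeneous} (rather than empty) relations between the two sides. First I would observe that, by \cref{lem:unionshuffle}, the $\ell$-shuffle of $MT_\ell(V(G_1))$ and $MT_\ell(V(G_2))$ is exactly $MT_\ell(V(G))$, hence the $\ell$-shuffle of $MT_\ell(G_1,\mathcal Q_1)$ and $MT_\ell(G_2,\mathcal Q_2)$ is $MT_\ell(G,\mathcal Q_1\cup\mathcal Q_2)$ as a morphism-tree. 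So it suffices to show that if one performs an $x,x'$-reduction inside $MT'_\ell(G_1,\mathcal Q_1)$ (respecting $\mathcal Q_1$), and correspondingly an $y,y'$-reduction inside $MT'_\ell(G_2,\mathcal Q_2)$ (respecting $\mathcal Q_2$), then the induced operation on the shuffle is itself a reduction respecting $\mathcal Q_1\cup\mathcal Q_2$; iterating gives the claim.

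The key step is the lifting of an automorphism. Suppose $f$ is an automorphism of $(T_1,m_1)$ in $(G_1,\mathcal Q_1)$ swapping the equivalent siblings $x,x'$ and acting as the identity outside $B_{T_1}(x)\cup B_{T_1}(x')$. As in \cref{lem:shufflecommute} I would define $g$ on the node-set of the shuffle $(T,m)$ by $g(z_1,\dots,z_k)=(\tilde f(z_1),\dots,\tilde f(z_k))$ with $\tilde f(z_i)=f(z_i)$ if $z_i\in T_1^*$ and $\tilde f(z_i)=z_i$ if $z_i\in T_2^*$. That $g$ is bijective and commutes with the parent (prefix) relation is identical to before. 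For the morphism-compatibility I would check: equality between a node and its descendant is preserved because $f$ preserves equality within $(T_1,m_1)$ and the identity trivially preserves it within $(T_2,m_2)$, and a cross pair $m(z_i)\in V(G_1)$, $m(z_k)\in V(G_2)$ cannot be an equality anyway since the vertex-sets are disjoint. The crucial new point: for an edge $m(Z_1)m(Z_2)$, the case $z_i\in T_1^*$, $z_k\in T_2^*$ (a \emph{cross} edge) must now be handled, which did not arise for disjoint union because there the cross relation was empty. Here, since $X:=$ the part of $\mathcal Q_1$ containing $m_1(z_i)$ and $Y:=$ the part of $\mathcal Q_2$ containing $m_2(z_k)$ are homogeneous, and $f$ respects $\mathcal Q_1$ so $m_1(f(z_i))$ lies in the same part $X$, the adjacency of $m_1(f(z_i))$ to $m_2(z_k)$ equals the adjacency of $m_1(z_i)$ to $m_2(z_k)$ — both are determined solely by the homogeneity status of $X$ and $Y$. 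Hence cross edges are preserved, and non-cross edges are preserved exactly as in \cref{lem:shufflecommute}. So $g$ is an automorphism of $(T,m)$ in $(G,\mathcal Q_1\cup\mathcal Q_2)$; moreover it respects $\mathcal Q_1\cup\mathcal Q_2$ by construction since $f$ respects $\mathcal Q_1$ and the identity respects $\mathcal Q_2$.

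Finally, exactly as in \cref{lem:shufflecommute}, every node $Z=(z_1,\dots,z_{k-1},x)$ of $(T,m)$ has a sibling $Z'=(z_1,\dots,z_{k-1},x')$ with $g(Z)=Z'$, $g(Z')=Z$, so these pairs are equivalent in $(G,\mathcal Q_1\cup\mathcal Q_2)$ and can be reduced away; this deletes precisely the nodes containing the entry $x'$ and their descendants, which is the shuffle of the $x,x'$-reduction of $(T_1,m_1)$ with $(T_2,m_2)$. Iterating this over all reductions taking $MT_\ell(G_1,\mathcal Q_1)$ to its reduct $MT'_\ell(G_1,\mathcal Q_1)$, and likewise on the second factor, shows that the $\ell$-shuffle of the two reducts is a reduction of $MT_\ell(G,\mathcal Q_1\cup\mathcal Q_2)$, as claimed. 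The only genuinely new obstacle compared to \cref{lem:shufflecommute} is the cross-edge case in the morphism-compatibility check; the homogeneity of each $X\in\mathcal Q_1$ with each $Y\in\mathcal Q_2$, combined with the fact that partition-respecting automorphisms keep images within their part, is exactly what makes that case go through, and I expect the rest to be routine bookkeeping transcribed from the earlier proof.
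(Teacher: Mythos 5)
Your proposal is correct and matches the paper's intended argument: the paper itself only states that the proof is analogous to that of \cref{lem:shufflecommute}, and your write-up is exactly that adaptation, with the single new ingredient (handling cross pairs via the homogeneity of each $X\in\mathcal Q_1$ with each $Y\in\mathcal Q_2$ together with $f$ preserving parts) identified and justified correctly.
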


\cref{lem:MTpartitionedunion} indicates how to merge two partial results into a larger one, when the partial computed solutions behave well, i.e., are pairwise homogeneous.
But we are now facing the main problem: How to merge two partial solutions in the case of errors (red edges) in~$G_{{\mathcal P}_i}$?
The solution is to compute the morphism-trees of overlapping subsets of parts of~${{\mathcal P}_i}$. 
Dropping the disjointness condition comes with a cost since shuffles of morphism-trees defined in overlapping subgraphs can create several nodes which have the same current graph.
The difficulty is then to keep at most one copy of these nodes, in order to remain in the set of reductions of $MT_{\ell}(G,{\mathcal P})$ of bounded size.
The solution of pruning multiple copies of the same current graph is slightly technical, but relies on a fundamental way of decomposing a tuple graph induced by a partitioned graph $(G,{\mathcal P})$.

\subsection{Pruned shuffles}

Let $\ell>0$ be some fixed integer, $G$ be a graph and $\mathcal P$ be a partition of $V(G)$.
Given, for $i \leqslant \ell$, a tuple $S=(v_1, \ldots, v_i)$ of vertices of $G$ which respectively belong to the (non-necessarily distinct) parts $(X_1,X_2,\ldots,X_i)$ of  $\mathcal P$, the \emph{$\ell$-sequence graph} $\text{sg}_{\ell}(S)$ on vertex set $[i]$ is defined as follows: there exists an edge $jk$, with $j<k$, if the distance between the part $X_j$ and the part $X_k$ is at most $3^{\ell-k}$ in the graph $G_{\mathcal P}$ (see \cref{fig:seqgraph} for an illustration).
Recall that $G_{\mathcal P}$ has vertex set the parts of $\mathcal P$, and edge set all the pairs of non-homogeneous parts; in other words, it is the red graph of the corresponding trigraph.
This is rather technical, but $\text{sg}_{\ell}(S)$ has some nice properties.

\begin{lemma}\label{lem:chordal}
If for $a < b < c \in [i]$, $ac$ and $bc$ are edges of $\text{sg}_{\ell}(S)$, then $ab$ is also an edge.
\end{lemma}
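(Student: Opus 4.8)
The plan is to unfold the definition of $\text{sg}_\ell(S)$ and use the triangle inequality for the distance in $G_{\mathcal P}$, exploiting the fact that the threshold $3^{\ell-k}$ shrinks geometrically as the index $k$ grows. Let $a<b<c$ and write $X_a,X_b,X_c$ for the corresponding parts of $\mathcal P$. By hypothesis $ac$ and $bc$ are edges of $\text{sg}_\ell(S)$, which by definition (taking the larger endpoint in each pair) means $d_{G_{\mathcal P}}(X_a,X_c)\le 3^{\ell-c}$ and $d_{G_{\mathcal P}}(X_b,X_c)\le 3^{\ell-c}$. By the triangle inequality,
\[
d_{G_{\mathcal P}}(X_a,X_b)\;\le\; d_{G_{\mathcal P}}(X_a,X_c)+d_{G_{\mathcal P}}(X_c,X_b)\;\le\; 2\cdot 3^{\ell-c}.
\]
Since $c>b$, we have $\ell-c\le \ell-b-1$, hence $2\cdot 3^{\ell-c}\le 2\cdot 3^{\ell-b-1}< 3\cdot 3^{\ell-b-1}=3^{\ell-b}$. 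Therefore $d_{G_{\mathcal P}}(X_a,X_b)\le 3^{\ell-b}$, which is exactly the condition (the larger index of the pair $\{a,b\}$ being $b$) for $ab$ to be an edge of $\text{sg}_\ell(S)$.

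I should be a little careful about degenerate cases. If $X_a=X_b$ (the tuple may repeat parts), the distance is $0$ and the edge is present trivially — but one must check the convention for edges of $\text{sg}_\ell(S)$ on a repeated vertex; since the vertex set of $\text{sg}_\ell(S)$ is $[i]$ and not the set of parts, indices $a\neq b$ are genuinely distinct nodes and the defining inequality $d_{G_{\mathcal P}}(X_a,X_b)\le 3^{\ell-b}$ still makes sense and is satisfied. Likewise if some of $X_a,X_b,X_c$ coincide the same computation goes through since the triangle inequality holds in all cases.

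This lemma is routine and I do not expect any real obstacle; the only thing to get right is bookkeeping of which index controls the threshold (always the larger one in a pair), and the slack inequality $2\cdot 3^{m}<3^{m+1}$ — which is precisely the $2c^\ell<c^{\ell+1}$ phenomenon flagged in the footnote of the introduction, here with $c=3$. The point of the statement, presumably exploited later, is that $\text{sg}_\ell(S)$ is a chordal-type (in fact an interval-like, ``shift-closed'') graph, so that one can decompose the tuple graph induced by $(G,\mathcal P)$ along a clean combinatorial structure when pruning shuffles.
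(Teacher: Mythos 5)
Your proof is correct and follows exactly the paper's argument: both edges give distance at most $3^{\ell-c}$ from $X_a$ and $X_b$ to $X_c$, the triangle inequality yields $d_{G_{\mathcal P}}(X_a,X_b)\le 2\cdot 3^{\ell-c}<3^{\ell-b}$, and hence $ab$ is an edge. The extra remarks on repeated parts are fine but not needed; there is no difference in approach.
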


\begin{proof}
  In $G_{\mathcal P}$, both the distances between $X_a$ and $X_c$, and between $X_b$ and $X_c$, are at most $3^{\ell-c}$.
  So the distance between $X_a$ and $X_b$ is at most~$2 \cdot 3^{\ell-c}$ which is less than $3^{\ell-b}$.
  Hence $ab$ is also an edge.
\end{proof}

Let $j\in [i]$ be the minimum index of an element of the connected component of $k\in [i]$ in $\text{sg}_{\ell}(S)$. We call $X_j$ the \emph{local root} of $v_k$ in $S$. 
\begin{lemma}\label{lem:localroot}
  Let 	$S=(v_1, \ldots, v_i)$ and $k<i$.
  The local root $X_j$ of $v_k$ in $S$ is equal to the local root of $v_k$ in the prefix $S'=(v_1, \ldots, v_{i-1})$.
  Thus by induction the local root of $v_k$ in $S$ is the local root of $v_k$ in $(v_1, \ldots, v_k)$.
\end{lemma}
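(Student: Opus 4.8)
The plan is to reduce the statement to a purely combinatorial fact about the graphs $\text{sg}_{\ell}(S)$ and $\text{sg}_{\ell}(S')$, and then to extract that fact from the chordal-like property of \cref{lem:chordal}. First I would record the key observation that makes the reduction work: whether $jk$ (with $j<k$) is an edge of $\text{sg}_{\ell}(S)$ depends only on the indices $j,k$ themselves --- in fact only on $k$, through the threshold $3^{\ell-k}$ --- and \emph{not} on the length $i$ of the tuple. Since the parts $X_1,\ldots,X_{i-1}$ are the same for $S$ and for its prefix $S'$, this means $\text{sg}_{\ell}(S')$ is precisely the subgraph of $\text{sg}_{\ell}(S)$ induced on the vertex set $[i-1]$; passing from $S$ to $S'$ just deletes the single vertex $i$ together with the edges incident to it. So the whole content of the lemma is: deleting the maximum-index vertex $i$ from $\text{sg}_{\ell}(S)$ does not change the minimum index in the connected component of any $k<i$.

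Next I would fix $k<i$ and let $j$ be the minimum index of the connected component $K$ of $k$ in $\text{sg}_{\ell}(S)$, so $j\le k<i$ and $X_j$ is the local root of $v_k$ in $S$. Because $\text{sg}_{\ell}(S')$ is a subgraph of $\text{sg}_{\ell}(S)$ on fewer vertices, the component of $k$ in $\text{sg}_{\ell}(S')$ is contained in $K$, hence its minimum index is at least $j$. It therefore suffices to prove the reverse: that $j$ and $k$ still lie in a common component of $\text{sg}_{\ell}(S')$, which forces the minimum index there to be exactly $j$, i.e. the local root in $S'$ is again $X_j$.

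For this last point I would take a path $P$ from $j$ to $k$ in $\text{sg}_{\ell}(S)$. If $P$ avoids vertex $i$ we are immediately done. Otherwise, since both endpoints $j,k$ are $<i$, the vertex $i$ is internal on $P$ with two neighbours $a,b$ on $P$, and $a<i$, $b<i$ because $i$ is the largest index. Applying \cref{lem:chordal} to the triple $\{a,b,i\}$ with $i$ the greatest of the three, the edges $ai$ and $bi$ of $\text{sg}_{\ell}(S)$ force $ab$ to be an edge of $\text{sg}_{\ell}(S)$ as well; and $ab$ lives inside $[i-1]$, so it is an edge of $\text{sg}_{\ell}(S')$ too. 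Replacing the sub-walk $a - i - b$ of $P$ by the edge $ab$ gives a walk from $j$ to $k$ in $\text{sg}_{\ell}(S)$ that avoids $i$, hence a walk in $\text{sg}_{\ell}(S')$, so $j$ and $k$ are connected there. Finally, the concluding ``by induction'' sentence follows by applying the statement just proved to the tuples $(v_1,\ldots,v_m)$ for $m=i,i-1,\ldots,k+1$ in succession.

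The only genuine obstacle is the case in which every $j$--$k$ path in $\text{sg}_{\ell}(S)$ runs through the deleted vertex $i$; the shortcut argument above is exactly what handles it, and this is precisely why \cref{lem:chordal} was isolated beforehand. Everything else is bookkeeping: checking that the edge relation restricted to $[i-1]$ is untouched by the deletion (which is where the index-dependence of the threshold, rather than a dependence on the tuple length, is essential), and noting that subgraph inclusion gives the ``minimum index $\ge j$'' direction for free.
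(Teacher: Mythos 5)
Your proof is correct and follows essentially the same route as the paper: both reduce the statement to the fact that $\text{sg}_{\ell}(S')$ is the induced subgraph of $\text{sg}_{\ell}(S)$ on $[i-1]$ and then use \cref{lem:chordal} to bypass the deleted vertex $i$ on a $j$--$k$ path. The only cosmetic difference is that the paper takes a shortest path and derives a contradiction to minimality, whereas you explicitly shortcut the subpath $a\text{--}i\text{--}b$ through the chord $ab$; the arguments are interchangeable.
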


\begin{proof}
  From the definition, $\text{sg}_{\ell}(S')$ is an induced subgraph of $\text{sg}_{\ell}(S)$.
  We just have to show that if there exists a path $P$ from $j$ to $k$ in $\text{sg}_{\ell}(S)$, then there exists also a path in $\text{sg}_{\ell}(S')$.
  Let $P$ be a shortest path from $j$ to $k$ in $\text{sg}_{\ell}(S)$.
  If $P$ does not go through $i$, we are done.  
  If $P$ goes through $i$, by Lemma~\ref{lem:chordal} the two neighbors of $i$ in $P$ are joined by an edge, contradicting the minimality of $P$.
\end{proof}

\begin{figure}
  \centering
  \begin{tikzpicture}
    \def\s{0.5}
    \def\t{15}
    \foreach \i in {1,...,\t}{
      \foreach \j in {1,...,\t}{
      \coordinate (a\i\j) at (\s * \i,\s * \j) {} ;
      }
    }
    \foreach \i/\j in {1/(a11)(a34),2/(a51)(a63),3/(a81)(a93),4/(a122)(a155),5/(a16)(a48),6/(a65)(a76),7/(a95)(a106),8/(a68)(a79),9/(a98)(a1010),10/(a128)(a1510),11/(a110)(a211),12/(a410)(a511),13/(a711)(a813),14/(a1112)(a1514),15/(a213)(a414)}{
      \node[preaction={fill=black!15},draw,ellipse,fit=\j,label=center:$X_{\i}$] (x\i) {} ;
    }
    \foreach \i/\j in {1/2,2/3,8/6,6/3,8/9,9/7,7/6,12/15,12/11,7/4,10/14,4/10}{
      \draw[very thick, red] (x\i) -- (x\j) ;
    }
    \foreach \i/\j in {3/4,6/2,13/12,13/9,1/6,6/9,14/13,6/5,8/12,5/11,7/10,9/10}{
      \draw[thin] (x\i) -- (x\j) ;
    }
    \foreach \i/\j/\k in {1/5.8/8.7,2/8.5/3.4,3/7.1/8,4/3/3,5/9.6/10.2}{
      \node[draw,circle,inner sep=-0.05cm] at (\s * \j,\s * \k) {} ;
      \node at (\s * \j,\s * \k - 0.23) {$v_{\i}$} ;
    }
    \node at (7.5 * \s, 15.4 * \s) {$(G,\mathcal P_{15})$} ;

    \begin{scope}[xshift=11cm]
      \foreach \i in {1,...,5}{
        \node[draw,circle] (s\i) at (0,\i + 1) {\i} ;
      }
      \foreach \i/\j in {2/27,3/9,4/3,5/1}{
        \node at (1,\i + 1) {\textcolor{blue}{\j}} ;
      }
   \foreach \i/\j/\k in {1/2/0,2/3/0,1/3/32,2/4/-32,1/5/-32,3/5/32}{
      \draw[thick] (s\i) to [bend left=\k] (s\j) ;
   }
   \node at (0,7) {$sg_5(S)$} ;
    \end{scope}
  \end{tikzpicture}
  \caption{Left: Partitioned graph $(G,\mathcal P_{15})$ with the edges of $G_{\mathcal P_{15}}$ in red. Right: The 5-sequence graph of $S := (v_1 \in X_8,v_2 \in X_3,v_3 \in X_8,v_4 \in X_1,v_5 \in X_9)$. In blue beside vertex $i$, the upperbound on the distance in $G_{\mathcal P_{15}}$ for $j<i$ to be linked to $i$. The graph $sg_5(S)$ is connected so $v_1, v_2, v_3, v_4, v_5$ have the same local root $X_8 \ni v_1$ in $S$. Thus $S$ is a connected tuple rooted at $X_8$.}
  \label{fig:seqgraph}
\end{figure}

Note that by the definition of $\text{sg}_{\ell}$, if $S'$ is a subtuple of $S$, the graph $\text{sg}_{\ell}(S')$ is a supergraph of the induced restriction of $\text{sg}_{\ell}(S)$ to the indices of $S'$.
Indeed, an entry $v_k$ with index $k$ of the tuple $S$ which appears in $S'$ has an index $k'\leq k$ in $S'$.
Hence if $j\leq k$ is connected to $k$ in $\text{sg}_{\ell}(S)$ and $v_j$ appears in $S'$ with index $j'$, we have the edge $j'k'$ since $3^{\ell -k'}\geq 3^{\ell -k}$.
In particular, if $S'$ corresponds to a connected component of $\text{sg}_{\ell}(S)$, the sequence graph $\text{sg}_{\ell}(S')$ is also connected.

When the sequence graph $\text{sg}_{\ell}(S)$ is connected, we say that $S$ is a \emph{connected tuple rooted at~$X_1$} (see~\cref{fig:seqgraph}). 
Given a part $X$ of $\mathcal P$, a \emph{morphism-tree in $(G,{\mathcal P},X)$} is a morphism-tree $(T,m)$ in $(G,{\mathcal P})$ such that every current path $(x_1, \ldots, x_i)$ satisfies that $(m(x_1),\ldots ,m(x_i))$ is a connected tuple rooted at $X$.
In particular, all nodes $x$ at depth~1 satisfy $m(x) \in X$.
Given a morphism-tree $(T,m)$ in $(G,{\mathcal P})$ and a part $X$ of $\mathcal P$, we denote by $(T,m)_X$ the subtree of $(T,m)$ which consists of the root $\varepsilon$ and all the nodes $x_i$ of $T$ whose current path $(x_1, \ldots, x_i)$ satisfies that $(m(x_1),\ldots,m(x_i))$ is a connected tuple rooted at $X$.
The fact that this subset of nodes forms indeed a subtree follows from the fact that connected tuples are closed by prefix (by Lemma~\ref{lem:localroot}), and hence by the parent relation.
We denote by $MT_{\ell}(G,{\mathcal P},X)$ the subtree $MT_{\ell}(G,{\mathcal P})_X$.
We finally denote by $MT'_{\ell}(G,{\mathcal P},X)$ any reduct of $MT_{\ell}(G,{\mathcal P},X)$.
The allowed reductions follow the same rules as in $MT_{\ell}(G,{\mathcal P})$ since the additional $X$ does not play any role in the automorphisms.

\begin{lemma}\label{lem:inducedreduction}
If $(T,m)$ is a morphism-tree in $(G,{\mathcal P})$ and $X$ is part of $\mathcal P$, then for any reduction $(T^r,m^r)$ of $(T,m)$ in $(G,{\mathcal P})$, we have that $(T^r,m^r)_X$ is a reduction of  $(T,m)_X$.
\end{lemma}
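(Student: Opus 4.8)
The plan is to reduce to the case of a single one-step reduction and then compose. Write the given reduction as a chain of $x,x'$-reductions $(T,m) = (T_0,m_0) \to (T_1,m_1) \to \dots \to (T_k,m_k) = (T^r,m^r)$, where $(T_{j+1},m_{j+1})$ is obtained from $(T_j,m_j)$ by reducing a pair of equivalent siblings of $(T_j,m_j)$ (equivalence being with respect to $(G,{\mathcal P})$). If for each $j$ the tree $(T_{j+1},m_{j+1})_X$ is either equal to $(T_j,m_j)_X$ or a one-step reduction of it in $(G,{\mathcal P})$, then chaining these facts yields that $(T^r,m^r)_X$ is a reduction of $(T,m)_X$. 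So it suffices to treat the case where $(T^r,m^r)$ is the $x,x'$-reduction of $(T,m)$, witnessed by an automorphism $f$ of $(T,m)$ in $(G,{\mathcal P})$ with $f(x)=x'$ and $f(x')=x$.

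The crucial point is that $f$ restricts to an automorphism of $(T,m)_X$ in $(G,{\mathcal P})$. First, $f$ maps $(T,m)_X$ into itself: if $z$ is a node of $(T,m)_X$ with current path $(z_1,\dots,z_i)$, then $f(z)$ has current path $(f(z_1),\dots,f(z_i))$ since $f$ commutes with the parent relation, and $m(f(z_p))$ lies in the same part of ${\mathcal P}$ as $m(z_p)$ for every $p$; since the sequence graph $\text{sg}_{\ell}$ of a tuple is determined by the sequence of parts of ${\mathcal P}$ to which its entries belong, the tuple $(m(f(z_1)),\dots,m(f(z_i)))$ is again connected and rooted at $X$, that is, $f(z)\in V((T,m)_X)$. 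Applying the same to $f^{-1}$, the map $f$ restricts to a bijection of $V((T,m)_X)$; it commutes with the parent relation and preserves equalities, edges, and partition classes, all inherited from $f$ (using that $(T,m)_X$ is closed under taking parents, by \cref{lem:localroot}). Hence $f|_X$ is an automorphism of $(T,m)_X$ in $(G,{\mathcal P})$.

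Now two cases. If $x\notin V((T,m)_X)$, then since $(T,m)_X$ is closed under ancestors no descendant of $x$ belongs to $(T,m)_X$, and since $f|_X$ maps $(T,m)_X$ to itself the same holds for $x'$; the $x,x'$-reduction deletes exactly the subtree $B_T(x')$, which is disjoint from $V((T,m)_X)$ and whose removal does not change the current path of any surviving node, so $(T^r,m^r)_X = (T,m)_X$. If $x\in V((T,m)_X)$, then $x'=f(x)\in V((T,m)_X)$ as well, the common parent of $x$ and $x'$ is an ancestor of $x$ hence lies in $(T,m)_X$, so $x$ and $x'$ are siblings of $(T,m)_X$, and $f|_X$ witnesses that they are equivalent there; the $x,x'$-reduction of $(T,m)$ removes $B_T(x')$, while the $x,x'$-reduction of $(T,m)_X$ removes $B_T(x')\cap V((T,m)_X)$, and since removing $B_T(x')$ does not change the current path of any surviving node, hence does not change whether it is a connected tuple rooted at $X$, one gets $V((T^r,m^r)_X) = V((T,m)_X)\setminus B_T(x')$ with $m^r$ agreeing with $m$ there, which is precisely the $x,x'$-reduction of $(T,m)_X$. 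This completes the single-step argument.

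The step I expect to be the main obstacle is exactly this interplay between $(\cdot)_X$ and reductions: one must be certain that an automorphism witnessing an equivalence of siblings in the full morphism-tree still witnesses one after restricting to the part of the tree rooted at $X$. This works precisely because automorphisms in $(G,{\mathcal P})$ respect ${\mathcal P}$, and therefore preserve the partition-dependent notion of ``connected tuple rooted at $X$''; everything else is bookkeeping about which nodes survive a deletion.
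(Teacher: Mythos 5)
Your proof is correct and follows essentially the same route as the paper: reduce to a single $x,x'$-reduction, observe that the witnessing automorphism respects $\mathcal P$ and therefore preserves the set of nodes whose current paths are connected tuples rooted at $X$, hence restricts to an automorphism of $(T,m)_X$, and conclude that $(T^r,m^r)_X$ is either the $x,x'$-reduction of $(T,m)_X$ or equal to it depending on whether $x,x'$ lie in $(T,m)_X$. The extra bookkeeping you supply (that $\text{sg}_\ell$ depends only on the parts of the entries, and that deleting $B_T(x')$ does not alter surviving current paths) is exactly what the paper leaves implicit.
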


\begin{proof}
It suffices to consider the case of $(T^r,m^r)$ being an $x,x'$-reduction. Let $f$ be an automorphism of $(T,m)$ which swaps the equivalent nodes $x,x'$ and is the identity outside of their descendants. Since $f$ preserves ${\mathcal P}$, it maps the set of nodes corresponding to connected tuple rooted at $X$ to itself. Hence the restriction of $f$ to $(T,m)_X$ is an automorphism and thus $(T^r,m^r)_X$ is the $x,x'$-reduction of $(T,m)_X$ if $x,x'\in (T,m)_X$, and is equal to $(T,m)_X$ if $x,x'\notin (T,m)_X$.
\end{proof}

Let $X_1,\ldots ,X_p$ be a set of distinct parts of $\mathcal P$, and $(T_1,m_1), \ldots, (T_p,m_p)$ be a set of morphism-trees, each $(T_i,m_i)$ being in $(G,{\mathcal P},X_i)$, respectively.
We define the \emph{pruned shuffle} of the $(T_i,m_i)$'s as their usual shuffle $(T,m)$ in which some nodes are deleted or \emph{pruned}.
To decide if a node $(x_1, \ldots, x_i)$ of $T$ is pruned, we consider its current graph, that is the tuple graph induced by $G$ on the tuple of vertices $(v_1, \ldots, v_i)$, where each $v_j$ is $m(x_1, x_2, \ldots, x_j)$ for $j \in [i]$.
For every $j$, let $k$ be the (unique) index such that $x_j \in V(T_k)$.
If the local root of $v_j$ in $(v_1, \ldots, v_i)$ is different from $X_k$ we say that $x_j$ is \emph{irrelevant}.
By extension, a node $(x_1, \ldots, x_i)$ which has an irrelevant entry $x_j$ is also \emph{irrelevant}.
We prune off all the irrelevant nodes of $(T,m)$ to form the pruned shuffle.
The pruned $\ell$-shuffle is defined analogously from the $\ell$-shuffle.

A node $x$ of $T_k$ has local root $X_k$ since its current path is a connected tuple rooted in~$X_k$.
Informally speaking, we insist that every node $(x_1, \ldots, x_i)$ of the pruned shuffle with $x_i=x$ still has local root $X_k$. 
Crucially the pruned shuffle commutes with reductions, and the next lemma is the cornerstone of the whole section.

\begin{lemma}\label{lem:prunedshufflereduction}
With the previous notations, if $(T^r_1,m^r_1)$ is a reduction in $(G,{\mathcal P})$ of $(T_1,m_1)$, then the pruned shuffle $(T^r,m^r)$ of $(T^r_1,m^r_1),(T_2,m_2),\ldots ,(T_p,m_p)$ is a reduction of the pruned shuffle $(T,m)$ of $(T_1,m_1),\ldots ,(T_p,m_p)$.
\end{lemma}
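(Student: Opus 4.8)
The plan is to imitate the proof of \cref{lem:shufflecommute}, but carry an automorphism of the \emph{un}pruned shuffle through the pruning step, after checking that pruning interacts well with that automorphism. As before, it suffices to treat a single $x,x'$-reduction of $(T_1,m_1)$: iterate to get the general case. So let $f$ be an automorphism of $(T_1,m_1)$ in $(G,\mathcal P)$ swapping the equivalent siblings $x,x'$ and equal to the identity outside $B_{T_1}(x)\cup B_{T_1}(x')$. Define $g$ on the (unpruned) shuffle $(T,m)$ of $(T_1,m_1),\ldots,(T_p,m_p)$ exactly as in \cref{lem:shufflecommute}: $g$ fixes $\varepsilon$ and sends a node $Z=(z_1,\ldots,z_k)$ to $(\tilde f(z_1),\ldots,\tilde f(z_k))$, where $\tilde f(z_j)=f(z_j)$ if $z_j\in T_1^*$ and $\tilde f(z_j)=z_j$ otherwise. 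The verification that $g$ is an automorphism of $(T,m)$ in $(G,\mathcal P)$ is verbatim the one in \cref{lem:shufflecommute}: $g$ is bijective, commutes with the prefix (parent) relation, preserves $m$ on equalities and on edges (using that $f$ is an automorphism of $(T_1,m_1)$ and acts trivially on the $T_i$, $i\ge 2$), and it respects $\mathcal P$ because $f$ does and the identity trivially does.

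Next I would show that $g$ restricts to an automorphism of the \emph{pruned} shuffle $(T,m)$ — i.e. that $g$ maps irrelevant nodes to irrelevant nodes and relevant nodes to relevant nodes. Fix a node $Z=(z_1,\ldots,z_i)$ with current vertex-tuple $(v_1,\ldots,v_i)$, and let $Z'=g(Z)$ with vertex-tuple $(v_1',\ldots,v_i')$. For each $j$ let $k(j)$ be the index with $z_j\in V(T_{k(j)})$; note $g$ does not change the index $k(j)$ since $\tilde f$ maps $T_\mu^*$ to itself for every $\mu$. The key point is that $g$ preserves the $\ell$-sequence graph: $\text{sg}_\ell(v_1,\ldots,v_i)=\text{sg}_\ell(v_1',\ldots,v_i')$, because the edge relation of $\text{sg}_\ell$ depends only on the \emph{parts} $X_{k(j)}$ the vertices lie in (and on the indices $j$), and $v_j$ and $v_j'=m(g(z_1,\ldots,z_j))$ lie in the same part of $\mathcal P$ (as $g$ respects $\mathcal P$, and $f$ respects $\mathcal P$, and for $z_j\in T_\mu$ with $\mu\ge2$ we even have $v_j'=v_j$). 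Consequently $Z$ and $Z'$ have the same connected components and hence the same local roots: the local root of $v_j$ in $(v_1,\ldots,v_i)$ is $X_{k(j)}$ if and only if the local root of $v_j'$ in $(v_1',\ldots,v_i')$ is $X_{k(j)}$. Therefore $z_j$ is irrelevant in $Z$ iff it is irrelevant in $g(Z)$, and $Z$ is relevant iff $g(Z)$ is. So $g$ indeed induces an automorphism of the pruned shuffle $(T,m)$.

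Finally, I would run the reduction exactly as in \cref{lem:shufflecommute}. For every relevant node $Z=(z_1,\ldots,z_k)$ of the pruned shuffle with $z_k=x$, the node $Z'=(z_1,\ldots,z_{k-1},x')$ is also a node of the (unpruned) shuffle since $x,x'$ are siblings; moreover $Z'$ is relevant because $g(Z)=Z'$ and $g$ preserves relevance, so $Z'$ survives in the pruned shuffle. Since $g(Z)=Z'$ and $g(Z')=Z$ and $g$ is an automorphism of the pruned shuffle, $Z$ and $Z'$ are equivalent there. Reducing all such pairs deletes exactly the (relevant) nodes of the pruned shuffle whose current path contains the entry $x'$ — equivalently, we delete $x'$ and all its descendants inside $T_1$ — which yields precisely the pruned shuffle of $(T_1^r,m_1^r),(T_2,m_2),\ldots,(T_p,m_p)$. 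Hence the latter is a reduction of $(T,m)$, and iterating over the reduction sequence from $(T_1,m_1)$ to $(T_1^r,m_1^r)$ finishes the proof; the argument is identical for the pruned $\ell$-shuffle, truncating at depth $\ell$. The one place needing care — and the main obstacle — is the claim that pruning commutes with $g$, i.e. that relevance of a node depends only on data ($\mathcal P$-parts and indices) that $g$ preserves; once that is pinned down via $\text{sg}_\ell$ and \cref{lem:localroot}, the rest is a direct transcription of \cref{lem:shufflecommute}.
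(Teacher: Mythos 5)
There is a genuine gap, and it sits exactly at the point you wave through as ``verbatim''. You claim that $g$ is an automorphism of the \emph{unpruned} shuffle $(T,m)$ by the same verification as in \cref{lem:shufflecommute}, and only then worry about whether pruning is preserved. But the verbatim transcription fails: in \cref{lem:shufflecommute} the two graphs form a disjoint union, so no edges (and no equalities) ever occur between an entry of $T_1^*$ and an entry of $T_2^*$, and the edge/equality check splits into within-tree cases handled by $f$ or by the identity. Here all the $(T_i,m_i)$ live in the same graph $G$, and the parts $X_1$ and $X_j$ need \emph{not} be homogeneous; the equivalence of $x,x'$ in $(T_1,m_1)$ only controls adjacencies to vertices occurring in $T_1$'s tuples, so one can have $z_i=x$, $z_k\in T_j^*$ with $m_1(x)\,m_j(z_k)\in E(G)$ but $m_1(x')\,m_j(z_k)\notin E(G)$. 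Hence $g$ is in general \emph{not} an automorphism of the unpruned shuffle, and ``restricting'' it to the pruned shuffle does not by itself give you an automorphism there: you still owe the verification that $g$ respects edges and equalities on the relevant nodes.

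That verification is the actual heart of the paper's proof, and it is where the pruning does the work. For a relevant node, an entry coming from $T_1$ has local root $X_1$ and an entry coming from $T_j$ ($j>1$) has local root $X_j$; if the parts containing $m(Z_1)$ and $m(Z_2)$ were non-homogeneous (or if the two vertices were equal), the corresponding pair of indices would be an edge of the $\ell$-sequence graph, forcing the two local roots to coincide --- a contradiction. So in the cross-tree case the two parts are homogeneous, and homogeneity is precisely what makes swapping $m_1(x)$ for $m_1(x')$ harmless for edges, while the cross-tree equality case simply cannot occur on relevant nodes. Your relevance-preservation argument (that $\text{sg}_\ell$ and hence local roots depend only on parts and positions, both preserved by $g$) is correct and matches the paper, but it addresses the easier half; without the homogeneity-via-local-roots argument your $g$ has not been shown to be an automorphism of the pruned shuffle, so the final reduction step is unsupported.
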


\begin{proof}
It suffices to consider the case of $(T_1^r,m_1^r)$ being an $x,x'$-reduction of $(T_1,m_1)$. Let $f$ be an automorphism of $(T_1,m_1)$ which swaps the equivalent nodes $x,x'$ and is the identity outside of their descendants.

Consider the mapping $g$ from $V(T)$ into itself which preserves the root $\varepsilon$ and maps every node $Z=(z_1, \ldots, z_k)$ to $Z'=(\tilde f(z_1), \ldots, \tilde f(z_k))$ where $\tilde f(z_i)=f(z_i)$ if $z_i \in T_1^*$ and $\tilde f(z_i)=z_i$ if $z_i \notin T_1^*$.
We also define $\tilde m(z_i)=m_j(z_i)$ if $z_i \in T_j^*$.
Note that the current graph of $Z$ is the tuple graph induced by $G$ on the tuple of vertices $(\tilde m(z_1), \ldots, \tilde m(z_k))$. 
 
As we have seen in the proof of Lemma~\ref{lem:shufflecommute}, $g$ is an automorphism of the tree $T$.
Moreover $m(Z)= \tilde m(z_k)$ and $m(g(Z))=\tilde m(\tilde f(z_k))$ belong to the same part of $\mathcal P$ since $f$ respects the partition $\mathcal P$.
However, $g$ does not necessarily respect $m$.
For instance we could have $z_k=x$ and $z_1\in T_2^*$, with $m_1(x)m_2(z_1) \in E(G)$ while $m_1(x')m_2(z_1) \notin E(G)$.
This can happen since $X_1$ and $X_2$ need \emph{not} be homogeneous.
However observe that in this case, $X_1X_2$ is an edge in $G_{\mathcal P}$, and therefore the local root of $\tilde m(z_k)$ would be the same as the one of $\tilde m(z_1)$.
But if $Z$ is not a pruned node, the local root of $\tilde m(z_k)$ must be $X_1$, and the one of $\tilde m(z_1)$ is $X_2$.
So this potential problematic node $Z$ in fact disappears thanks to the pruning.
We now formally prove it.
 
Note that if a node  $Z=(z_1, \ldots, z_i)$ is pruned, it has an entry $z_j \in T_k^*$ such that the local root $X$ of $\tilde m(z_j)$ in the tuple $(\tilde m(z_1),\dots, \tilde m(z_i))$ is not~$X_k$.
By construction $\tilde f(z_j)\in T_k^*$, and the local root of $m(\tilde f(z_j))$ in the tuple $(\tilde m(\tilde f(z_1)),\dots, \tilde m(\tilde f(z_i)))$ is also $X$.
Thus the pruned nodes of $T$ are mapped by $g$ to pruned nodes of $T$, so $g$ is bijective on the pruned shuffle tree $(T,m)$. Consequently, to show that $g$ is an automorphism of the pruned shuffle $(T,m)$, we just have to show that it respects edges and equalities.

Consider a node $Z_1=(z_1, \ldots, z_i)$ of $T$ and a descendant $Z_2=(z_1, \ldots, z_i, z_{i+1}, \ldots, z_k)$ of $Z_1$, we have:

  \begin{itemize}
 
 \item  If $m(Z_1)=m(Z_2)$, we have four cases:
 
 \begin{itemize}
    	\item If  $z_i,z_k \in T_1^*$, we have $m_1(z_i)=m_1(z_k)$ and thus $m_1(f(z_i))=m_1(f(z_k))$ which implies $m(g(Z_1))=m_1(f(z_i))=m_1(f(z_k))=m(g(Z_2))$.  
    	\item If  $z_i,z_k \in T_j^*$ with $j>1$, we have $m_j(z_i)=m_j(z_k)$ which implies $m(g(Z_1))=m_j(z_i)=m_j(z_k)=m(g(Z_2))$. 
    	
    	\item  If $z_i\in T_1$ and $z_k\in T_j$ with $j>1$, we have $m(g(Z_2))=m(Z_2)=m_j(z_k)$ which belongs to some part $X$ of ${\mathcal P}$.
          Moreover, both $m(g(Z_1))$ and $m(Z_1)$ belong to the part $Y$ containing $m_1(z_i)$ (and also $m_1(f(z_i))$).
          In particular, since $m(Z_1)=m(Z_2)$, we have $X=Y$.
          Therefore, in the $\ell$-sequence graph of $(\tilde m(z_1),\dots, \tilde m(z_k))$ we have an edge $ik$ since $\tilde m(z_i)=m(Z_1)=m(Z_2)=\tilde m(z_k)$, and thus the local root of $\tilde m(z_i)$ and $\tilde m(z_k)$ are the same.
          But this is a contradiction since by the fact that $Z_2$ is not pruned, the local root of $\tilde m(z_k)$ is $X_j$ and the local root of  $\tilde m(z_i)$ is $X_1$.
    \item The last case $z_j\in T_1$ and $z_i\in T_j$ is equivalent to the third.
 \end{itemize}
   
  \item When $m(Z_1)m(Z_2)$ is an edge of $G$, we have four cases:
    \begin{itemize}
    	\item If $z_i,z_k\in T_1$, since $f$ respects edges, $m_1(f(z_i))m_1(f(z_k))=m(g(Z_1))m(g(Z_2))$ is an edge of $G$.
    	\item If $z_i,z_k\notin T_1$, by definition of $g$, we have $m(g(Z_1))=m(Z_1)$ and $m(g(Z_2))=m(Z_2)$, and thus $m(g(Z_1))m(g(Z_2))$ is an edge of $G$.
    	\item  If $z_i\in T_1$ and $z_k\in T_j$ with $j>1$, we have $m(g(Z_2))=m(Z_2)=m_j(z_k)$ which belongs to the part $X$ of ${\mathcal P}$, and both $m(g(Z_1))$ and $m(Z_1)$ belong to the part $Y$ containing $m_1(z_i)$.
          The crucial fact is that the local root of $\tilde m(z_k)$ in $(\tilde m(z_1),\dots, \tilde m(z_k))$ is $X_j$ (since $Z_2$ is not pruned and $z_k\in T_j$) and the local root of $\tilde m(z_1)$ is $X_1$.
          Thus $X,Y$ is a homogeneous pair since otherwise $ik$ would be an edge of the $\ell$-sequence graph of $(\tilde m(z_1),\dots, \tilde m(z_k))$, and therefore $\tilde m(z_k)$ and $\tilde m(z_1)$ would have the same local root.
          Therefore by homogeneity and the fact that $m(Z_1)m(Z_2)$ is an edge, we have all edges between $X$ and $Y$, and in particular $m(g(Z_1))m(g(Z_2))$ is an edge of $G$.
    	\item The last case $z_j\in T_1$ and $z_i\in T_j$ is equivalent to the third.  	
    \end{itemize}

Note that $m(g(Z_1)) = m(g(Z_2)) \Rightarrow m(Z_1) = m(Z_2)$ since $g$ is an automorphism and therefore by iterating $g$, we can map $g(Z_1),g(Z_2)$ to $Z_1,Z_2$. The same argument shows that if  $m(g(Z_1))m(g(Z_2))$ is an edge, then $m(Z_1)m(Z_2)$ is also an edge. 
    
    \end{itemize}
   
  Finally, consider any node $Z=(z_1, \ldots, z_k)$ of $(T,m)$ such that $z_k=x$.
  By definition of the shuffle and the fact that $x, x'$ are siblings, there is a node $Z'=(z_1, \ldots z_{k-1},x')$ in $(T,m)$.
  By construction, we have $g(Z)=Z'$ and $g(Z')=Z$ and thus $Z, Z'$ are equivalent in $(T,m)$.
  Therefore we can reduce all such pairs $Z, Z'$ in $(T,m)$ in order to find a reduction in which all elements of the subtree of $x'$ in $T_1$ are deleted.
  This is exactly the pruned shuffle $(T^r,m^r)$. 
\end{proof}

Again the previous lemma readily works with pruned $\ell$-shuffles.
The pruned shuffle operation is the crux of the construction of $MT_{\ell}(G, \mathcal P)$ using only local information.

\begin{lemma}\label{lem:MTlbyprunedshuffle}
  Let $(G,{\mathcal P})$ be a partitioned graph.
  Then the pruned $\ell$-shuffle $(T,m)$ of all $MT_{\ell}(G,{\mathcal P},X)$ where $X$ ranges over the parts of ${\mathcal P}$ is exactly $MT_{\ell}(G,\mathcal P)$.
\end{lemma}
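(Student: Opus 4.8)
The plan is to exhibit an isomorphism of morphism-trees between the pruned $\ell$-shuffle $(T,m)$ and $MT_{\ell}(G,\mathcal P)$, using the fact that, as a morphism-tree, $MT_{\ell}(G,\mathcal P)$ is simply $MT_{\ell}(V(G))$: its nodes are the $\leqslant\ell$-tuples of vertices of $G$, the parent relation is the prefix relation, and the morphism sends a tuple to its last entry. So it is enough to build a bijection between the non-pruned nodes of the $\ell$-shuffle of the $MT_{\ell}(G,\mathcal P,X)$'s (as $X$ ranges over the parts of $\mathcal P$) and the $\leqslant\ell$-tuples of $V(G)$, and to check that it commutes with the parent relation and preserves the morphism. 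Throughout I will associate to a shuffle node $Z=(z_1,\dots,z_k)$ its underlying tuple of vertices $V(Z):=(\tilde m(z_1),\dots,\tilde m(z_k))$, where $\tilde m(z_j)=m_X(z_j)$ when $z_j\in MT_{\ell}(G,\mathcal P,X)$; this is exactly the tuple on which $G$ induces the current graph of $Z$.

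First I would record the combinatorial heart of the matter. For any tuple $S=(v_1,\dots,v_i)$ with $v_j$ in the part $X_{(j)}$ of $\mathcal P$, distinct connected components of $\mathrm{sg}_{\ell}(S)$ have distinct \emph{root parts} (the part of their minimum-index vertex): if $a<b$ and $v_a,v_b$ lie in the same part, then $d_{G_{\mathcal P}}(X_{(a)},X_{(b)})=0\leqslant 3^{\ell-b}$, so $ab\in E(\mathrm{sg}_{\ell}(S))$ and $a,b$ share a component. Combined with \cref{lem:localroot} (which says the local root of $v_j$ in $S$ is the root part of its component), this shows that the map $j\mapsto(\text{local root of }v_j\text{ in }S)$ has fibers exactly the connected components of $\mathrm{sg}_{\ell}(S)$. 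Moreover, for a component $C$ with root part $X$, the subtuple $S_{|C}$ is a connected tuple rooted at $X$, since $\mathrm{sg}_{\ell}(S_{|C})$ is a supergraph of the (connected) induced subgraph $\mathrm{sg}_{\ell}(S)[C]$ and its first entry lies in $X$; hence $S_{|C}$ names a node of $MT_{\ell}(G,\mathcal P,X)$. This yields a well-defined node $Z_S$ of the $\ell$-shuffle: its entry in $MT_{\ell}(G,\mathcal P,X)$ is the node named by $S_{|C}$ when $X$ is the root part of a component $C$, and the root of $MT_{\ell}(G,\mathcal P,X)$ otherwise. By construction the origin part of each entry of $Z_S$ equals the local root of that entry computed in $V(Z_S)=S$, so $Z_S$ is not pruned. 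Conversely, by the very definition of pruning a node $Z$ is non-pruned iff for every $j$ the origin part of $z_j$ equals the local root of $\tilde m(z_j)$ in $V(Z)$; for such a $Z$ the fibers of the origin-part map, and the depth of each $z_j$ inside its tree, are forced by $V(Z)$ alone (an entry in $MT_{\ell}(G,\mathcal P,X)$ is uniquely named by the prefix of $V(Z)$ restricted to its fiber), whence $Z=Z_{V(Z)}$. Thus $Z\mapsto V(Z)$ is a bijection from the non-pruned nodes onto the $\leqslant\ell$-tuples of $V(G)$. Finally it is a morphism-tree isomorphism: $m(Z)=\tilde m(z_k)$ is the last entry of $V(Z)$, matching the morphism of $MT_{\ell}(V(G))$; and the parent of $Z$ in the pruned shuffle is its prefix $(z_1,\dots,z_{k-1})$, which is non-pruned because, by \cref{lem:localroot}, deleting the last entry changes no earlier entry's local root, so parents map to prefixes. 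After intersecting with the depth-$\leqslant\ell$ truncation on both sides, the pruned $\ell$-shuffle is exactly $MT_{\ell}(G,\mathcal P)$.

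The step I expect to be the main obstacle is the last bookkeeping one: verifying that ``non-pruned'' is literally equivalent to ``origin parts $=$ local roots in $V(Z)$'', and that this is exactly what makes the two descriptions of a node ($Z$ on the one hand, $Z_{V(Z)}$ built from the local-root decomposition of $V(Z)$ on the other) coincide. No new idea is needed — \cref{lem:chordal} gives chordality of $\mathrm{sg}_{\ell}$, \cref{lem:localroot} gives stability of local roots under taking prefixes, and the one-line distance computation above gives disjointness of the fibers — but one must be careful in juggling the correspondence between tuples of tree-nodes and tuples of vertices, and in keeping track of which tree each entry of a shuffle node comes from.
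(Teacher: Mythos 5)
Your proof is correct and follows essentially the same route as the paper's: decompose a tuple $S$ along the connected components of $\mathrm{sg}_\ell(S)$, note each component is a connected tuple rooted at the part of its first entry (hence a node of the corresponding $MT_\ell(G,\mathcal P,X)$), and use the pruning condition together with \cref{lem:localroot} to see that $S$ arises exactly once. You merely spell out details the paper leaves implicit (distinct components have distinct root parts, relevance is preserved under prefixes, and the parent/morphism checks), which is fine.
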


\begin{proof}
  We just have to prove that every tuple $S=(v_1, \ldots, v_i)$ of nodes of $G$ appears exactly once as a node of $T$.
  Consider a subtuple $S'$ of $S$ corresponding to a component of  $\text{sg}_{\ell}(S)$.
  Recall that  $\text{sg}_{\ell}(S')$ is connected.
  Moreover, if we denote by $X_{S'}$ the part of ${\mathcal P}$ which contains the first entry of $S'$, we have that $S'$ is a connected tuple rooted at $X_{S'}$.
  Thus $S'$ is a node of $MT_{\ell}(G,{ \mathcal P},X_{S'})$ and thus $S$ appears in the pruned shuffle as the shuffle of all its components.
  Moreover $S$ appears exactly once in the shuffle since any entry $v_j$ in the subtuple $S'$ must come from $MT_{\ell}(G,{ \mathcal P},X_{S'})$, otherwise the pruning would have deleted it. 
\end{proof}

We now state the central result of this section, directly following from~\cref{lem:MTlbyprunedshuffle,lem:prunedshufflereduction}.

\begin{lemma}\label{lem:MTlbyprunedshufflereduction}
  Let $(G,{\mathcal P})$ be a partitioned graph.
  Then the pruned $\ell$-shuffle of the reducts $MT'_{\ell}(G,{ \mathcal P},X)$, where $X$ ranges over the parts of $\mathcal P$, is a reduction of $MT_{\ell}(G,\mathcal P)$.
\end{lemma}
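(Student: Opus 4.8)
The plan is to read this as a pure gluing statement and simply chain the two preceding lemmas. By \cref{lem:MTlbyprunedshuffle}, the morphism-tree $MT_\ell(G,\mathcal P)$ \emph{is} the pruned $\ell$-shuffle of the complete family $\{MT_\ell(G,\mathcal P,X)\}_{X\in\mathcal P}$. On the other hand, each $MT'_\ell(G,\mathcal P,X)$ is by definition a reduction (indeed a reduct) of $MT_\ell(G,\mathcal P,X)$. So the task is to replace, one part at a time, each complete $MT_\ell(G,\mathcal P,X)$ by its reduct inside the pruned $\ell$-shuffle, and to check that every such replacement keeps us inside the reductions of $MT_\ell(G,\mathcal P)$.

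Concretely, I would fix an enumeration $X_1,\dots,X_p$ of the parts of $\mathcal P$ and set, for $0\le i\le p$, $(T_i,m_i)$ to be the pruned $\ell$-shuffle of the family in which $MT_\ell(G,\mathcal P,X_j)$ has been swapped for $MT'_\ell(G,\mathcal P,X_j)$ precisely for $j\le i$. Then $(T_0,m_0)=MT_\ell(G,\mathcal P)$ by \cref{lem:MTlbyprunedshuffle}, and $(T_p,m_p)$ is exactly the pruned $\ell$-shuffle of all the reducts, i.e. the object in the statement. To pass from $(T_{i-1},m_{i-1})$ to $(T_i,m_i)$ I would invoke the $\ell$-shuffle version of \cref{lem:prunedshufflereduction}. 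Since that lemma is phrased with the reduction taking place in the \emph{first} constituent tree, I would first observe that the pruned $\ell$-shuffle does not depend on the order in which its constituents are listed: a shuffle of current paths is symmetric in its arguments, and the pruning rule (irrelevance of an entry) depends only on the current tuple graph together with the label recording from which constituent each entry originates — never on an ordering of the constituents. Hence I may present the tree associated with $X_i$ as the first argument, and \cref{lem:prunedshufflereduction} (in its $\ell$-shuffle form), applied to the reduction from $MT_\ell(G,\mathcal P,X_i)$ to $MT'_\ell(G,\mathcal P,X_i)$ while all other constituents stay fixed, yields that $(T_i,m_i)$ is a reduction of $(T_{i-1},m_{i-1})$.

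Finally I would record that ``is a reduction of'' is transitive: a reduction is obtained by iterating $x,x'$-reductions, so concatenating a witnessing sequence of $x,x'$-reductions from $(T_0,m_0)$ to $(T_1,m_1)$, then one from $(T_1,m_1)$ to $(T_2,m_2)$, and so on up to $(T_p,m_p)$, exhibits $(T_p,m_p)$ as obtained from $MT_\ell(G,\mathcal P)$ by such a sequence. This gives the claim: the pruned $\ell$-shuffle of the reducts $MT'_\ell(G,\mathcal P,X)$ over all parts $X$ of $\mathcal P$ is a reduction of $MT_\ell(G,\mathcal P)$.

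I do not expect a genuine obstacle here: the substantive content is already isolated in \cref{lem:prunedshufflereduction} (the computation showing that the potentially mismatched nodes created when non-homogeneous parts are shuffled are exactly the ones removed by pruning). The only points needing care are the two bookkeeping observations above — order-independence of the pruned shuffle and transitivity of reductions — together with checking that truncation to depth $\le\ell$ commutes with both shuffling and reduction, which it does since truncation merely restricts which current paths are considered.
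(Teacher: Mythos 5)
Your proposal is correct and matches the paper's argument: the paper derives this lemma directly from \cref{lem:MTlbyprunedshuffle} and \cref{lem:prunedshufflereduction}, exactly the chaining you carry out, with your part-by-part replacement, order-independence of the pruned shuffle, and transitivity of reductions being the routine bookkeeping the paper leaves implicit.
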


We can now finish the proof by showing how our dynamic programming works.

\begin{figure}
  \centering
  \begin{tikzpicture}
    \def\s{0.5}
    \def\t{15}
    \foreach \i in {1,...,\t}{
      \foreach \j in {1,...,\t}{
      \coordinate (a\i\j) at (\s * \i,\s * \j) {} ;
      }
    }
    \foreach \i/\j in {2/(a51)(a63),3/(a81)(a93),4/(a122)(a155),6/(a65)(a76),7/(a95)(a106),9/(a98)(a1010),11/(a110)(a211),13/(a711)(a813),15/(a213)(a414)}{
      \node[preaction={fill=blue!15},draw,ellipse,fit=\j,label=center:$X_{\i}$] (x\i) {} ;
    }
    \foreach \i/\j in {1/(a11)(a34),10/(a128)(a1510),14/(a1112)(a1514)}{
      \node[preaction={fill=green!15},draw,ellipse,fit=\j,label=center:$X_{\i}$] (x\i) {} ;
    }
    \foreach \i/\j in {5/(a16)(a48)}{
      \node[preaction={fill=black!15},draw,ellipse,fit=\j,label=center:$X_{\i}$] (x\i) {} ;
    }
    \foreach \i/\j in {8/(a68)(a79),12/(a410)(a511)}{
      \node[ellipse,fit=\j,opacity=0.3] (x\i) {$X_{\i}$} ;
    }
    \node[draw,ellipse,inner sep=-0.08cm,rotate fit=45,fit=(x8) (x12)] {} ;
    \node[fill opacity=0.1,fill=red,ellipse,inner sep=-0.08cm,rotate fit=45,fit=(x8) (x12),label=center:$X_{16}$] (xnx) {} ;
    \foreach \i/\j in {1/2,2/3,nx/6,13/nx,6/3,nx/9,9/7,7/6,nx/15,nx/11,7/4,10/14,4/10}{
      \draw[very thick, red] (x\i) -- (x\j) ;
    }
    \foreach \i/\j in {3/4,6/2,13/9,1/6,6/9,14/13,6/5,5/11,9/10,7/10}{
      \draw[thin] (x\i) -- (x\j) ;
    }
    \node at (7.5 * \s, 15.2 * \s) {$(G,\mathcal P_{14})$} ;
  \end{tikzpicture}
  \caption{Dynamic programming update (with the not-so-interesting $\ell=1$ so that the important threshold $3^\ell$ is manageably small).
    Right after the contraction of $X_8$ and $X_{12}$ into $X_{16}$ in $(G,\mathcal P_{15})$, we want to maintain the new $MT'_\ell(G,\mathcal P_{14},X)$ for all $X \in \mathcal P_{14}$.
    The parts $X_i$ which are not $X_{16}$ (red) nor blue are far enough from $X_{16}$ (distance in $G_{\mathcal P_{14}}$ $> 3^\ell$), so that $MT'_\ell(G,\mathcal P_{14},X_i) := MT_\ell(G,\mathcal P_{15},X_i)$ does not need an update.
    For the red and blue parts $X_i$, we compute $(T,m)$ the pruned shuffle of $MT'(G,\mathcal P_{15},Y)$ where $Y$ runs through $\{$blue and green parts$\}~\cup~\{X_8,X_{12}\}$ (distance to $X_{16}$ in $G_{\mathcal P_{14}}$ $\leqslant 2 \cdot 3^\ell$).
    We then set $MT'_\ell(G,\mathcal P_{14},X_i) := \text{reduct}((T,m)_{X_i})$.}
  \label{fig:progdyn}
\end{figure}
\begin{theorem}\label{thm:dynamicprog}
  Let  ${\mathcal P}_{i+1}$ and ${\mathcal P}_i$ be two $d$-partitions of a graph $G$ where ${\mathcal P}_i$ is obtained by merging the parts $X_1, X_2$ of ${\mathcal P}_{i+1}$.
  Given a family of reducts $MT'_{\ell}(G,\mathcal P_{i+1},X)$ for all parts $X$ in $\mathcal P_{i+1}$, we can compute a family of reducts $MT'_{\ell}(G,\mathcal P_i,Y)$ for all parts $Y$ in $\mathcal P_i$ in time only depending on $\ell$ and $d$.
\end{theorem}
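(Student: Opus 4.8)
The plan is to separate the parts of $\mathcal{P}_i$ that lie far from the newly merged part $X:=X_1\cup X_2$ in $G_{\mathcal{P}_i}$ from those that lie near it. Write $\mathcal{N}$ for the set of parts of $\mathcal{P}_i$ at distance at most $3^\ell$ from $X$ in $G_{\mathcal{P}_i}$; since $\mathcal{P}_i$ is a $d$-partition, $|\mathcal{N}|\le d^{3^\ell+1}$, a bound depending only on $\ell$ and $d$. The first thing I would record is a locality statement: if $S=(v_1,\dots,v_j)$ with $j\le\ell$ is a connected tuple rooted at a part $Y$, then every part occurring in $S$ lies at distance at most some $\rho_\ell<3^{\ell-1}$ from $Y$ in $G_{\mathcal{P}}$. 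This follows from \cref{lem:chordal}: in a shortest path of $\text{sg}_{\ell}(S)$ from index $1$ to index $k$ there is no internal local maximum, and since $1$ is the least index the path is index-monotone, so its edges charge pairwise distinct distances drawn from $3^{\ell-2},\dots,3^{\ell-k}$, whose sum is below $3^{\ell-1}$.

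For the far parts, i.e.\ $Y\in\mathcal{P}_i\setminus\mathcal{N}$ — such a $Y$ is untouched by the merge, hence also a part of $\mathcal{P}_{i+1}$ — I would simply set $MT'_{\ell}(G,\mathcal{P}_i,Y):=MT'_{\ell}(G,\mathcal{P}_{i+1},Y)$. This is justified because, by the locality statement, the morphism-tree $MT_{\ell}(G,\mathcal{P},Y)$ together with its equivalence relation on siblings depends only on the $G_{\mathcal{P}}$-ball $B$ of radius $3^\ell$ around $Y$ ($B$ contains every tuple rooted at $Y$ of length $\le\ell$ and, since $3\rho_\ell<3^\ell$, every shortest path between parts of such a tuple), the induced subgraph of $G$ on $\cup B$, and the restriction of $\mathcal{P}$ there. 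As $d_{G_{\mathcal{P}_i}}(X,Y)>3^\ell$, the parts $X_1,X_2$ lie outside $B$, so $B$ with all the relevant structure is literally the same object in $(G,\mathcal{P}_i)$ and in $(G,\mathcal{P}_{i+1})$; hence $MT_{\ell}(G,\mathcal{P}_i,Y)$ and $MT_{\ell}(G,\mathcal{P}_{i+1},Y)$ coincide, with the same allowed reductions, and therefore have the same reducts.

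For the near parts I would recompute everything with a single pruned shuffle. Let $\mathcal{M}$ be the set of parts $Z$ of $\mathcal{P}_{i+1}$ whose image in $\mathcal{P}_i$ lies at distance at most $2\cdot 3^\ell$ from $X$ in $G_{\mathcal{P}_i}$ (so $X_1,X_2\in\mathcal{M}$); again $|\mathcal{M}|$ is bounded in $\ell,d$. Form $(T,m)$, the pruned $\ell$-shuffle of the given reducts $\{MT'_{\ell}(G,\mathcal{P}_{i+1},Z):Z\in\mathcal{M}\}$; by \cref{lem:MTtree} each input has size at most $h(\ell)$, so $(T,m)$ has size bounded by a function of $\ell,d$ and is computable in that time. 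I claim $(T,m)$ is a reduction \emph{in $(G,\mathcal{P}_i)$} of the sub-morphism-tree $U$ of $MT_{\ell}(G,\mathcal{P}_i)$ made of all tuples every sequence-graph component of which, computed in $(G,\mathcal{P}_i)$, is rooted at a part of $\mathcal{N}$. The claim has two halves. First, the pruned $\ell$-shuffle of the \emph{unreduced} trees $\{MT_{\ell}(G,\mathcal{P}_{i+1},Z):Z\in\mathcal{M}\}$ equals $U$ as a morphism-tree: this is the analogue of \cref{lem:MTlbyprunedshuffle}, using that coarsening from $\mathcal{P}_{i+1}$ to $\mathcal{P}_i$ only adds edges to sequence graphs, so every $\mathcal{P}_i$-component of a tuple is a union of $\mathcal{P}_{i+1}$-components (all already present as nodes of the shuffled trees), together with the locality bound and the choice of $\mathcal{M}$, which make "$\mathcal{N}$-rooted in $\mathcal{P}_i$" match "each $\mathcal{P}_{i+1}$-component rooted at a part of $\mathcal{M}$" — here the inequality $2\cdot 3^\ell<3^{\ell+1}$ from the footnote is what makes the two radii cooperate. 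Second, each $MT'_{\ell}(G,\mathcal{P}_{i+1},Z)$ is a $\mathcal{P}_{i+1}$-reduction of $MT_{\ell}(G,\mathcal{P}_{i+1},Z)$, and a $\mathcal{P}_{i+1}$-reduction is \emph{a fortiori} a $\mathcal{P}_i$-reduction, since an automorphism respecting the finer partition respects the coarser one, so every single reduction step stays valid; applying \cref{lem:prunedshufflereduction} with partition $\mathcal{P}_i$, one input at a time, turns the pruned shuffle of the reducts into a $\mathcal{P}_i$-reduction of the pruned shuffle of the unreduced trees, i.e.\ of $U$. Finally, for each $Y\in\mathcal{N}$, \cref{lem:inducedreduction} gives that $(T,m)_Y$ is a $\mathcal{P}_i$-reduction of $U_Y$; and $U_Y=MT_{\ell}(G,\mathcal{P}_i,Y)$ because every connected tuple rooted at $Y$ stays within distance $\rho_\ell$ of $Y$, hence within $2\cdot 3^\ell$ of $X$, so nothing of the $Y$-rooted part is trimmed away in $U$. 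Reducing $(T,m)_Y$ exhaustively by brute force — which costs time bounded in $\ell$ as the tree already has bounded size (\cref{lem:MTtree}) — yields the desired reduct $MT'_{\ell}(G,\mathcal{P}_i,Y)$.

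The total running time is then a bounded number of identity copies for the far parts, one pruned $\ell$-shuffle of $|\mathcal{M}|$ trees of size at most $h(\ell)$, and $|\mathcal{N}|$ brute-force reductions of bounded trees, all bounded by some $f(\ell,d)$. The main obstacle is the first half of the claim in the previous paragraph: checking that passing from $\mathcal{P}_{i+1}$-connectivity to $\mathcal{P}_i$-connectivity on the region around $X$ produces exactly the expected new connected tuples — neither more nor fewer — and that the pruning of the shuffle deletes precisely the resulting duplicate occurrences; this geometric bookkeeping, controlled by the radius $3^\ell$ and the footnote's constant, is where the real care is needed.
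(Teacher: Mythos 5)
Your overall architecture is the same as the paper's (leave the parts at $G_{\mathcal P_i}$-distance more than $3^\ell$ from $X=X_1\cup X_2$ untouched; recompute the nearby ones from a single pruned $\ell$-shuffle of the reducts of the parts within distance $2\cdot 3^\ell$, restrict, and reduce by brute force), and your far-part argument, including the index-monotone-path locality bound via \cref{lem:chordal}, is correct and in fact more explicit than the paper. The gap is in the near-part combination step, which you yourself flag as ``where the real care is needed'' and do not carry out — and as stated it is wrong on two counts. First, the pruned $\ell$-shuffle of $\{MT_\ell(G,\mathcal P_{i+1},Z):Z\in\mathcal M\}$ does \emph{not} equal your $U$: pruning is defined relative to the partition in which the shuffled trees are rooted, here $\mathcal P_{i+1}$, and the surviving nodes are exactly the tuples all of whose $\mathcal P_{i+1}$-components are rooted at parts of $\mathcal M$. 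This is strictly larger than $U$: a depth-one node $(v)$ with $v$ in a part $Z$ at $G_{\mathcal P_i}$-distance strictly between $3^\ell$ and $2\cdot3^\ell$ from $X$ survives (its only $\mathcal P_{i+1}$-component is rooted at $Z\in\mathcal M$) yet its $\mathcal P_i$-component is rooted outside $\mathcal N$, so it is not in $U$; so the claimed equivalence between ``$\mathcal N$-rooted in $\mathcal P_i$'' and ``each $\mathcal P_{i+1}$-component rooted in $\mathcal M$'' fails in one direction. Second, \cref{lem:prunedshufflereduction} cannot be applied ``with partition $\mathcal P_i$'': its hypotheses require the shuffled trees to be morphism-trees in $(G,\mathcal P,X_k)$ for \emph{distinct} parts $X_k$ of the same partition $\mathcal P$ used for the pruning, and your two trees rooted at $X_1$ and $X_2$ both collapse to the single part $X$ of $\mathcal P_i$; moreover a $\mathcal P_i$-pruning would delete precisely the new tuples whose $\mathcal P_i$-component amalgamates several $\mathcal P_{i+1}$-components, which are the tuples the merge is supposed to create.

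The repair is exactly the paper's order of operations, and you already have all the ingredients: apply \cref{lem:MTlbyprunedshufflereduction} (i.e., \cref{lem:MTlbyprunedshuffle} plus \cref{lem:prunedshufflereduction}) entirely with respect to $\mathcal P_{i+1}$, so that the pruned shuffle of the reducts is a $\mathcal P_{i+1}$-reduction of the tree $W$ of tuples whose $\mathcal P_{i+1}$-components are all rooted in $\mathcal M$ (the paper sidesteps even the subfamily version by first shuffling over \emph{all} parts of $\mathcal P_{i+1}$, obtaining a reduction of $MT_\ell(G,\mathcal P_{i+1})$, and only then arguing that restricting to the trees rooted near $X$ does not change the $Y$-restrictions for $Y\in\mathcal N$); then use, as you do, that a $\mathcal P_{i+1}$-reduction is a $\mathcal P_i$-reduction, and apply \cref{lem:inducedreduction} with $\mathcal P_i$. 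Your own locality computation then shows $W_Y=MT_\ell(G,\mathcal P_i,Y)$ for every $Y\in\mathcal N$ (every $\mathcal P_i$-connected tuple rooted at $Y$ lies within distance $3^{\ell-1}$ of $Y$, hence all its $\mathcal P_{i+1}$-components are rooted in $\mathcal M$), so the discrepancy between $U$ and $W$ is harmless — but establishing this identification, rather than deferring it, is precisely the content of the step your proposal leaves open.
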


\begin{proof}
  The first observation is that we only need to update a bounded number of reducts.
  Indeed for every part $X$ which is at distance more than $3^\ell$ from $X_1 \cup X_2$ in the graph $G_{\mathcal P_i}$, we just set $MT'_\ell(G,\mathcal P_i,X)=MT'_\ell(G,\mathcal P_{i+1},X)$ since connected tuples of vertices rooted at~$X$ do not involve parts with distance more than $3^{\ell}$ from $X$.
  Since $G_{\mathcal P_i}$ has degree at most~$d$, the number of parts at distance at most $3^\ell$ is at most~$d^{3^\ell + 1}$.

Let us start with a time-inefficient method to compute $MT'_{\ell}(G,\mathcal P_i,X)$ for all $X \in \mathcal P_i$. 
We form the pruned $\ell$-shuffle $(T,m)$ of all $MT'_\ell(G,\mathcal P_{i+1},X)$ where $X$ ranges over the parts of $\mathcal P_{i+1}$.
By~\cref{lem:MTlbyprunedshufflereduction}, $(T,m)$ is a reduction of $MT_\ell(G,\mathcal P_{i+1})$, hence it is also a reduction of $MT_\ell(G,\mathcal P_i)$ since $\mathcal P_i$ is coarser. 
Now for every part $X$ in $\mathcal P_i$, by~\cref{lem:inducedreduction}, we have that $(T,m)_X$ is a reduction of $MT_\ell(G,\mathcal P_i,X)$.
Note that $(T,m)_X$ has size bounded by a function of $\ell$ and $d$ since its nodes are $\ell$-shuffles of nodes of the set of at most~$d^{3^\ell + 1}$ trees $MT'_{\ell}(G,\mathcal P_{i+1},Y)$, where the distance of $Y$ to $X$ in $G_{\mathcal P_i}$ is at most~$3^\ell$.
So we can construct $MT'_\ell(G,\mathcal P_i,X)$ by reducing further $(T,m)_X$ by any method.

The above method is inefficient in that it involves the computation of $(T,m)$, but this is easily turned into an efficient method as we only need to compute the pruned $\ell$-shuffle $(T',m')$ of all $MT'_\ell(G,\mathcal P_{i+1},Y)$ where $Y$ ranges over $X_1$, $X_2$, and any part which is at distance at most $2 \cdot 3^{\ell}$ from $X_1 \cup X_2$ in $G_{\mathcal P_i}$.
Indeed, any part $X$ of $\mathcal P_i$ which is at distance at most $3^\ell$ from $X_1 \cup X_2$ satisfies that $(T',m')_X=(T,m)_X$ and we can therefore compute $MT'_\ell(G,\mathcal P_i,X)$ for these parts $X$ in time only depending on $\ell$ and $d$.
See \cref{fig:progdyn} for an illustration.
\end{proof}

Finally we can prove~\cref{thm:FOmodelchecking2}.

\begin{proof}
  We are given a sequence of $d$-partitions $\mathcal P_n, \ldots, \mathcal P_1$ where $\mathcal P_n$ is the finest partition, $\mathcal P_1$ is the coarsest partition, and every $\mathcal P_i$ is obtained by a single contraction of $\mathcal P_{i+1}$.
  We compute $MT'_\ell(G,\mathcal P_i,X)$ for all $i$ and for all parts $X$ of $\mathcal P_i$.
  We initialize $MT'_\ell(G,\mathcal P_n,\{v\}) := MT_\ell(\{v\})$ for all $v$ in $V(G)$.
  By~\cref{thm:dynamicprog}, we can apply dynamic programming and compute in linear FPT time $MT'_\ell(G,\mathcal P_1,V(G))$ which is exactly $MT'_\ell(G)$, on which any depth-$\ell$ prenex formula can be checked in time $h(\ell)$, by \cref{lem:obs-reduct}.
\end{proof}

As a direct corollary, we get the following.
\begin{corollary}\label{cor:fo}
 The problems \textsc{$k$-Independent Set}, \textsc{$k$-Clique}, \textsc{$k$-Vertex Cover}, \textsc{$k$-Dominating Set}, \textsc{$k$-Subgraph Isomorphism} are solvable in time $f(k,d) \cdot n$ (where $k$ is the solution size) on $d$-collapsible $n$-vertex graphs provided the $d$-sequence is given.
\end{corollary}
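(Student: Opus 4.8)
The plan is to observe that each of the five problems listed is, for a fixed solution size $k$, expressible by a single first-order sentence already in prenex normal form whose number of variables is a function of $k$ only; the corollary then follows by a direct application of \cref{thm:FOmodelchecking}.

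First I would write down the sentences. For deciding whether $G$ has an independent set of size $k$ I would use
$$\phi_{\mathrm{IS}} := \exists x_1 \cdots \exists x_k \Big( \bigwedge_{1 \le i < j \le k} \neg(x_i = x_j) \land \bigwedge_{1 \le i < j \le k} \neg E(x_i,x_j) \Big),$$
which has $k$ variables and is already prenex; swapping $\neg E$ for $E$ handles \textsc{Max Clique}. For \textsc{Min Vertex Cover} (solution of size at most $k$) I would take
$$\phi_{\mathrm{VC}} := \exists x_1 \cdots \exists x_k \, \forall y \, \forall z \, \Big( E(y,z) \limplies \bigvee_{i=1}^{k} (y = x_i \lor z = x_i) \Big),$$
with $k+2$ variables; for \textsc{Min Dominating Set},
$$\phi_{\mathrm{DS}} := \exists x_1 \cdots \exists x_k \, \forall y \, \Big( \bigvee_{i=1}^{k} (y = x_i \lor E(y,x_i)) \Big),$$
with $k+1$ variables; and for \textsc{Subgraph Isomorphism} with a fixed pattern graph $H$ on $\{h_1, \ldots, h_k\}$,
$$\phi_{H} := \exists x_1 \cdots \exists x_k \Big( \bigwedge_{1 \le i < j \le k} \neg(x_i = x_j) \land \bigwedge_{h_i h_j \in E(H)} E(x_i,x_j) \Big),$$
with $k$ variables (the induced-subgraph variant is obtained by also adding the conjuncts $\neg E(x_i,x_j)$ for the non-edges $h_i h_j \notin E(H)$). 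In each case, by the usual semantics, $G \models \phi$ precisely when $G$ has a solution of the prescribed size or threshold. Since every sentence above consists only of non-negated quantifiers followed by a quantifier-free Boolean combination of atoms $x_i = x_j$ and $E(x_i,x_j)$, they are already prenex of length $\ell = O(k)$, so no conversion (of the kind recalled before \cref{thm:FOmodelchecking}) is needed.

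With this in hand, I would feed each sentence, together with $G$ and the given $d$-sequence, into \cref{thm:FOmodelchecking}, which decides $G \models \phi$ --- equivalently, the decision version of the problem --- in time $f(\ell,d) \cdot n$ for a computable $f$; since $\ell$ depends only on $k$, this is $f'(k,d) \cdot n$. For the maximization and minimization versions one runs the decision procedure either for the single relevant value of the parameter, or, if the optimum value itself is sought, for all thresholds from $1$ to $k$, which costs only an extra factor $O(k)$.

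I do not expect a genuine obstacle: the whole weight of the argument rests on \cref{thm:FOmodelchecking}, and what remains is the routine task of writing syntactically correct sentences, checking that they are prenex with $O(k)$ variables, and noting that the optimization variants reduce to a bounded number of decision calls. The only points deserving a line of care are the treatment of directed versus undirected $E$ (for undirected graphs $E$ is symmetric, so the formulas above are unambiguous) and the distinction between the subgraph and induced-subgraph variants of \textsc{Subgraph Isomorphism}.
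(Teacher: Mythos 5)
Your proposal is correct and matches the paper's intent: the corollary is stated there as an immediate consequence of \cref{thm:FOmodelchecking} (resp.\ \cref{thm:main}), the point being exactly that each problem is expressible by a prenex sentence with $O(k)$ variables (the independent-set sentence you wrote is even given verbatim as an example in the preliminaries). Your explicit sentences for vertex cover, dominating set, and (induced) subgraph isomorphism are fine, so no further argument is needed.
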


We observe that the non-elementary dependence of the function $f$ of \cref{thm:FOmodelchecking} in the sentence size $|\phi|$ is very likely to be necessary.
Indeed Frick and Grohe~\cite{Frick04} showed that any FPT algorithm for FO model checking on trees (of~twin-width at most~2) requires a non-elementary dependence in the formula size, unless FPT $=$ AW[$*$]. 
Let us also mention that we cannot expect polynomial kernels of size $k^{O(1)}$ on graphs of twin-width at most some constant $d$ for FO model checking of formulas of size $k$, actually even for \textsc{$k$-Independent Set}.
Recall that twin-width is invariant by complementation and disjoint unions.
More precisely, the complete sum\footnote{obtained from the disjoint union by adding every edge between two distinct graphs} of $t$ graphs $G_1, \ldots, G_t$ of twin-width at most $d$ has twin-width at most $d$.
So the complete sum of $t$ instances of the NP-hard problem \textsc{Max Independent Set} on graphs of twin-width $d$ is an OR-composition (that preserves the parameter $k$).
\textsc{Max Independent Set} is indeed NP-hard on graphs of twin-width $d$, for a sufficiently large fixed value of $d$, since planar graphs have constant twin-width.
Therefore a polynomial kernel would imply the unlikely containment NP $\subseteq$ co-NP/poly~\cite{Bodlaender09}.
We explore polynomial kernels on classes of bounded twin-width in more depth in~\cite{twin-width&polykernels}.

This result also has interesting consequences for polynomial-time solvable problems, such as \textsc{Constant Diameter}.
The fact that a graph $G$ has diameter $k$ can be written as a first-order formula of size function of $k$.
Besides, in graphs with only $n \log^{O(1)} n$ edges, truly subquadratic algorithms deciding whether the diameter is 2 or 3 would contradict the Exponential-Time Hypothesis~\cite{Roditty13}. 
One can obtain a significant improvement on graphs of bounded twin-width, provided the contraction sequence is either given or can be itself computed in linear time.
\begin{corollary}\label{cor:fo2}
  Deciding if the diameter of an $n$-vertex graph is~$k$ can be done in time $f(k,d) \cdot n$, on $d$-collapsible graphs provided the $d$-sequence is given.
\end{corollary}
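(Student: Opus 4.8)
The plan is to express ``the diameter of $G$ equals $k$'' as a single first-order sentence $\phi_k$ of length $O(k)$, and then invoke \cref{thm:FOmodelchecking} on the provided $d$-sequence of $G$. Since the diameter question is a graph property, \cref{thm:FOmodelchecking} applies verbatim.

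First I would build a distance predicate. For every $j \geqslant 0$ set $\delta_0(x,y) := (x = y)$ and $\delta_j(x,y) := \exists z_j\, \bigl(\delta_{j-1}(x,z_j) \land (z_j = y \lor E(z_j,y))\bigr)$, using fresh variable labels $z_1,\dots,z_j$. A straightforward induction shows $G \models \delta_j(u,v)$ if and only if $d_G(u,v) \leqslant j$. Because no negation separates the nested existential quantifiers, all of them can be moved to the front (renaming apart so that no label is reused), so $\delta_j$ is equivalent to a \emph{prenex} formula $\exists z_1 \cdots \exists z_j\, \psi_j(x,y,z_1,\dots,z_j)$ with $\psi_j$ quantifier-free; in particular $|\delta_j| = O(j)$ and $\delta_j$ has $j$ quantified variables.

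Next, ``$\operatorname{diam}(G) \leqslant k$'' is captured by $\forall x\, \forall y\, \delta_k(x,y)$ and ``$\operatorname{diam}(G) \geqslant k$'' by $\exists x\, \exists y\, \neg \delta_{k-1}(x,y)$; note that the universal sentence automatically evaluates to \emph{false} when $G$ is disconnected, which matches the convention $\operatorname{diam}(G) = \infty$ in that case. Pushing the negation past the existential prefix of $\delta_{k-1}$ turns the second sentence into a prenex sentence as well, and then taking the conjunction of the two and merging their quantifier prefixes (once more renaming variables apart) produces a single prenex sentence $\phi_k$ of length $\ell = 2k + O(1)$ with $G \models \phi_k$ if and only if $\operatorname{diam}(G) = k$. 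Feeding $\phi_k$, $G$, and the given $d$-sequence to \cref{thm:FOmodelchecking} decides $G \models \phi_k$ --- hence the diameter question --- in time $f(\ell,d)\cdot n = g(k,d)\cdot n$ for some computable function $g$.

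There is no serious obstacle here: once \cref{thm:FOmodelchecking} is in hand, everything reduces to the elementary bookkeeping of writing the diameter predicate as a short sentence and checking that it is (equivalent to one) in prenex form. The only point worth stressing is that the dependence on $k$ is inherited from the non-elementary function $f$ of \cref{thm:FOmodelchecking}, so the running time, while linear in $n$, is a tower in $k$; by the Frick--Grohe lower bound recalled in the introduction this kind of dependence is unavoidable for FO model checking already on trees, which have twin-width~$2$, so it is not an artefact of the present reduction. (Had we instead applied the general prenex-normalisation to a non-prenex formula, the prenex depth would itself be a tower in $k$; the direct construction above sidesteps this and keeps $\ell = O(k)$, and the same argument specialises to ``$\operatorname{diam}(G) \leqslant k$'' with an even shorter sentence.)
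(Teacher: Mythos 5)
Your proposal is correct and follows exactly the route the paper intends: the corollary is stated there as an immediate consequence of the observation that ``$G$ has diameter $k$'' is expressible by a first-order sentence whose size depends only on $k$, combined with \cref{thm:FOmodelchecking} applied to the given $d$-sequence. Your explicit inductive construction of the distance predicates, the handling of disconnected graphs, and the remark that building the prenex sentence directly keeps the number of variables at $O(k)$ (avoiding the tower blow-up of generic prenex normalisation) are all sound elaborations of that same argument.
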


We finally observe that our FO model checking readily works for (general) binary structures of bounded twin-width.
The only notion that should be revised is the homogeneity.
For a binary structure with binary relations $E^1, \ldots E^h$, we now say that $X$ and $Y$ are \emph{homogeneous} if for all $i \in [h]$, the existence of a pair $u,v \in X \times Y$ such that $(u,v) \in E^i$ implies that for every $x, y \in X \times Y$, $(x,y) \in E^i$.
In particular this handles the case of bounded twin-width digraphs (and posets encoded as digraphs).

\section{Stability under FO interpretations and transductions}\label{sec:fo-inter} 

The question we address here is how twin-width can increase when we construct a graph $H$ from a graph $G$.
For instance, it is clear that twin-width is invariant when taking complement (exchanging edges and non-edges).
But for other types of constructions, such as taking the square (joining two vertices if their distance is at most two) the answer is far less clear.
A typical question in this context consists of asking if the square of a planar graph has bounded twin-width.
To put this in a general framework, we consider interpretations of graphs via first-order formulas.
Our central result is that bounded twin-width is invariant under first-order interpretations.

The results in this section could as well be expressed in the language of directed graphs, or matrices, but for the sake of simplicity, we will stick to undirected graphs.
Let $\phi(x,y)$ be a prenex first-order graph formula of depth $\ell$ with two free variables $x,y$.
More explicitly, 
$$\phi (x,y)=  Q_1x_1Q_2x_2\dots Q_{\ell}x_{\ell} \phi^*$$
where for each $i \in [\ell]$, the variable $x_i$ ranges over $V(G)$, $Q_i \in \{\forall, \exists\}$, while $\phi^*$ is a Boolean combination in atoms of the form $u=v$ and $E(u,v)$ where $u,v$ are chosen in $\{x_1,\dots ,x_{\ell},x,y\}$.

Given a graph $G$, the graph $\phi(G)$ has vertex set $V(G)$ and edge set all the pairs $uv$ for which $G \models \phi(u,v) \land \phi(v,u)$.
It is called the \emph{interpretation} of $G$ by $\phi$.
We choose here to make a symmetric version of the interpretation, but we can also define the directed version.
Adding the directed edge $uv$ when $G \models \phi (u,v)$.
This will not play an important role in our argument.

By extension, given a hereditary graph class $\mathcal G$, $\phi(\mathcal G)$ is the class of all induced subgraphs of some $\phi(G)$, for $G \in \mathcal G$. 
Let us illustrate this notion with a striking conjecture of Gajarsk\'y et al.~\cite{Gajarsky16}.
A class $\mathcal G$ is \emph{universal} if there exists some formula $\phi$ such that $\phi(\mathcal G)$ is the class of all graphs.

\begin{conjecture}[\cite{Gajarsky16}]\label{conj:universal}
FO model checking is FPT on the class $\mathcal G$ if $\mathcal G$ is not universal.
\end{conjecture}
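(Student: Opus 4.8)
The plan is to treat the two implications of \cref{conj:universal} separately: the ``only if'' implication (FPT forces non-universality, i.e.\ the hardness direction) is routine and I would dispose of it first, whereas the ``if'' implication (non-universality forces tractability) is the substance and can only be organised as a structural recursion, of which the FPT algorithm of this paper (\cref{thm:main}, proved via \cref{thm:gridtheorem} and \cref{thm:FOmodelchecking2}) is one among several base cases.

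For the hardness direction, suppose $\mathcal G$ is universal, witnessed by a formula $\phi$ with $\phi(\mathcal G)$ equal to the class of all graphs, and suppose for contradiction that FO model checking on $\mathcal G$ runs in time $f(|\psi|)\cdot|G|^{O(1)}$. Given an arbitrary instance $(H,\chi)$ of FO model checking on all graphs --- $\mathsf{AW}[\ast]$-complete by Kreutzer et al.~\cite{Kreutzer09} --- one computes $G\in\mathcal G$ with $H$ an induced subgraph of $\phi(G)$ on a vertex set $S$ that is \emph{FO-definable} in $G$; this last point is the only delicate step, and is handled by the standard device of padding $H$ with a constant-size gadget so that $S$ becomes definable from the gadget, or, for hereditary classes, by the unary colouring available in the transduction setting (which, as argued below, is no loss). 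One then rewrites $\chi$ into $\chi^{\star}$ over the signature of $G$ by relativising every quantifier to $S$ and replacing each atom $E(u,v)$ by $\phi(u,v)\wedge\phi(v,u)$; now $G\models\chi^{\star}$ iff $H\models\chi$, with $|\chi^{\star}|$ bounded in terms of $|\chi|$ and $|\phi|$. The one genuine hypothesis is that the map $H\mapsto G$ be computable (effectivity of the universality witness); granting it, we would obtain an FPT algorithm for FO model checking on all graphs, contradicting $\mathrm{FPT}\neq\mathsf{AW}[\ast]$.

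For the tractability direction I would proceed in three stages. \emph{Stage 1: from ``non-universal'' to a usable structural notion.} For a hereditary class I would prove that non-universality (no \emph{interpretation} of $\mathcal G$ equals all graphs) coincides with \emph{monadic dependence} (no \emph{transduction} of $\mathcal G$ equals all graphs): hereditarily, a unary colouring can be encoded by passing to induced subgraphs that realise the colours as neighbourhoods, so the two universality notions collapse --- a transduction-calculus argument in the spirit of~\cite{Blumensath10}, sharpened by the recent dichotomy theory for hereditary monadically NIP classes. The conjecture then becomes the \emph{dense model-checking conjecture}: FO model checking is FPT on every hereditary monadically dependent class. \emph{Stage 2: a recursion by flips.} I would run a recursion on a complexity measure in which, at each step, the current instance either lies in a known tractable island or admits a bounded number of \emph{flips} --- complementations along definably-bounded bipartitions --- that strictly decrease the measure, whereupon one recurses and reassembles the answer by Gaifman-style locality in the flipped graph. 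The islands already available are: nowhere dense classes~\cite{Grohe17}; structurally bounded expansion via~\cite{Gajarsky18,Gajarsky20}; monadically stable classes via recent flip-based local techniques; and --- this paper's enlargement --- bounded twin-width, together with all of its transductions (\cref{sec:fo-inter}). \emph{Stage 3: the missing territory.} What is not yet covered is the monadically dependent but not monadically stable regime outside bounded twin-width, the prototype being the class of \emph{all} permutation graphs (or all interval graphs): one would need a ``grid-minor-like'' theorem in the spirit of \cref{thm:gridtheorem}, but for the linear-order obstruction rather than the mixed-minor obstruction, producing a canonical decomposition that certifies ``bounded order complexity'' and thereby enables the local analysis of Stage 2.

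The main obstacle --- and precisely why \cref{conj:universal} remains open --- is the absence of such a structure theory. In the sparse world the minor / topological-minor / bounded-expansion / nowhere-dense hierarchy supplies both an obstruction characterisation and effective decompositions; \cref{thm:gridtheorem} is a first dense analogue, but twin-width is strictly too narrow (cubic graphs, which are not universal, have unbounded twin-width by~\cite{nextpaper}, as noted after \cref{thm:apex}; and all permutation graphs are not universal but are not of bounded twin-width). Producing, for an arbitrary hereditary monadically dependent class, a canonical hierarchy of transduction-minimal obstructions, proving it has finite depth with each level algorithmically exploitable, and matching it with an \emph{effective} flip decomposition, is exactly the open core: the recursion of Stage 2 is only as strong as that decomposition. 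I would therefore expect the realistic outcome of this plan to be incremental --- widening the union of islands, e.g.\ proving the polynomial-expansion $\subseteq$ bounded-twin-width inclusion hinted at by the dash-dotted edge in \cref{fig:hasse}, or settling all permutation graphs --- rather than a full resolution.
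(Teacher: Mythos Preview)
This statement is a \emph{conjecture} in the paper, not a theorem; the paper contains no proof. The paper only adds the one-line remark that ``the forward implication holds, provided FPT $\neq$ AW[$*$]'', and explicitly attributes the backward implication to Gajarsk\'y et al.\ as open. There is therefore no proof of the paper's to compare against.

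Your treatment is appropriately calibrated to this situation. Your sketch of the forward (hardness) direction is essentially the argument the paper alludes to in its single sentence: pull back formulas through the interpretation and invoke AW[$*$]-hardness of FO model checking on all graphs. Your caveats about effectivity of $H\mapsto G$ and definability of the target vertex set $S$ are real technical points that the paper does not address; note, however, that with the paper's definition of $\phi(\mathcal G)$ (closure under induced subgraphs of $\phi(G)$), one may need a colouring to isolate $S$, so the clean statement really lives at the level of transductions rather than interpretations --- which you correctly flag.

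For the backward direction you do not offer a proof, and you say so. What you write is a research programme (reduce to monadic dependence, recurse via flips, lean on the known tractable islands including bounded twin-width), together with an honest identification of the missing structure theory. That is an accurate assessment of the state of the conjecture, and it is consistent with the paper's own discussion in \cref{sec:conclusion}. One small correction: you cite ``all permutation graphs'' as non-universal but of unbounded twin-width; the paper does not claim permutation graphs are non-universal, and in fact that class is universal (it interprets all graphs), so it is not a counterexample of the kind you want --- cubic graphs remain the right witness that bounded twin-width is strictly narrower than non-universal.
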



A simple example of a graph class wherein FO model checking is AW[$*$]-hard is provided by interval graphs.
This illustrates the previous conjecture since one can obtain every graph as a fixed first-order interpretation of interval graphs.
To draw a comparison with another complexity measure, note that interval graphs have Vapnik-Chervonenkis dimension at most two (i.e., the neighborhood hypergraph has VC-dimension at most two).
This shows in particular that bounded VC-dimension is \emph{not} preserved under first-order interpretations. 
The main result of this section, supporting that twin-width is a natural and robust notion of complexity, is the following.

\begin{theorem}\label{thm:transduction}
Any $(\phi,\gamma,h)$-transduction of a graph with twin-width at most $d$ has twin-width bounded by a function of~$|\phi|$, $\gamma$, $h$, and $d$. 
\end{theorem}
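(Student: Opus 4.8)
\textbf{Plan for the proof of Theorem~\ref{thm:transduction}.}
The strategy is to handle each ingredient of a transduction separately and show it keeps twin-width bounded (as a function of the relevant parameters), so that their composition does as well. A $(\phi,\gamma,h)$-transduction is $\tau \circ \gamma_{\mathrm{op}} \circ h_{\mathrm{op}}$, where $h_{\mathrm{op}}$ adds $h$ unary relations, $\gamma_{\mathrm{op}}$ takes $\gamma$ disjoint copies plus unary ``copy'' relations and the binary relation $\sim$, and $\tau=(\delta,\nu,\eta)$ is a basic FO transduction (a guard $\delta$, a vertex-selecting formula $\nu$, and an edge formula $\eta$). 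First I would observe that it suffices to prove the result for augmented binary structures rather than plain graphs, since the intermediate objects are augmented; \cref{lem:unary} already tells us that adding $h$ unary relations multiplies twin-width by at most $2^h$, so $h_{\mathrm{op}}$ is cheap. For the $\gamma$-copy operation, I would argue directly on contraction sequences: given a $d$-sequence for $G$, run the \emph{same} contraction sequence in lockstep on each of the $\gamma$ copies, always contracting the ``$j$-th copy'' of a pair simultaneously across all $j$ — but more carefully, interleave so that at any moment each partition class is a union of $\sim$-classes. Because the copies are pairwise non-adjacent except through $\sim$, and $\sim$ forms a perfect matching between consecutive copies, the red degree incurred is at most $\gamma$ times the original plus a constant for the $\sim$ and the $\gamma$ unary relations $C_1,\dots,C_\gamma$. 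This is essentially the argument of \cref{thm:apex}/\cref{lem:unary} iterated, giving a bound like $2^{\gamma+O(1)}d$ or similar; the exact constant is routine.

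The substantive step is the basic FO transduction $\tau=(\delta,\nu,\eta)$. The guard $\delta$ and the domain restriction $\nu$ only delete elements, and twin-width of an induced substructure is at most that of the ambient structure (the trigraph induced-subgraph remark in \cref{sec:first-ex}, extended to matrices/binary structures), so the real content is the edge formula $\eta(x,y)$, i.e.\ understanding how an FO interpretation changes twin-width. Here I would invoke \cref{cor:mixed}: it suffices to exhibit a vertex ordering $\sigma$ of the interpreted structure $\eta(G)$ such that $A_\sigma(\eta(G))$ is $k$-mixed free for some $k=k(|\eta|,d)$. The natural candidate is to reuse the ordering coming from a contraction sequence of $G$ — more precisely, a $d$-twin-ordered order $\sigma$ for (the adjacency matrix of) $G$, which exists by the first bullet of \cref{thm:gridtheorem} combined with a $d$-sequence for $G$. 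The plan is: suppose $A_\sigma(\eta(G))$ had a $k$-mixed minor $({\mathcal R},{\mathcal C})$ with $k$ huge; I want to derive that $A_\sigma(G)$ (a $d$-twin-ordered, hence $(2d{+}2)$-mixed free, matrix) nonetheless has a large mixed minor, a contradiction. The point is a Ramsey/locality argument: whether $\eta(u,v)$ holds is, after the $\ell$ quantifiers, a function of the ``type'' of the pair $(u,v)$ relative to $G$; on a $k$-mixed-free matrix one has strong homogeneity (few distinct row/column behaviours on coarse divisions), so a genuinely mixed $\eta$-zone forces genuine mixing in $A_\sigma(G)$ on a correspondingly coarser division, where ``coarser'' loses only a factor depending on $\ell=|\eta|$.

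Concretely, the key lemma to prove is: \emph{if $A_\sigma(G)$ is $m$-mixed free, then $A_\sigma(\eta(G))$ is $g(m,|\eta|)$-mixed free} for some explicit $g$ (tower-type is fine). I would prove this by induction on the quantifier depth $\ell$ of $\eta$, pushing the quantifiers inside: a formula $\exists x_\ell\,\psi(x,y,x_1,\dots,x_\ell)$ defines, for each valuation of $x_1,\dots,x_{\ell-1}$, a ternary-type-dependent relation, and one controls the number of ``columns of behaviour'' by the mixed-freeness at level $\ell-1$; the base case is quantifier-free, where $\eta(x,y)$ is a Boolean combination of $x=y$ and a constant number of edge atoms $E(x_i,x_j)$ with $x_i,x_j\in\{x,y\}$, so $A_\sigma(\eta(G))$ is obtained from $A_\sigma(G)$ by a fixed Boolean post-processing of $2\times 2$ blocks, which at worst squares the alphabet and at most doubles the mixed-freeness parameter, exactly as in the ``$M_\tau$ is $k$-mixed free $\Rightarrow$ $A(D_\tau)$ is $2k$-mixed free'' type of bookkeeping in \cref{sec:bounded-twinwidth}. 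Chaining: a $d$-sequence $\to$ $d$-twin-ordered order $\to$ $(2d{+}2)$-mixed free matrix (\cref{thm:gridtheorem}) $\to$ under $\eta$, $g(2d{+}2,|\eta|)$-mixed free (the lemma) $\to$ twin-width $2^{2^{O(g)}}$ for $\eta(G)$ (\cref{cor:mixed}); then fold in the $\nu,\delta$ restrictions (monotonicity under induced substructures), the $2^h$ factor from $h_{\mathrm{op}}$, and the copy blow-up from $\gamma_{\mathrm{op}}$.

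\textbf{Main obstacle.} The delicate point is the inductive lemma controlling mixed-freeness under a quantifier: ``pushing the quantifier inside'' is not literally a syntactic move, so one must genuinely argue that on a coarse division of a mixed-free matrix, the truth value of $\exists x_\ell\,\psi$ for a pair of zones is governed by a bounded amount of data (the ``type profile'' of the two blocks and their relationship), and that a mixed $\eta$-zone therefore pinpoints a location where the underlying matrix already mixes at a scale only a bounded factor coarser. This is where a careful Marcus–Tardos-style or pigeonhole counting is needed — one must be sure the loss per quantifier is a genuine function of nothing but $d$ and $|\eta|$, independent of $n$ — and getting the quantifiers to interact correctly with the division structure (rather than with individual rows/columns) is the crux; everything else ($\delta$, $\nu$, the copy operation, the unary expansions) is routine bookkeeping on top of results already in the paper.
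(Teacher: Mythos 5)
Your outer decomposition is exactly the paper's: the $h$-expansion and $\gamma$-copy operations are handled as in \cref{lem:unary} and \cref{lem:copy-exp} (lockstep contractions across the copies, then a $2^{\gamma+h}$ factor for the unary relations), and $\delta,\nu$ are absorbed by monotonicity under induced substructures. The real content is the edge formula $\eta$, i.e.\ that FO interpretations preserve bounded twin-width, and here you diverge from the paper and your argument has a genuine gap. Your plan hinges on the lemma ``if $A_\sigma(G)$ is $m$-mixed free then $A_\sigma(\eta(G))$ is $g(m,|\eta|)$-mixed free for the \emph{same} order $\sigma$,'' to be proved by induction on quantifier depth by ``pushing quantifiers inside'' and pigeonholing on type profiles of zones. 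As stated this is not a proof, and the sketch does not go through: the truth of $\exists x_\ell\,\psi(x,y,\bar x)$ at a pair $(u,v)$ is not determined by any data local to the zones of a division containing $u$ and $v$ — the witness ranges over the whole domain, and since the structures are dense and carry an order, no Gaifman-type locality is available to confine it. To make ``the behaviour of a block is captured by a bounded type profile'' precise and to show the per-quantifier loss depends only on $d$ and $|\eta|$, you need an Ehrenfeucht--Fra\"iss\'e-style analysis of how bounded-size data can be maintained along the coarsening — which is exactly the machinery the paper builds (reducts of morphism-trees, pruned shuffles) and is not something a Marcus--Tardos counting on the final matrix substitutes for. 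So the ``main obstacle'' you flag is in fact the entire theorem, and nothing in your outline closes it.

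For contrast, the paper does not go back through orders and \cref{thm:gridtheorem}/\cref{cor:mixed} at all. It works directly with the given sequence of $d$-partitions $(\mathcal P_i)_i$ of $G$ and refines each $\mathcal P_i$ by the $(\ell+2)$-indistinguishability classes $I_{\ell+2}(G,\mathcal P_i)$, defined via equivalent depth-$1$ nodes in reductions of $MT_{\ell+2}(G,\mathcal P_i)$. Two lemmas carry the proof: each part of a $d$-partition splits into only $f(d,\ell)$ classes (\cref{lem:indistinguishable}, using the bounded size of reducts and commutation with pruned shuffles), and classes lying in parts at distance more than $3^{\ell+2}$ in $G_{\mathcal P_i}$ are homogeneous in $\phi(G)$ (\cref{lem:farapart}); then \cref{lem:subsequence} turns the refined nested sequence into a contraction sequence for $\phi(G)$. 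Your intended lemma about preservation of mixed-freeness along a fixed order is true, but establishing it requires an argument of at least this strength; it cannot be dismissed as ``careful pigeonhole counting,'' so the proposal does not yet constitute a proof of \cref{thm:transduction}.
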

As a direct consequence, map graphs have bounded twin-width since they can be obtained by FO transductions of planar graphs (which have bounded twin-width).
One can also use~\cref{thm:transduction} to show that $k$-planar graphs and bounded-degree string graphs have bounded twin-width.
We first handle the expansion and the copy operations of the transduction.

We recall that augmented binary structures are binary structures augmented by a constant number of unary relations.
The definition of twin-width for augmented binary relations is presented in \cref{sec:dig-encoding}.
We remind the reader that contraction sequences for augmented binary structures forbid to contract two vertices not contained in the same unary relations.

\begin{lemma}\label{lem:copy-exp}
  For every binary structure $G$ of twin-width at most $d$, and non-negative integers $\gamma$ and $h$, every augmented binary structure of $\gamma_{\text{op}} \circ h_{\text{op}}(G)$ has twin-width at most $2^{\gamma+h}(d+2\gamma)$, where $h_{\text{op}}$ is the $h$-expansion, and $\gamma_{\text{op}}$ is the $\gamma$-copy operation.
\end{lemma}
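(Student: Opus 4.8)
The plan is to peel off the $h$-expansion with \cref{lem:unary} and then handle the $\gamma$-copy operation by running a contraction sequence of a single copy ``in parallel'' across all $\gamma$ copies, in the spirit of the proof of \cref{thm:grids}. Fix a choice of the $h$ unary relations, write $G'$ for the corresponding element of $h_{\text{op}}(G)$, and write $H$ for the augmented binary structure $\gamma_{\text{op}}(G')$: its domain is $V(G)\times[\gamma]$, its binary relations are those of $G'$ within each copy, together with $\sim$ (where $(u,i)\sim(v,j)$ iff $u=v$), and it carries the $h+\gamma$ unary relations $U^1,\dots,U^h,C_1,\dots,C_\gamma$. By \cref{lem:unary}, $G'$ has twin-width at most $d_1:=2^h d$, so fix a contraction sequence of $G'$ as a partition sequence $\mathcal Q_0,\dots,\mathcal Q_m$, where $\mathcal Q_0$ is the finest partition, each $\mathcal Q_{t+1}$ is obtained from $\mathcal Q_t$ by merging two parts, each red graph $G'_{\mathcal Q_t}$ has maximum degree at most $d_1$, every part of every $\mathcal Q_t$ is monochromatic for each $U^k$, and (by the convention for augmented structures) $\mathcal Q_m$ has all its parts pairwise distinct for the $U^k$-type, hence $|\mathcal Q_m|\le 2^h$.

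For $t=0,\dots,m$ set $\mathcal R_t:=\{\,P\times\{j\}:P\in\mathcal Q_t,\ j\in[\gamma]\,\}$. First I would check the easy bookkeeping: each $\mathcal R_t$ respects all $h+\gamma$ unary relations of $H$ (the $U^k$ since the parts of $\mathcal Q_t$ are $U^k$-monochromatic, the $C_j$ since each part lies in one copy); $\mathcal R_0$ is the finest partition of $H$; and $\mathcal R_m$ has at most $2^h\gamma\le 2^{\gamma+h}$ parts, with two parts $P\times\{j\},P'\times\{j'\}$ of $\mathcal R_m$ sharing the same $H$-unary-type iff $P=P'$ and $j=j'$ (using that distinct parts of $\mathcal Q_m$ have distinct $U^k$-types), so $\mathcal R_m$ is a valid (maximal) endpoint of a contraction sequence of the augmented structure $H$. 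Then I would build the contraction sequence of $H$ passing through $\mathcal R_0,\dots,\mathcal R_m$: the transition $\mathcal R_t\to\mathcal R_{t+1}$ corresponding to merging $P_1,P_2\in\mathcal Q_t$ is realized by the $\gamma$ successive merges of $P_1\times\{j\}$ with $P_2\times\{j\}$, for $j=1,\dots,\gamma$; each such merge is between parts of equal unary type, hence legal.

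It remains to bound the red degree of every partition $\mathcal S$ occurring in this sequence. Fix a part $A$ of $\mathcal S$; it lies in a single copy $j$, and within copy $j$ the induced partition is either $\mathcal Q_t$ or $\mathcal Q_{t+1}$. Since the binary relations of $G'$ have no edge between distinct copies, the red degree of $A$ splits as (i) red edges to parts of $\mathcal S$ inside copy $j$, coming only from the binary relations of $G'$, which contribute at most $d_1$ because the relevant red graph is $G'_{\mathcal Q_t}$ or $G'_{\mathcal Q_{t+1}}$; and (ii) red edges to parts of $\mathcal S$ in copies $j'\ne j$, coming only from $\sim$. For (ii), one checks that $A=P\times\{j\}$ and $B=P'\times\{j'\}$ with $j\ne j'$ are $\sim$-non-homogeneous iff $P\cap P'\ne\emptyset$ and not ($P=P'$ with $|P|=1$). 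A short case analysis, according to whether copies $j$ and $j'$ have already been updated in the current transition, shows: if $P$ is not the freshly created part $P_1\cup P_2$, then each copy $j'\ne j$ contributes at most one such $B$, for a total of at most $\gamma-1$; and if $P=P_1\cup P_2$, an already-updated copy contributes at most one such $B$ and a not-yet-updated copy contributes at most two (namely $P_1\times\{j'\}$ and $P_2\times\{j'\}$), for a total of at most $2(\gamma-1)$. Hence every partition in the sequence has red degree at most $d_1+2(\gamma-1)=2^h d+2\gamma-2\le 2^{\gamma+h}(d+2\gamma)$, and therefore $\text{tww}(H)\le 2^{\gamma+h}(d+2\gamma)$.

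The only genuine work is the case analysis in (ii): one must track, during a transition, which of the $\gamma$ copies already carry the merged part $P_1\cup P_2$ and which still carry $P_1,P_2$ separately, and verify that the mismatch between an updated copy and a not-yet-updated copy produces only two $\sim$-non-homogeneous pairs rather than more. Everything else — that the $\mathcal R_t$ respect the unary relations, that $\mathcal R_m$ is the right endpoint, and that each transition is a legal sequence of $\gamma$ single merges — is routine. The additive slack $2\gamma$ (indeed the whole $2^{\gamma+h}$ factor) in the stated bound is far more than enough to absorb this temporary non-homogeneity, so no sharper estimate is needed.
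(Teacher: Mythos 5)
Your proof is correct, and it uses the same two ingredients as the paper's proof of \cref{lem:copy-exp} -- running a contraction sequence of a single copy ``in parallel'' across the $\gamma$ copies to control the red edges created by $\sim$, and \cref{lem:unary} to absorb unary relations -- but composed in the opposite order. The paper first contracts the raw copied structure (without unary relations), getting red degree at most $d+2\gamma$, and only then applies \cref{lem:unary} to all $\gamma+h$ unary relations at once, which is where the factor $2^{\gamma+h}$ comes from. You instead apply \cref{lem:unary} only to the $h$ expansion relations, and then observe that the copy-indicator relations $C_1,\dots,C_\gamma$ cost nothing because every part of your partitions lies inside a single copy; the price of $\sim$ is only the additive term coming from the transient mismatch between updated and not-yet-updated copies, which your case analysis correctly bounds by $2(\gamma-1)$ (two red neighbours per lagging copy for the freshly merged part, one otherwise). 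The net effect is a sharper bound $2^h d + 2(\gamma-1)$, which of course implies the stated $2^{\gamma+h}(d+2\gamma)$; your verification that $\mathcal R_m$ is a legitimate endpoint (all parts of distinct unary type, at most $2^h\gamma$ of them) and that each intermediate merge respects the unary types is exactly the bookkeeping needed for the augmented-structure convention, so there is no gap.
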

\begin{proof}
  We first argue that the introduction of the binary relation $\sim$ of $\gamma_{\text{op}}$ preserves bounded twin-width.
  Let $G = G_n, \ldots, G_1=K_1$ be a $d$-sequence $\mathcal S$ of $G$, where $G_i$ is obtained from $G_{i+1}$ by contracting $u_i$ and $v_i$ into a new vertex $z_i$.
  Let $\{(v,j)$ $|$ $v \in V(G)\}$ be the vertex set of the $j$-th copy $G^j$ of $G$.
  Let $G'$ be the binary relation obtained from $\gamma_{\text{op}}(G)$ by discarding its unary relations.
  We suggest the following contraction sequence for $G'$.
  First we contract $(u_{n-1},j)$ and $(v_{n-1},j)$ for $j$ going from 1 to $\gamma$.
  Basically we perform the first contraction of $\mathcal S$ in every copy of $G'$.
  Then we contract $(u_{n-2},j)$ and $(v_{n-2},j)$ for $j$ going from 1 to $\gamma$ (second contraction of $\mathcal S$).
  We continue similarly up to the contractions $(u_1,j)$ and $(v_1,j)$ for $j$ going from 1 to $\gamma$.
  At this point the resulting graph of $G'$ has only $\gamma$ vertices, and we finish the contraction sequence arbitrarily.
  We note that, throughout this process, the red degree is bounded by $d+2\gamma$.

  Now every graph $H \in \gamma_{\text{op}} \circ h_{\text{op}}(G)$ can be obtained by adding $\gamma+h$ unary relations to the binary structure $G'$.
  By \cref{lem:unary} (whose proof follows~\cref{thm:apex} without the apex), the augmented binary structure $H$ has a contraction sequence (respecting the unary relations) with red degree at most $2^{\gamma+h}\text{tww}(G') \leqslant 2^{\gamma+h}(d+2\gamma)$.
  Let us recall that this sequence mostly follows what we described in the previous paragraph but skips the contraction of two vertices not satisfying the same subset of unary relations.
  As a contraction sequence of an augmented binary structure, it ends with at most $2^{\gamma+h}$ vertices (since the number of unary relations is $\gamma+h$).
\end{proof}  

To show~\cref{thm:transduction} we shall now only prove that FO interpretations preserve bounded twin-width.

\begin{theorem}\label{theo:FOinterpret}
For every prenex first-order formula with two free variables $\phi(x,y)$ and every bounded-twin-width class $\mathcal G$ of augmented binary structures, $\phi(\mathcal G)$ also has bounded twin-width.
\end{theorem}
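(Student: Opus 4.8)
The plan is to fix an arbitrary $G\in\mathcal G$ together with a $d$-sequence, equivalently a sequence of $d$-partitions $\mathcal P_n,\ldots,\mathcal P_1$ of $V(G)$, and to build from it a contraction sequence of $\phi(G)$ whose red degree stays bounded by a function of $d$, of $\ell:=$ the number of quantifiers of $\phi$, and of the signature of $\mathcal G$; since $d$ is uniform over $\mathcal G$ and twin-width does not increase under taking induced subgraphs, this bounds the twin-width of every member of $\phi(\mathcal G)$. Set $\ell':=\ell+2$, the two extra levels accounting for the free variables $x,y$, and run the dynamic programming of \cref{thm:dynamicprog} along $\mathcal P_n,\ldots,\mathcal P_1$ with morphism-trees of depth $\ell'$ (using that, as noted at the end of \cref{sec:fo}, the whole machinery of \cref{sec:fo} extends to augmented binary structures). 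This produces, for every $i$ and every part $X\in\mathcal P_i$, a reduct $MT'_{\ell'}(G,\mathcal P_i,X)$ of size bounded by a function of $\ell'$ and the signature, by \cref{lem:MTtree}. Call two vertices $u,u'$ of the same part $X\in\mathcal P_i$ \emph{$i$-equivalent} if they are equivalent as depth-$1$ nodes of $MT_{\ell'}(G,\mathcal P_i,X)$ (for augmented structures one also requires $u,u'$ to lie in exactly the same unary relations); inside a fixed part there are at most $c_0=c_0(\ell',\text{signature})$ equivalence classes, again by \cref{lem:MTtree}. Let $\mathcal T_i$ be the partition refining $\mathcal P_i$ into these classes.

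Two properties of the family $(\mathcal T_i)$ must be established. First, $\mathcal T_i$ is a $D$-partition of $\phi(G)$ with $D:=c_0\cdot d^{3^{\ell'}+1}$. Indeed, take classes $Z\subseteq X$ and $Z'\subseteq X'$ with $X,X'\in\mathcal P_i$. If $d_{G_{\mathcal P_i}}(X,X')\le 3^{\ell'}$ there are at most $d^{3^{\ell'}+1}$ candidates for $X'$ and at most $c_0$ classes in each, so at most $D$ such $Z'$, and we are free to declare $Z,Z'$ non-homogeneous in $\phi(G)$. If instead $d_{G_{\mathcal P_i}}(X,X')>3^{\ell'}$, the claim is that $Z$ and $Z'$ are homogeneous in $\phi(G)$, i.e. for $u\in Z$ and $v\in Z'$ the truth of $G\models\phi(u,v)$ (and of $G\models\phi(v,u)$) depends only on $Z$ and $Z'$; this is the heart of the argument and is discussed next. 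Second, the classes coarsen in time: if $u,u'$ are $(i{+}1)$-equivalent then they are $i$-equivalent. This follows from \cref{sec:fo}: the local reduct $MT'_{\ell'}(G,\mathcal P_i,X)$ is, up to reduction, the pruned $\ell'$-shuffle of the $MT'_{\ell'}(G,\mathcal P_{i+1},Y)$ for $Y$ near $X$ (\cref{lem:MTlbyprunedshufflereduction}), and pruned shuffle commutes with reduction (\cref{lem:prunedshufflereduction}), so an automorphism witnessing $(i{+}1)$-equivalence of $u,u'$ lifts and forces $u,u'$ onto the same surviving depth-$1$ node. Hence $\mathcal T_n,\ldots,\mathcal T_1$ is a coarsening chain; moreover $\mathcal T_{i+1}$ is a $(2c_0)$-refinement of $\mathcal T_i$ (a merged $\mathcal P_i$-part is covered by at most $2c_0$ classes of $\mathcal T_{i+1}$), and since the local reducts of parts far from the merge are unchanged, passing from $\mathcal T_{i+1}$ to $\mathcal T_i$ alters only classes inside the at most $d^{3^{\ell'}+1}$ parts near the merge, hence can be realised by at most $D$ fusions.

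To justify the ``far $\Rightarrow$ homogeneous'' claim, recall that $G\models\phi(u,v)$ is the value of the depth-$\ell$ evaluation game on $G$ with $u,v$ as fixed parameters, equivalently it is read off the subtree of $MT_{\ell'}(G)$ below the forced node $(u,v)$. Because the threshold $3^k$ was chosen so that $2\cdot 3^k<3^{k+1}$, if $d_{G_{\mathcal P_i}}(X,X')>3^{\ell'}$ then in every $\le\ell'$-tuple extending $(u,v)$ the entries lying in $X$ and in $X'$ fall in different connected components of the sequence graph $\mathrm{sg}_{\ell'}$ — any path between them would, on summing the geometric series of reaches, force $d_{G_{\mathcal P_i}}(X,X')<3^{\ell'}$ — and by \cref{lem:localroot} their local roots remain $X$ and $X'$. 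Thus the current graph of such a node decomposes, exactly as in \cref{lem:MTlbyprunedshuffle}, into the piece around $X$, the piece around $X'$, and the remaining pieces, glued only by the fixed homogeneous relations between pairwise-far parts of $\mathcal P_i$; the piece around $X$ is governed by $MT'_{\ell'}(G,\mathcal P_i,X)$ with $u$ marked, and symmetrically for $v$. Replacing $u$ by an $i$-equivalent $u'$ and $v$ by an $i$-equivalent $v'$ then induces, via the lifted automorphisms, an isomorphism of the decorated game tree below $(u,v)$ onto the one below $(u',v')$, whence $\phi(u,v)=\phi(u',v')$. Making this decomposition–isomorphism bookkeeping fully rigorous — in particular verifying that $i$-equivalence computed from a depth-$(\ell{+}2)$ local tree is strong enough to force $(G,u,v)$ and $(G,u',v')$ to agree on all depth-$\ell$ formulas — is the main technical obstacle; it is essentially an Ehrenfeucht–Fra\"iss\'e argument carried out on top of the pruned-shuffle decomposition rather than on $G$ directly.

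Finally, interleave the coarsening chain $\mathcal T_n,\ldots,\mathcal T_1$ into a genuine contraction sequence of $\phi(G)$ by realising each step $\mathcal T_{i+1}\to\mathcal T_i$ by its at most $D$ fusions in any order. By a worst-case estimate identical to the one in \cref{lem:subsequence} — a part of an intermediate partition is a union of at most $2c_0$ classes of $\mathcal T_{i+1}$, each non-homogeneous in $\phi(G)$ with at most $D$ classes of $\mathcal T_{i+1}$ — every intermediate partition is a $(2c_0D)$-partition of $\phi(G)$. Since $\mathcal T_n$ is the partition into singletons and $\mathcal T_1$ has at most $c_0$ parts, prepending this chain and finishing from $\mathcal T_1$ by arbitrary fusions yields a contraction sequence of $\phi(G)$ of red degree at most $2c_0D$, a function of $|\phi|$, of the signature of $\mathcal G$, and of $d$ only. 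As induced subgraphs of $\phi(G)$ are $\phi$-interpretations of induced subgraphs of $G$ (which still belong to $\mathcal G$) and taking induced subgraphs cannot increase twin-width, $\phi(\mathcal G)$ has bounded twin-width; together with \cref{lem:copy-exp} this also yields \cref{thm:transduction}.
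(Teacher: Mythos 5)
Your overall plan is the paper's plan: refine the sequence of $d$-partitions of $G$ by ``depth-$(\ell+2)$ equivalence'' classes inside each part, show there are boundedly many classes per part, show that classes lying in parts far apart in $G_{\mathcal P_i}$ are homogeneous in $\phi(G)$, and conclude via \cref{lem:subsequence}. But as written there is a genuine gap, and you flag it yourself: the ``far $\Rightarrow$ homogeneous'' step is only sketched, and you concede that verifying that equivalence at depth $\ell+2$ forces agreement of $\phi(u,v)$ with $\phi(u',v')$ is ``the main technical obstacle''. That step is precisely the content of the paper's \cref{lem:farapart} together with \cref{theo:MTlinterpret}, and it is where the theorem lives; a proof that leaves it as an acknowledged obstacle is incomplete. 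For the record, no fresh Ehrenfeucht--Fra\"iss\'e argument is needed: one changes one coordinate at a time, uses \cref{lem:inducedreduction} to restrict the witnessing reduction to $(T,m)_Y$, pruned-shuffles it with the untouched local trees of the other parts (\cref{lem:prunedshufflereduction}), observes that the distance hypothesis is exactly what lets the node $(u,v)$ survive the pruning with local roots $X$ and $Y$, and invokes the two-free-variable analogue of \cref{lem:obs-reduct} (i.e.\ \cref{theo:MTlinterpret}) on the resulting reduction of $MT_{\ell+2}(G)$.

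A second, related problem is your definition of $i$-equivalence as equivalence of the depth-$1$ nodes in the \emph{full} local tree $MT_{\ell+2}(G,\mathcal P_i,X)$, with the class count bounded ``by \cref{lem:MTtree}''. That lemma bounds the size of a \emph{reduct}; to convert this into a bound on the number of full-tree equivalence classes you would need that siblings equivalent in a reduction are already equivalent in the original tree, a confluence-type statement that is nowhere established (the paper explicitly avoids needing it, just as it avoids uniqueness of reducts). The paper's definition -- $u,u'$ indistinguishable if $(u),(u')$ are equivalent in \emph{some} reduction of the global tree $MT_{\ell+2}(G,\mathcal P)$, then take connected components -- is chosen exactly so that monotonicity under coarsening is trivial and the per-part count follows from the local reducts via \cref{lem:indistinguishable}. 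Your monotonicity claim is salvageable for the full-tree definition (lift the automorphism to the global tree by the pruned shuffle and restrict w.r.t.\ the coarser partition), but the class-count bound is not, as cited; switching to the paper's reduction-based relation repairs it, at the cost of having to argue homogeneity for that relation -- which is again \cref{lem:farapart}. The remaining bookkeeping (red degree $c_0\cdot d^{3^{\ell+2}+1}$, bounded refinement, interleaving into a contraction sequence, and closure under induced subgraphs) matches the paper and is fine.
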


The idea of the proof is simply that if $G$ has twin-width $d$, then the sequence of $d$-partitions achieving the bound can be refined in a bounded way to form an $f(d)$-sequence for $\phi(G)$.
Let us first make the following observation, similar to~\cref{lem:obs-reduct}. 

\begin{lemma}\label{theo:MTlinterpret}
Let $u,v,v'$ be vertices of an augmented binary structure $G$. If $(u,v)$ and $(u,v')$ are equivalent nodes in $MT_{\ell+2}(G)$,
then for every prenex formula $\phi (x,y)$ of depth $\ell$ we have $G \models \phi (u,v)$ if and only if $G \models \phi (u,v')$.
\end{lemma}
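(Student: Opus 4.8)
The plan is to unfold the prenex structure of $\phi$ and translate the statement ``$G \models \phi(u,v)$ iff $G \models \phi(u,v')$'' into a statement about the morphism-trees rooted at the length-$2$ paths $(u,v)$ and $(u,v')$ inside $MT_{\ell+2}(G)$. First I would fix an automorphism $f$ of $MT_{\ell+2}(G)$ witnessing that $(u,v)$ and $(u,v')$ are equivalent siblings; by the remark following the definition of equivalent nodes, we may take $f$ to be the identity outside $B_T((u,v)) \cup B_T((u,v'))$, so it restricts to an isomorphism between the subtrees $B_T((u,v))$ and $B_T((u,v'))$. The key observation is that these two subtrees, together with the morphism $m$, are exactly the morphism-trees $MT_{\ell}(G)$ ``with the prefix $(u,v)$ (resp. $(u,v')$) already played''; that is, a current path $(u,v,w_1,\dots,w_i)$ in $B_T((u,v))$ corresponds to the $i$-tuple $(w_1,\dots,w_i)$ extending the fixed pair, and similarly for $(u,v')$. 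Moreover $f$ being an isomorphism of morphism-trees means that for every node $x$ of $B_T((u,v))$ and every descendant $y$, the equality pattern and edge pattern between $m(x)$ and $m(y)$ agree with those between $m(f(x))$ and $m(f(y))$; since $(u,v)$ and $(u,v')$ are themselves in the image relation (with $m(u)=m(u)$, $m(v)$ and $m(v')$ sharing the same part, etc.), in fact the full tuple graph on $(u,v,w_1,\dots,w_i)$ is isomorphic, as a tuple graph, to the one on $(u,v',f(w_1),\dots,f(w_i))$ — the equality and adjacency of $u,v$ with each $w_j$ is preserved, not just the adjacency among the $w_j$'s.

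Next I would run the standard minimax/game argument, exactly as in the proof of \cref{lem:obs-reduct}. Write $\phi(x,y) = Q_1 x_1 \cdots Q_\ell x_\ell\, \phi^*$ where $\phi^*$ is a Boolean combination of atoms on $\{x_1,\dots,x_\ell,x,y\}$. Evaluating $\phi(u,v)$ on $G$ is the same as labelling $B_T((u,v))$: put a $\max$ at depth-$2$-through-$(\ell+1)$ internal nodes whose next quantifier is $\exists$, a $\min$ when it is $\forall$, and at each leaf $(u,v,w_1,\dots,w_\ell)$ put the truth value of $\phi^*$ with $x\mapsto u$, $y\mapsto v$, $x_j\mapsto w_j$. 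Then $G\models\phi(u,v)$ iff the computed value at the node $(u,v)$ is $1$, and likewise $G\models\phi(u,v')$ iff the value at $(u,v')$ is $1$. Now I claim the isomorphism $f$ maps the labelled tree at $(u,v)$ to the labelled tree at $(u,v')$: the $\min$/$\max$ labels depend only on the tree structure and the quantifier prefix, which $f$ preserves since $f$ is a tree isomorphism commuting with depth; and the leaf labels are preserved because $\phi^*$ is a Boolean combination of atoms $w=w'$ and $E(w,w')$ (and, since $G$ is an augmented binary structure, atoms $U_k(w)$ and, say, relations $E^p(w,w')$), each of which is determined by the equality pattern, edge pattern, and unary-relation membership of the tuple $(u,v,w_1,\dots,w_\ell)$ — and all of this is preserved by $f$ as noted above. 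Hence the two minimax trees compute the same root value, giving the equivalence.

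The one point that needs a little care, and which I expect to be the only real obstacle, is making sure that the preservation of atoms involving the \emph{free} variables $x,y$ (mapped to $u,v$ vs. $u,v'$) is genuinely covered by the definition of isomorphic morphism-trees and of equivalent nodes. The definition of equivalence of siblings $x,x'$ in $MT_\ell(G)$ requires an automorphism $f$ with $f(x)=x'$, $f(x')=x$; applied with $x=(u,v)$, $x'=(u,v')$, this forces $m(u)=m(u)$ (so the first coordinate is pinned to $u$) and $m(v),m(v')$ in the same part, plus — crucially — for every descendant $y$ of $(u,v)$, the equality/edge relations of $m((u,v))=v$ and of $m(\text{the depth-}1\text{ node }(u))=u$ with $m(y)$ are preserved. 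So the relations between $u$, $v$, and the played $w_j$'s are all tracked, because $u$ and $v$ literally appear as nodes on the current path of every descendant. Once this is spelled out, the argument is a routine induction on the quantifier prefix (or just the minimax relabelling), and no separate calculation is needed. I would present it in the two-step form: first extract the morphism-tree isomorphism and record the preserved patterns, then read off the minimax equality.
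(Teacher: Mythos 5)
Your proposal is correct and matches the paper's intent: the paper gives no explicit proof of this lemma, presenting it as an observation ``similar to \cref{lem:obs-reduct}'', and your argument is exactly that minimax/relabelling argument applied to the subtrees rooted at $(u,v)$ and $(u,v')$. You also correctly identify and resolve the only delicate point, namely that atoms involving the free variables are preserved because $u$ and $v$ (resp.\ $v'$) lie on the current path of every descendant, so the ancestor--descendant preservation conditions in the definition of morphism-tree isomorphism already cover them.
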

\begin{proof}
Consider an arbitrary prenex first-order formula $\phi(x,y)=Q_1x_1Q_2x_2 \ldots Q_{\ell}x_{\ell}\phi^*$ where $\phi^*$ is quantifier-free. 
We label each node of $MT_{\ell+2}(G)$ at depth $i+1$ by $\vee$ if $Q_i=\exists$, and $\wedge$ if $Q_i=\forall$ for $i\leq \ell$, and label each leaf node $(a,b,w_1,w_2,\ldots , w_{\ell})$ by 1 if $\phi^*(a,b,w_1,w_2,\ldots,w_{\ell})$ holds, and~0 otherwise. 
Notice that for each node $(a,b)$ of $MT_{\ell+2}$, one can decide $G\models \phi(a,b)$ by evaluating the sentence expressed as the labeled subtree of $MT_{\ell+2}$ rooted at $(a,b)$. 
Now, the automorphism swapping the equivalent siblings $(u,v)$ and $(u,v')$ (and preserving the unary relations) implies $G \models \phi (u,v)$ if and only if $G \models \phi (u,v')$. 
\end{proof}

A consequence of~\cref{theo:MTlinterpret} is that if $(u,v)$ and $(u,v')$ are equivalent nodes in a reduction $(T,m)$ of $MT_{\ell+2}(G)$, then the same conclusion holds.
And, if $G$ has a partition $\mathcal P$, by the fact that reductions in $(G,\mathcal P)$ are reductions in $G$, we also have that if $(u,v)$ and $(u,v')$ are equivalent nodes in a reduction $(T,m)$ of $MT_{\ell+2}(G,\mathcal P)$, then $G \models \phi (u,v)$ if and only if $G \models \phi (u,v')$.

The central definition here is that given a partition $\mathcal P$ of $G$, two vertices $u,u'$ of $G$ are said \emph{$(\ell+2)$-indistinguishable} if the nodes $(u)$ and $(u')$ are equivalent siblings (of $\varepsilon$) in some reduction $(T,m)$ of $MT_{\ell+2}(G,\mathcal P)$.
In particular, since an automorphism of $(T,m)$ swap them, they belong to the same part of $\mathcal P$.
We then form the graph $E_{\ell+2}(G,\mathcal P)$ on vertex set $V(G)$ whose edges are all the pairs $uu'$ of $(\ell+2)$-indistinguishable vertices.
It can be proved that $E_{\ell+2}(G,\mathcal P)$ is an equivalent relation (i.e., a disjoint union of cliques), but we will not need this fact.
Instead we consider the partition $I_{\ell+2}(G,\mathcal P)$ whose parts are the connected components of $E_{\ell+2}(G,\mathcal P)$.
Note that $I_{\ell+2}(G,\mathcal P)$ refines $\mathcal P$, and that if $\mathcal P'$ is a coarsening of $\mathcal P$ then  $I_{\ell+2}(G,\mathcal P')$ is also a coarsening of $I_{\ell+2}(G,\mathcal P)$ since every edge of $E_{\ell+2}(G,\mathcal P)$ is an edge of $E_{\ell+2}(G,\mathcal P')$.
Crucially, $I_{\ell+2}(G,\mathcal P)$ does not refine the $d$-partition $\mathcal P$ too much. 

At first glance, it is unclear why the connected components of $E_{\ell+2}(G,\mathcal P)$ can be bounded. We use the fact 
that if $(v)$ and $(v')$ are equivalent siblings in some reduction of $MT_{\ell+2}(G,\mathcal P, X)$, then 
$(v),(v')$ are equivalent siblings in some reduction of $MT_{\ell+2}(G,\mathcal P)$ because 
the reduction and the pruned shuffle commute by~\cref{lem:prunedshufflereduction}. The connected components 
derived from the former relation can be easily bounded, which  bounds the connected components 
derived from the latter relation or equivalently the connected components of $E_{\ell+2}(G,\mathcal P)$.

\begin{lemma}\label{lem:indistinguishable}
When $\mathcal P$ is a $d$-partition and $X$ is a part of $\mathcal P$, the number of components of $E_{\ell+2}(G,\mathcal P)$ inside $X$ is at most a function of $d$ and $\ell$.
\end{lemma}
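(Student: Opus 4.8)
The plan is to bound the number of such components by the number of depth-$1$ nodes of a single, bounded-size reduction of the morphism tree localized around $X$. Write $\ell':=\ell+2$ and let $d(\cdot,\cdot)$ denote distance in $G_{\mathcal P}$. Two preliminary boundedness facts are needed. First, for every part $Y$ of $\mathcal P$, a reduct $MT'_{\ell'}(G,\mathcal P,Y)$ has size at most some $g(\ell,d)$: arguing bottom-up as in the proof of \cref{lem:MTtree}, a node at depth $\ell'-1$ has only boundedly many pairwise non-equivalent children, since such a child is determined by the isomorphism type by which its current graph extends that of its parent (boundedly many options) together with the part of $\mathcal P$ containing its image --- and on a connected tuple rooted at $Y$ this part lies within distance $3^{\ell'}$ of $Y$ (see the proof of \cref{thm:dynamicprog}, and \cref{lem:chordal,lem:localroot}), hence is one of at most $d^{3^{\ell'}+1}$ parts. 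Second, the pruned $\ell'$-shuffle $(R,r)$ of all the $MT'_{\ell'}(G,\mathcal P,Y)$ --- which is a reduction of $MT_{\ell'}(G,\mathcal P)$ by \cref{lem:MTlbyprunedshufflereduction} --- has the property that $(R,r)_X$ has size at most some $B(\ell,d)$: every node of $(R,r)_X$ has a current path which is a connected tuple of length $\le\ell'$ rooted at $X$, so all of its entries lie in parts within distance $3^{\ell'}$ of $X$, exactly as in the distance estimate used in the proof of \cref{thm:dynamicprog}; thus $(R,r)_X$ is already contained in the pruned $\ell'$-shuffle of the at most $N:=d^{3^{\ell'}+1}$ reducts $MT'_{\ell'}(G,\mathcal P,Y)$ with $d(X,Y)\le 3^{\ell'}$, and such a shuffle, pruned to depth $\ell'$, has at most $(N g(\ell,d))^{\ell'}$ nodes.

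Next I would connect $(R,r)_X$ with the components of $E_{\ell+2}(G,\mathcal P)$ contained in $X$. The depth-$1$ nodes of $(R,r)_X$ are exactly the nodes $(w)$ of $(R,r)$ with $w\in X$, and there are at most $B(\ell,d)$ of them. The reduction $(R,r)$ is obtained from $MT_{\ell'}(G,\mathcal P)$ by a sequence of $x,x'$-reductions, and a depth-$1$ node $(v)$ can be removed only by being the deleted node $x'$ of such a step; at that step its sibling $x=(v')$ is still present and equivalent to $(v)$ in the current reduction, which is itself a reduction of $MT_{\ell'}(G,\mathcal P)$. By definition $v$ and $v'$ are then $\ell+2$-indistinguishable, so $vv'$ is an edge of $E_{\ell+2}(G,\mathcal P)$; and since equivalent nodes lie in a common part of $\mathcal P$, we have $v'\in X$ whenever $v\in X$. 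Iterating --- $(v')$ may in turn be deleted at a later step --- and using that the reduction process is finite, every $v\in X$ is joined, by a path of $E_{\ell+2}(G,\mathcal P)$ entirely inside $X$, to some $w\in X$ with $(w)$ a node of $(R,r)$.

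To conclude, take any component $C$ of $E_{\ell+2}(G,\mathcal P)$ with $C\subseteq X$ and any $v\in C$; the vertex $w$ supplied above lies in $C$ and satisfies that $(w)$ is a node of $(R,r)$, hence of $(R,r)_X$. Distinct components give distinct vertices $w$, so distinct depth-$1$ nodes of $(R,r)_X$. Therefore the number of components of $E_{\ell+2}(G,\mathcal P)$ inside $X$ is at most $B(\ell,d)$, a function of $d$ and $\ell$, as claimed.

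The step I expect to be the most delicate is the size bound on $(R,r)_X$: one has to be careful that a connected tuple of length at most $\ell+2$ rooted at $X$ only ever visits parts within a bounded $G_{\mathcal P}$-distance of $X$ --- the relevant point is that a shortest path in the sequence graph from the first index to any later index is increasing, so the distances telescope as $\sum_{m\ge 0}3^m$ rather than accumulating $\ell+2$ full-size steps --- so that $(R,r)_X$ genuinely involves only boundedly many of the reducts $MT'_{\ell+2}(G,\mathcal P,Y)$. This is precisely the locality phenomenon already exploited in \cref{thm:dynamicprog}; the other ingredients (boundedness of a single reduct, and tracking deleted depth-$1$ nodes) are routine adaptations of \cref{lem:MTtree} and of the definition of $E_{\ell+2}(G,\mathcal P)$.
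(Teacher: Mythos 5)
Your proof is correct and takes essentially the same route as the paper's: bound the size of the reduct localized at $X$ (via the analogue of \cref{lem:MTtree} for partitioned graphs together with the $3^{\ell+2}$-locality of connected tuples rooted at $X$), and use the commutation of pruned shuffles with reductions (\cref{lem:prunedshufflereduction,lem:MTlbyprunedshufflereduction}) to turn depth-$1$ reduction steps into edges of $E_{\ell+2}(G,\mathcal P)$ inside $X$, so that the components inside $X$ are counted by the boundedly many surviving depth-$1$ nodes. The only cosmetic difference is bookkeeping: the paper records the depth-$1$ reductions performed on $MT_{\ell+2}(G,\mathcal P,X)$ in an auxiliary graph $H$ and then lifts them globally via the pruned shuffle, whereas you chain deleted depth-$1$ nodes along the global reduction supplied by \cref{lem:MTlbyprunedshufflereduction}; the two schemes are interchangeable.
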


\begin{proof}
  Let us consider any reduct $(T,m)$ of $MT_{\ell+2}(G,{\mathcal P},X)$.
  Observe first that every current graph of $(T,m)$ consists of vertices which belong to parts $Y$ such that the distance in $G_{\mathcal P}$ from $X$ to $Y$ is at most $3^{\ell+2}$.
  We denote this set of parts $Y$ by $\mathcal P'$.
  In particular $(T,m)$ is a morphism-tree in $(G',{\mathcal P'})$, where $G'$ is the induced restriction of $G$ to the vertices of $\mathcal P'$. Note that the number of parts of $\mathcal P'$ is bounded in terms of $d$ and $\ell$, hence $(G',{\mathcal P'})$ is a graph which is partitioned into a bounded number of parts.
  Therefore the analogue of~\cref{lem:MTtree} for partitioned graphs implies that $(T,m)$ has size bounded in $d$ and $\ell$.

  Now consider the graph $H$ on $X$ whose edges are all pairs $v,v'$ such that a $(v),(v')$-reduction is performed while reducing $MT_{\ell+2}(G,{\mathcal P},X)$ into $(T,m)$.
  The number of connected components of $H$ is exactly the number of nodes of depth 1 in $(T,m)$ (and furthermore every component of $H$ is a tree, but we do not use this).

Now we just have to show that every edge of $H$ is also an edge in $E_{\ell+2}(G,\mathcal P)$.
This follows from the fact that the pruned shuffle $(T',m')$ of $(T,m)$ and all $MT_{\ell+2}(G,{\mathcal P},Y)$ where $Y\neq X$ is a reduction of $MT_{\ell+2}(G,{\mathcal P})$, since reduction commutes with pruned shuffle (\cref{lem:prunedshufflereduction}).
In particular, for every edge $vv'$ of $H$, there exists a $(v),(v')$-reduction among the reductions performed to reduce $MT_{\ell+2}(G,{\mathcal P})$ to $(T',m')$.
Thus $vv'$ is an edge of $E_{\ell+2}(G,\mathcal P)$.
Therefore the number of components of $E_{\ell+2}(G,\mathcal P)$ in $X$ is at most the number of components of~$H$. 
\end{proof}

The key feature of the connected components of $E_{\ell+2}(G,\mathcal P)$ is that if $v,v'$ are in the same connected component $Y'$, 
they are not distinguished by any vertex which is far from $Y'$ in $G_{\mathcal P}$  
with a prenex formula of depth $\ell$.

\begin{lemma}\label{lem:farapart}
Let $\phi(x,y)$ be a prenex formula of depth $\ell$. Let ${\mathcal P}$ be a $d$-partition of an augmented binary structure $G$ and $X,Y$ be two parts of ${\mathcal P}$ with pairwise distance at least $3^{\ell+2}$ in $G_{\mathcal P}$. Let $X',Y'$ be two parts of $I_{\ell+2}(G,\mathcal P)$ respectively in $X$ and $Y$. Then if $u\in X'$ and $v,v'\in Y'$, we have $G \models \phi (u,v)$ if and only if $G \models \phi (u,v')$.
\end{lemma}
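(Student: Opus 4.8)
The plan is to reduce, via \cref{theo:MTlinterpret}, to producing a reduction of $MT_{\ell+2}(G,\mathcal P)$ in which $(u,v)$ and $(u,v')$ are equivalent siblings; this reduction will be manufactured from one that already makes $(v)$ and $(v')$ equivalent, using a pruned shuffle. First I would reduce to a single edge: as $Y'$ is a connected component of $E_{\ell+2}(G,\mathcal P)$ and $v,v'\in Y'\subseteq Y$, there is a path $v=p_0,p_1,\dots,p_k=v'$ in $E_{\ell+2}(G,\mathcal P)$ with all $p_i\in Y$, so by chaining it suffices to handle the case $vv'\in E_{\ell+2}(G,\mathcal P)$. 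Then by definition $(v)$ and $(v')$ are equivalent siblings of $\varepsilon$ in some reduction $(T,m)$ of $MT_{\ell+2}(G,\mathcal P)$, say via an automorphism $f$ which is the identity outside $B_T((v))\cup B_T((v'))$. Since $f$ respects $\mathcal P$ and maps a current path to a current path entrywise, it preserves the sequence graph of every node, hence preserves being a connected tuple rooted at $Y$; so by \cref{lem:inducedreduction} the restriction $(T,m)_Y$ is a reduction of $MT_{\ell+2}(G,\mathcal P,Y)$ in which $(v)$ and $(v')$ are still equivalent siblings of $\varepsilon$.

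The structural heart is that, because $X$ and $Y$ are far apart in $G_{\mathcal P}$, in the sequence graph of every tuple $(u,v,w_1,\dots,w_j)$ with $j\le\ell$ the index of $u$ and the index of $v$ lie in distinct connected components. Indeed, a shortest sequence-graph path between these two indices would --- by \cref{lem:chordal}, exactly as in the proof of \cref{lem:localroot} --- use only the two smallest indices, hence be a single edge, which would force $d_{G_{\mathcal P}}(X,Y)\le 3^{\ell}$, contradicting the hypothesis (one actually needs the separation to be strictly larger than $3^{\ell}$; this is the usual $2\cdot 3^{\ell}<3^{\ell+1}$ slack, so the stated bound should be read with this room in mind). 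Therefore, writing $MT_{\ell+2}(G,\mathcal P)$ as the pruned $(\ell+2)$-shuffle of the trees $MT_{\ell+2}(G,\mathcal P,Z)$ over all parts $Z$ (\cref{lem:MTlbyprunedshuffle}), the node $(u,v)$ is the shuffle of the current path $(u)$ coming from $MT_{\ell+2}(G,\mathcal P,X)$ with the current path $(v)$ coming from $MT_{\ell+2}(G,\mathcal P,Y)$ --- and the same description holds all the way down $B_T((u,v))$, whose nodes have an $X$-part beginning with $u$ and a $Y$-part which is a connected tuple rooted at $Y$ beginning with $v$; likewise for $(u,v')$.

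I would then apply \cref{lem:prunedshufflereduction} to the $Y$-factor, replacing $MT_{\ell+2}(G,\mathcal P,Y)$ by the reduction $(T,m)_Y$ while leaving all other factors unchanged: this produces a reduction $(S,n)$ of $MT_{\ell+2}(G,\mathcal P)$, and the automorphism of $(T,m)_Y$ swapping $(v)$ and $(v')$ lifts --- by the very construction used in the proof of \cref{lem:prunedshufflereduction}, which verifies that the lifted map is a morphism-tree automorphism once irrelevant nodes are pruned --- to an automorphism $\tilde g$ of $(S,n)$ that acts on the $Y$-parts of tuples and as the identity on the rest. Because $(u,v)$ has $X$-part $(u)$ and $Y$-part $(v)$, we get $\tilde g((u,v))=(u,v')$ and $\tilde g((u,v'))=(u,v)$; as $(u)$ is still a non-pruned node of $(S,n)$, these are siblings, so $(u,v)$ and $(u,v')$ are equivalent siblings in the reduction $(S,n)$ of $MT_{\ell+2}(G,\mathcal P)$, and \cref{theo:MTlinterpret} (in its form for reductions of $MT_{\ell+2}(G,\mathcal P)$) yields $G\models\phi(u,v)\Leftrightarrow G\models\phi(u,v')$. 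Chaining this along $p_0,\dots,p_k$ finishes the proof.

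The step I expect to be the main obstacle is making the decomposition ``$(u,v)$ is the shuffle of a piece around $u$ and a piece around $v$'' fully rigorous together with the constant bookkeeping: one must pin down the exact separation that makes the sequence graph of all relevant tuples split the index of $u$ from that of $v$, and then check --- which amounts to re-running the case analysis in the proof of \cref{lem:prunedshufflereduction} --- that the swap $v\leftrightarrow v'$ cannot create or destroy an edge or an equality on the $u$-side of any current graph, precisely because every part reachable from $Y$ inside these trees is homogeneous to $X$. The remaining manipulations of morphism-trees and pruned shuffles are routine.
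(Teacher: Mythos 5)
Your proposal is correct and follows essentially the same route as the paper's proof: reduce to a single edge of $E_{\ell+2}(G,\mathcal P)$, pass to $(T,m)_Y$ via \cref{lem:inducedreduction}, take the pruned $(\ell+2)$-shuffle of $(T,m)_Y$ with the complete trees $MT_{\ell+2}(G,\mathcal P,Z)$ for $Z\neq Y$ (\cref{lem:prunedshufflereduction,lem:MTlbyprunedshuffle}), extend the swap of $(v),(v')$ by the identity, and conclude with \cref{theo:MTlinterpret}. The only difference is that you spell out the sequence-graph/distance verification ensuring $(u,v)$ really decomposes with $(v)$ rooted at $Y$ (including the borderline case at distance exactly $3^{\ell}$), a point the paper leaves implicit and which is harmless in its application since \cref{theo:FOinterpret} uses distance greater than $3^{\ell+2}$.
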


\begin{proof}
  We just have to prove it when $vv'$ is an edge of $E_{\ell+2}(G,\mathcal P)$ since the property will propagate to any pair of vertices in the whole component.
  We can therefore assume that there is a reduction $(T,m)$ of $MT_{\ell+2}(G,\mathcal P)$ in which $(v)$ and $(v')$ are equivalent nodes. By~\cref{lem:inducedreduction}, $(v)$ and $(v')$ are equivalent nodes  in $(T,m)_Y$, which is a reduction of $MT_{\ell+2}(G,\mathcal P,Y)$ since reductions preserve connected tuples rooted at $Y$.
Now consider the pruned $(\ell+2)$-shuffle $(T',m')$ of $(T,m)_Y$ and all  $MT_{\ell+2}(G,\mathcal P,Z)$ with $Z\neq Y$. Note that $(T',m')$ is a reduction of $MT_{\ell+2}(G,\mathcal P)$ 
by~\cref{lem:prunedshufflereduction}. Moreover it contains the two sibling nodes $(u,v)$ and $(u,v')$ which are equivalent by the fact that $(v),(v')$ are equivalent in $(T,m)_Y$. Indeed, as usual, we just consider the automorphism $f$ of $(T,m)_Y$ which swaps $(v),(v')$, and extend it by identity to an automorphism $g$ of the pruned shuffle. 
Finally, $(u,v)$ and $(u,v')$ are equivalent in a reduction of $MT_{\ell+2}(G,\mathcal P)$, so $G \models \phi (u,v)$ if and only if $G \models \phi (u,v')$ by~\cref{theo:MTlinterpret}.
\end{proof}

Note that by symmetry, the previous result implies that for every $u,u'\in X'$ and $v,v'\in Y'$, we have $G \models \phi (u,v)$ if and only if $G \models \phi (u',v')$.
In particular, $X',Y'$ is homogeneous in $\phi(G)$.
We can now prove~\cref{theo:FOinterpret}.

\begin{proof}
We need to show that given $G$ with twin-width $d$ and a formula $\phi(x,y)$, the twin-width of $\phi(G)$ is at most a function of $d$ and $\ell$, the depth of $\phi$.
To show this, we consider a sequence of $d$-partitions $({\mathcal P}_i)_{i \in [n]}$ of $G$.
We now refine it further by considering the sequence of partitions $I_i:=I_{\ell+2}(G,{\mathcal P}_i)$, for all $i \in [n]$.
As we have seen, $I_i$ is coarser than $I_{i+1}$, and furthermore each part of $I_i$ contains a bounded (in $d,$ and $\ell$) number of parts of $I_{i+1}$.
Indeed a part of $I_i$ is contained in a part of ${\mathcal P}_i$ which contains at most two parts of ${\mathcal P}_{i+1}$, each containing a bounded number (in $d$ and $\ell$) of parts of $I_{i+1}$ by~\cref{lem:indistinguishable}.

At last, by~\cref{lem:farapart}, if two parts of $I_{i}$ belong respectively to two parts of ${\mathcal P}_i$ which are further than $3^{\ell+2}$ in $G_{{\mathcal P}_i}$, then they are homogeneous in $\phi(G)$.
Hence $(I_i)_{i \in [n]}$ is a nested sequence of $h(d,\ell)$-partitions of $G$ where each $I_i$ is a bounded refinement of $I_{i+1}$, so we can extend $(I_i)_{i \in [n]}$ to a $h'(d,\ell)$-sequence of $\phi(G)$, by~\cref{lem:subsequence}.
\end{proof}

\section{Conclusion}\label{sec:conclusion}

We have introduced the notion of twin-width.
We have shown how to compute contraction sequences on several classes with bounded twin-width, and how to then decide first-order formulas on these classes in linear FPT time.

\noindent \textbf{Computing twin-width.}
The most pressing open question concerns the complexity of computing the twin-width and contraction sequences on general graphs.
We do not expect that computing exactly the twin-width is tractable.
However any approximation with a ratio only function of twin-width would be good enough.
More precisely, is there a polynomial-time or fixed-parameter algorithm that outputs an $f(d)$-contraction sequence or correctly reports that the twin-width is at least $d$?
We observe that such an algorithm was obtained for \emph{totally ordered} binary structures~\cite{twin-width4}.

This raises the perhaps more general question of a weak dual for twin-width.
For treewidth, brambles provide an exact dual.
How to certify that the twin-width is at least $d$?
The best we can say so far is that if for all the vertex-orderings the adjacency matrix admits a $(2d+2)$-mixed minor, then the twin-width exceeds $d$.
A~satisfactory certificate would get rid of the universal quantification over the orderings of the vertex set. 

\noindent\textbf{Full characterization of ``tractable'' classes.}
We have made some progress on getting the full picture of which hereditary classes admit an FPT algorithm for FO model checking.
Let us call them here \emph{tractable classes}.
Resolving Gajarsk\'y et al.'s conjecture (see \cref{conj:universal}) may require in particular to tackle the task of the previous paragraph.
Bounded twin-width classes are not universal, which supports a bit more the truth of the conjecture.
Currently almost all the knowledge on tractable classes is subsumed by three algorithms: Grohe et al.'s algorithm on nowhere dense graphs~\cite{Grohe17}, Gajarsk\'y et al.'s algorithm for FO interpretations of bounded-degree classes~\cite{Gajarsky16}, and our algorithm on bounded twin-width classes, when provided with $O(1)$-sequences.
As formulated in the introduction, these results are incomparable.
Is there a ``natural'' class which sits above structurally nowhere dense and bounded twin-width classes, and would unify and generalize these algorithms by being itself tractable?
Is there an algorithmically-utilizable characterization of tractable or non-universal classes?

\medskip

As a complexity measure, twin-width can be investigated in various directions.
We list a brief collection of potentially fruitful lines of research.

\noindent\textbf{Structured matrices.}
The definition of a $k$-mixed minor in a matrix $M$ is a division of rows and columns where every zone is mixed.
If we use a 1,2-matrix instead of a 0,1-matrix to code the adjacency matrix of a graph, the property of being mixed is equivalent to having rank strictly greater than 1.
Let us say that a matrix $M$ has $r$-twin-width at most~$d$, if there is an ordering of its rows and columns such that every $(d,d)$-division has at least one zone with rank at most~$r$.
This notion indeed turns out crucial in handling ordered binary structures~\cite{twin-width4}.
Let us note that, by the Marcus-Tardos theorem, a matrix with bounded 0-twin-width has only linearly many non zero entries.
For adjacency matrices coded by 1 (edge) and 2 (non-edge), bounded 1-twin-width is exactly bounded twin-width of the corresponding graph.


\noindent\textbf{Expanders.}
Surprisingly, bounded-degree expanders can have bounded twin-width, hence cubic graphs with bounded twin-width do not necessarily have sublinear balanced separators.
We will show that there are cubic expanders with twin-width~6~\cite{twin-width2}.
However, random cubic graphs have unbounded twin-width.
Does the dichotomy of having bounded or unbounded twin-width tell us something meaningful on expander classes? 

\noindent\textbf{Small classes.}
In an upcoming work~\cite{twin-width2}, we show that the class of graphs with twin-width at most~$d$ is a small class, that is, the number of such graphs on the vertex set $[n]$ is bounded by $n! f(d)^n$ for some function $f$.
Is the converse true?
That is, for every hereditary small class of graphs is there a constant bound on the twin-width of its members?
This question is settled by the negative using a group-theoretic construction, in a subsequent paper~\cite{tww-groups}.

\noindent\textbf{Polynomial expansion.}
Do classes with polynomial expansion have bounded twin-width?
If yes, can we efficiently compute contraction sequences on these classes?
We will show that $t$-subdivisions of $n$-cliques have bounded twin-width if and only if $t = \Omega(\log n)$~\cite{twin-width2}.
This is a first step in answering the initial question. 

\noindent\textbf{Bounded twin-width of finitely generated groups.}
Given a (countably infinite) group $\Gamma$ generated by a finite set $S$, we can associate its Cayley graph $G$, whose vertices are the elements of $\Gamma$ and edges are all pairs $\{x, x \cdot s\}$ where $s \in S$.
For instance, infinite $d$-dimensional grids are such Cayley graphs.
As a far-reaching generalization of the case of grids, one may conjecture that the class of all finite induced subgraphs of $G$ has bounded twin-width.
We observe that this does not depend on the generating set $S$ since all choices of $S$ are equivalent modulo first-order interpretation.
Hence bounded twin-width is indeed a group invariant~\cite{twin-width2}.
However the conjecture is refuted in~\cite{tww-groups}.
Thus bounded twin-width non-trivially splits finitely generated groups.
Is this dichotomy an existing one? 

\noindent\textbf{Additive combinatorics.}
To any finite subset $S$ of non-negative integers, we can associate a Cayley graph $G$ by picking some (prime) number $p$ (much) larger than the maximum of $S$, and having edges $xy$ if $x-y$ or $y-x$ is in $S$ modulo $p$. Is the twin-width of $G$ a relevant complexity measure for $S$?


\noindent\textbf{Approximation algorithms.}
Last but not least, we should ask more algorithmic applications from twin-width.
It is noteworthy that, in all the particular classes of bounded twin-width presented in the paper, most optimization problems admit good approximation ratios, or even exact polytime algorithms.
What is the approximability status of, say, \textsc{Maximum Independent Set} on graphs of twin-width at most~$d$? 
In~\cite{twin-width3} a polytime constant-approximation is presented for~\textsc{Minimum Dominating Set} on graphs of bounded twin-width given with an $O(1)$-sequence.

\end{document}